\documentclass[11pt]{article}
\usepackage[a4paper,
            left=1.25in,
            right=1.25in,
            top=1.25in,
            bottom=1.25in,
            footskip=.5in]{geometry}

\usepackage{algorithm}
\usepackage{algpseudocode}
\usepackage{authblk}
\usepackage{graphicx}
\usepackage{dsfont}
\usepackage{enumerate}
\usepackage{mathtools}
\usepackage[colorlinks=true,linkcolor=blue,citecolor=blue,urlcolor=blue,plainpages=false,pdfpagelabels]{hyperref}
\usepackage{amsmath}
\usepackage{amssymb}
\usepackage{physics}
\usepackage{bbm}
\usepackage{nicefrac}
\usepackage{url}
\usepackage{complexity}
\usepackage{xcolor}
\usepackage{tikz}
\definecolor{figaccent}{RGB}{31,90,142}
\usepackage[normalem]{ulem}
\usepackage{amsthm}
\usepackage{thmtools}
\usepackage{appendix}
\usepackage[capitalize,noabbrev]{cleveref}
\usepackage{pdfpages}

\usepackage[backend=biber, style=alphabetic, backref=true, hyperref=true, maxbibnames=99]{biblatex}

\usepackage[babel,english=british]{csquotes}
\DefineBibliographyStrings{english}{%
    backrefpage  = {cited on p.}, 
    backrefpages = {cited on pp.} 
}
\newcommand{\stkout}[1]{\ifmmode\text{\sout{\ensuremath{#1}}}\else\sout{#1}\fi}
\newif\ifverbose{}
\verbosefalse{}

\newif\ifcomments{}
\commentstrue{}

\newtheorem{theorem}{Theorem}[section]

\newtheorem{question}[theorem]{Question}
\newtheorem{proposition}[theorem]{Proposition}

\newtheorem{corollary}[theorem]{Corollary}

\newtheorem{lemma}[theorem]{Lemma}
\newtheorem{definition}[theorem]{Definition}
\newtheorem{claim}[theorem]{Claim}

\theoremstyle{definition}
\newtheorem{example}[theorem]{Example}
\newtheorem{remark}[theorem]{Remark}

\numberwithin{equation}{section}

\makeatletter

\def\theHALG@line{\thealgorithm.\arabic{ALG@line}}
\makeatother

\DeclareMathOperator{\cut}{cut}

\DeclareMathOperator{\lc}{lc}
\DeclareMathOperator{\tw}{tw}
\DeclareMathOperator{\CC}{cc}

\DeclareMathOperator{\cw}{cw}
\DeclareMathOperator{\tcw}{tcw}
\DeclareMathOperator{\adh}{adh}
\DeclareMathOperator{\tor}{tor}
\DeclareMathOperator{\depth}{depth}

\begin{document}

\title{
\makebox[\textwidth][c]{
\begin{minipage}{\dimexpr\textwidth+0.5in\relax}
    \centering
    Parameterised graph theory for tensor networks: entanglement rerouting, structural simplification, and agnostic tomography
\end{minipage}
}
}

\author[1]{Matthias C. Caro\thanks{\href{mailto:matthias.caro@warwick.ac.uk}{matthias.caro@warwick.ac.uk}}}
\author[1]{Natalie McHugh\thanks{\href{mailto:natalie.mchugh@warwick.ac.uk}{natalie.mchugh@warwick.ac.uk}}}
\author[2]{Sergii Strelchuk\thanks{\href{mailto:sergii.strelchuk@cs.ox.ac.uk}{sergii.strelchuk@cs.ox.ac.uk}}}
\affil[1]{Department of Computer Science, University of Warwick, Coventry, UK}
\affil[2]{Department of Computer Science, University of Oxford, Oxford, UK}

\date{}
\setcounter{Maxaffil}{0}
\renewcommand\Affilfont{\itshape\small}

\maketitle

\begin{abstract}

\noindent Parameterised graph theory studies how the complexity of graph-theoretic problems depends on structural parameters of the input graph. This perspective has proved useful in analysing tensor-network simulation~\cite{Markov_2008}. Its implications for tensor-network representations and tomography are less well understood. In particular, which graph parameters determine whether a tensor-network state (TNS) admits a tractable matrix product state (MPS) or tree tensor network (TTN) representation, and which control the complexity of learning the state?

We address these questions using parameterised graph theory. First, we show that cutwidth and tree-cutwidth bound the bond dimension overhead required to represent a TNS as an MPS or TTN. In the TTN case, tree-cutwidth also bounds the local dimension of the grouped subsystems. The proofs are based on entanglement rerouting, a tensor-network analogue of rerouting information in a classical network. Second, we derive graph-dependent upper bounds on the sample and computational complexity of realisable TNS tomography, with exponents that depend on cutwidth, tree-cutwidth, and a new graph parameter, learning complexity, which we bound in terms of degree and treewidth. We obtain these results by extending the disentangling MPS learner of~\cite{Cramer_2010}, as analysed further in~\cite{bakshi2025learning,lin2025efficientclosestmatrixproduct}, to TTNs and to tensor networks on arbitrary known graphs. Finally, we extend the framework beyond the realisable setting. For an arbitrary input state, our agnostic learner outputs a pure state whose fidelity is within additive error $\epsilon$ of the optimum over tensor-network states on the given graph with a given bond dimension, with explicit graph-dependent bounds on sample and computational complexity.
\end{abstract}

\newpage
\tableofcontents
\newpage

\section{Introduction}

Tensor networks (TNs) provide a unifying language for quantum computation and quantum information by expressing large linear maps and quantum states as compositions of small tensors wired according to an underlying graph. In this framework, vertices represent individual tensors containing local physical degrees of freedom, while edges correspond to shared indices that are contracted to form the composite object. The topology of this graph encodes the entanglement structure of the physical state, effectively mapping quantum correlations to geometric connectivity.
This representation is useful for two closely related reasons. First, a wide range of objects arising in quantum computation, such as quantum circuits, measurement patterns, or decoders for quantum error correcting codes, can be written as TNs whose contraction yields an output amplitude or probability~\cite{bravyi2014efficient,vidal2003efficient,shi2006classical,gross2007novel,piveteau2024tensor}. 
Second, a TN representation often exposes which structural aspects of a quantum computation are responsible for classical hardness (e.g., entanglement growth), thereby suggesting tractable subclasses and principled approximation schemes. One of the first results in this direction was obtained by Markov and Shi~\cite{Markov_2008}, showing that quantum circuits with small underlying graph treewidth can be simulated efficiently by contracting the associated TN. 
Currently, TN contraction methods are among the most competitive approaches to the classical simulation and benchmarking of quantum computations, and they underpin large simulation efforts for random circuit sampling for quantum advantage~\cite{zhou2020limits, gray2021hyper, ayral2023density}.
Conversely, tensor-network contraction can encode general quantum computation: additive approximation of a tensor-network contraction is complete for $\mathsf{BQP}$~\cite{arad2010quantum}.

Whereas TN representations with bounded bond dimension and controlled graph parameters admit efficient algorithms for contraction~\cite{o2019parameterization}, optimization~\cite{schollwock2011density}, and verification~\cite{harrow2013testing, soleimanifar2022testing}, TN representations that require large bond dimension or induce large intermediate tensors lead to exponential computational overhead~\cite{shi2006classical, orus2014practical}. Therefore, characterizing states that admit tractable TN representations as well as transforming between different TN representations become essential for determining the algorithmic feasibility of problems in quantum computation and quantum information. 

In addition to efficient contractions and simulability, understanding the learnability of tensor networks is equally important, since it determines when quantum states or processes can be efficiently reconstructed or verified from limited data. 
The learnability of TN states (TNSs) has been studied primarily under strong structural assumptions. For matrix product states (MPSs)~\cite{rommer2007thermodynamic, weichselbaum2009variational}, efficient learning and tomography are possible due to the existence of small separators: cutting a single internal edge separates the system, allowing the global state to be reconstructed from local reduced density matrices with polynomial sample and computational complexity~\cite{landoncardinal2010, Cramer_2010, bakshi2025learning, lin2025efficientclosestmatrixproduct}. 
As we show, these ideas extend to tree tensor networks (TTNs)~\cite{shi2006classical} when the tree structure is known, although additional challenges arise in learning the network topology itself \cite{hashemizadeh2020adaptive}. In contrast, for higher-dimensional TNs and for TNs with loops, such as PEPS~\cite{verstraete2004renormalization}, separators grow with system size, leading to both information-theoretic and computational obstacles to efficient learning, and no general efficient learning algorithms are known in this setting \cite{schuch2007computational,wahl2023simulating}.

\subsection{Overview of results}

In this work, we use parameterised graph theory to study properties of TN representations beyond simulability. We show how to transform a TN representation of a general graph to an MPS or TTN representation with bond dimension bounded in terms of suitable graph parameters. We also give upper bounds on the sample complexity of TNS learning for general graphs in terms of graph parameters. 

To formulate our results, let $G=([n],E)$ denote the unweighted tensor-network topology. For a weight function $w:E\to\mathbb N$, we write $\mathcal S_d(G,w)$ for the pure $n$-qudit states admitting a TN representation with virtual dimension $w(e)$ on each edge $e$. We write $\mathcal S_d(G,\chi)$ for the class in which only a common upper bound $\chi$ on the virtual dimensions is specified. Equivalently,
\begin{equation}
    \mathcal S_d(G,\chi)
    =
    \bigcup_{w:E\to[\chi]}\mathcal S_d(G,w).
\end{equation}
See \Cref{subsec:TN-notation} for the formal definition. 
The convention $w(f)=1$ for $f\notin E$ does not add $f$ to the edge set.
For a fixed weight function $w$, removing an edge $e\in E$ with $w(e)=1$
does not change the represented state class.  Graph parameters for $\mathcal S_d(G,\chi)$ are evaluated on the given graph $G$. We use three standard graph parameters: cutwidth ($\cw$), tree-cutwidth ($\tcw$), and treewidth ($\tw$). This overview gives only their intuitive meaning. Precise definitions and the properties used below appear in \Cref{subsec:graph-theory}.

\paragraph{Entanglement rerouting.}

Our first result is a local rerouting rule suggested by classical communication. A direct link from Alice to Bob can be replaced by a route through Charlie when both Alice and Bob are linked to Charlie, without changing the transmitted message. See \Cref{fig:rerouting-information} for an illustration. We prove an analogous operation for virtual indices, which we call \emph{entanglement rerouting}:

\begin{figure}[t]
\centering
\begin{tikzpicture}[
  v/.style={circle,draw=black!80,fill=white,line width=0.7pt,minimum size=17pt,inner sep=1pt,font=\small},
  gedge/.style={line width=0.7pt,black!60},
  redge/.style={line width=1.2pt,figaccent},
  ghost/.style={line width=0.7pt,black!45,dashed},
  lab/.style={font=\small,figaccent},
  note/.style={font=\small\itshape,black!70}
]
\begin{scope}
  \node[v] (A) at (0,0) {$A$};
  \node[v] (B) at (3.2,0) {$B$};
  \node[v] (C) at (1.6,1.45) {$C$};
  \draw[gedge] (A)--(C);
  \draw[gedge] (C)--(B);
  \draw[redge,-latex] (A)--node[below,lab]{message}(B);
  \node[font=\small] at (-0.8,1.5) {(a)};
\end{scope}
\draw[-latex,line width=0.9pt,black!70] (4.3,0.6) -- (5.5,0.6)
  node[midway,above,note]{reroute via $C$};
\begin{scope}[xshift=7.2cm]
  \node[v] (A) at (0,0) {$A$};
  \node[v] (B) at (3.2,0) {$B$};
  \node[v] (C) at (1.6,1.45) {$C$};
  \draw[ghost] (A)--(B);
  \draw[redge,-latex] (A)--node[above left=-2pt,lab]{message}(C);
  \draw[redge,-latex] (C)--(B);
  \node[font=\small] at (-0.8,1.5) {(b)};
\end{scope}
\end{tikzpicture}
\caption{\textbf{Rerouting classical information.} A message from $A$ to $B$ can be
sent over a direct link, panel (a), or mediated by $C$, panel (b). Entanglement
rerouting, \Cref{inf-thm:entanglement-rerouting}, is the tensor network analogue of
this operation.}
    \label{fig:rerouting-information}
\end{figure}

\begin{theorem}[Entanglement rerouting]\label{inf-thm:entanglement-rerouting}
Let $G=([n],E)$ be a graph, let $w:E\to\mathbb N$, and let
$|\psi\rangle\in\mathcal S_d(G,w)$. Choose an edge $e=\{x,y\}\in E$ and a
vertex $z\in[n]\setminus\{x,y\}$. Let
\begin{equation}
    E'=(E\setminus\{e\})\cup\{\{x,z\},\{y,z\}\},
\end{equation}
and extend $w$ by the convention $w(f)=1$ for $f\notin E$. Define
$w':E'\to\mathbb N$ by
\begin{equation}
    w'(f)
    =
    \begin{cases}
        w(f)w(e), & f\in\{\{x,z\},\{y,z\}\},\\
        w(f), & \text{otherwise}.
    \end{cases}
\end{equation}
Then, for $G'=([n],E')$,
\begin{equation}
    |\psi\rangle\in\mathcal S_d(G',w').
\end{equation}
Since $G'$ and $w'$ depend only on $G$, $w$, $e$, and $z$, the same
construction applies to every state in $\mathcal S_d(G,w)$; thus, $\mathcal S_d(G,w)
    \subseteq \mathcal S_d(G',w')$.
In particular, if every original edge dimension is at most $\chi$, then the
rerouted representation has maximum bond dimension at most $\chi^2$.
\end{theorem}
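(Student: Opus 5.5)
The construction is explicit: I will build a tensor network on $(G',w')$ from one on $(G,w)$ whose contraction gives the same vector. I take the formal definition of $\mathcal S_d(G,w)$ from \Cref{subsec:TN-notation}: a family of tensors $T^{(v)}$, one per vertex, each carrying a physical index of dimension $d$ and one virtual index per incident edge $f$ of dimension $w(f)$. The state $|\psi\rangle$ is the contraction over all shared virtual indices.

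Fix a representation $\{T^{(v)}\}$ of $|\psi\rangle$ on $(G,w)$ and let $\alpha\in[w(e)]$ be the virtual index of $e=\{x,y\}$. Only $T^{(x)}$, $T^{(y)}$ and $T^{(z)}$ are modified.

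\textbf{Step 1 (route the $e$-index through $z$).} In the new network the edge $\{x,z\}$ carries a composite index $(\beta,\alpha)\in[w(\{x,z\})]\times[w(e)]$. Here $\beta$ is the old $\{x,z\}$ index, or a trivial index of dimension $1$ if $\{x,z\}\notin E$; this is where the convention $w(f)=1$ for $f\notin E$ enters. Similarly $\{y,z\}$ carries $(\gamma,\alpha')$. The dimensions are $w(\{x,z\})w(e)$ and $w(\{y,z\})w(e)$, which is exactly $w'$.

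\textbf{Step 2 (define the new tensors).} I set ${T'}^{(x)}$ equal to $T^{(x)}$, with its $e$-leg relabelled as the $\alpha$ component of the composite $\{x,z\}$ leg, and the old $\beta$ leg kept as the other component. I define ${T'}^{(y)}$ in the same way. I then set
\[
{T'}^{(z)}_{\dots,(\beta,\alpha),(\gamma,\alpha'),\dots}=T^{(z)}_{\dots,\beta,\gamma,\dots}\,\delta_{\alpha\alpha'},
\]
so $z$ carries an identity wire between its two new composite legs, tensored with its old tensor. All other tensors are unchanged.

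\textbf{Step 3 (verify the contraction).} I contract over $\alpha'$ first. The delta forces $\alpha'=\alpha$, and summing over $\alpha$ then reproduces exactly the original contraction along $e$. The remaining sums over $\beta$, $\gamma$ and all other edges are identical to those of the original network, so the contracted state is $|\psi\rangle$. Since nothing depended on $|\psi\rangle$ beyond its representation, this also gives $\mathcal S_d(G,w)\subseteq\mathcal S_d(G',w')$.

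\textbf{Step 4 (bond dimension bound).} The rerouted edges have dimension $w(f)w(e)\le\chi\cdot\chi$, and every other edge keeps dimension at most $\chi$. Hence the maximum bond dimension is at most $\chi^2$.

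The only delicate point is bookkeeping. If $\{x,z\}$ was already an edge, the composite index must be the tensor product of the old leg with the rerouted leg, not a replacement of it. If it was not an edge, the trivial-index convention makes this automatic. Checking this consistently with the formal definition of index ordering in \Cref{subsec:TN-notation} is the main (mild) obstacle; the rest is a direct index computation.
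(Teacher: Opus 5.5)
Your proposal is correct and matches the paper's proof: composite indices on $\{x,z\}$ and $\{y,z\}$, the old tensors at $x$ and $y$ reinterpreted on the enlarged legs, and the tensor at $z$ multiplied by an indicator enforcing equality of the two rerouted components. The contraction is then checked directly, with absent edges handled by the weight-one convention and the $\chi^2$ bound read off from $w(f)w(e)\le\chi\cdot\chi$.
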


\Cref{inf-thm:entanglement-rerouting} (proved formally as \Cref{thm:entanglement-rerouting} below) shows that a virtual index can be routed through an intermediate vertex by enlarging the two edges on the new route.

\paragraph{Tensor network representations from graph parameters.}

Repeated applications of entanglement rerouting transform a TN on a general graph into one on a chosen target graph. This may increase the bond dimension. Our next two results bound this increase for paths and trees using cutwidth and tree-cutwidth.

\begin{theorem}[Transforming to MPS (informal)]\label{inf-thm:trafo-to-mps}
    A tensor network state on $n$ qudits with graph $G$, physical dimension $d$, and maximum bond dimension $\chi$ can be represented as an MPS on $n$ qudits with physical dimension $d$ and maximum bond dimension $\chi^{\cw(G)}$. Here, the cutwidth $\cw(G)$ is the smallest number $c$ such that the vertices of $G$ can be arranged on a line with at most $c$ edges crossing between any prefix of the arrangement and the rest. An ordering achieving the cutwidth can be computed in time $2^{O(\cw(G)^2)}n$.
\end{theorem}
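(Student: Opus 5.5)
Fix a linear ordering $\pi=(v_1,\dots,v_n)$ of the vertices achieving $\cw(G)$; such an ordering can be computed in time $2^{O(\cw(G)^2)}n$ by the known fixed-parameter algorithm for cutwidth. We relabel the qudits so that $v_i$ is site $i$ of the target MPS. The target graph is the path $P=([n],\{\{i,i+1\}\})$. We convert the tensor network on $G$ into one on $P$ by applying \Cref{inf-thm:entanglement-rerouting} repeatedly. This produces a chain of inclusions $\mathcal S_d(G,w)\subseteq\cdots\subseteq\mathcal S_d(P,w^\ast)$, and we must bound $w^\ast$.

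\textbf{Rerouting scheme.} Take each edge $e=\{v_i,v_j\}$ with $i<j$ and $j>i+1$. Apply rerouting with $z=v_{j-1}$, which replaces $e$ by $\{v_i,v_{j-1}\}$ and $\{v_{j-1},v_j\}$. Then reroute $\{v_i,v_{j-1}\}$ through $v_{j-2}$, and continue until $e$ has become a chain of path edges $\{v_k,v_{k+1}\}$ for $i\le k<j$.

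\textbf{Bookkeeping.} At each step, the rerouting theorem multiplies the weights of the two new edges by $w(e)$. Any existing weight on those edges, with the convention $w(f)=1$ if absent, is kept as a factor. The leftover edge $\{v_i,v_{j-1}\}$ carries weight $w(\{v_i,v_{j-1}\})\,w(e)$, so its inherited factor $w(e)$ is passed on at the next step. This factor is split off and rerouted further, possibly merging with an original edge.

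The cleanest way to state the invariant is that each path edge $\{v_k,v_{k+1}\}$ ends up with weight equal to the product of $w(f)$ over all original edges $f$ whose endpoints straddle position $k$. To keep this exact, I would process one original edge at a time. When the current edge to reroute coincides with an already present edge, the rerouting theorem merges them multiplicatively. The path edges accumulate products of original weights, and the final leftover edge $\{v_i,v_{i+1}\}$ is itself a path edge.

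A subtle point is that rerouting a merged edge would carry the merged product further than intended. To avoid this, I route edges in decreasing order of span. Alternatively, I check that the theorem really only multiplies $w(e)$ into the new edges, which is what happens here because the extension convention treats $\{v_i,v_{j-1}\}$ as a fresh product. This bookkeeping is the main obstacle. The fallback is to prove directly a slightly stronger "multi-edge" version of rerouting, which is the same tensor construction: insert identity tensors on the intermediate vertices, tensor them with the local tensors, and group indices.

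\textbf{Conclusion.} The number of original edges straddling the cut between positions $k$ and $k+1$ is at most $\cw(G)$, by the definition of cutwidth for $\pi$. Each such edge has dimension at most $\chi$, so every MPS bond has dimension at most $\chi^{\cw(G)}$. Physical dimensions are untouched, since rerouting only changes virtual indices.
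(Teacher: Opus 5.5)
Your proposal is correct and follows essentially the paper's proof. Both route every non-path edge along its segment of the ordered path by repeated single-edge rerouting, so that the bond $\{v_k,v_{k+1}\}$ ends up with the product of the weights of the original edges crossing the prefix cut at $k$, which is at most $\chi^{\cw(G)}$; the paper peels from the left endpoint and you from the right, which makes no difference.

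Your concern about merged edges is unnecessary. Every constituent of a merged edge has the same two endpoints and therefore the same remaining route along the path, so rerouting the merged product is exactly right, and no special processing order is needed.
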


\begin{theorem}[Transforming to TTN (informal)]\label{inf-thm:trafo-to-ttn}
    A tensor network state on $n$ qudits with graph $G$, physical dimension $d$, and maximum bond dimension $\chi$ can be represented as a TTN on at most $n$ grouped subsystems with physical dimension at most $d^{2\tcw(G)}$ and maximum bond dimension at most $\chi^{2\tcw(G)}$. Here, the tree-cutwidth $\tcw(G)$ is the analogue of cutwidth in which the vertices are organised into a tree of small groups rather than along a line. A tree-cut decomposition certifying these bounds can be computed in time $2^{O(\tcw(G)^2\log\tcw(G))}n^2$.
\end{theorem}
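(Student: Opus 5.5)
The plan is to use a tree-cut decomposition $(T,\mathcal X)$ of $G$ of width $k=\tcw(G)$ and to reroute every edge of $G$ along the tree, with \Cref{inf-thm:entanglement-rerouting} as the basic move. The decomposition consists of a tree $T$ and a near-partition $\mathcal X=\{X_t\}_{t\in V(T)}$ of $[n]$ into bags. Its width is $k$, so every tree edge has adhesion at most $k$, and every node has torso size at most $k$. I take the decomposition from the stated algorithm running in time $2^{O(k^2\log k)}n^2$.

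First I preprocess the decomposition. Empty bags would add nodes without qudits, so I remove them; this is where the claim of at most $n$ grouped subsystems should come from. A leaf with an empty bag can simply be deleted. An internal node with an empty bag is handled by contracting it into a neighbour. It has to be checked that this raises the adhesion of the affected tree edges by at most a factor of $2$, using the torso bound at that node. This factor is my first guess for the origin of the $2$ in $2\tcw(G)$. I also expect that the standard "nice" normalisation (each thin leaf bag absorbed into its parent) keeps both adhesion and torso size bounded by $2k$. Each bag $X_t$ is grouped into a single site of physical dimension $d^{|X_t|}$. Bounding $|X_t|$ needs care, because tree-cutwidth does not bound bag size. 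I would first try to split a large bag $X_t$ into a subtree, or path, of smaller groups hanging off $t$. Edges internal to $X_t$ get rerouted along this subtree, and the torso bound at $t$ controls how many edges cross each of the new internal tree edges. Arranged this way, each resulting group should satisfy $|X_t|\le 2k$, which gives local dimension $d^{2k}$.

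Next comes the rerouting. For each edge $\{u,v\}\in E$ with $u\in X_s$ and $v\in X_t$, let $s=t_0,t_1,\dots,t_m=t$ be the path in $T$. I apply \Cref{inf-thm:entanglement-rerouting} repeatedly, inserting an intermediate grouped vertex at each $t_i$. Applied to edges between groups, it replaces $\{u,v\}$ by a chain through the groups $t_1,\dots,t_{m-1}$. Each step multiplies the dimensions of the edges on the route by $\chi$, and parallel edges between the same two groups merge into one edge whose dimension is the product. After all edges have been routed, the virtual dimension on a tree edge $f$ is $\prod \chi$ over all edges of $G$ whose $T$-path uses $f$. That count is exactly $\adh(f)\le 2k$ after preprocessing, so every bond has dimension at most $\chi^{2k}$. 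Edges with both endpoints in one group are absorbed into that group's tensor.

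The main obstacle is the bookkeeping in the preprocessing: bounding simultaneously the group sizes, the number of groups, and the adhesions by $2\tcw(G)$. This requires using the torso-size condition at high-degree nodes, because thin branches there can carry many rerouted edges. My first attempt would be to merge all thin branches, those with adhesion at most $2$, into their parent node, exactly as in the definition of the torso. I would then check that the torso bound gives the required $2k$ bound on the merged sizes.
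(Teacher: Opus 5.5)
Your core construction is the paper's: group each bag into one site, reroute every inter-bag edge along its path in the decomposition tree, and read off the bond on a tree edge $f$ as $\prod_{e\in\cut_G(A_f)}w(e)$, where $A_f$ is one side of $f$. The preprocessing you single out as the main obstacle, however, rests on two false premises, and one of your proposed steps would break the bound.

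\textbf{Bag size is bounded.} Tree-cutwidth does bound bag size. Core vertices are never suppressed when forming the $3$-centre, so $|X_t|\le\tor(t)\le k$ for every node. No splitting of bags is needed, and your splitting step is not justified as stated in any case.

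\textbf{Empty bags cost nothing.} Contracting an empty bag into a neighbour changes no adhesion at all, let alone by a factor of $2$. For any remaining tree edge $f$, the two merged nodes lie on the same side of $f$, and the merged bag is $X_s\cup\emptyset=X_s$. So the vertex sets on the two sides of $f$, and hence the induced cut of $G$, are exactly as before. After this step alone you have at most $n$ nonempty groups, each of size at most $k$. Each tree edge carries exactly the edges of one original adhesion cut, of size at most $k$. This gives local dimension $d^k$ and bond dimension $\chi^k$ directly.

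\textbf{Merging thin branches fails.} Your alternative step, merging every thin branch (adhesion at most $2$) into its parent, would break the local-dimension bound, because a thin branch can contain arbitrarily many vertices. Take $G$ a path with the obvious width-$1$ decomposition rooted at an end. The child branch of the root has adhesion $1$ and contains $n-1$ vertices, so merging produces a single group of size $n$. The torso only contracts such a branch to one peripheral vertex and then suppresses it. It never moves the branch's vertices into the core, so the torso bound says nothing about the merged size.

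\textbf{Source of the factor $2$.} The factor $2$ in $2\tcw(G)$ does not come from preprocessing. It comes from the algorithm you invoke, which only guarantees a decomposition of width $k\le 2\tcw(G)$, not width $\tcw(G)$ as you assume. With an optimal decomposition the argument above gives $d^{\tcw(G)}$ and $\chi^{\tcw(G)}$. With the computed one it gives $d^{2\tcw(G)}$ and $\chi^{2\tcw(G)}$, which is the statement. Under your own accounting, a factor-$2$ loss in preprocessing applied to the computed decomposition would only give exponents $4\tcw(G)$.

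\textbf{High-degree nodes.} Your concern about high-degree nodes of the decomposition tree is misplaced here. Bond dimension is a per-edge quantity, and the load on each tree edge is its adhesion, however many branches meet at a node. The only use of the torso condition is the bag-size bound $|X_t|\le\tor(t)$.
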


Every quantum state admits an MPS or TTN representation at sufficiently large bond dimension. \Cref{inf-thm:trafo-to-mps,inf-thm:trafo-to-ttn}, proved as \Cref{thm:reroute-to-mps,thm:reroute-to-ttn}, give explicit bounds on the bond dimension required when the starting point is a TN on a general weighted graph.

\paragraph{Learning tensor network states.}

We next use these representation results to study TNS tomography. Throughout, ``learning a state'' means performing tomography of that state, and we use the two terms interchangeably. Given a TNS on a complicated graph, we may first obtain a representation on a simpler graph and then apply a learner tailored to that representation.

We begin by proving efficient tomography of TTNs with known topology.

\begin{theorem}[TTN state tomography (informal)]\label{inf-thm:ttn-tomography}
    We can perform tomography of an unknown $n$-qudit TTN state with bond dimension $\chi$ (and with a known tree structure) from
    \begin{equation}
        O\left( \frac{n^{3}}{\varepsilon^{4}}\left( (d\chi)^{\max\{2,\Delta\}} + \log(n/\delta) \right)\right)
    \end{equation}
    many copies, where $\Delta=\Delta(T)$ is the maximum degree of the underlying tree $T$, $\varepsilon$ is the desired accuracy in trace norm, and $1-\delta$ is the desired success probability.
    The runtime of the tomography procedure is polynomial in $n$, $(d\chi)^{\max\{2,\Delta\}}$, $1/\varepsilon$, and $\log(1/\delta)$.
\end{theorem}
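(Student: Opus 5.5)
We extend the disentangling MPS learner of~\cite{Cramer_2010}, in the form analysed in~\cite{bakshi2025learning,lin2025efficientclosestmatrixproduct}, from paths to the known tree $T$. Root $T$ at a leaf and process the vertices in a reverse breadth-first order, so each vertex is handled only after all of its descendants. To vertex $v$ we attach a local window consisting of $v$ together with the \emph{virtual} registers that replace its already-processed subtrees. The inductive invariant is the following. After processing the subtree $T_v$ rooted at $v$, we hold an isometry $U_v$ acting on the physical qudits of $T_v$. This isometry compresses those qudits into a register of dimension at most $\chi$, namely the support of the reduced state on $T_v$. The compression holds up to a small error. The bound $\chi$ on the support holds because the single edge joining $T_v$ to its parent is a cut of bond dimension $\chi$, so the Schmidt rank across it is at most $\chi$.

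At vertex $v$ with children $c_1,\dots,c_k$, where $k\le\Delta-1$ (or $k\le \Delta$ at the root), the relevant system is the qudit $v$ together with the $k$ compressed child registers. Its dimension is at most $d\chi^{k}\le (d\chi)^{\Delta}$. We estimate the reduced density matrix of this system by full-state tomography on the compressed copies. To do so, we apply the already-learned $U_{c_i}$ coherently on fresh copies, or equivalently estimate the reduced state on the physical qudits of the relevant region and conjugate by the learned isometries. We then take the projector onto the top-$\chi$ eigenspace of the estimate; this projector defines $U_v$. At a path vertex ($\Delta\le 2$) the local dimension is $(d\chi)^2$, which explains the $\max\{2,\Delta\}$ in the bound. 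Tomography of a $D$-dimensional state to trace-norm accuracy $\eta$ with failure probability $\delta'$ needs $O((D+\log(1/\delta'))/\eta^2)$ copies, up to the form used in the cited analyses. We take $D=(d\chi)^{\max\{2,\Delta\}}$ and $\delta'=\delta/n$, and apply a union bound over the $n$ vertices.

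For error accumulation we follow the MPS analysis. The truncation error of each projection is controlled by the gentle-measurement / Eckart--Young argument. The ideal state has exact rank at most $\chi$ on the window, so the estimated top-$\chi$ projector captures weight at least $1-O(\eta)$ of it, and the infidelity of the disentangled state grows by $O(\eta)$. This works because the disentangling unitaries act on disjoint or nested regions. Summing over $n$ vertices gives total infidelity $O(n\eta)$. The trace-norm error is then $O(\sqrt{n\eta})$, so we need $\eta\approx\varepsilon^2/n$. Each vertex therefore costs $O(n^2(D+\log(n/\delta))/\varepsilon^4)$ copies. Since each vertex uses fresh copies, multiplying by $n$ gives the stated $O(n^3\varepsilon^{-4}((d\chi)^{\max\{2,\Delta\}}+\log(n/\delta)))$. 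At the root the fully compressed state is a pure state in dimension $\le(d\chi)^{\Delta}$, and we estimate its top eigenvector. The output is the pure state obtained by applying $U^\dagger$ of the learned isometries in reverse order. The runtime is polynomial because each step is an eigendecomposition and matrix multiplication in dimension $\mathrm{poly}((d\chi)^{\max\{2,\Delta\}})$. The final TTN is contracted along $T$ without ever building a $d^n$-dimensional object.

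The main obstacle is making the error propagation rigorous for branching vertices. Errors from $k$ different child subtrees enter one window simultaneously, and the estimated child isometries are not exactly the ideal ones. So we must show that the compressed state on $v$ and its children is still close to an exactly rank-$\chi$ object across the parent edge. Our plan is to prove a hybrid/telescoping bound on the fidelity of the sequentially projected state. Each projector is applied to a state that is $O(\sum\eta)$-close to the ideal one, and the projectors of different children commute since they act on disjoint registers. This reduces the branching case to $k$ independent copies of the path argument.
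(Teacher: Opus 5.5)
Your architecture matches the paper's: root at a leaf, process leaves to root, learn the state on the window formed by $v$ and its children's residual registers, keep the top-$\chi$ eigenspace, take $\eta\approx\varepsilon^2/n$, and pay for $n$ tomography calls. The gap is the step you defer as the ``main obstacle'', and the hybrid fix you sketch does not close it at the claimed rate.

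Suppose the compressed child registers are obtained by discarding the complementary qudits, or you compare the current window state with the ideal one. Then you only know that the state you estimate is \emph{close} to a rank-$\chi$ operator. Feeding that into the Eckart--Young step makes the loss at step $t$ depend on the accumulated error. For example, it becomes $O(\eta+\sqrt{t\eta})$ if closeness is measured in trace norm, or gains an extra term of the size of the discarded weight, up to $O(t\eta)$. These losses sum superlinearly in $n$ and force $\eta$ well below $\varepsilon^2/n$, so the $n^3/\varepsilon^4$ bound is lost.

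The missing idea is to implement each compression as a unitary followed by \emph{postselection} of the discarded qudits on $|0\rangle$, and to track the unnormalised branch $K|\psi\rangle$. Here $K$ is the product of all unitaries and projections applied before $u$. Every factor of $K$ is supported entirely inside or entirely outside $V_u$, since earlier subtrees are nested in or disjoint from $T_u$. Hence $K|\psi\rangle$ has Schmidt rank at most $\chi$ across the edge above $u$ \emph{exactly}, however inaccurate the child isometries were. Since $V_u\setminus S_u$ is fixed to $|0\rangle$, the window state has rank at most $\chi$ exactly.

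All imperfection is then absorbed into the branch norm. The top-eigenspace lemma gives $\|K_{t+1}|\psi\rangle\|^2\geq\|K_t|\psi\rangle\|^2-2\eta$, so $\mu_c\geq1-2n\eta$ with no hybrid argument. Because each projection commutes with all later unitaries, $K_c=P_QU_{\leq c}$, and the exact postselected reconstruction has fidelity $\mu_c$ with $|\psi\rangle$. The price is that each call is tomography of a subnormalised postselected state, which costs only a constant factor since $\mu\geq1/2$.

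Two smaller points. First, the window state is mixed, so the per-call rate for rank-$\chi$ tomography is $O((\chi D+\log(1/\delta'))/\eta^2)$, not the pure-state rate $O(D/\eta^2)$. With $D=d\chi^{k}$, or $D=d^{1+k\kappa}$ with qudit registers, this gives $(d\chi)^{k+1}$. The bound $k+1\leq\max\{2,\Delta\}$ holds precisely because the root is a leaf. The maximum matters only for $n\leq2$, not for path vertices, and your aside that the root may have $\Delta$ children would break the bound. Second, the ``equivalent'' alternative of estimating the reduced state on the physical qudits of the region and conjugating by the learned isometries requires tomography on all of $T_v$, which is exponentially large. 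Only coherent application with postselection on fresh copies is viable.
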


\Cref{inf-thm:ttn-tomography}, proved as \Cref{thm:ttn-tomography}, extends efficient MPS tomography~\cite{Cramer_2010,landoncardinal2010} to TTNs. For constant $d$ and $\chi$, the algorithm is both sample efficient and computationally efficient for TTNs of maximum degree $O(\log n)$. For paths, it recovers the known MPS bounds.

This gives two tractable target representations for learning, namely MPSs and TTNs. Combining the representation transformations with the corresponding tomography algorithms yields the following result for general TNSs:
\begin{theorem}[Parameterised complexity of black-box TNS tomography (informal)]\label{inf-thm:general-tns-learning-blackbox}
    We can perform tomography of an unknown TN state on $n$ qudits with (known) graph $G$, physical dimension $d$, and maximum bond dimension $\chi$ from
    \begin{equation}
        O\left( \frac{n^{3}}{\varepsilon^{4}}\left( \min\left\{ d^{2}\chi^{2\cw(G)},\ (d\chi)^{2\tcw(G)\max\{2,\Delta(\hat{\mathcal{T}})\}} \right\} + \log(n/\delta) \right)\right)
    \end{equation}
    many copies, where $\Delta(\hat{\mathcal{T}})$ is the maximum degree of the decomposition tree obtained from the computed tree-cut decomposition, $\varepsilon$ is the desired accuracy in trace norm, and $1-\delta$ is the desired success probability. The learner runs in time polynomial in $n$, the quantity inside the minimum, $1/\varepsilon$, and $\log(1/\delta)$, after computing an optimal cutwidth ordering in time $2^{O(\cw(G)^2)}n$ or an approximate tree-cut decomposition in time $2^{O(\tcw(G)^2\log\tcw(G))}n^2$.
\end{theorem}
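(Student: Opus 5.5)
The proof composes the representation theorems with the TTN tomography result. We run two learners and pick whichever bound is smaller; since $G$ is known, this choice can be made before any copies are measured.

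\emph{MPS branch.} First compute an optimal cutwidth ordering of $G$ in time $2^{O(\cw(G)^2)}n$ (\Cref{inf-thm:trafo-to-mps}). By repeated entanglement rerouting (\Cref{inf-thm:entanglement-rerouting}), every edge is rerouted along the consecutive path edges it spans in this ordering. Multiplying dimensions over all edges crossing each prefix cut gives an MPS representation on the ordered path with physical dimension $d$ and bond dimension $\chi_{\mathrm{MPS}}\le\chi^{\cw(G)}$. The unknown state is therefore an MPS on a known path $T$ with $\Delta(T)\le 2$. We then apply \Cref{inf-thm:ttn-tomography} with bond dimension $\chi^{\cw(G)}$, which gives sample complexity
\begin{equation}
O\left(\frac{n^3}{\varepsilon^4}\left((d\chi^{\cw(G)})^{2}+\log(n/\delta)\right)\right)=O\left(\frac{n^3}{\varepsilon^4}\left(d^2\chi^{2\cw(G)}+\log(n/\delta)\right)\right).
\end{equation}
The runtime is polynomial in $d^2\chi^{2\cw(G)}$.

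\emph{TTN branch.} Compute an approximate tree-cut decomposition $\hat{\mathcal T}$ in time $2^{O(\tcw(G)^2\log\tcw(G))}n^2$. By \Cref{inf-thm:trafo-to-ttn}, the state is then a TTN on the known tree $\hat{\mathcal T}$ with at most $n$ grouped subsystems. The grouped local dimension is $d'\le d^{2\tcw(G)}$ and the bond dimension is $\chi'\le\chi^{2\tcw(G)}$. Note that these exponents refer to the width of the \emph{computed} decomposition, which is bounded by $2\tcw(G)$ through the approximation guarantee; this is exactly where the factor $2$ originates. We apply \Cref{inf-thm:ttn-tomography} with $n'\le n$ sites, physical dimension $d'$, and bond dimension $\chi'$. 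Since $d'\chi'\le (d\chi)^{2\tcw(G)}$, this gives $O(\frac{n^3}{\varepsilon^4}((d\chi)^{2\tcw(G)\max\{2,\Delta(\hat{\mathcal T})\}}+\log(n/\delta)))$ samples. Two further points need checking here. First, tomography of the grouped system to trace-norm accuracy $\varepsilon$ is the same as tomography of the original $n$-qudit state, because grouping is just a relabelling of the tensor factors. Second, the measurements on a grouped site are implementable using copies of the original qudits.

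Taking the minimum over the two branches gives the stated bound. The runtime statement follows from the polynomial runtime in \Cref{inf-thm:ttn-tomography} together with the preprocessing times.

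The main obstacle is bookkeeping in the TTN branch. The learner needs the tree to be known, but the tree here is the computed $\hat{\mathcal T}$, which may contain empty bags or nodes of high degree. If some bags are empty, we would first either contract those nodes into neighbours, or assign them trivial local dimension $1$ and check that \Cref{inf-thm:ttn-tomography} tolerates $d=1$ at some sites. Either way, the relevant degree parameter is $\Delta(\hat{\mathcal T})$, and it must stay the degree of the tree actually fed to the learner.
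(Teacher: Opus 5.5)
Your proposal is correct and follows essentially the same route as the paper: reroute onto the path of an optimal cutwidth ordering (bond dimension $\chi^{\cw(G)}$), or onto the tree of the computed width-$k\le 2\tcw(G)$ tree-cut decomposition (grouped dimension $d^k$, bond dimension $\chi^k$, so $d'\chi'=(d\chi)^k$). You then run the realisable learner and keep the cheaper branch. The differences are cosmetic. In the MPS branch the paper calls the sequential MPS learner directly rather than the TTN learner on a path; these coincide, cf.\ \Cref{rem:relation-to-mps-learner}. Your empty-bag concern is resolved by your first option, contracting empty bags as in \Cref{lem:remove-empty-bags}, so that $\hat{\mathcal T}$ is the contracted tree with $m\le n$ nodes. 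The paper also pads every grouped site to the uniform dimension $d^k$ so that \Cref{thm:ttn-tomography} applies verbatim; your proposal omits this step, but it is minor bookkeeping.
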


The proof of \Cref{inf-thm:general-tns-learning-blackbox}, given in \Cref{thm:blackbox-mps-learning,thm:blackbox-tcw-learning}, first uses cutwidth or tree-cutwidth to obtain a structured representation and then applies the corresponding tomography algorithm. The graph parameters bound the increase in bond dimension and thereby the overall learning complexity.

The approach to general-graph TNS tomography in \Cref{inf-thm:general-tns-learning-blackbox} uses the transformations of \Cref{inf-thm:trafo-to-mps,inf-thm:trafo-to-ttn} as black boxes (see also the discussion in \Cref{subsec:technical-overview} below).
The preceding construction is modular, but treating the representation transformations as black boxes can yield weaker bounds. Our next result works directly with cuts of the original graph along a chosen sequence of subsets, and its cost is described by a finer graph parameter.

\begin{theorem}[Parameterised complexity of direct TNS tomography (informal)]\label{inf-thm:general-tns-learning-white-box}
    We can perform tomography of an unknown TNS on $n$ qudits with (known) graph $G$, physical dimension $d$, and maximum bond dimension $\chi$ from
    \begin{equation}
        O\left( \frac{n^{3}}{\varepsilon^{4}}\left( d^{\,\lc_{d,\chi}(G)} + \log(n/\delta) \right)\right)
    \end{equation}
    many copies, where $\varepsilon$ is the desired accuracy in trace norm, and $1-\delta$ is the desired success probability.
Here, $\lc_{d,\chi}(G)$ is a graph parameter that we call \emph{learning complexity}. It is the minimum over learning sequences of the largest combined size of an active subsystem and its retained residual register. We show that
    \begin{equation}
        \lc_{d,\chi}(G)
        \leq
        \min\left\{
            n,
            3\max\{1,\lceil \CC(G)\log_d\chi\rceil\}
        \right\}.
    \end{equation}
    In particular,
    $\lc_{d,\chi}(G)=O\left(\Delta\,\tw(G)\max\{1,\lceil\log_d\chi\rceil\}\right)$
    when $G$ is connected and has at least two vertices. Here, $\CC(G)$ is the contraction complexity of Markov and Shi \cite{Markov_2008}, $\Delta=\Delta(G)$ is the maximum degree of $G$, and the treewidth $\tw(G)$ measures how well the vertices of $G$ can be organised into a tree of small overlapping bags.
    The learner runs in time polynomial in $n$, $d^{\,\lc_{d,\chi}(G)}$, $1/\varepsilon$, and $\log(1/\delta)$ once a learning sequence achieving the bound is supplied.  A learning sequence attaining the upper bound can be constructed from an optimal contraction sequence in time $2^{O(\CC(G)^2)}\poly(n)$.
\end{theorem}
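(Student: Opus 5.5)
Our plan is to generalise the disentangling MPS learner of \cite{Cramer_2010}, as analysed in \cite{bakshi2025learning,lin2025efficientclosestmatrixproduct}, from paths to an arbitrary known graph. We first formalise a \emph{learning sequence}: an ordering of steps that each absorb a set of new vertices into an \emph{active subsystem}, together with a \emph{residual register}. The residual register is a virtual system of $\lceil \log_d \chi^{|\partial|}\rceil$ qudits that carries the Schmidt support across the current cut $\partial$ between processed and unprocessed vertices. At each step we carry out three operations. First, we estimate the reduced density matrix on the active subsystem together with the residual register, using $O((d^{\lc}+\log(n/\delta))/\eta^2)$ copies by standard full-state tomography with union-bound confidence. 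Second, we compute its dominant eigenspace, which has rank at most $\chi^{|\partial|}$ because every cut of a TNS with bond dimension $\chi$ has Schmidt rank at most $\chi^{|\partial|}$. Third, we apply an isometry $V_t$ that compresses the processed part onto the new residual register and disentangles the remainder into a fixed product state. The output is the product of the inverses $V_t^\dagger$ applied to a product state, so the procedure is essentially an MPS-like sequential circuit whose local dimension is $d^{\lc}$.

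The error analysis will follow the telescoping argument of the MPS case. Each step incurs a fidelity loss of order $\eta$ from the estimated projector. We plan to bound this loss through a Davis--Kahan/Gentle-measurement-type bound, or through the ``fidelity after projection'' lemma used in \cite{bakshi2025learning}. Summing over at most $n$ steps then gives total trace error $O(n\sqrt{\eta})$. Setting $\eta\sim\varepsilon^2/n^2$ and multiplying by $n$ steps gives the claimed $n^3\varepsilon^{-4}(d^{\lc}+\log(n/\delta))$ sample count. The running time is polynomial in $d^{\lc}$ because it is dominated by eigendecompositions of $d^{\lc}\times d^{\lc}$ matrices.

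For the upper bound on $\lc_{d,\chi}(G)$, we will take a contraction sequence of width $\CC(G)$ in the sense of Markov and Shi. At each step we let the active subsystem contain the vertices merged at that step, and we let the residual registers hold the open bonds of the two merged intermediate tensors and of the result. Each of these three cut sets has at most $\CC(G)$ edges, so each register needs at most $\max\{1,\lceil \CC(G)\log_d\chi\rceil\}$ qudits, and we obtain the factor $3$. The trivial bound $n$ comes from learning the whole state at once. For the treewidth corollary we invoke the known relation $\CC(G)\le \Delta(G)(\tw(G)+1)$ for connected $G$, which follows from Markov--Shi via line graphs. We construct the sequence in time $2^{O(\CC^2)}\poly(n)$ by computing an optimal contraction sequence, equivalently a treewidth decomposition of the line graph, with Bodlaender's algorithm.

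We expect the main obstacle to be the translation from a tree-shaped contraction sequence to a \emph{sequential} learner. A contraction tree merges two independently processed parts, so the learner has to hold two residual registers at once. We would order the tree traversal so that at most two pending residual registers plus the new one coexist; this is where the constant $3$ enters. We also need the disentangling step to be well defined when physical vertices are absorbed in the middle of the sequence, so that the estimated marginals on (active $\cup$ residual registers) are obtainable from copies after the previously learned isometries have been applied. To handle this we would apply the $V_t$ coherently to fresh copies, as in the MPS learner, and bound the error propagation by the triangle inequality.
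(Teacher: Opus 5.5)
Your overall architecture (learning sequences, residual registers of $\lceil|\cut_G(S_i)|\log_d\chi\rceil$ qudits, the contraction-sequence count $a_i+q_i\le 2B+B$ with $B=\max\{1,\lceil C\log_d\chi\rceil\}$, the one-step bound $n$, and Markov--Shi for treewidth) matches the paper. The gap is that your error analysis does not give the stated copy count. With per-step trace error $O(\sqrt\eta)$ summed by the triangle inequality, you need $\eta\sim\varepsilon^2/n^2$. Each call then costs $(d^{\lc_{d,\chi}(G)}+\log(n/\delta))/\eta^2\sim n^4\varepsilon^{-4}(\cdots)$ copies, and $n$ calls cost $n^5\varepsilon^{-4}(\cdots)$, not $n^3\varepsilon^{-4}(\cdots)$.

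The missing idea is to accumulate losses in squared norm and take one square root at the end; the paper does this by postselection. At each step the learned unitary is applied and the qudits $Q_i=M_i\setminus R_i$ are projected onto $|0\rangle$. Every later call is a \emph{subnormalised} tomography of the postselected branch $K_c|\psi\rangle$ on fresh copies, whose success probability is known to be at least $1/2$. Write $\mu_t=\|K_t|\psi\rangle\|^2$. The Eckart--Young truncation argument (which, unlike Davis--Kahan, needs no eigengap) shows: if $\tau$ has rank at most $r$ and $W$ is the top-$r$ eigenspace of an $\eta$-accurate estimate, then $\operatorname{tr}[(I-\Pi_W)\tau]\le 2\eta$. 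Hence $\mu_{t+1}\ge\mu_t-2\eta$. Each projection commutes with all later unitaries, so $K_c=P_QU_{\le c}$. The ideal reconstruction $U_{\le c}^\dagger K_c|\psi\rangle/\sqrt{\mu_c}$ then has squared overlap \emph{exactly} $\mu_c\ge 1-2L\eta$ with $|\psi\rangle$. With the final residual step this gives trace error $2\sqrt{2L\eta}+4\sqrt\eta$, so $\eta=\varepsilon^2/(128L)$ suffices and the total is $O(L^3\varepsilon^{-4}(\cdots))$ with $L\le 2n-1$. Applying your $V_t$ coherently without postselecting loses this exact overlap identity, and also the exact rank promise discussed next.

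Several further steps need repair. First, the rank bound must hold exactly for the state learned at step $i$, after all earlier unitaries and projections, not just for the original TNS. A single ``current cut between processed and unprocessed vertices'' is the wrong object once several disjoint subtrees have been processed. The paper uses laminarity of the learning sequence: every earlier operation acts entirely inside $S_i$ or entirely inside $[n]\setminus S_i$, so it cannot raise the Schmidt rank across $S_i\mid[n]\setminus S_i$. On the pure postselected branch the qudits of $S_i\setminus M_i$ are fixed to $|0\rangle$, so the reduced state on $M_i$ has rank at most $\chi^{|\cut_G(S_i)|}$.

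Second, the per-call rate $d^{\lc_{d,\chi}(G)}$ is the \emph{rank-$r$} tomography rate $r_id^{a_i}\le d^{a_i+q_i}$ on the active set $M_i$ of $a_i$ qudits. Standard full-state tomography there costs $d^{2a_i}$, which for contraction-derived sequences can reach $d^{4B}$ instead of $d^{3B}$. The factor $3$ is exactly your count $2B+B$. No traversal order limiting the number of pending registers is needed, because idle residual registers are just untouched qudits of the global state.

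Finally, Bodlaender's algorithm runs in time $2^{O(k^3)}n$. To get $2^{O(C^2)}\operatorname{poly}(n)$ you need an exact treewidth algorithm running in time $2^{O(k^2)}n^{O(1)}$, applied to the line graph of $G$, whose treewidth equals $C$.
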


The formal version of \Cref{inf-thm:general-tns-learning-white-box} is
\Cref{thm:learning-sequence-tomography}, while \Cref{cor:lc-upper-bound} gives
the displayed estimate in terms of contraction complexity. The direct approach
does not choose an MPS or TTN target in advance. Its exponent depends on the cuts
encountered along the chosen learning sequence.

The precise learning-sequence bound contains the path and tree schedules used by
the black-box algorithms as special cases. Optimising over learning sequences can
only improve on these choices. The contraction-complexity estimate is a more general
bound obtained from a particular learning sequence and can be less sharp on some
graph families.

Finally, we extend the direct approach to agnostic tomography. The input may
lie outside $\mathcal S_d(G,\chi)$, and the algorithm outputs a pure state whose
fidelity is within additive error $\epsilon$ of the optimum over that class.
The output is represented by the circuit produced by the algorithm and need not lie in $\mathcal S_d(G,\chi)$. That is, it need not admit a
representation on $G$ with bond dimension at most $\chi$.

\begin{theorem}[Agnostic TNS tomography (informal)]
\label{inf-thm:agnostic-tns-tomography}
Let $\rho$ be an arbitrary $n$-qudit state, and let a graph $G$, a bond dimension bound $\chi$, and a learning sequence for $G$ be known. Suppose further that
\begin{equation}
    \operatorname{OPT}_{G,\chi}(\rho)
    :=
    \sup_{|\phi\rangle\in\mathcal S_d(G,\chi)}
    \langle\phi|\rho|\phi\rangle
    \geq
    \vartheta
\end{equation}
for a known $\vartheta>0$.
There is an algorithm that outputs a pure state $|\hat\psi\rangle$ such that
\begin{equation}
    \langle\hat\psi|\rho|\hat\psi\rangle
    \geq
    \operatorname{OPT}_{G,\chi}(\rho)-\epsilon
\end{equation}
with probability at least $1-\delta$. Its copy and computational complexity are polynomial in the length of the learning sequence, $1/\vartheta$, $1/\min\{\epsilon,\vartheta\}$, and $\log(1/\delta)$, and exponential only in the largest active register induced by the sequence.
\end{theorem}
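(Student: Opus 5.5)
The plan is to adapt the disentangling learner of \cite{Cramer_2010}, as analysed in \cite{bakshi2025learning,lin2025efficientclosestmatrixproduct}, to the learning sequence of $G$, replacing exact rank structure by optimisation over projectors. The learning sequence specifies, at step $t$, an active register $A_t$: the newly touched qudits together with the residual register $R_{t-1}$ retained from the previous step, with $|A_t|\le\lc_{d,\chi}(G)$. Every $|\phi\rangle\in\mathcal S_d(G,\chi)$ has, across the cut separating the processed vertices from the rest, Schmidt rank at most $\chi^{c_t}$, where $c_t$ counts cut edges. In the realisable proof this rank bound lets a unitary $U_t$ on $A_t$ map the support of the reduced state onto $R_t$ and set the remaining qudits of $A_t$ to $|0\rangle$. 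Agnostically the support is no longer small, so at each step I would choose $U_t$ to \emph{maximise} the weight on $|0\rangle$ of the discarded qudits. The output is $|\hat\psi\rangle=U_1^\dagger\cdots U_T^\dagger|0\cdots0\rangle$.

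\textbf{Step 1 (structural lemma).} Fix any $|\phi\rangle\in\mathcal S_d(G,\chi)$ with $\langle\phi|\rho|\phi\rangle=F$. I would show by induction that there exist unitaries $V_t$ on $A_t$ such that the sequential disentangling circuit, compressed to the rank bound at each step, maps $|\phi\rangle$ exactly onto $|0\rangle$ on every discarded register. This is the realisable correctness argument, which is assumed available from \Cref{thm:learning-sequence-tomography}. Hence the ideal fidelity achievable by circuits of this architecture is at least $\operatorname{OPT}_{G,\chi}(\rho)$.

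\textbf{Step 2 (greedy versus optimal).} Greedy local maximisation is not obviously globally optimal. Instead I would use the estimate of \cite{bakshi2025learning}: at step $t$, conditioned on the projector $P_{<t}$ onto $|0\rangle$ on earlier discarded qudits, choose $U_t$ maximising $\operatorname{tr}(\Pi_t U_t\tilde\rho_t U_t^\dagger)$, where $\tilde\rho_t$ is the postselected reduced state on $A_t$. A telescoping (quantum union bound / gentle measurement) argument then shows that the final fidelity is at least $\operatorname{OPT}-\sum_t\eta_t$, with $\eta_t$ the per-step estimation error. Postselection probabilities are at least roughly $\vartheta$, which is where $1/\vartheta$ enters. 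This telescoping comparison between the greedy choice and the fixed comparator $|\phi\rangle$ is the main obstacle. The first attempt would use the fact that $\phi$'s local reduced state after $V_{<t}$ is feasible for the step-$t$ maximisation, and control the discrepancy caused by the difference between $U_{<t}$ and $V_{<t}$ via the sequential fidelity inequality $\sqrt{1-F}$ accumulation. I expect to lose to $\epsilon$ only after choosing per-step accuracy $\eta=\Theta(\min\{\epsilon,\vartheta\}^2/T)$.

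\textbf{Step 3 (estimation and complexity).} Each $\tilde\rho_t$ is estimated by full tomography on $A_t$, applying the already-learned $U_{<t}$ to fresh copies and postselecting. This uses $\poly(d^{|A_t|},1/\eta,\log(T/\delta))/\vartheta$ copies. The optimal $U_t$ is obtained by diagonalising the estimate, which costs $\poly(d^{|A_t|})$ time. A union bound over the $T$ steps gives success probability $1-\delta$. Together these give the claimed polynomial dependence on the sequence length, $1/\vartheta$, $1/\min\{\epsilon,\vartheta\}$ and $\log(1/\delta)$, with exponential dependence only on $\max_t|A_t|$.
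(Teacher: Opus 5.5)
Your proposal has a genuine gap in Steps~1--2. You keep the \emph{realisable}-size residual register, so that $|A_t|\le\lc_{d,\chi}(G)$ and the retained dimension is about $\chi^{c_t}$, and you greedily rotate the top eigenspace of that dimension into it. Agnostically this can lose a constant amount of overlap in a single step, however accurate the tomography is, so the claimed bound $\operatorname{OPT}-\sum_t\eta_t$ fails. Here is a counterexample. Take $\chi=1$, so the reference class is the product states and every realisable residual register is empty. Process one qubit per step, and let $\rho=|\psi\rangle\langle\psi|$ with $|\psi\rangle=\sqrt{0.6}\,|0\rangle|\Psi^-\rangle^{\otimes k}+\sqrt{0.4}\,|1\rangle|0^{2k}\rangle$. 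The reduced state of qubit~$1$ is $\operatorname{diag}(0.6,0.4)$, so your first step projects it onto (approximately) $|0\rangle$. Your output is then essentially $|0\rangle\otimes|x\rangle$ with $|x\rangle$ a product state, whose overlap with $\rho$ is at most $0.6\cdot 2^{-k}$, whereas $\operatorname{OPT}\geq0.4$. Step~1 does not rescue this. After the first step the learner's $U_t$ and the comparator's disentanglers $V_t$ can be unrelated, so there is no small ``$U_{<t}$ versus $V_{<t}$'' discrepancy to control. Moreover, the fact that your circuit architecture contains $\mathcal S_d(G,\chi)$ says nothing about what greedy choices achieve. The projection estimate of \cite{bakshi2025learning} you allude to is the right kind of tool. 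However, it bounds the loss by the operator norm of the \emph{discarded} part of the reduced state, which can be as large as $1/(r_t+1)$ when only about $r_t$ eigenvectors are kept, so the per-step bound is of order one. Your lower bound of roughly $\vartheta$ on the postselection probabilities likewise has no mechanism behind it.

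The missing ingredients are an enlarged retained subspace and a different potential. Keep the top $\Lambda_i=\lceil 64r_i/\alpha^2\rceil$ eigenvectors, with $r_i=\chi^{|\cut_G(S_i)|}$ and $\alpha=\min\{\epsilon,\vartheta\}/(4L)$; that is, retain $p_i=\max\{1,\lceil\log_d\Lambda_i\rceil\}$ qudits. The relevant active register is then $a_i^{\mathrm{agn}}=|F_i|+\sum_{j\in I_i}p_j$, where $M_i$ collects the residuals of all children $j\in I_i$, rather than $\lc_{d,\chi}(G)$. The cost stays polynomial in $L$ and $1/\min\{\epsilon,\vartheta\}$ because $d^{p_j}=O(d\,r_jL^2/\min\{\epsilon,\vartheta\}^2)$. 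For the analysis, do not compare unitaries at all. Fix $|\phi\rangle\in\mathcal S_d(G,\chi)$ with $\langle\phi|\rho|\phi\rangle\geq\operatorname{OPT}-L\alpha$, push it through the learner's \emph{own} projections, and track $B_t=\langle\phi|K_t^\dagger K_t\rho K_t^\dagger K_t|\phi\rangle$. Three facts control each step:
\begin{enumerate}
\item By laminarity, $K_{<i}|\phi\rangle$ still has Schmidt rank at most $r_i$ across $M_i\mid[n]\setminus M_i$ (\Cref{lem:agnostic-comparator-rank}).
\item The part of the reduced state outside the top-$\Lambda_i$ subspace has operator norm at most $1/(\Lambda_i+1)+2\gamma_i$ (\Cref{lem:top-subspace-opnorm}).
\item Projecting onto a subspace whose complement carries operator norm at most $\eta$ changes the $\rho$-overlap of a Schmidt-rank-$r$ vector by at most $2\sqrt{r\eta}$ (\Cref{lem:agnostic-projection-bound}).
\end{enumerate}
Because this loss is a square root and carries the rank factor, the per-step accuracy has to be of order $\alpha^2/r_i$ (the choice $\gamma_i=\alpha^2/(128r_i)$), not $\Theta(\min\{\epsilon,\vartheta\}^2/T)$. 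With this choice each step costs less than $\alpha$, giving $B_c\geq\operatorname{OPT}-\min\{\epsilon,\vartheta\}/2$. The same potential gives $\operatorname{tr}(K_t\rho K_t^\dagger)\geq B_t\geq\vartheta/2$, which justifies the postselection bound. Finally, the final residual state $\tau$ satisfies $\lambda_{\max}(\tau)\geq B_c$, so its estimated top eigenvector closes the argument.
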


The formal statement of \Cref{inf-thm:agnostic-tns-tomography} and its specialisations to contraction sequences and TTNs are given in \Cref{thm:agnostic-learning-sequence-tomography,cor:agnostic-contraction-complexity,cor:agnostic-ttn-tomography}.

\subsection{Technical overview}\label{subsec:technical-overview}

\paragraph{Entanglement rerouting.}
Suppose that an edge $e=\{x,y\}$ in a TNS with weighted graph $(G,w)$ is to be rerouted through a vertex $z\neq x,y$, as shown in \Cref{fig:rerouting-tns}. Multiply the dimensions of $\{x,z\}$ and $\{y,z\}$ by $w(e)$ and delete $e$. View the two enlarged indices as pairs $(i,j)$ and $(k,\ell)$, where $j,\ell\in[w(e)]$. The new tensor at $z$ retains the old tensor entries when $j=\ell$ and is zero otherwise. This enforces equality of the two components carrying the rerouted index, so summing over the enlarged indices reproduces the original contraction.

\begin{figure}[ht]
\centering
\begin{tikzpicture}[
  v/.style={circle,draw=black!80,fill=white,line width=0.7pt,minimum size=16pt,inner sep=1pt,font=\small},
  gedge/.style={line width=0.7pt,black!60},
  redge/.style={line width=1.2pt,figaccent},
  ghost/.style={line width=0.7pt,black!45,dashed},
  lab/.style={font=\small,figaccent},
  glab/.style={font=\small,black!70},
  note/.style={font=\small\itshape,black!70}
]
\begin{scope}
  \node[v] (x) at (0,0) {$x$};
  \node[v] (y) at (3.2,0) {$y$};
  \node[v] (z) at (1.6,1.5) {$z$};
  \draw[gedge] (x)--node[above left=-2pt,glab]{$w(\{x,z\})$}(z);
  \draw[gedge] (z)--node[above right=-2pt,glab]{$w(\{y,z\})$}(y);
  \draw[redge] (x)--node[below,lab]{$w(e)$}(y);
  \node[font=\small] at (-0.9,1.55) {(a)};
\end{scope}
\draw[-latex,line width=0.9pt,black!70] (4.5,0.65) -- (6.1,0.65)
  node[midway,below=2pt,note]{reroute $e$ via $z$};
\begin{scope}[xshift=8.3cm]
  \node[v] (x) at (0,0) {$x$};
  \node[v] (y) at (3.2,0) {$y$};
  \node[v] (z) at (1.6,1.5) {$z$};
  \draw[redge] (x)--node[above left=-2pt,lab]{$w(\{x,z\})\,w(e)$}(z);
  \draw[redge] (z)--node[above right=-2pt,lab]{$w(\{y,z\})\,w(e)$}(y);
  \draw[ghost] (x)--(y);
  \node[font=\small] at (-0.9,1.55) {(b)};
\end{scope}
\node[note,anchor=west,text width=13.2cm] at (-0.5,-1.15)
  {new tensor at $z$: read each enlarged index as a pair, keep the old value when
   the two components carrying the $e$-index agree, and set it to $0$ otherwise};
\end{tikzpicture}
\caption{\textbf{Entanglement rerouting.} The edge $e=\{x,y\}$ of a weighted
tensor network graph, panel (a), is deleted and its virtual dimension is routed
through $z$ by multiplying the weights of the two rerouting edges by $w(e)$, panel (b). The
tensor at $z$ is updated so that the overall contraction is unchanged,
see \Cref{inf-thm:entanglement-rerouting}.}
\label{fig:rerouting-tns}
\end{figure}

\paragraph{Tensor network representations from graph parameters.}
We describe the MPS construction first. The TTN construction is analogous, with a tree-cut decomposition replacing the linear ordering. Fix an ordering $v_1,\ldots,v_n$ and take $v_1-v_2-\cdots-v_n$ as the target path. Every original edge $\{v_a,v_b\}$ with $a<b-1$ is routed along the segment from $v_a$ to $v_b$. Its weight is multiplied into every path edge on that segment. The final bond dimension on $\{v_i,v_{i+1}\}$ is the product of the weights of the original edges crossing the corresponding prefix cut. If at most $c$ edges cross any prefix cut and every original weight is at most $\chi$, this bond dimension is at most $\chi^c$. The maximum prefix-cut size is the cutwidth of the ordering, and minimising over orderings gives $\cw(G)$.

For the TTN construction, the bags of a tree-cut decomposition become grouped physical subsystems. Each original edge is routed along the unique path between the bags containing its endpoints. Adhesion sizes bound the resulting bond dimensions, while torso sizes bound the dimensions of the grouped physical systems.

\paragraph{Learning tree tensor network states.}
Our starting point is the MPS tomography algorithm of \cite{Cramer_2010}, which proceeds from left to right along the known path, unitarily disentangling one qudit at a time. At each step, a single virtual edge separates the active block from the unprocessed suffix, so its reduced state has rank at most $\chi$. This rank bound allows the disentangling unitary to be learned from few copies.

We extend the same strategy to trees by processing vertices from the leaves towards the root. At a vertex $u$, the active subsystem consists of $u$ and the residual registers retained by its child subtrees. A single tree edge separates this subsystem from the remainder of the tree, so its reduced-state rank is again at most $\chi$. The degree $\Delta$ enters through the active-subsystem size rather than the cut size: there is one fresh qudit and at most $\Delta-1$ residual registers, each containing approximately $\log_d\chi$ qudits.

\paragraph{Learning general tensor network states.}
One approach to learning a general TNS is first to use entanglement rerouting to obtain an MPS or TTN representation, then apply the corresponding learner. This construction gives \Cref{inf-thm:general-tns-learning-blackbox}.

The direct approach behind \Cref{inf-thm:general-tns-learning-white-box} does not choose a fixed target representation. It processes the graph according to a learning sequence, a rooted schedule in which subsets of vertices are assembled into the full vertex set. At step $i$, the learner acts on the fresh vertices introduced at that step and the residual registers retained by its children. It then compresses this active subsystem to a residual register of approximately $|\cut_G(S_i)|\log_d\chi$ qudits. The standard rank bound across a TN cut gives reduced-state rank at most $\chi^{|\cut_G(S_i)|}$. The analysis uses only this rank bound and imposes no further structure on either side of the cut.

The learning complexity $\lc_{d,\chi}(G)$ records the largest combined size of an active subsystem and its retained residual register, minimised over learning sequences. \Cref{lem:contraction-to-learning-sequence} shows that every contraction sequence induces a learning sequence, yielding the upper bound in \Cref{inf-thm:general-tns-learning-white-box}.

\subsection{Related work}

\paragraph{Parameterised complexity for tensor networks.} 

Markov and Shi initiated the parameterised analysis of TN contraction by relating classical simulation costs to the treewidth of the associated graph~\cite{Markov_2008}. Arad and Landau showed that additive TN evaluation is $\mathsf{BQP}$-complete~\cite{arad2010quantum}. Together, these results motivate the study of structural restrictions that make TN problems tractable.

O'Gorman~\cite{o2019parameterization} formulated TN contraction explicitly within parameterised complexity and analysed alternative width parameters. Jakes-Schauer, Anekstein, and Wocjan~\cite{jakes2019carving} gave empirical evidence that carving-width is a useful measure of the memory required to contract planar tensor networks. Dudek, Due\~{n}as-Osorio, and Vardi~\cite{dudek2019efficient} related optimal contraction orders to carving decompositions and used tree-decomposition heuristics to apply TN methods to weighted model counting.

Graph parameters have also been used extensively in practical contraction-ordering methods, including benchmarks of treewidth-based strategies~\cite{dumitrescu2018benchmarking} and the exact and approximate contraction methods of Gray and collaborators~\cite{gray2021hyper,gray2024hyperoptimized}. Cheng et al.~\cite{cheng2025breakingtreewidthbarrierquantum} showed that decision-diagram representations can outperform purely treewidth-based contraction on some circuit families, suggesting that parameters beyond treewidth may describe simulability more accurately.

\paragraph{Learning simple tensor network states.}

The earliest results on efficient tomography for structured TN states are due to Landon-Cardinal, Liu, and Poulin~\cite{landoncardinal2010} and Cramer et al.~\cite{Cramer_2010}. They use local reduced states and local disentangling operations to reconstruct an MPS with constant bond dimension and known ordering from polynomially many copies. Cramer et al. give both a scheme based on local unitaries and a scheme based only on local measurements with more involved classical postprocessing. These results rely on the path topology of the underlying graph: cutting a single virtual edge separates the system into two parts, allowing the global state to be reconstructed from local information. Fewer results are known beyond MPS: For higher-dimensional TNs and TNs with loops, separators grow with system size, and no efficient learning algorithm is known. Existing computational hardness results suggest that none exists in general~\cite{schuch2007computational, wahl2023simulating}. Recent work has extended these results for MPS. Bakshi et al.~
\cite{bakshi2025learning} give a proper closest-product-state learner and an improper closest-MPS learner (in Appendix~B). Their analysis also gives bounds for propagating subspace errors, of which we use a sharper version in our agnostic analysis (see \Cref{lem:agnostic-projection-bound}). Lin, Chia, and Hung~\cite{lin2025efficientclosestmatrixproduct} improve the system-size dependence of MPS tomography from quintic to cubic via a clever parallelisation of the iterative disentangling. Our reduction to MPS learning invokes our sequential learner by default and their parallel learner when circuit depth matters (see \Cref{thm:blackbox-mps-learning} and the accompanying remark). In the agnostic setting, where the input state need not lie in the promised class, guarantees were known for the closest product state \cite{bakshi2025learning} and for the closest MPS \cite{bakshi2025learning, lin2025efficientclosestmatrixproduct}. Our results extend agnostic tomography to tensor network states on trees and on general graphs (see \Cref{rem:agnostic-comparison}). Regarding structure learning, Hashemizadeh et al.~\cite{hashemizadeh2020adaptive} propose an adaptive heuristic for learning TN topologies from data, but without provable guarantees on sample or computational complexity. To our knowledge, no prior work exhibited an efficient tomography algorithm with explicit guarantees for TTN states beyond the MPS special case. Also, the dependence of learning complexity on graph parameters of the underlying entanglement graph had not been previously investigated.

\subsection{Discussion and outlook}

Our work develops the interface between graph theory and tensor networks beyond classical simulation. Prior work~\cite{Markov_2008,cheng2025breakingtreewidthbarrierquantum} showed that parameterised complexity can provide useful structural descriptions of TN simulation. We show that the same perspective also gives a systematic way to analyse the efficiency of TN representations and the sample and computational complexity of learning them.
We next describe several open questions within this setting.

\paragraph{TNS structure learning.}
All our TNS learning results assume that the underlying graph is known. This assumption already appears for MPS: the learners of \cite{landoncardinal2010,Cramer_2010} require a known ordering of the subsystems and use that ordering in the iterative-disentangling procedure. Recent work has studied structure learning for classical graphical models, quantum channels, and quantum Hamiltonians~\cite{bresler2015efficientlylearningising,vuffray2016interactionscreening,klivans2017learninggraphicalmodels,rouzé2023efficientlearningstructureparameters,bluhm2026hamiltonianpropertytesting,bakshi2024structurelearning,zhao2025learningstructure,hu2025ansatzfree,lewis2026learningstructureopenquantum}. This motivates the corresponding problem for TNSs: given only that an unknown state admits a representation on some graph with tractable structure, recover a suitable graph from copies of the state.
\paragraph{Learning versus simulation.}
Efficient quantum learnability often seems to coincide with efficient classical simulability, such as for TNS classes with tractable structure as discussed in this work \cite{vidal2003efficient, Markov_2008, Cramer_2010}, for Clifford circuits \cite{gottesman1998heisenbergrepresentationquantumcomputers, aaronson2004improved, low2009learning} and stabilizer states \cite{aaronson2008identifyingstabilizerstates, montanaro2017learningstabilizerstatesbell, rocchetto2018stabiliser}, for Clifford+$T$ circuits with few $T$-gates \cite{bravyi2016improved, lai2022learningquantumcircuitsofsomeTgates} and their output states \cite{Grewal2025efficientlearningof, Leone2024learningtdoped}, for non-interacting-fermion states \cite{valiant2001quantum, terhal2002classical, aaronson2023efficienttomographynoninteractingfermion, bittel2025optimal}, for Gaussian bosonic unitaries \cite{fanizza2025efficientlearningbosonicgaussian} and states \cite{Bittel2025optimalestimatesof, mele2025learning, bittel2025energyindependenttomographygaussianstates}, and for bosonic or fermionic operations with few non-Gaussian gates \cite{ReardonSmith2024improvedsimulation, Dias2024classicalsimulation, dias2024classical, cudby2024learninggaussianoperations, cudby2025gaussiandecompositionmagicstates, iyer2025mildlyinteractingfermionicunitariesefficiently} and their output states \cite{mele2025efficient}.
While these separate results can be viewed as evidence for a connection between learnability and simulability, whether such a connection can be established in general remains an important open question in quantum learning theory \cite{yoganathan2019conditionclassicalsimulabilityimplies} and quantum machine learning \cite{cerezo2025does}.
Our results strongly hint at the potential of parameterised complexity in investigating such a connection for TNSs by considering which graph parameters govern the complexities of learning and simulations of TNs, respectively. 
Our results relate the two upper bounds through contraction complexity. This parameter determines the cost of classical simulation through the treewidth relation of \cite{Markov_2008} and, by \Cref{cor:lc-upper-bound}, bounds the exponent of our direct learner. It remains open to determine the optimal graph-dependent exponents for both tasks, to prove corresponding lower bounds for learning, and to identify graph families that are efficiently learnable but hard to simulate, or vice versa.

\paragraph{Learning and contraction complexity.}
\Cref{lem:contraction-to-learning-sequence} shows that every contraction sequence induces a learning sequence, and thus
\begin{equation}
    \lc_{d,\chi}(G)
    \leq
    3\max\left\{
        1,
        \left\lceil
            \CC(G)\log_d\chi
        \right\rceil
    \right\}.
\end{equation}
A learning sequence consisting of a single step with $S_1=F_1=V$ and
$I_1=\emptyset$ gives
\begin{equation}
    \lc_{d,\chi}(G)\leq n.
\end{equation}
Combining these bounds yields
\begin{equation}
    \lc_{d,\chi}(G)
    \leq
    \min\left\{
        n,
        3\max\left\{
            1,
            \left\lceil
                \CC(G)\log_d\chi
            \right\rceil
        \right\}
    \right\}.
\end{equation}

The contraction-complexity bound is not always tight. Consider $G=K_n$ with
$d=\chi=2$. The one-step learning sequence gives
$\lc_{2,2}(K_n)\leq n$. Conversely, let $i$ be a leaf of the dependency tree of
an arbitrary learning sequence, and write $s=|S_i|$. Then $I_i=\emptyset$, and we
have $S_i=F_i$. Moreover,
$|\cut_{K_n}(S_i)|=s(n-s)$, so this step contributes
\begin{equation}
    s+s(n-s)
    =
    n+(s-1)(n-s)
    \geq
    n
\end{equation}
to the learning complexity of the sequence.
It follows that
\begin{equation}
    \lc_{2,2}(K_n)=n.
\end{equation}
On the other hand, $\CC(K_n)=\Theta(n^2)$. Indeed, a current vertex
corresponding to a set $S\subseteq V$ has degree $|S|(n-|S|)$. For $n\geq4$,
consider the first set of size at least $n/3$ created by a contraction. Its two
constituent sets have size less than $n/3$, so its size is less than $2n/3$.
The corresponding current vertex therefore has degree $\Omega(n^2)$, while
$|S|(n-|S|)\leq n^2/4$ for every $S\subseteq V$. Hence
$\CC(K_n)=\Theta(n^2)$.

This separation reflects the different operations allowed by the two notions. A contraction sequence uses pairwise contractions and may pay for many edges between partially assembled parts of a dense region. A learning sequence may introduce the entire region in one step, paying instead for the fresh vertices and the cuts connecting the assembled sets to the rest of the graph. Contraction sequences provide a general construction of learning sequences, but not always an optimal one. Given this, we ask:

\begin{question}[Tightness of the contraction-complexity bound]
\label{q:lc-cc-relation}
For which graph families and parameter regimes do we have
\begin{equation}
    \lc_{d,\chi}(G)
    =
    \Theta\left(
        \max\left\{
            1,
            \left\lceil
                \CC(G)\log_d\chi
            \right\rceil
        \right\}
    \right)?
\end{equation}
More generally, how large can the gap be between optimal learning complexity and the upper bound based on contraction, and which graph properties determine it?
\end{question}

The exact relation between learning complexity and the optimal copy complexity beyond the bound in \Cref{inf-thm:general-tns-learning-white-box} also remains open.

\begin{question}[Optimal graph-dependent complexity of TNS tomography]
\label{q:optimal-tns-tomography}
What is the optimal copy complexity of TNS tomography as a function of the underlying graph? In particular:
\begin{enumerate}
    \item Is there a family of graphs $G_n$ and states in $\mathcal S_d(G_n,\chi)$ for which every tomography algorithm requires $d^{\Omega(\lc_{d,\chi}(G_n))}$ copies?
    \item Can an adaptive learner, which re-optimises the remaining learning sequence after each step using the information obtained so far, achieve copy complexity below $d^{\lc_{d,\chi}(G)}$ for some graph family or some inputs?
\end{enumerate}
\end{question}

\paragraph{TNS testing.}
There has been a recent surge in interest in property testing of quantum states (e.g., \cite{odonnell2015quantumspectrumtesting, montanaro2016survey, badescu2019quantumstatecertification, gross2021schur, grewal2024improvedstabilizerestimation, hinsche2025single-copystabilizertesting, arunachlam2025polynomialtimetolerant, bao2025toleranttestingstabilizerstates, iyer2025toleranttestingstabilizerstates, beckey2025producttestingsinglecopymeasurements, girardi2025gaussiantestingbosonicquantum, Flammia2024quantumchisquared, aliakbarpour2025adversariallyrobustquantumstate, caro2025testingclassicalpropertiesquantum}), unitaries (e.g., \cite{low2009learning, chen2023testing}), Hamiltonians (e.g., \cite{aharonov2022quantum,laborde2022quantum,bluhm2026hamiltonianpropertytesting,kallaugher2025hamiltonianlocalitytesting,gao2025quantumhamiltoniancertification,sinha2025improvedhamiltonianlearningsparsity, bluhm2025certifyinglearningquantumising, lee2025optimalcertificationconstantlocalhamiltonians}), and channels (e.g., \cite{bao2025testing}).
In particular, MPS and TTN testers have recently been developed that extend the product testing algorithm of Harrow and Montanaro~\cite{harrow2013testing, soleimanifar2022testing, lovitz2024nearlytightboundstesting}. 

Can these testing algorithms be extended to achieve testing for more general TN graph structures, with complexities governed by suitable graph parameters? 
We note that, while our results (\Cref{inf-thm:trafo-to-mps,inf-thm:trafo-to-ttn}) on transforming to MPS or TTN representations are immediately useful to learning, they unfortunately cannot be straightforwardly applied to lift MPS or TTN testers to general graphs, since one cannot in general reduce testing a property $\mathcal{P}$ to testing a ``super-''property $\mathcal{Q}$ with $\mathcal{P}\subseteq\mathcal{Q}$.
They do, however, transfer in a relaxed form. By \Cref{thm:reroute-to-ttn}, any
tester for a TTN class accepts every state of $\mathcal S_d(G,\chi)$, with
certainty for testers of perfect completeness such as that of
\cite{lovitz2024nearlytightboundstesting}, while its soundness guarantee refers
only to the larger class $\mathcal S_{d^{2\tcw(G)}}(\hat{\mathcal T},\chi^{2\tcw(G)})$. The
obstruction to full soundness comes from the underlying structure: grouping the vertices in a bag hides the internal edge structure that a tester for $\mathcal S_d(G,\chi)$ would need to verify.
\begin{question}[Testing tensor network structure]
\label{q:tns-testing}
Is there a tester for membership in $\mathcal S_d(G,\chi)$, with two-sided error
and copy complexity polynomial in $n$ and $\chi^{\tcw(G)}$, or in $n$ and
$\chi^{\cw(G)}$?
\end{question}
Further questions arise from this parameterised perspective. Can one test graph parameters such as cutwidth, treewidth, or tree-cutwidth of a general-graph TNS under a bond-dimension promise? For a fixed graph, or a fixed graph class, and a fixed bond dimension, what is the complexity of testing whether a state belongs to the corresponding TNS class?

\paragraph{Graph-theoretic insights for TN heuristics.}

Graph-theoretic ideas already appear, often implicitly, in numerical TN heuristics. In DMRG for quantum chemistry~\cite{schollwock2011density}, for example, the optimisation cost depends strongly on the orbital ordering. Existing heuristics use the Fiedler vector of a mutual-information or exchange matrix~\cite{wouters2015chemps2}, matrix bandwidth~\cite{rissler2006measuring}, or block entropies~\cite{legeza2003optimizing} to keep strongly correlated orbitals close.

\Cref{inf-thm:trafo-to-mps} gives a rigorous explanation for this practice: the MPS bond-dimension overhead is bounded by the cutwidth of the graph in the chosen ordering. Existing orbital-ordering heuristics can be interpreted as approximate cutwidth minimisers. Parameterised cutwidth algorithms~\cite{THILIKOS20051,THILIKOS200525} provide an alternative to greedy methods on graphs of small cutwidth.

A similar opportunity arises for TTN methods such as $\mathrm{T}^3$NS~\cite{gunst2018t3ns} and hierarchical Tucker decompositions~\cite{grasedyck2010hierarchical}. Tree topologies are currently chosen largely through physical intuition or clustering heuristics. \Cref{inf-thm:trafo-to-ttn} identifies tree-cutwidth as a relevant parameter and approximate tree-cut decompositions~\cite{Kim2018} as a natural starting point.

Modern TN contraction methods~\cite{dumitrescu2018benchmarking,zhou2020limits,gray2021hyper,ayral2023density,gray2024hyperoptimized} use heuristic searches for contraction orderings that approximate treewidth or contraction complexity. Our results show that related graph parameters also control representation transformations and learning.

This opens the possibility of a unified approach in which a single decomposition, guided by graph parameters, is reused across simulation, compression to a simpler representation (by \Cref{inf-thm:entanglement-rerouting}), and tomographic reconstruction. While our exact theorems do not by themselves
replace numerical heuristics (we think truncation will remain essential in practice), they identify the structural quantities that those heuristics implicitly optimise. We expect that integrating parameterised algorithms for cutwidth, treewidth, and tree-cutwidth into existing TN optimisation approaches will both improve the analysis of current methods and suggest concrete pathways for further practical improvements.

\section{Definitions and notation}
\subsection{Graph theory and graph parameters}\label{subsec:graph-theory}

We recall the graph-theoretic notions and parameters used throughout the paper. For more details on graph theory and parameterised complexity, we refer the reader to textbooks such as \cite{jungnickel2008graphs, cygan2015parameterized}.

\paragraph{Graph theory basics.}

A graph $G=(V,E)$ consists of a finite vertex set $V$ and an edge set
$E \subseteq \big\{ \{u,v\}~|~u,v\in V,\ u\not=v\big\}$. We usually take
$V=[n]$, and we write $V(G)$ and $E(G)$ when the graph in question needs to be
named explicitly. In figures, we draw each vertex as a circle, or as a filled dot
when vertex names are irrelevant, and each edge as a line joining its two
endpoints, as in \Cref{fig:example-graphs} below.

The degree of a vertex $v\in V$ is
$\deg(v)=\big|\{u~|~\{u,v\}\in E\}\big|$, and
$\Delta(G)=\max_{v\in V}\deg(v)$ is the maximum degree of $G$. We suppress the
argument and simply write $\Delta$ whenever the graph is clear from context.

We will also work with weighted graphs, in which every edge carries a positive
integer weight.

\begin{definition}\label{def:weighted-graph}
    A weighted graph is a pair $(G,w)$, where $G=(V,E)$ is a graph and
    $w:E\to[M]$ is a positive integer weight function for some
    $M\in\mathbb N$. When $G$ is clear from context, we refer simply to the
    weight function $w$.
\end{definition}
For convenience, we extend $w$ to all unordered pairs of vertices by setting
\begin{equation}
    w(\{u,v\})=1
\end{equation}
whenever $\{u,v\}\notin E$. Edges of weight one may also be present explicitly. We ignore edges of weight $1$ for calculating parameters, even when explicitly present, as they are equivalent to non-existent edges for our purposes. Unless stated otherwise, we will work with simple graphs. Parallel edges can be combined as described in \Cref{rem:multi-edges}.

A path from $u$ to $v$ consists of distinct vertices
\begin{equation}
    x_1=u,x_2,\ldots,x_k=v
\end{equation}
and edges $\{x_i,x_{i+1}\}$ for $i\in[k-1]$. We denote it by $P$, or by $P_{u,v}$ when the endpoints matter. Adding the edge $\{u,v\}$ to a path on at least three vertices produces a cycle, and a graph is acyclic if it contains no cycle. We allow disconnected graphs and isolated vertices. Statements involving contraction sequences assume connectedness explicitly and apply separately to the nontrivial connected components.

A tree is a connected acyclic graph. Equivalently, a tree is a graph in which
any two vertices are joined by a unique path, and a tree on $k$ vertices has exactly
$k-1$ edges. We usually denote trees by $T$, and we note that every path is a
tree. Designating one vertex $r\in V$ as the root turns $T$ into a rooted tree.
When no root is specified, we choose one as convenient.
For a vertex $u$ other than the root, the parent of $u$ is the unique neighbour of
$u$ on the path $P_{u,r}$, and $u$ is a child of its parent. The root has no
parent. We write $\operatorname{ch}(u)$ for the set of children of $u$, and a
vertex without children is called a leaf. The depth of a vertex $v$, denoted
$\depth(v)$, is the number of edges on the path $P_{r,v}$, with $\depth(r)=0$,
and the height of $T$ is $h=\max_{v\in V}\depth(v)+1$. Finally, the subtree
rooted at $u$, denoted $T_u$, is the induced subgraph on the vertex set
$V(T_u)=\{v\in V(T)~|~u\in V(P_{v,r})\}$, that is, on all vertices whose path to
the root passes through $u$.

Next, we recall the notion of a cut.

\begin{definition}
    Let $G=(V,E)$ be a graph and let $A,B\subseteq V$ be disjoint. We define
    \begin{equation}
        \cut_G(A,B)
        =
        \big\{\{x,y\}\in E~|~x\in A,\ y\in B\big\},
        \qquad
        \cut_G(A)=\cut_G(A,V\setminus A).
    \end{equation}
    When the graph is clear from context, we write $\cut(A,B)$ and $\cut(A)$.
\end{definition}

Deleting the edge between a vertex $v\neq r$ and its parent separates $V(T_v)$
from the rest of the tree, so $|\cut(V(T_v))|=1$ for every $v\neq r$, while
$\cut(V(T_r))=\emptyset$.

We next introduce treewidth, contraction complexity, cutwidth, and tree-cutwidth, using the graphs in \Cref{fig:example-graphs} as recurring examples.
\begin{figure}[h]
    \centering
\begin{tikzpicture}[
  dot/.style={circle,fill=black!75,inner sep=1.7pt},
  gedge/.style={line width=0.7pt,black!70},
  plab/.style={font=\small,black!80}
]
\begin{scope}[xshift=0cm]
  \node[dot] (c) at (0,0) {};
  \foreach \a in {0,45,...,315}{
    \node[dot] (s\a) at (\a:0.95) {};
    \draw[gedge] (c)--(s\a);
  }
  \node[plab] at (0,-1.55) {star};
\end{scope}
\begin{scope}[xshift=3.1cm]
  \node[dot] (r) at (0,0.95) {};
  \node[dot] (a) at (-0.95,0.15) {};
  \node[dot] (b) at (-0.32,0.15) {};
  \node[dot] (c) at (0.32,0.15) {};
  \node[dot] (d) at (0.95,0.15) {};
  \node[dot] (a1) at (-1.25,-0.65) {};
  \node[dot] (a2) at (-0.65,-0.65) {};
  \node[dot] (c1) at (0.32,-0.65) {};
  \node[dot] (d1) at (0.75,-0.65) {};
  \node[dot] (d2) at (1.25,-0.65) {};
  \draw[gedge] (r)--(a) (r)--(b) (r)--(c) (r)--(d);
  \draw[gedge] (a)--(a1) (a)--(a2) (c)--(c1) (d)--(d1) (d)--(d2);
  \node[plab] at (0,-1.55) {tree};
\end{scope}
\begin{scope}[xshift=6.3cm,yshift=0.9cm]
  \foreach \r in {0,...,3}{
    \foreach \c in {0,...,3}{
      \node[dot] (g\r\c) at ({\c*0.6-0.9},{-\r*0.6}) {};
    }}
  \foreach \r in {0,...,3}{
    \foreach \c in {0,...,2}{
      \pgfmathtruncatemacro{\cc}{\c+1}
      \draw[gedge] (g\r\c)--(g\r\cc);
    }}
  \foreach \r in {0,...,2}{
    \foreach \c in {0,...,3}{
      \pgfmathtruncatemacro{\rr}{\r+1}
      \draw[gedge] (g\r\c)--(g\rr\c);
    }}
  \node[plab] at (0,-2.45) {grid};
\end{scope}
\begin{scope}[xshift=9.0cm]
  \foreach \k in {0,...,5}{
    \node[dot] (k\k) at ({90+\k*60}:0.9) {};
  }
  \foreach \k in {0,...,5}{
    \foreach \l in {0,...,5}{
      \ifnum\l>\k
        \draw[gedge] (k\k)--(k\l);
      \fi
    }}
  \node[plab] at (0,-1.55) {clique};
\end{scope}
\begin{scope}[xshift=10.8cm]
  \node[dot] (t1) at (0,0.45) {};
  \node[dot] (t2) at (0.45,-0.45) {};
  \node[dot] (t3) at (0.9,0.45) {};
  \node[dot] (t4) at (1.35,-0.45) {};
  \node[dot] (t5) at (1.8,0.45) {};
  \node[dot] (t6) at (2.25,-0.45) {};
  \node[dot] (t7) at (2.7,0.45) {};
  \draw[gedge] (t1)--(t2)--(t3)--(t4)--(t5)--(t6)--(t7);
  \draw[gedge] (t1)--(t3) (t2)--(t4) (t3)--(t5) (t4)--(t6) (t5)--(t7);
  \node[plab] at (1.35,-1.55) {2-tree};
\end{scope}
\end{tikzpicture}
    \caption{\textbf{Recurring example graphs.} From left to right: a star, a tree,
a square grid, a clique, and a $2$-tree (in fact, the example graph shown is a $2$-path). These families reappear throughout the
paper to illustrate graph parameters (\Cref{tab:parameter-examples}) and tensor
network classes.}
    \label{fig:example-graphs}
\end{figure}
\paragraph{Treewidth and contraction complexity.}
Treewidth measures how closely a graph resembles a tree. Its definition uses a tree decomposition, which partitions the vertices into overlapping bags indexed by a tree.
We use $\mathcal T$ for a decomposition tree whose vertices index bags, and $T$ for a tree that is itself used as a tensor-network graph.

\begin{definition}[Tree decomposition {\cite{RobertsonSymour1986}}]
    A {\em tree decomposition} of a graph $G=(V,E)$ is a tree $\mathcal{T}$, together with a mapping $t\mapsto B_t$ from vertices $t\in V(\mathcal{T})$ to subsets $B_t \subseteq V(G)$, called bags, such that the following conditions hold:
    \begin{itemize}
        \item $\bigcup_{t\in V(\mathcal{T})} B_t = V(G)$, that is, each vertex appears in at least one bag.
        \item For every edge $\{u,v\}\in E(G)$, there is a vertex $t\in V(\mathcal{T})$ such that $\{u,v\}\subseteq B_t$, that is, at least one bag contains both endpoints of every edge.
        \item For every $u\in V(G)$, the set $\{t\in V(\mathcal{T})~|~u\in B_t\}$ forms a connected subtree of $\mathcal{T}$.
    \end{itemize}
    The {\em width} of a tree decomposition is $\max_{t\in V(\mathcal{T})}|B_t|-1$. The {\em treewidth} of $G$ is the minimum width over all tree decompositions,
    \begin{equation}
        \tw(G)
        =
        \min_{\mathcal{T}}
        \left(
            \max_{t\in V(\mathcal{T})}|B_t|-1
        \right).
    \end{equation}
\end{definition}
Finding an optimal tree decomposition of $G$ can be done in time $2^{O(\tw^2)}n^{O(1)}$, as shown in \cite{korhonen2023improvedparameterizedalgorithmtreewidth}.

\begin{example}
    We now illustrate this definition with some simple examples (see \Cref{fig:example-graphs}):
    \begin{enumerate}
        \item The treewidth of a tree with at least two vertices is equal to $1$, independently of its size. In particular, the treewidth of the star is $1$.

        \item The treewidth of the clique $K_n$ is $n-1$.

        \item The treewidth of a square lattice with $n^2$ vertices is $n$. Note that such a graph is the entanglement graph of a Projected Entangled Pair State (PEPS). 
    \end{enumerate}
\end{example}

We next introduce contraction complexity, a graph parameter used to analyse the cost of tensor network contraction~\cite{Markov_2008}. It is closely related to treewidth.

\begin{definition}[Contraction complexity {\cite[Definition 4.1]{Markov_2008}}]\label{def:contraction-complexity}
    A contraction sequence starts from $G$ and repeatedly chooses a non-loop edge of the current multigraph and identifies its endpoints, until every connected component of $G$ has been contracted to a single vertex. Parallel edges are retained and counted with multiplicity, while loops created by an identification are deleted. The complexity of the sequence is the maximum degree, counted with multiplicity, of any vertex created by a contraction. Thus the degrees of the initial vertices are not included unless a vertex later appears as the result of a contraction. The {\em contraction complexity} of $G$, denoted by $\cc(G)$, is the minimum of this quantity over all contraction sequences. If no contraction is required, we set $\cc(G)=0$.
\end{definition}

We can bound $\cc(G)$ using maximum degree and treewidth.

\begin{theorem}[Bounds on contraction complexity {\cite[Theorem 4.5]{Markov_2008}}] \label{thm:cc-vs-tw-delta}
    For any graph $G$ with maximum degree $\Delta=\Delta(G)$, we have 
    \begin{equation}
        (\tw(G)-1)/2\le \cc(G) \le \Delta(\tw(G)+1)-1\, .
    \end{equation}
\end{theorem}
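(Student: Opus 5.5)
The statement has two inequalities, and I would prove both by translating directly between a tree decomposition of $G$ and a contraction sequence for $G$. It suffices to treat a connected $G$ with at least one edge, since both sides of each inequality behave well under taking connected components.

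\textbf{Lower bound $(\tw(G)-1)/2\le\cc(G)$.}
\begin{enumerate}
\item Fix a contraction sequence of complexity $c$. Record it as a rooted binary tree $\mathcal T$. Its leaves are the original vertices, and each internal node corresponds to a merge $S=S_1\cup S_2$ of two current vertices, viewed as vertex sets. The degree of the vertex created for $S$ is $|\cut_G(S)|\le c$.
\item At an internal node $S=S_1\cup S_2$, take the bag $B_S$ to be the set of endpoints lying in $S$ of edges in $\cut(S_1)\cup\cut(S_2)$. At a leaf $\{v\}$, take the bag $\{v\}$.
\item Since $|\cut(S_1)|,|\cut(S_2)|\le c$ whenever $S_i$ is a created vertex, we get $|B_S|\le 2c$, hence width at most $2c-1$. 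A child $S_i$ that is an original vertex contributes only one vertex to $B_S$, so the bound holds whether or not the children are created vertices.
\item Check the edge condition. An edge $\{u,v\}$ becomes internal exactly at the merge where $u$'s set and $v$'s set are joined. It lies in both child cuts there, so $u,v\in B_S$.
\item Check the connectivity condition. A vertex $u$ lies in the bag of its leaf. Going up the contraction tree, $u$ lies in $B_S$ as long as the child containing $u$ still has a cut edge at $u$. This holds exactly until the last edge at $u$ becomes internal, because once an edge at $u$ is internal it stays internal. So $\{S: u\in B_S\}$ is a path from the leaf of $u$ upwards, which is connected.
\end{enumerate}
This gives $\tw(G)\le 2c-1$, hence $(\tw(G)-1)/2\le \tw(G)/2\le c$.

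\textbf{Upper bound $\cc(G)\le\Delta(\tw(G)+1)-1$.}
\begin{enumerate}
\item Take an optimal tree decomposition and root it. For $v\in V$, let $\mathrm{top}(v)$ be the node closest to the root whose bag contains $v$. Let $S_t$ be the set of vertices $v$ with $\mathrm{top}(v)$ in the subtree at $t$.
\item Process nodes in post-order. At node $t$, merge the already-contracted blobs $S_{t'}$ of the children $t'$ with the vertices forgotten at $t$, one at a time, until $S_t$ is a single vertex.
\item Key claim: every set $X$ formed while processing $t$ has each cut edge incident to $B_t$. Take a cut edge $\{x,y\}$ with $x\in X$.
  \begin{itemize}
  \item If $\mathrm{top}(x)=t$, then $x\in B_t$.
  \item Otherwise $x$ was forgotten below $t$, so every bag containing $x$ lies below $t$. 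Some bag contains both $x$ and $y$, so $y$ appears below $t$. Since $y\notin S_t$, $y$ appears above as well, and the subtree condition forces $y\in B_t$.
  \end{itemize}
\item Hence $|\cut(X)|\le \sum_{b\in B_t}\deg(b)\le\Delta(\tw(G)+1)$.
\end{enumerate}

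\textbf{Expected obstacles.}
\begin{itemize}
\item \emph{Saving the $-1$.} I expect to get this from the fact that $X$ meets $B_t$, or is joined to $B_t$ by a cut edge. In either case some vertex of $B_t$ has an incident edge that is counted twice or is internal to $X$. If that fails, I would fall back on the line-graph route of Markov and Shi, showing $\cc(G)=\tw(L(G))$ and $\tw(L(G))\le\Delta(\tw(G)+1)-1$.
\item \emph{Validity of the merges.} Each merge must contract an existing edge. So the order in which blobs are merged at node $t$ must keep the partial union connected in the current multigraph. I would handle this by merging along a spanning forest of the contracted graph restricted to the blobs at $t$. Disconnected pieces would be deferred to an ancestor node, where they become adjacent, and they are still covered by the same cut bound there.
\end{itemize}
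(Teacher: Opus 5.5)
The paper does not prove this statement itself. It cites Markov and Shi and points to their identity $\operatorname{cc}(G)=\tw(L(G))$, which reduces both inequalities to comparing $\tw(L(G))$ with $\tw(G)$. On that route, the lower bound (equivalently $\tw(G)\le 2\operatorname{cc}(G)+1$) follows by replacing each edge in a bag of a decomposition of $L(G)$ by its two endpoints. The upper bound follows by replacing each vertex in a bag of a decomposition of $G$ by its at most $\Delta$ incident edges.

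Your proposal is correct and takes a different route that bypasses the line graph. You turn the merge tree of a contraction sequence directly into a tree decomposition of $G$, and a rooted tree decomposition of $G$ directly into a contraction sequence. This is self-contained and builds the contraction sequence from a decomposition of $G$ itself. Your lower-bound construction is even slightly sharper: it gives $\tw(G)\le 2\operatorname{cc}(G)-1$ whenever $\operatorname{cc}(G)\ge 1$. The line-graph route buys an exact characterisation, which the paper uses algorithmically. There the $-1$ in the upper bound comes for free from the width convention, whereas you have to earn it, and you also have to make every merge contract a genuine edge.

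A few loose ends need tightening.

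\emph{The case $c=0$.} Here $G=K_2$, and the single bag has two vertices, so $\tw(G)\le 2c-1$ is false. What you actually get is $\tw(G)\le\max\{1,2c-1\}$, which still suffices. Also, the upper bound is false for edgeless graphs under the paper's convention $\operatorname{cc}=0$, since the right-hand side is $-1$. So your restriction to graphs with at least one edge is needed, not merely convenient.

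\emph{The key claim.} You only treat $y\notin S_t$. The case $y\in S_t\setminus X$ needs the observation that vertices in different child blobs are never adjacent, because a bag containing both would have to lie in two disjoint subtrees. Hence such a $y$ has $\mathrm{top}(y)=t$, i.e.\ $y\in B_t$.

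\emph{The $-1$.} The same observation settles it. Every set $X$ created while processing $t$ must contain some $b$ with $\mathrm{top}(b)=t$. Since created sets induce connected subgraphs, $b$ has an edge inside $X$. That edge is counted in $\sum_{b\in B_t}\deg(b)$ but does not lie in $\cut(X)$. Your other alternative, a cut edge from $X$ into $B_t$, saves nothing by itself since such an edge is counted once; but it never arises.

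\emph{Merge validity.} Pass up exactly the connected components of $G[S_t]$. Let $P$ be such a component and $\{x,y\}$ an edge leaving $P$. Then the edge also leaves $S_t$. Every bag containing $x$ lies in the subtree of $t$, while $\mathrm{top}(y)$ lies strictly above $t$. By the subtree condition, $y\in B_{\mathrm{parent}(t)}$. Distinct components passed up from the same child are non-adjacent. So the key claim and the $-1$ argument go through verbatim at the parent, with ``child blob'' replaced by ``component passed up from a child''.
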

Markov and Shi also prove the exact identity $\cc(G)=\tw(L(G))$ and show how to construct a contraction sequence of the same width from a tree decomposition of $L(G)$~\cite{Markov_2008}. Combined with the exact treewidth algorithm of \cite{korhonen2023improvedparameterizedalgorithmtreewidth}, this gives an optimal contraction sequence in time $2^{O(\cc(G)^2)}|E|^{O(1)}$.


\paragraph{Cutwidth and tree-cutwidth.}
Cutwidth~\cite{doi:10.1137/0606026} is the minimum, over linear orderings of the vertices, of the largest edge cut between a prefix and the remaining vertices.

\begin{definition}[Cutwidth]
    The cutwidth of a bijection $\pi:V\rightarrow [n]$ is defined as 
    \begin{equation}
        \cw(\pi)=\max_{i\in[n]} |\cut(\{v\in V~|~\pi(v)\leq i\})|.
    \end{equation}
    The cutwidth of $G=(V,E)$ is defined as $\cw(G) = \min_{\pi} \cw(\pi)$.
\end{definition}
We often call $\pi$ a linear ordering. 
Finding an optimal ordering can be done in time $2^{O(\cw^2)}n$, as shown in \cite{THILIKOS20051, THILIKOS200525}.

Tree-cutwidth~\cite{wollan2015structure} is a tree-structured analogue of cutwidth. Its definition uses a near partition.

\begin{definition}
A \emph{near partition} of $X$ is a pairwise disjoint family $X_1,\ldots,X_k\subseteq X$ whose union is $X$, allowing for the possibility of $X_i$ to be empty.
\end{definition}

We next define a tree-cut decomposition of a graph $G$.

\begin{definition}[Tree-cut decomposition]
    A \emph{tree-cut decomposition} of $G$ is a pair $(\mathcal{T},\mathcal{X})$ where $\mathcal{T}$ is a rooted tree, with root $r$, and $\mathcal{X} = \{X_t \subseteq V(G): t\in V(\mathcal{T})\}$ is a near partition of $V(G)$. The sets $X_t$ are called bags.
\end{definition}

Tree-cutwidth is now defined using two terms, adhesion and torso size. These in turn are defined with respect to a tree-cut decomposition $(\mathcal{T},\mathcal{X})$.

\begin{definition}[Adhesion]
Let $(\mathcal T,\mathcal X)$ be a tree-cut decomposition of $G$. Let $t\neq r$ have parent $u$, and let $e=\{u,t\}$. Deleting $e$ separates $\mathcal T$ into components $\mathcal T_u$ and $\mathcal T_t$ containing $u$ and $t$, respectively. The \emph{adhesion} $\adh(t)$ is the number of edges of $G$ joining vertices in bags on opposite sides of this separation:
    \begin{equation}
        \adh(t) = \left|\cut\left(\bigcup_{b\in V(T_u)} X_b,\bigcup_{b\in V(T_t)} X_b \right)\right| \, .
    \end{equation}
    For the root $r$, which has no parent, we set $\adh(r)=0$.
\end{definition}

\begin{definition}[Torso size]
\label{def:torso-size}
Let $(\mathcal{T},\mathcal X)$ be a tree-cut decomposition of a graph $G$, where
$\mathcal X=\{X_t:t\in V(\mathcal{T})\}$. Fix $t\in V(\mathcal{T})$, and let
$T_1,\ldots,T_\ell$ be the connected components of $\mathcal{T}-t$.

The torso of $(\mathcal{T},\mathcal X)$ at $t$, denoted $H_t$, is the multigraph obtained from
$G$ as follows: For every $i\in[\ell]$, contract the vertex set
\begin{equation}
    \bigcup_{s\in V(T_i)} X_s
\end{equation}
to a single vertex $z_i$, preserving parallel edges and deleting loops. The vertices in
$X_t$ are called the core vertices of $H_t$, and the vertices
$z_1,\ldots,z_\ell$ are called the peripheral vertices of $H_t$.

The $3$-centre of $H_t$, denoted $\widetilde H_t$, is obtained from $H_t$ by repeatedly
suppressing peripheral vertices of degree at most $2$, while never suppressing core
vertices. More precisely, while there exists a peripheral vertex $z\notin X_t$ with
$\deg(z)\leq 2$, perform one of the following operations:
\begin{itemize}
    \item If $\deg(z)=0$ or $\deg(z)=1$, delete $z$ and all edges incident to $z$.
    \item If $\deg(z)=2$, with incident neighbours $a$ and $b$, delete $z$ and its
    incident edges and, if $a\neq b$, add an edge $\{a,b\}$, preserving multiplicity.
\end{itemize}
The torso size of $t$ is
\begin{equation}
    \operatorname{tor}(t)=|V(\widetilde H_t)|.
\end{equation}
\end{definition}

The tree-cutwidth is then defined as follows:

\begin{definition}[Tree-cutwidth]
    Let $(\mathcal{T},\mathcal{X})$ be a tree-cut decomposition of $G$. The \emph{width} of such a tree-cut decomposition is $\max_{t\in V(\mathcal{T})}\{\adh(t),\tor(t)\}$. 
    The \emph{tree-cutwidth} of $G$, denoted $\tcw(G)$, is the minimum width of all tree-cut decompositions of $G$.
\end{definition}
A tree-cut decomposition of $G$ of width at most $2\tcw(G)$ can be computed in time $2^{O(\tcw^2\log\tcw)}n^2$, as shown in \cite{Kim2018}. 

\begin{remark}
\label{rem:torso-size-bounds-bag-size}
The vertices in the central bag $X_t$ are never suppressed when forming the $3$-centre $\widetilde H_t$, so$|X_t|\leq \operatorname{tor}(t)$ for every $t\in V(\mathcal T)$. In particular, every bag in a width-$k$ tree-cut decomposition has size at most $k$.
\end{remark}

\Cref{tab:parameter-examples} summarises the scaling of these parameters on the recurring graph families.

\begin{table}[ht]
\centering
\begin{tabular}{lll}
\hline
Parameter & Small examples & Large examples \\
\hline
maximum degree & paths, grids, binary trees & stars, cliques \\
treewidth & paths, trees, cycles, $k$-trees ($k$ constant) & grids, cliques \\
cutwidth & paths, cycles, bounded-degree trees & stars, cliques, grids \\
tree-cutwidth & paths, trees, cycles & grids, cliques \\
\hline
\end{tabular}
\caption{\textbf{Scaling of graph parameters on example families.} A family is
listed as small if the parameter is constant or logarithmic in the number of
vertices, and as large otherwise. The families are drawn in
\Cref{fig:example-graphs}.}
\label{tab:parameter-examples}
\end{table}

\subsection{Tensor network notation}\label{subsec:TN-notation}

A pure $n$-qudit state $\ket{\psi}\in(\mathbb{C}^d)^{\otimes n}$ can be described by a tensor $(\psi_{i_1,\ldots,i_n})_{i_1,\ldots,i_n=1}^d$ of complex computational basis amplitudes via $\ket{\psi}=\sum_{i_1,\ldots,i_n=1}^d \psi_{i_1,\ldots,i_n}\ket{i_1,\ldots,i_n}$. As this representation uses a tensor with $d^n$ complex numbers, constrained only by the normalisation condition $\braket{\psi}{\psi}=1$, its size is exponential in the system size $n$. However, for many physically relevant states, the tensor $(\psi_{i_1,\ldots,i_n})_{i_1,\ldots,i_n=1}^d$ admits a more efficient representation. 

We will consider \emph{tensor networks} \cite{cirac2021MPS-PEPS-review} as follows. Let $G=([n],E)$ be a graph and let $w:E\to\mathbb N$ be a positive integer weight function. We say that a tensor $(\psi_{i_1,\ldots,i_n})_{i_1,\ldots,i_n=1}^d$ is compatible with $(G,w)$ if, for every vertex $v\in[n]$ with neighbourhood $N(v)=\{u\in[n]~|~\{u,v\}\in E\}=\{u_1,\ldots,u_{\deg(v)}\}$, there exists a tensor $\varphi^{[v]} = \left(\varphi_{i_v}^{j_1^{[v]},\ldots,j_{\deg(v)}^{[v]}}\right)$ with indices $j_k^{[v]}\in [w(\{u_k,v\})]$ and $i_v\in[d]$, such that the following holds. If $E=\{e_1,\ldots,e_m\}$, where $m=|E|$, then for all $i_1,\ldots,i_n\in[d]$,
\begin{equation}\label{eq:general-tn-state}
    \psi_{i_1,\ldots,i_n}
    = \sum_{j_{[e_1]} =1 }^{w(e_1)} \ldots \sum_{j_{[e_{m}]} =1 }^{w(e_{m})}  \varphi^{[1]}_{i_1}\left(\gets j_{[e_1]},\ldots,j_{[e_{m}]}\right)\ldots \varphi^{[n]}_{i_n}\left(\gets j_{[e_1]},\ldots,j_{[e_{m}]}\right) \, .
\end{equation}
Here, we use the notation $\varphi^{[v]}\left(\gets j_{[e_1]},\ldots,j_{[e_{m}]}\right)$ for the version of $\varphi^{[v]}$ in which every index $j_k^{[v]}$ is set to $j_{[e]}$ for $e= \{u_k,v\}$.
To express \Cref{eq:general-tn-state} in words, we say that our amplitude tensor is obtained by contracting the tensor network:
every edge carries one shared summation index of range equal to its weight, every
vertex tensor is evaluated with its virtual slots set to the indices of its incident
edges, and the amplitude is the sum over all virtual indices.
\Cref{fig:tn-contraction} illustrates the contraction for the smallest nontrivial
example.

\begin{figure}[ht]
    \centering
\begin{tikzpicture}[
  gv/.style={circle,draw=black!80,fill=white,line width=0.7pt,minimum size=13pt,inner sep=0.5pt,font=\scriptsize},
  tens/.style={circle,draw=black!80,fill=figaccent!10,line width=0.8pt,minimum size=22pt,inner sep=1pt,font=\small},
  gedge/.style={line width=0.7pt,black!70},
  bond/.style={line width=1.1pt,figaccent},
  leg/.style={line width=0.7pt,black!75},
  lab/.style={font=\small,figaccent},
  note/.style={font=\small\itshape,black!70}
]
\node[gv] (u1) at (0,0) {$1$};
\node[gv] (u2) at (1.4,0) {$2$};
\node[gv] (u3) at (2.8,0) {$3$};
\draw[gedge] (u1)--node[above,font=\small]{$w(e_1)$}(u2);
\draw[gedge] (u2)--node[above,font=\small]{$w(e_2)$}(u3);
\node[note] at (1.4,-0.85) {weighted graph $(G,w)$};
\draw[-latex,line width=0.9pt,black!70] (3.9,0) -- (5.0,0);
\begin{scope}[xshift=6.0cm]
  \node[tens] (T1) at (0,0) {$\varphi^{[1]}$};
  \node[tens] (T2) at (1.9,0) {$\varphi^{[2]}$};
  \node[tens] (T3) at (3.8,0) {$\varphi^{[3]}$};
  \draw[bond] (T1)--node[above,lab]{$j_{[e_1]}$}(T2);
  \draw[bond] (T2)--node[above,lab]{$j_{[e_2]}$}(T3);
  \draw[leg] (T1)--++(0,-0.85) node[below,font=\small]{$i_1$};
  \draw[leg] (T2)--++(0,-0.85) node[below,font=\small]{$i_2$};
  \draw[leg] (T3)--++(0,-0.85) node[below,font=\small]{$i_3$};
\end{scope}
\node[anchor=west,font=\small] at (0.0,-2.35)
  {$\displaystyle
    \psi_{i_1,i_2,i_3}
    \;=\;
    \sum_{j_{[e_1]}=1}^{w(e_1)}\;
    \sum_{j_{[e_2]}=1}^{w(e_2)}
    \varphi^{[1]}_{i_1}\!\left(\gets j_{[e_1]}\right)\,
    \varphi^{[2]}_{i_2}\!\left(\gets j_{[e_1]},j_{[e_2]}\right)\,
    \varphi^{[3]}_{i_3}\!\left(\gets j_{[e_2]}\right)$};
\end{tikzpicture}
    \caption{\textbf{Contracting a tensor network.} Left: a weighted path graph $G$
on three vertices with edges $e_1=\{1,2\}$ and $e_2=\{2,3\}$. Right: compatible
tensors $\varphi^{[v]}$, one per vertex, with a shared virtual index $j_{[e]}$ of
range $w(e)$ for each edge $e$ and one physical index $i_v$ per vertex. The
amplitude $\psi_{i_1,i_2,i_3}$ is obtained by summing over all virtual indices, as
in the displayed instance of \Cref{eq:general-tn-state}. }
    \label{fig:tn-contraction}
\end{figure}

A pure $n$-qudit state $|\psi\rangle$ admits a TN representation on $(G,w)$ if its amplitude tensor in the computational basis is compatible with $(G,w)$. We define the bond dimension of this representation by
\begin{equation}
    \chi(w):=\max\left(\{1\}\cup\{w(e):e\in E\}\right),
\end{equation}
so an edgeless representation has bond dimension $1$.

\begin{definition}[Tensor-network state classes]\label{def:general-tn-state}
    Let $G=([n],E)$ be a graph and let $w:E\to\mathbb N$ be a positive integer
    weight function. We define $\mathcal S_d(G,w)$ to be the class of pure
    $n$-qudit states that admit a TN representation whose virtual index on each
    edge $e$ has dimension $w(e)$. For $\chi\in\mathbb N$, we define the class
    of TNSs on $G$ with bond dimension at most $\chi$ by
    \begin{equation}
        \mathcal S_d(G,\chi)
        :=
        \bigcup_{w:E\to[\chi]}\mathcal S_d(G,w).
        \label{eq:tn-class-bond-dimension}
    \end{equation}
    For $E=\emptyset$, the union contains the unique empty weight function.
\end{definition}
For a fixed weight function $w$, removing an edge $e\in E$ with $w(e)=1$
does not change the represented state class. When graph parameters are associated
with a fixed weighted representation $(G,w)$, we evaluate them after removing all
edges of weight one. For the class $\mathcal S_d(G,\chi)$, the graph $G$ remains
fixed, since the edge dimensions vary over the weight functions in
\eqref{eq:tn-class-bond-dimension}. For the class $\mathcal S_d(G,\chi)$, the graph $G$
remains part of the class definition because the edge dimensions are not fixed.

The notation $\mathcal S_d(G,w)$ specifies the individual edge dimensions, while $\mathcal S_d(G,\chi)$ specifies only an upper bound. A state in $\mathcal S_d(G,w)$ may use a smaller effective dimension on some edges, since unused virtual levels can be padded with zero tensor entries.

\begin{example}
For the recurring graphs of \Cref{fig:example-graphs}, if $G$ is a path, then
$\mathcal S_d(G,\chi)$ is the class of matrix product states with physical
dimension $d$ and bond dimension at most $\chi$. If $G$ is a tree, then
$\mathcal S_d(G,\chi)$ is the corresponding class of tree tensor network
states. If $G$ is the square grid, then $\mathcal S_d(G,\chi)$ is the class
of PEPS with open boundary conditions and bond dimension at most $\chi$.

For the clique $K_n$ with $n\geq2$, already $\chi=d$ suffices to represent every
pure $n$-qudit state. To see this, choose vertex $1$ as a centre. The edge
$\{1,k\}$ carries the physical index of qudit $k$ to the centre, the tensor at
vertex $k>1$ enforces equality between its physical index and this virtual index,
and the tensor at vertex $1$ stores the full amplitude tensor
$\psi_{i_1,\ldots,i_n}$. All clique edges not incident to vertex $1$ are fixed
to one distinguished virtual value. The resulting contraction equals
$\psi_{i_1,\ldots,i_n}$ and uses edge dimension at most $d$. 
The same construction works for a star on $n\geq2$ vertices.
\end{example}

\begin{remark}[Multi-edges and grouping]
\label{rem:multi-edges}
One could allow parallel virtual edges between the same pair of vertices.  Any
family of parallel edges can be grouped into a single edge whose weight is the
product of their weights, without changing the represented class of states,
since a tuple of virtual indices can be encoded as one index of the product
range \cite{bridgeman2017handwaving}.  We work with simple weighted graphs throughout. Grouping is used implicitly whenever rerouting creates an
edge parallel to an existing one (\Cref{section:rerouting}) and when parallel
edges are counted during contractions (\Cref{subsec:contraction-to-learning}).
\end{remark}

We will repeatedly use the following standard rank bound across a TN cut.

\begin{claim}[Rank bound across a TN cut]
\label{claim:cut-rank-bound}
Let $G=([n],E)$ be a graph, let $w:E\to\mathbb N$, and let
$|\psi\rangle\in\mathcal S_d(G,w)$. For every subset $S\subseteq[n]$, we have
\begin{equation}
    \operatorname{rank}
    \left(
        \operatorname{tr}_{[n]\setminus S}
        \left[
            |\psi\rangle\langle\psi|
        \right]
    \right)
    \leq
    \prod_{e\in \cut_G(S)} w(e).
\end{equation}
Here and throughout, the empty product is $1$. In particular, if $w(e)\leq\chi$ for every edge, then
\begin{equation}
    \operatorname{rank}
    \left(
        \operatorname{tr}_{[n]\setminus S}
        \left[
            |\psi\rangle\langle\psi|
        \right]
    \right)
    \leq
    \chi^{|\cut_G(S)|}.
\end{equation}
\end{claim}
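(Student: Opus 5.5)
The plan is to contract the tensor network in two halves and read the rank off the resulting factorisation. Fix $S\subseteq[n]$ and write $\bar S=[n]\setminus S$. The edge set $E$ splits into three parts: edges inside $S$, edges inside $\bar S$, and the cut edges $\cut_G(S)$.

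Collect the virtual indices of the cut edges into a single multi-index $J=(j_{[e]})_{e\in\cut_G(S)}$. This index ranges over a set of size $D:=\prod_{e\in\cut_G(S)}w(e)$, which is $1$ if the cut is empty. Starting from \eqref{eq:general-tn-state}, I would reorder the sums so that the sum over $J$ is outermost. Then define
\begin{equation}
    A_J(i_S)=\sum_{\text{indices of edges inside }S}\ \prod_{v\in S}\varphi^{[v]}_{i_v}(\gets\cdots),
    \qquad
    B_J(i_{\bar S})=\sum_{\text{indices of edges inside }\bar S}\ \prod_{v\in\bar S}\varphi^{[v]}_{i_v}(\gets\cdots).
\end{equation}
This split is valid because every vertex tensor depends only on the indices of its own incident edges. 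A tensor at $v\in S$ therefore involves only edges inside $S$ and cut edges, and symmetrically for $\bar S$. The amplitudes then factorise as $\psi_{i_1,\ldots,i_n}=\sum_J A_J(i_S)B_J(i_{\bar S})$, which gives
\begin{equation}
    |\psi\rangle=\sum_{J}|a_J\rangle_S\otimes|b_J\rangle_{\bar S},
\end{equation}
with at most $D$ terms and vectors $|a_J\rangle,|b_J\rangle$ that need not be normalised.

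Tracing out $\bar S$ gives $\operatorname{tr}_{\bar S}|\psi\rangle\langle\psi|=\sum_{J,J'}\langle b_{J'}|b_J\rangle\,|a_J\rangle\langle a_{J'}|$. The range of this operator lies in $\operatorname{span}\{|a_J\rangle\}$, so its rank is at most $D$. An equivalent route is to note that the $d^{|S|}\times d^{|\bar S|}$ amplitude matrix is a product $AB^{T}$ with inner dimension $D$, so its Schmidt rank is at most $D$.

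The second inequality follows because $w(e)\le\chi$ for each of the $|\cut_G(S)|$ cut edges. The edge cases $S=\emptyset$ and $S=[n]$ are consistent with the empty-product convention: the cut is empty and the reduced state is a pure state or a scalar, of rank at most $1$.

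No step is a genuine obstacle. The only point needing care is the index bookkeeping in the reordering of sums, namely checking that no vertex tensor couples an internal edge of $S$ to an internal edge of $\bar S$. This holds directly from the definition of compatibility with $(G,w)$.
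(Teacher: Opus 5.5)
Your proposal is correct and follows essentially the same route as the paper: contract the tensors on each side of the cut with the cut indices left open, write $|\psi\rangle$ as a sum of at most $\prod_{e\in\cut_G(S)}w(e)$ product terms, and bound the rank of the reduced state by this Schmidt-rank bound. Your explicit check of the index bookkeeping, that no vertex tensor couples an internal edge of $S$ to an internal edge of $[n]\setminus S$, fills in the detail the paper leaves implicit.
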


\begin{proof}
Let
\begin{equation}
    \cut_G(S)=\{e_1,\ldots,e_m\}.
\end{equation}
Contract all tensors whose physical legs lie in $S$, leaving open the virtual indices
corresponding to edges in $\cut_G(S)$. This gives vectors
\begin{equation}
    |\Phi_{\alpha_1,\ldots,\alpha_m}\rangle_S
\end{equation}
on the Hilbert space of $S$, where
\begin{equation}
    \alpha_j\in[w(e_j)]
\end{equation}
for every $j\in[m]$. Similarly, contracting all tensors whose physical legs lie in
$[n]\setminus S$ gives vectors
\begin{equation}
    |\Gamma_{\alpha_1,\ldots,\alpha_m}\rangle_{[n]\setminus S}
\end{equation}
on the complementary Hilbert space. Thus the global state can be written as
\begin{equation}
    |\psi\rangle
    =
    \sum_{\alpha_1=1}^{w(e_1)}
    \cdots
    \sum_{\alpha_m=1}^{w(e_m)}
    |\Phi_{\alpha_1,\ldots,\alpha_m}\rangle_S
    \otimes
    |\Gamma_{\alpha_1,\ldots,\alpha_m}\rangle_{[n]\setminus S}.
\end{equation}
The Schmidt rank of $|\psi\rangle$ across the bipartition
$S\mid [n]\setminus S$ is at most
\begin{equation}
    \prod_{e\in\cut_G(S)}w(e).
\end{equation}
The rank of the reduced state on $S$ is equal to this Schmidt rank, which proves the
first claim. The second claim follows immediately from $w(e)\leq\chi$.
\end{proof}

\section{Entanglement rerouting for TN representations}\label{section:rerouting}

We now show how the underlying graph of a TN
representation can be modified while preserving the class of states that can be
represented. The basic
operation is local: an edge $\{x,y\}$ can be removed and its virtual index can be
routed through a third vertex $z$, at the cost of increasing the virtual
dimensions of the two edges $\{x,z\}$ and $\{y,z\}$. We call this operation
\emph{entanglement rerouting}. 

We first prove the rerouting move for a single edge in \Cref{thm:entanglement-rerouting}.
The rest of the section then applies this move iteratively. In \Cref{subsec:reroute-mps},
we reroute all edges of a general graph onto a path, obtaining an MPS representation
whose bond dimension is controlled by cutwidth. In \Cref{subsec:reroute-ttn}, we reroute all edges onto a tree, obtaining
TTN representations with bounds in terms of tree-cutwidth. Thus, the local operation in \Cref{thm:entanglement-rerouting}
is the basic mechanism behind all representation transformations in this section.

Throughout this section, we use the convention from \Cref{subsec:TN-notation} that
an absent edge is equivalent to an edge of virtual dimension $1$. Thus, when we write
$w(\{u,v\})=1$ for a pair $\{u,v\}\notin E$, this simply means that we may add a
formal virtual leg of dimension one without changing the represented state.

\subsection{Rerouting single edges in a TN}

We begin with the basic local move. Suppose a TN representation contains an edge
$\{x,y\}$ of virtual dimension $q$. If $z$ is any other vertex, enlarge the virtual legs connecting $x$ to $z$ and
$y$ to $z$ so that each also carries the value of the virtual index on $\{x,y\}$.
The tensor at $z$ is then defined to enforce equality between these two components.
This removes the direct edge $\{x,y\}$, and multiplies the virtual dimensions of
$\{x,z\}$ and $\{y,z\}$ by $q$. In this way, the represented physical state is unchanged. The following theorem states this precisely. We call it a total rerouting statement: the
entire virtual index on $\{x,y\}$ is moved through $z$. We do not need partial
rerouting (where only some of the virtual dimension $q$ of the edge $\{x,y\}$ is rerouted through a third vertex $z$) in the sequel, so we formulate only the total version.

 \begin{theorem}[Total entanglement rerouting: Formal statement of \Cref{inf-thm:entanglement-rerouting}]
\label{thm:entanglement-rerouting}
Let $G=([n],E)$ be a graph, let $w:E\to\mathbb N$, and extend $w$ to all unordered
pairs of distinct vertices by setting $w(f)=1$ whenever $f\notin E$.
Let $|\psi\rangle\in \mathcal S_d(G,w)$ and let $e=\{x,y\}\in E$. Assume that there is a vertex $z\in[n]\setminus\{x,y\}$, and set
\begin{equation}
    q := w(e).
\end{equation}
Define
\begin{equation}
    E' := (E\setminus\{\{x,y\}\})\cup\{\{x,z\},\{y,z\}\},
\end{equation}
and a weight function $w':E'\to\mathbb N$ by
\begin{equation}
w'(f)=
\begin{cases}
q\,w(\{x,z\}), &  \textrm{if }f=\{x,z\},\\
q\,w(\{y,z\}), &  \textrm{if }f=\{y,z\},\\
w(f), & \text{otherwise}.
\end{cases}
\end{equation}
Let $G'=([n],E')$. Then
\begin{equation}
    |\psi\rangle\in \mathcal S_d(G',w').
\end{equation}
Moreover, the dimension of the largest edge in the new representation is
\begin{equation}
    \chi'
    :=
    \max\left(\{1\}\cup\{w'(f):f\in E'\}\right).
\end{equation}
\end{theorem}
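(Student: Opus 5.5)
The plan is to construct new vertex tensors explicitly, starting from the tensors $\varphi^{[v]}$ of a representation of $|\psi\rangle$ on $(G,w)$, and then check that contracting the network on $(G',w')$ gives the same amplitudes as \Cref{eq:general-tn-state}. First I handle the convention for absent edges. If $\{x,z\}\notin E$, I add it to $E$ with $w(\{x,z\})=1$. The tensors at $x$ and $z$ get a dummy virtual leg of range $[1]$, and their entries are unchanged. I do the same for $\{y,z\}$. This leaves the state unchanged, and afterwards both edges $\{x,z\}$ and $\{y,z\}$ are present, so $w'$ has the stated form.

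Next I identify each enlarged index range $[q\,w(\{x,z\})]$ with the product $[w(\{x,z\})]\times[q]$, writing elements as pairs $(a,j)$, and likewise $[q\,w(\{y,z\})]\cong[w(\{y,z\})]\times[q]$ with pairs $(b,\ell)$. The new tensors are defined as follows.

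At $x$: $\tilde\varphi^{[x]}$ is $\varphi^{[x]}$ with its $\{x,y\}$-slot deleted. Its value with $\{x,z\}$-slot equal to $(a,j)$ is the value of $\varphi^{[x]}$ with $\{x,z\}$-slot $a$ and $\{x,y\}$-slot $j$.

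At $y$: symmetrically, $\tilde\varphi^{[y]}$ evaluated at $\{y,z\}$-slot $(b,\ell)$ equals $\varphi^{[y]}$ evaluated at $\{y,z\}$-slot $b$ and $\{x,y\}$-slot $\ell$.

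At $z$: $\tilde\varphi^{[z]}$ evaluated at $(a,j),(b,\ell)$ equals $\delta_{j\ell}$ times $\varphi^{[z]}$ evaluated at $a,b$.

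All other tensors are unchanged. Because $G'$ has no edge $\{x,y\}$, the neighbourhoods in $G'$ match exactly the slots each new tensor carries.

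The verification is a direct computation of the contraction over $G'$. Summing over $(a,j)$ and $(b,\ell)$ amounts to summing over $a,b,j,\ell$. The factor $\delta_{j\ell}$ collapses the sum over $\ell$ and sets $\ell=j$. The shared value $j\in[q]$ then plays exactly the role of the deleted summation index $j_{[e]}$, and every factor agrees with the corresponding factor in \Cref{eq:general-tn-state}. Hence the two amplitude tensors coincide for all $i_1,\ldots,i_n$, which gives $|\psi\rangle\in\mathcal S_d(G',w')$. The statement about $\chi'$ is only its definition. The bound $\chi'\le\chi^2$ from the informal version follows because $w'(f)\le q\cdot\chi\le\chi^2$.

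The main obstacle is bookkeeping rather than mathematics. The case where $\{x,z\}$ or $\{y,z\}$ was absent needs the dummy-leg convention. The identification of enlarged indices with pairs must be a fixed bijection used consistently by both endpoints of each edge. Finally, $E'$ must be a simple edge set, so if an edge already exists the weights multiply rather than create parallel edges, as in \Cref{rem:multi-edges}. I would state the pair encoding once, for example $(a,j)\mapsto (a-1)q+j$, and then argue entirely in pair notation.
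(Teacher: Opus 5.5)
Your proposal is correct and matches the paper's construction: encode the enlarged indices on $\{x,z\}$ and $\{y,z\}$ as pairs, move the $\{x,y\}$-slot of the tensors at $x$ and $y$ into the second component, insert the indicator $\mathbf 1_{j=\ell}$ at $z$, and check that summing over the pairs reproduces the original contraction. The paper also handles absent edges with dimension-one legs, and it contracts all remaining tensors into a single tensor $R$ to write the sum explicitly, which is only a presentational difference.
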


Equivalently, the edge $\{x,y\}$ can be removed and its virtual index routed through $z$ by multiplying the dimensions of $\{x,z\}$ and $\{y,z\}$ by $w(\{x,y\})$.

\begin{proof}
Fix a tensor network representation of $|\psi\rangle$ with underlying weighted graph $(G,w)$. Let
\begin{equation}
    q = w(\{x,y\}),\qquad
    a = w(\{x,z\}),\qquad
    b = w(\{y,z\}),
\end{equation}
where $a=1$ or $b=1$ if the corresponding edge is absent from $G$.
We write the tensors at $x,y,z$ as
$A^{i_x}_{\mathbf r_x,\alpha,\gamma},
B^{i_y}_{\mathbf r_y,\beta,\gamma},
M^{i_z}_{\mathbf r_z,\alpha,\beta},
$
where
$   \alpha\in[a],\beta\in[b], \gamma\in[q]$.
Here, $\gamma$ is the virtual index on the edge $\{x,y\}$, $\alpha$ is the virtual index on $\{x,z\}$, $\beta$ is the virtual index on $\{y,z\}$, and the multi-indices $\mathbf r_x,\mathbf r_y,\mathbf r_z$ collect all other virtual legs
incident to $x,y,z$, respectively. If $\{x,z\}$ or $\{y,z\}$ is absent,
the corresponding index has dimension $1$.

We now define new tensors $\widetilde A,\widetilde B,\widetilde M$. The new
edge $\{x,z\}$ has dimension $aq$, and we write its index as a pair
$   (\alpha,\gamma_x)\in [a]\times[q].$
Similarly, the new edge $\{y,z\}$ has dimension $bq$, and we write its
index as a pair $(\beta,\gamma_y)\in [b]\times[q]$. 
Define 
\begin{align*}
    \widetilde A^{i_x}_{\mathbf r_x,(\alpha,\gamma_x)}  &:=  A^{i_x}_{\mathbf r_x,\alpha,\gamma_x}\, ,\\
    \widetilde B^{i_y}_{\mathbf r_y,(\beta,\gamma_y)}
    &:=
    B^{i_y}_{\mathbf r_y,\beta,\gamma_y}\,,\\
    \widetilde M^{i_z}_{\mathbf r_z,(\alpha,\gamma_x),(\beta,\gamma_y)}&:=
    \mathbf 1_{\gamma_x=\gamma_y}\,
    M^{i_z}_{\mathbf r_z,\alpha,\beta}\, .
\end{align*}
All other tensors in the network are unchanged. 

It remains to check that the overall contraction (and thus the state) is unchanged. Contract all tensors
other than those at $x,y,z$, and denote the resulting tensor by $ R^{\mathbf i_{\mathrm{rest}}}_{\mathbf r_x,\mathbf r_y,\mathbf r_z}$. The original tensor amplitudes can be written as
\begin{equation*}
    \psi_{\mathbf i}=
\sum_{\mathbf r_x,\mathbf r_y,\mathbf r_z}
\sum_{\alpha=1}^{a}
\sum_{\beta=1}^{b}
\sum_{\gamma=1}^{q}
A^{i_x}_{\mathbf r_x,\alpha,\gamma}
B^{i_y}_{\mathbf r_y,\beta,\gamma}
M^{i_z}_{\mathbf r_z,\alpha,\beta}
R^{\mathbf i_{\mathrm{rest}}}_{\mathbf r_x,\mathbf r_y,\mathbf r_z}\, .
\end{equation*}
The amplitudes of the modified network are 
\begin{equation*}
    \widetilde\psi_{\mathbf i} = \sum_{\mathbf r_x,\mathbf r_y,\mathbf r_z}
\sum_{\alpha=1}^{a}
\sum_{\gamma_x=1}^{q}
\sum_{\beta=1}^{b}
\sum_{\gamma_y=1}^{q}
\widetilde A^{i_x}_{\mathbf r_x,(\alpha,\gamma_x)}
\widetilde B^{i_y}_{\mathbf r_y,(\beta,\gamma_y)}
\widetilde M^{i_z}_{\mathbf r_z,(\alpha,\gamma_x),(\beta,\gamma_y)}
R^{\mathbf i_{\mathrm{rest}}}_{\mathbf r_x,\mathbf r_y,\mathbf r_z}\, .
\end{equation*}
Substituting the definitions of the modified tensors gives 
\begin{align*}
    \widetilde\psi_{\mathbf i} 
    &=
    \sum_{\mathbf r_x,\mathbf r_y,\mathbf r_z}
    \sum_{\alpha=1}^{a}
    \sum_{\gamma_x=1}^{q}
    \sum_{\beta=1}^{b}
    \sum_{\gamma_y=1}^{q}
    A^{i_x}_{\mathbf r_x,\alpha,\gamma_x}
    B^{i_y}_{\mathbf r_y,\beta,\gamma_y}
    \mathbf 1_{\gamma_x=\gamma_y}
    M^{i_z}_{\mathbf r_z,\alpha,\beta}
    R^{\mathbf i_{\mathrm{rest}}}_{\mathbf r_x,\mathbf r_y,\mathbf r_z}\\
    &= \sum_{\mathbf r_x,\mathbf r_y,\mathbf r_z}
    \sum_{\alpha=1}^{a}
    \sum_{\beta=1}^{b}
    \sum_{\gamma=1}^{q}
    A^{i_x}_{\mathbf r_x,\alpha,\gamma}
    B^{i_y}_{\mathbf r_y,\beta,\gamma}
    M^{i_z}_{\mathbf r_z,\alpha,\beta}
    R^{\mathbf i_{\mathrm{rest}}}_{\mathbf r_x,\mathbf r_y,\mathbf r_z} =\psi_{\mathbf i}\, ,
\end{align*}
The indicator $\mathbf 1_{\gamma_x=\gamma_y}$ enforces $\gamma_x=\gamma_y$, so $\widetilde\psi_{\mathbf i}=\psi_{\mathbf i}$. Thus the modified tensor network represents the same state and has weighted graph $(G',w')$. The expression for $\chi'$ follows from the definition of the maximum bond dimension.
\end{proof}

 Although \Cref{thm:entanglement-rerouting} reroutes an edge through a single
intermediate vertex, it can be applied repeatedly to reroute an edge along a path,
as illustrated in \Cref{fig:reroute-via-path}.
For example, suppose we want to remove an edge $\{u,x\}$ and route it along the
path $u-v_1-v_2-\cdots-v_j-x.$
We first reroute $\{u,x\}$ through $v_1$. This removes $\{u,x\}$ and creates,
or increases the weights of, $\{u,v_1\}$ and $\{v_1,x\}$. We then reroute
$\{v_1,x\}$ through $v_2$, and continue in this way until the virtual index has
been routed along the whole path. Thus, a direct virtual edge can be replaced by a
chain of virtual edges, with the original virtual dimension multiplied into every edge
of the chosen route.

This path-rerouting perspective is what we use in the next subsections. To transform a
general TN graph into a target graph, such as a path or a tree, we route each edge that
is not present in the target graph along an appropriate path in the target graph. The
resulting bond dimension is controlled by the maximum number of original edges whose
indices are routed through any one target edge.
\begin{figure}
    \centering
\resizebox{0.98\textwidth}{!}{%
\begin{tikzpicture}[
  v/.style={circle,draw=black!80,fill=white,line width=0.7pt,minimum size=15pt,inner sep=1pt,font=\small},
  pedge/.style={line width=0.7pt,black!75},
  redge/.style={line width=1.2pt,figaccent},
  lab/.style={font=\small,figaccent},
  note/.style={font=\small\itshape,black!70}
]
\begin{scope}[yshift=0cm]
  \node[v] (u) at (0,0) {$u$};
  \node[v] (v1) at (1.7,0) {$v_1$};
  \node[v] (v2) at (3.4,0) {$v_2$};
  \node[v] (v3) at (5.1,0) {$v_3$};
  \node[v] (x) at (6.8,0) {$x$};
  \draw[pedge] (u)--(v1)--(v2)--(v3)--(x);
  \draw[redge] (u) to[bend left=40] node[above,lab]{$q$} (x);
  \node[note,anchor=west] at (7.6,0.55) {edge $\{u,x\}$ of weight $q=w(\{u,x\})$};
\end{scope}
\begin{scope}[yshift=-2.1cm]
  \node[v] (u) at (0,0) {$u$};
  \node[v] (v1) at (1.7,0) {$v_1$};
  \node[v] (v2) at (3.4,0) {$v_2$};
  \node[v] (v3) at (5.1,0) {$v_3$};
  \node[v] (x) at (6.8,0) {$x$};
  \draw[redge] (u)--node[below,lab]{$\times\,q$}(v1);
  \draw[pedge] (v1)--(v2)--(v3)--(x);
  \draw[redge] (v1) to[bend left=38] node[above,lab]{$q$} (x);
  \node[note,anchor=west] at (7.6,0.55) {reroute $\{u,x\}$ through $v_1$};
\end{scope}
\begin{scope}[yshift=-4.2cm]
  \node[v] (u) at (0,0) {$u$};
  \node[v] (v1) at (1.7,0) {$v_1$};
  \node[v] (v2) at (3.4,0) {$v_2$};
  \node[v] (v3) at (5.1,0) {$v_3$};
  \node[v] (x) at (6.8,0) {$x$};
  \draw[redge] (u)--node[below,lab]{$\times\,q$}(v1);
  \draw[redge] (v1)--node[below,lab]{$\times\,q$}(v2);
  \draw[redge] (v2)--node[below,lab]{$\times\,q$}(v3);
  \draw[redge] (v3)--node[below,lab]{$\times\,q$}(x);
  \node[note,anchor=west] at (7.6,0.55) {iterate: reroute $\{v_1,x\}$ through $v_2$, \dots};
\end{scope}
\node[font=\small] at (-0.8,0.55) {(a)};
\node[font=\small] at (-0.8,-1.55) {(b)};
\node[font=\small] at (-0.8,-3.65) {(c)};
\end{tikzpicture}%
}
\caption{\textbf{Rerouting along a path.} An edge $\{u,x\}$ can be rerouted through
successive intermediate vertices. The example shows three such vertices. Each step
removes the current long edge and multiplies the weights of the two carrying edges by
$q$.}
    \label{fig:reroute-via-path}
\end{figure}

We use the following direct consequence of \Cref{thm:entanglement-rerouting}.

\begin{corollary}\label{cor:rerouting-set-inclusion}
Let $G=([n],E)$ be a graph, let $w:E\to\mathbb N$, let $e=\{x,y\}\in E$, and let
$z\in[n]\setminus\{x,y\}$. Let $(G',w')$ be obtained by totally rerouting $e$ through $z$ as in \Cref{thm:entanglement-rerouting}. Then
\begin{equation}
    \mathcal S_d(G,w)\subseteq \mathcal S_d(G',w').
\end{equation}
If only the uniform bound $w(f)\leq\chi$ is known, then every new edge weight is at most $\chi^2$, and hence
\begin{equation}
    \mathcal S_d(G,\chi)\subseteq \mathcal S_d(G',\chi^2).
\end{equation}
\end{corollary}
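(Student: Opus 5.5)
The first inclusion follows directly from \Cref{thm:entanglement-rerouting}. Take any $|\psi\rangle\in\mathcal S_d(G,w)$. The theorem gives $|\psi\rangle\in\mathcal S_d(G',w')$. The pair $(G',w')$ is built only from $G$, $w$, $e$ and $z$, and the new tensors $\widetilde A,\widetilde B,\widetilde M$ are built uniformly from any fixed representation. So the same target class works for every $\psi$, and the set inclusion $\mathcal S_d(G,w)\subseteq\mathcal S_d(G',w')$ follows.

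For the second inclusion, let $|\psi\rangle\in\mathcal S_d(G,\chi)$. By \eqref{eq:tn-class-bond-dimension} there is some $w:E\to[\chi]$ with $|\psi\rangle\in\mathcal S_d(G,w)$. Apply the first part to get $|\psi\rangle\in\mathcal S_d(G',w')$. It remains to check that $w'$ maps $E'$ into $[\chi^2]$:
\begin{itemize}
\item On $\{x,z\}$ the new weight is $q\,w(\{x,z\})$. Here $q\le\chi$, and $w(\{x,z\})\le\chi$ whether the edge was present (by the bound on $w$) or absent (by the convention $w=1\le\chi$). So the new weight is at most $\chi^2$.
\item The edge $\{y,z\}$ is handled symmetrically.
\item Every other edge keeps its weight, which is at most $\chi\le\chi^2$.
\end{itemize}
Hence $w'\in\{w'':E'\to[\chi^2]\}$, and therefore $|\psi\rangle\in\mathcal S_d(G',\chi^2)$.

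There is one subtlety. $G'$ in the second statement must be the same graph for every $\psi$, but $G'$ could depend on whether $\{x,z\}$ and $\{y,z\}$ are in $E$. It does not, since $E'$ is determined by $E$ alone and is independent of $w$. So $G'$ is fixed across all $w:E\to[\chi]$. No edge of $E'$ disappears when a weight equals $1$, because such an edge simply remains in $E'$ with weight $1$, which is allowed in $[\chi^2]$.

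No step is a real obstacle. The only point needing care is this bookkeeping of absent edges under the weight-one convention, together with confirming that $E'$, and hence the target class, does not depend on the chosen $w$.
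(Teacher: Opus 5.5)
Your proof is correct and follows the paper's reasoning. The paper treats this corollary as an immediate consequence of \Cref{thm:entanglement-rerouting}: $G'$ and $w'$ depend only on $G$, $w$, $e$, $z$, so the inclusion holds for the whole class, and each enlarged weight is a product of two factors each at most $\chi$. Your checks that $E'$ does not depend on the chosen $w:E\to[\chi]$, and that an absent edge contributes the factor $1$, are the details the paper leaves implicit.
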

%
%
Since a single rerouting step gives an inclusion of TNS classes, any finite
sequence of rerouting steps gives an inclusion from the original TNS class into the
TNS class associated with the final weighted graph. In particular, if we can reroute a
general graph onto a path, we obtain an MPS representation. If we can reroute it onto
a tree, we obtain a TTN representation. The price paid for simplifying the graph is an
increase in the virtual dimensions of the edges that carry rerouted indices.

\subsection{Rerouting to an MPS}\label{subsec:reroute-mps}

We now iteratively apply the single-edge rerouting move from \Cref{thm:entanglement-rerouting}
to transform an arbitrary TN graph into a path, and hence into an MPS representation.
The only choice involved is a linear ordering of the vertices. Once an ordering
$\pi:V\to[n]$ has been fixed, we write
$v_i=\pi^{-1}(i)$ and use the path
$v_1-v_2-\cdots-v_n$ as the target MPS graph.

The construction is conceptually simple. If an original edge $\{v_a,v_b\}$ has
$a<b-1$, we reroute its virtual index along the path
$v_a-v_{a+1}-\cdots-v_b$. This edge contributes a multiplicative factor
$w(\{v_a,v_b\})$ to each path edge $\{v_i,v_{i+1}\}$ with $a\leq i<i+1\leq b$. Thus, the
bond dimension on the path edge $\{v_i,v_{i+1}\}$ is controlled by the number of
original edges crossing the cut
$\{v_1,\ldots,v_i\}\mid\{v_{i+1},\ldots,v_n\}$.
We call these cuts the \emph{prefix cuts} of the ordering.
The maximum size of a prefix cut is the cutwidth of the ordering. Minimising this maximum over all orderings gives $\cw(G)$.

\begin{theorem}[Rerouting to an MPS]
\label{thm:reroute-to-mps}
Let $G=(V,E)$ be a graph with $V=[n]$, let $w:E\to\mathbb N$, and let $\pi:V\to[n]$ be a linear ordering. Write $v_i=\pi^{-1}(i)$ for $i\in[n]$, and let $P_n^\pi$ be the labelled path on vertices $v_1,\ldots,v_n$. Define its edge weights by
\begin{equation}
    w_P(\{v_i,v_{i+1}\})
    =
    \prod_{e\in \cut_G(\{v_1,\ldots,v_i\})} w(e),
    \qquad i\in[n-1],
\end{equation}
where the empty product is $1$. Then
\begin{equation}
    \mathcal S_d(G,w)
    \subseteq
    \mathcal S_d(P_n^\pi,w_P).
    \label{eq:weighted-mps-rerouting}
\end{equation}
Consequently, if only the uniform bound $w(e)\leq\chi$ is specified, then
\begin{equation}
    \mathcal S_d(G,\chi)
    \subseteq
    \mathcal S_d(P_n^\pi,\chi^{\operatorname{Cut\text{-}Width}(\pi)}).
\end{equation}
In particular, for an ordering $\pi^\star$ of minimum cutwidth,
\begin{equation}
    \mathcal S_d(G,\chi)
    \subseteq
    \mathcal S_d(P_n^{\pi^\star},\chi^{\operatorname{Cut\text{-}Width}(G)}).
\end{equation}

Moreover, for a fixed ordering $\pi$, this transformation can be implemented using at
most
\begin{equation}
    \sum_{i=1}^{n-1}
    \left|\cut_G(\{v_1,\ldots,v_i\})\right|
    \leq
    (n-1)\operatorname{Cut\text{-}Width}(\pi)
\end{equation}
single-edge rerouting steps. If no ordering is given, we can first compute an optimal
cutwidth ordering in time $2^{O(\operatorname{Cut\text{-}Width}(G)^2)}n$.
\end{theorem}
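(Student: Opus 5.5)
We construct the MPS representation by applying \Cref{thm:entanglement-rerouting} repeatedly, in the form of \Cref{cor:rerouting-set-inclusion}, routing every non-path edge along the segment of the path $v_1-\cdots-v_n$ between its endpoints. Fix the ordering $\pi$ and process the original edges one at a time. For an edge $f=\{v_a,v_b\}$ with $a<b-1$, reroute it through $v_{a+1}$. This deletes $f$, multiplies the weight of $\{v_a,v_{a+1}\}$ by $w(f)$, and creates or enlarges $\{v_{a+1},v_b\}$ by the factor $w(f)$. Then reroute $\{v_{a+1},v_b\}$ through $v_{a+2}$, and continue until the index sits on path edges only.

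There is one subtlety here. The edge $\{v_{a+1},v_b\}$ may already exist, either originally or from earlier reroutings, and may carry other virtual indices. To keep the bookkeeping clean, I would regard each original edge's index as a separate parallel edge and use \Cref{rem:multi-edges}. Group parallel edges only at the end, or equivalently reroute only the factor belonging to $f$: view the weight of $\{v_{a+1},v_b\}$ as a product and move just the tensor factor carrying $f$'s index. The proof of \Cref{thm:entanglement-rerouting} applies verbatim to a single parallel edge. I expect this to be the main obstacle, since the theorem as stated is a total rerouting of a simple edge.

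Next comes the weight accounting. After all edges have been processed, the only edges left have the form $\{v_i,v_{i+1}\}$. Each original edge $f=\{v_a,v_b\}$ with $a<b$ has contributed exactly one factor $w(f)$ to each path edge $\{v_i,v_{i+1}\}$ with $a\le i<b$; an edge with $b=a+1$ is already a path edge. These are precisely the $i$ with $f\in\cut_G(\{v_1,\ldots,v_i\})$. Grouping the parallel edges by \Cref{rem:multi-edges} therefore gives weight $w_P(\{v_i,v_{i+1}\})$, and chaining the inclusions from \Cref{cor:rerouting-set-inclusion} yields \eqref{eq:weighted-mps-rerouting}. Path edges with empty cut have weight $1$, which is consistent with the convention for absent edges.

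For the uniform bound, fix $w\colon E\to[\chi]$. The inclusion just proved gives $w_P(\{v_i,v_{i+1}\})\le\chi^{|\cut_G(\{v_1,\ldots,v_i\})|}\le\chi^{\cw(\pi)}$. Taking the union over $w$, and padding with zero entries to reach the common bound, gives the second inclusion. Choosing $\pi^\star$ of minimum cutwidth gives the third.

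For the step count, rerouting $f=\{v_a,v_b\}$ takes $b-a-1$ single-edge steps. Since $f$ crosses exactly $b-a$ prefix cuts, this is at most the number of prefix cuts it crosses. Summing over $f$ bounds the total by $\sum_{i=1}^{n-1}|\cut_G(\{v_1,\ldots,v_i\})|$, which is at most $(n-1)\cw(\pi)$. The runtime statement for computing an optimal ordering is quoted from \cite{THILIKOS20051,THILIKOS200525}.
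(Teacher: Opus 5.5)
Your proposal is correct and takes the same approach as the paper. You route each edge $\{v_a,v_b\}$ along $v_a-v_{a+1}-\cdots-v_b$ by iterating \Cref{thm:entanglement-rerouting}, observe that it contributes $w(e)$ to exactly the path edges whose prefix cut it crosses, bound the resulting weights by $\chi^{\cw(\pi)}$, and count at most $b-a-1\le b-a$ steps per edge. Your explicit parallel-edge bookkeeping via \Cref{rem:multi-edges} makes precise what the paper does implicitly when it groups indices with common endpoints; the final weights are the same either way.
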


\begin{proof}
Fix a linear ordering $\pi$ and write $v_i=\pi^{-1}(i)$. We first describe the
rerouting procedure. Consider an original edge $e=\{v_a,v_b\}\in E$ with $a<b$.
If $b=a+1$, then $e$ is already an edge of the target path and no rerouting is needed.
If $b>a+1$, we reroute $e$ through $v_{a+1}$. This removes $\{v_a,v_b\}$ and creates,
or increases the weights of, the two edges $\{v_a,v_{a+1}\}$ and $\{v_{a+1},v_b\}$.
We then reroute $\{v_{a+1},v_b\}$ through $v_{a+2}$, and continue in this way until
the virtual index has been routed along the full path
\begin{equation}
    v_a-v_{a+1}-\cdots-v_b.
\end{equation}
Thus the original edge $e$ contributes its weight $w(e)$ to every path edge
$\{v_i,v_{i+1}\}$ with $a\leq i<b$.

After all original edges have been processed, every nontrivial virtual edge lies on the target path. Each rerouting preserves the represented state by \Cref{thm:entanglement-rerouting}, so every state in $\mathcal S_d(G,w)$ is represented on $(P_n^\pi,w_P)$.

It remains to bound the final path weights. Fix $i\in[n-1]$. By the construction above,
an original edge $e=\{v_a,v_b\}$ with $a<b$ contributes to the final edge
$\{v_i,v_{i+1}\}$ if and only if $a\leq i<b$. This is precisely the condition that
$e$ crosses the cut
$\{v_1,\ldots,v_i\}\mid\{v_{i+1},\ldots,v_n\}$ in the original graph. Hence
\begin{equation}
    w_P(\{v_i,v_{i+1}\})
    =
    \prod_{e\in \cut_G(\{v_1,\ldots,v_i\})} w(e).
\end{equation}
Equation \eqref{eq:weighted-mps-rerouting} now follows from the expression for
$w_P$. Suppose next that $|\psi\rangle\in\mathcal S_d(G,\chi)$. By definition,
there exists a weight function $w_\psi:E\to[\chi]$ such that
$|\psi\rangle\in\mathcal S_d(G,w_\psi)$. Hence
\begin{align}
    w_P(\{v_i,v_{i+1}\})
    &\leq
    \chi^{|\cut_G(\{v_1,\ldots,v_i\})|} \\
    &\leq
    \chi^{\operatorname{Cut\text{-}Width}(\pi)}.
\end{align}
This proves
\begin{equation}
    \mathcal S_d(G,\chi)
    \subseteq
    \mathcal S_d
    \left(
        P_n^\pi,
        \chi^{\operatorname{Cut\text{-}Width}(\pi)}
    \right).
\end{equation}
Optimising over $\pi$ gives the bound in terms of
$\operatorname{Cut\text{-}Width}(G)$.

Finally, we bound the number of rerouting steps. An original edge
$e=\{v_a,v_b\}$ with $a<b$ requires at most $b-a-1\leq b-a$ rerouting steps.
The total number of rerouting steps is at most
\begin{align}
    \sum_{\{v_a,v_b\}\in E,\ a<b} (b-a)
    &=
    \sum_{i=1}^{n-1}
    \left|\cut_G(\{v_1,\ldots,v_i\})\right| \\
    &\leq
    (n-1)\operatorname{Cut\text{-}Width}(\pi).
\end{align}
If $\pi$ is not supplied, we first compute an optimal cutwidth ordering using the
standard algorithm. This takes time
$2^{O(\operatorname{Cut\text{-}Width}(G)^2)}n$ (see \Cref{subsec:graph-theory}), after which the above rerouting procedure
is applied.
\end{proof}

\begin{algorithm}
\caption{Reroute to an MPS}
\label{alg:reroute-to-mps}
\textbf{Input:} A graph $G=(V,E)$ with $V=[n]$, an edge-dimension function $w:E\to\mathbb N$, and optionally a linear
ordering $\pi:V\to[n]$.

\textbf{Output:} A weighted labelled path $P_n^\pi$ and a sequence of rerouting steps
transforming $G$ into $P_n^\pi$.
\begin{algorithmic}[1]
\If{$\pi$ is not provided}
    \State Compute an optimal cutwidth ordering $\pi$ of $G$.
\EndIf
\State Let $v_i\leftarrow \pi^{-1}(i)$ for every $i\in[n]$.
\State $G'\leftarrow G$.
\For{$i=1$ to $n-2$}
    \For{$j=i+2$ to $n$}
        \If{$\{v_i,v_j\}\in E(G')$}
            \State Reroute the edge $\{v_i,v_j\}$ through $v_{i+1}$ using
            \Cref{thm:entanglement-rerouting}, and update $G'$.
        \EndIf
    \EndFor
\EndFor
\State Retain every target path edge formally, assigning weight $1$ to a trivial virtual leg.
\State \Return the weighted path $P_n^\pi$ and the recorded rerouting sequence.
\end{algorithmic}
\end{algorithm}

\Cref{alg:reroute-to-mps} describes the corresponding transformation of the weighted graph. Given a tensor network representation, each rerouting step is implemented by the tensor update in \Cref{thm:entanglement-rerouting}.
The algorithm may group several virtual indices into a single weighted edge whenever
they have the same endpoints. This is equivalent to keeping parallel virtual legs and then
combining their dimensions by multiplication.

\begin{example}[Square-grid PEPS to MPS rerouting]
\label{ex:grid-peps-to-mps}
Let $G_{m,m}$ be the open-boundary $m\times m$ square grid, which has
$m^2$ vertices. Consider a PEPS-like tensor network state with underlying graph
$G_{m,m}$, physical dimension $d$, and bond dimension $\chi$.

Order the vertices row by row as illustrated in \Cref{fig:peps-rerouting}, i.e., label the vertex in row $r$ and column $c$ by
$(r,c)$ and set
\begin{equation}
    \pi(r,c)=(r-1)m+c.
\end{equation}
For every prefix cut in this ordering, at most $m$ vertical grid edges cross the cut, and
at most one horizontal grid edge crosses the cut. Hence,
\begin{equation}
    \operatorname{Cut\text{-}Width}(\pi)\leq m+1.
\end{equation}
By \Cref{thm:reroute-to-mps}, we obtain
\begin{equation}
    \mathcal S_d(G_{m,m},\chi)
    \subseteq
    \mathcal S_d(P_{m^2}^{\pi},\chi^{m+1}).
\end{equation}
Thus an $m\times m$ PEPS with physical dimension $d$ and bond dimension $\chi$ can
be represented as an MPS on the same $m^2$ qudits, with the same physical dimension
$d$, and with MPS bond dimension at most $\chi^{m+1}$.

This should be distinguished from blocking each row into a single supersite. Row
blocking would produce an MPS on only $m$ sites, but with physical dimension $d^m$.
The rerouting construction above does not block physical systems. It preserves the
original $m^2$ local physical subsystems, and the local physical dimension remains $d$. The
change in representation affects only the bond dimension of the MPS.

\begin{figure}[ht]
    \centering
\begin{tikzpicture}[
  v/.style={circle,draw=black!80,fill=white,line width=0.7pt,minimum size=14pt,inner sep=0.5pt,font=\scriptsize},
  gedge/.style={line width=0.7pt,black!70},
  cedge/.style={line width=1.4pt,figaccent},
  lab/.style={font=\small,figaccent},
  note/.style={font=\small\itshape,black!70}
]
\fill[figaccent!12,rounded corners=6pt] (-0.45,0.45) rectangle (3.75,-0.45);
\fill[figaccent!12,rounded corners=6pt] (-0.45,-0.65) rectangle (1.55,-1.55);
\foreach \r in {1,...,4}{
  \foreach \c in {1,...,3}{
    \draw[gedge] ({\c-1},{-(\r-1)*1.1}) -- ({\c},{-(\r-1)*1.1});
  }}
\foreach \r in {1,...,3}{
  \foreach \c in {1,...,4}{
    \draw[gedge] ({\c-1},{-(\r-1)*1.1}) -- ({\c-1},{-\r*1.1});
  }}
\draw[cedge] (2,0)--(2,-1.1);
\draw[cedge] (3,0)--(3,-1.1);
\draw[cedge] (0,-1.1)--(0,-2.2);
\draw[cedge] (1,-1.1)--(1,-2.2);
\draw[cedge] (1,-1.1)--(2,-1.1);
\foreach \r in {1,...,4}{
  \foreach \c in {1,...,4}{
    \pgfmathtruncatemacro{\i}{(\r-1)*4+\c}
    \node[v] at ({\c-1},{-(\r-1)*1.1}) {\i};
  }}
\node[note,anchor=west] at (4.3,-0.2) {row-major ordering $\pi(r,c)=(r-1)m+c$};
\node[note,anchor=west] at (4.3,-0.9) {shaded: prefix $\{v_1,\dots,v_6\}$};
\node[note,anchor=west,text width=6.2cm] at (4.3,-1.9)
  {thick: $\operatorname{cut}(\{v_1,\dots,v_6\})$, at most $m$ vertical edges and one horizontal edge};
\draw[-latex,line width=0.9pt,black!70] (1.5,-3.85) -- (1.5,-4.75)
  node[midway,right,note]{reroute along the ordering};
\begin{scope}[yshift=-5.6cm]
  \foreach [count=\k] \i in {1,2,3,4,5,6,7}{
    \node[v] (p\k) at ({(\k-1)*0.95},0) {\i};
  }
  \node (dots) at (7.05,0) {$\cdots$};
  \node[v] (p16) at (7.9,0) {16};
  \foreach \k [evaluate=\k as \kk using int(\k+1)] in {1,...,6}{
    \ifnum\k=6
      \draw[cedge] (p\k)--node[below=3pt,lab]{$\le\chi^{\,m+1}$}(p\kk);
    \else
      \draw[gedge] (p\k)--(p\kk);
    \fi
  }
  \draw[gedge] (p7)--(dots);
  \draw[gedge] (dots)--(p16);
\end{scope}
\node[font=\small] at (-1.0,0.45) {(a)};
\node[font=\small] at (-1.0,-5.15) {(b)};
\end{tikzpicture}
\caption{\textbf{Rerouting a square grid to a path, shown for $m=4$.} Panel (a): the row-major
ordering $\pi(r,c)=(r-1)m+c$ of the $m\times m$ grid, with the prefix
$\{v_1,\ldots,v_6\}$ shaded and its prefix cut highlighted, consisting of $m$
vertical edges and one horizontal edge. Panel (b): the resulting path on $m^2$
vertices. Each MPS bond carries the product of the virtual dimensions of the grid
edges crossing the corresponding prefix cut, here at most $\chi^{m+1}$ on the bond
$\{v_6,v_7\}$.}
    \label{fig:peps-rerouting}
\end{figure}
\end{example}

\subsection{Rerouting to a TTN}\label{subsec:reroute-ttn}

We next construct a TTN representation. Unlike the MPS construction of
\Cref{thm:reroute-to-mps}, this construction groups physical systems according
to the bags of a tree-cut decomposition.

Let $(\mathcal T,\mathcal X)$ be a tree-cut decomposition of $G$, where$\mathcal X=\{X_t:t\in V(\mathcal T)\}$.
We group the qudits in each bag $X_t$ into one physical subsystem, whose local
dimension is $d^{|X_t|}$, and use $\mathcal T$ as the target TTN graph. Each
original edge whose endpoints lie in different bags is routed along the unique
path between those bags in $\mathcal T$. A tree edge then carries the virtual
indices of the original edges crossing the associated cut.

Tree-cutwidth bounds two quantities. The torso sizes bound the bag sizes and
hence the local dimensions of the grouped subsystems. The adhesions bound the
number of virtual indices routed through each tree edge and hence the resulting
bond dimensions.

Since tree-cut decompositions are defined using near partitions, some bags may be
empty. Empty bags do not correspond to physical subsystems. Before constructing the
TTN, we remove them by contracting them into neighbouring bags, as formalised in the
next lemma.

\begin{lemma}[Removing empty bags]
\label{lem:remove-empty-bags}
Let $(\mathcal{T},\mathcal X)$ be a tree-cut decomposition of a graph $G$ of width $k$, where
$\mathcal X=\{X_t:t\in V(\mathcal{T})\}$ is a near partition of $V(G)$. Then one can obtain a
tree $\hat{\mathcal{T}}$ and a partition
$\hat{\mathcal X}=\{\hat X_t:t\in V(\hat{\mathcal{T}})\}$ of $V(G)$ into nonempty
bags such that:
\begin{enumerate}
    \item Every bag $\hat X_t$ is one of the nonempty bags of the original
    decomposition.
    \item For every edge $f\in E(\hat{\mathcal{T}})$, the cut of $V(G)$ induced by
    $\hat{\mathcal{T}}-f$ is equal to the cut induced by some edge of $\mathcal{T}$.
    \item Every bag has size at most $k$, and every cut induced as in 2. has size at most
    $k$.
\end{enumerate}
\end{lemma}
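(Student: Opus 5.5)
Assume $V(G)\neq\emptyset$; otherwise there is nothing to prove. The plan is to remove empty bags one at a time by edge contractions in $\mathcal T$, checking that each contraction preserves a slightly stronger invariant. The invariant is that every bag of the current decomposition is a bag of the original one. It also requires that every edge $f$ of the current tree induces a vertex bipartition of $V(G)$, namely the bags on the two sides of the current tree minus $f$, that equals the bipartition induced by some edge of $\mathcal T$. The number of vertices strictly decreases, so the process terminates. When it stops, every bag is nonempty and 1 and 2 hold.

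The step itself goes as follows. Pick a vertex $t$ of the current tree with $X_t=\emptyset$. Since $V(G)\neq\emptyset$, some bag is nonempty, so the tree has at least two vertices and $t$ has a neighbour $s$. Contract the tree edge $\{s,t\}$ into a single vertex carrying bag $X_s\cup X_t=X_s$. We may keep the root at $s$ if $t$ was the root, but rootedness plays no role in items 1 to 3. The new bag equals an old one, so 1 is preserved. Take any remaining edge $f$ of the contracted tree. It is an edge of the tree before contraction, and $\{s,t\}$ lies entirely on one side of $f$ there. Hence deleting $f$ before or after the contraction splits the bags into the same two unions, since $X_t$ is empty and merging it changes no union. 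The induced cut of $V(G)$ is therefore unchanged, and by induction it equals the cut of some edge of $\mathcal T$, which gives 2. The edge $\{s,t\}$ disappears, so nothing needs to be checked for it.

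For 3, bag sizes are handled by \Cref{rem:torso-size-bounds-bag-size}: every original bag satisfies $|X_t|\le\tor(t)\le k$, and by 1 every new bag is an original bag. For cut sizes, every edge $\{u,t\}$ of $\mathcal T$ with $t$ the child induces exactly the cut counted by $\adh(t)\le k$. Rerooting does not matter, because adhesion is defined via the two components of $\mathcal T-\{u,t\}$, so the count is symmetric. Combined with 2, every cut of $\hat{\mathcal T}$ has size at most $k$.

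The only subtle point is the claim that contraction does not alter the bipartition of any other edge. This holds precisely because the contracted edge sits on one side, which I verify by noting that the component structure of a tree minus $f$ is unaffected by contracting an edge not equal to $f$. A secondary bookkeeping point is that 2 is stated for cuts of $V(G)$ rather than for the edge sets. Equal bipartitions give equal sets $\cut_G(\cdot)$, and that settles it.
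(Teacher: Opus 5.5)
Your proof is correct and follows the same route as the paper's: contract a tree edge at an empty bag, note that the merged bag is $X_s\cup X_t=X_s$ and that each surviving tree edge induces the same bipartition of $V(G)$, then bound bag sizes by $|X_t|\le\operatorname{tor}(t)\le k$ and cut sizes by the adhesions. Your explicit invariant and your remarks on rerooting and on $V(G)\neq\emptyset$ are somewhat more careful bookkeeping than the paper gives.
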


Removing empty bags can increase torso sizes and hence the width of the resulting decomposition. The lemma shows, however, that it preserves the original bounds on bag sizes and on cuts induced by decomposition-tree edges.

\begin{proof}
Starting from $(\mathcal{T},\mathcal X)$, repeatedly choose a vertex $t\in V(\mathcal{T})$ with
$X_t=\emptyset$ and contract a tree edge incident to $t$. If the chosen edge is
$\{s,t\}$, the new bag assigned to the contracted vertex is $X_s\cup X_t=X_s$. Thus,
contracting an empty bag does not change any nonempty bag.

We claim that this operation does not increase any relevant adhesion. Indeed, every
tree edge that remains after the contraction corresponds to an edge of the previous tree.
The two connected components obtained by deleting that edge may have changed by the
presence or absence of the empty bag, but the union of the original graph vertices on
each side is unchanged because $X_t=\emptyset$. Hence the induced cut in $G$ is the same
as before. Iterating this argument proves that every edge of the final tree
$\hat{\mathcal{T}}$ induces a cut that already appeared as an adhesion cut in the original
tree-cut decomposition.

After all empty bags have been removed, the remaining bags form a partition of $V(G)$
into nonempty sets. Since the original decomposition has width $k$, all adhesion cuts
have size at most $k$. Moreover, using the standard convention for tree-cut torsos that
the vertices of the central bag are retained in the torso, we have
\begin{equation}
    |X_t|\leq \operatorname{tor}(t)\leq k
\end{equation}
for every original nonempty bag $X_t$. Every final bag also has size at most
$k$.
\end{proof}

The proof of \Cref{thm:entanglement-rerouting} does not require equal local physical dimensions, so it applies after grouping the systems in each bag. Parallel virtual legs between grouped tensors are combined as in \Cref{rem:multi-edges}.

\begin{theorem}[Rerouting to a TTN]
\label{thm:reroute-to-ttn}
Let $G=([n],E)$ be a graph, let $w:E\to\mathbb N$, and let $|\psi\rangle\in\mathcal S_d(G,w)$. Let $(\mathcal{T},\mathcal X)$ be a tree-cut decomposition of $G$ of width $k\geq1$. Let $(\hat{\mathcal{T}},\hat{\mathcal X})$ be the tree-indexed partition obtained from $(\mathcal{T},\mathcal X)$ by removing empty bags as in \Cref{lem:remove-empty-bags}. Write $m=|V(\hat{\mathcal{T}})|$.

Then, after grouping all qudits in the same bag of $\hat{\mathcal X}$ into one physical subsystem, $|\psi\rangle$ has a TTN representation on the weighted tree $(\hat{\mathcal{T}},w_{\hat{\mathcal{T}}})$. More explicitly, for every edge $f\in E(\hat{\mathcal{T}})$, let $A_f\subseteq[n]$ be the union of the bags on one side of the cut $\hat{\mathcal{T}}-f$ and define
\begin{equation}
    w_{\hat{\mathcal{T}}}(f)
    =
    \prod_{e\in \cut_G(A_f)} w(e),
\end{equation}
where the empty product is $1$. Under the canonical identification
\begin{equation}
    \bigotimes_{v\in[n]}\mathbb C^d
    \cong
    \bigotimes_{t\in V(\hat{\mathcal{T}})}(\mathbb C^d)^{\otimes |\hat X_t|},
\end{equation}
and after padding the grouped local Hilbert spaces to dimension $d^k$ if needed, we obtain
\begin{equation}
    |\psi\rangle_{\hat{\mathcal X}}
    \in
    \mathcal S_{d^k}(\hat{\mathcal{T}},w_{\hat{\mathcal{T}}}).
    \label{eq:weighted-ttn-rerouting}
\end{equation}
If $w(e)\leq\chi$ for every original edge, then
\begin{equation}
    w_{\hat{\mathcal{T}}}(f)\leq\chi^k
\end{equation}
for every $f\in E(\hat{\mathcal{T}})$. Hence, after the same grouping and
padding, every state in $\mathcal S_d(G,\chi)$ has a representation in
$\mathcal S_{d^k}(\hat{\mathcal{T}},\chi^k)$.
Moreover,
\begin{equation}
    \left\lceil \frac{n}{k}\right\rceil
    \leq
    m
    \leq
    n.
\end{equation}

In particular, if $(\mathcal{T},\mathcal X)$ is an optimal tree-cut decomposition, then one may
take $k=\operatorname{Tree\text{-}Cut\text{-}Width}(G)$. 

If no decomposition is supplied,
one may first compute a tree-cut decomposition of width at most
$2\operatorname{Tree\text{-}Cut\text{-}Width}(G)$ in time
$2^{O(\operatorname{Tree\text{-}Cut\text{-}Width}(G)^2
\log \operatorname{Tree\text{-}Cut\text{-}Width}(G))}n^2$, and the same statement holds
with $k=2\operatorname{Tree\text{-}Cut\text{-}Width}(G)$.
\end{theorem}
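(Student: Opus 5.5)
The plan is to reduce \Cref{thm:reroute-to-ttn} to the single-edge move of \Cref{thm:entanglement-rerouting}, applied to a grouped tensor network, and then read off the bond dimensions from the cuts controlled by \Cref{lem:remove-empty-bags}.

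\textbf{Grouping.} First apply \Cref{lem:remove-empty-bags} to obtain $(\hat{\mathcal T},\hat{\mathcal X})$ with nonempty bags of size at most $k$. Every edge $f$ of $\hat{\mathcal T}$ induces a cut that already appeared as an adhesion cut of $(\mathcal T,\mathcal X)$, so it has size at most $k$. Next, contract the tensors of $G$ inside each bag $\hat X_t$ into a single tensor. Its physical leg ranges over $[d]^{|\hat X_t|}$, and its virtual legs are the original edges leaving $\hat X_t$. Internal edges are summed out. This produces a tensor network on the quotient multigraph $G/\hat{\mathcal X}$, whose vertices are the bags. By \Cref{rem:multi-edges}, parallel edges between two bags are merged, with weight equal to the product of their weights. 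The contraction is unchanged, so the represented vector equals $|\psi\rangle$ under the canonical identification. We then pad each grouped physical space to dimension $d^k$ by embedding and setting the extra amplitudes to zero. This is legitimate because $|\hat X_t|\le k$.

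\textbf{Rerouting.} As noted before the theorem, the proof of \Cref{thm:entanglement-rerouting} never uses equal physical dimensions. So for each quotient edge $\{s,t\}$ that is not an edge of $\hat{\mathcal T}$, we route it along the unique path $s=t_0,t_1,\dots,t_\ell=t$ in $\hat{\mathcal T}$. We do this by iterated rerouting through $t_1,t_2,\dots$, exactly as in the proof of \Cref{thm:reroute-to-mps}. Each original edge $e$ of $G$ then contributes the factor $w(e)$ to precisely those tree edges $f$ on the path between the bags of its endpoints.

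\textbf{Bond dimensions.} In a tree, $f$ lies on the path between the bags of $e$'s endpoints if and only if the two endpoints lie on opposite sides of $\hat{\mathcal T}-f$, that is, if and only if $e\in\cut_G(A_f)$. Hence the final weight on $f$ is $\prod_{e\in\cut_G(A_f)}w(e)=w_{\hat{\mathcal T}}(f)$. Edges of $\hat{\mathcal T}$ that carry nothing are given weight $1$ formally. This proves \eqref{eq:weighted-ttn-rerouting}. When every $w(e)\le\chi$, the bound $|\cut_G(A_f)|\le k$ from \Cref{lem:remove-empty-bags} gives $w_{\hat{\mathcal T}}(f)\le\chi^k$. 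Taking the union over weight functions yields the class inclusion into $\mathcal S_{d^k}(\hat{\mathcal T},\chi^k)$.

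\textbf{Size of the tree.} The bags partition $[n]$ into $m$ nonempty sets, so $m\le n$. Each bag has size at most $k$, so $n\le mk$, which gives $m\ge\lceil n/k\rceil$.

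\textbf{Final clauses.} These follow by taking $k=\operatorname{Tree\text{-}Cut\text{-}Width}(G)$ for an optimal decomposition. When no decomposition is supplied, we invoke the algorithm of \cite{Kim2018} to get width at most $2\operatorname{Tree\text{-}Cut\text{-}Width}(G)$ in the stated time.

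\textbf{Main obstacle.} The delicate step is bookkeeping rather than a genuine difficulty: justifying that the grouped and padded object is a bona fide element of $\mathcal S_{d^k}(\hat{\mathcal T},\cdot)$. This requires three checks. First, contracting internal edges and merging parallel legs preserves the state. Second, rerouting through an intermediate bag tensor is valid even if an edge to that bag did not previously exist; this is handled by the weight-one convention. Third, the path-membership characterisation holds for every tree edge. I would state the grouping step as a short auxiliary observation before applying the rerouting argument.
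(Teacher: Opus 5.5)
Your proposal is correct and follows essentially the same route as the paper. You apply \Cref{lem:remove-empty-bags}, contract each bag into one grouped tensor (padded to $d^k$, with parallel legs merged via \Cref{rem:multi-edges}), and reroute each inter-bag edge along the unique path in $\hat{\mathcal T}$ by iterating \Cref{thm:entanglement-rerouting}. You then identify the weight of each tree edge $f$ with the product over $\cut_G(A_f)$, whose size is at most $k$. The only cosmetic difference is that you merge parallel inter-bag edges into quotient edges before rerouting, whereas the paper reroutes each original edge individually; both give the same final weights.
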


\begin{proof}
Let $(\hat{\mathcal{T}},\hat{\mathcal X})$ be obtained from $(\mathcal{T},\mathcal X)$ by
\Cref{lem:remove-empty-bags}. By that lemma, the bags of
$\hat{\mathcal X}$ form a partition of $[n]$ into nonempty sets, each of size at
most $k$, and every cut of $G$ induced by an edge of $\hat{\mathcal{T}}$ has size at most $k$.

We first group the physical systems according to the bags. For each
$t\in V(\hat{\mathcal T})$, we contract the subnetwork induced by $\hat X_t$,
including every virtual edge whose endpoints both lie in $\hat X_t$. The
resulting tensor has one physical index of dimension $d^{|\hat X_t|}$ and one
virtual leg for each original edge leaving the bag. Since $|\hat X_t|\leq k$,
we embed every grouped physical space into a space of dimension $d^k$. We
combine parallel virtual legs as in \Cref{rem:multi-edges} and then apply
\Cref{thm:entanglement-rerouting}. Grouping and padding preserve the represented
state under the canonical Hilbert-space identification above.

It remains to transform the virtual graph between the grouped tensors into the tree
$\hat{\mathcal{T}}$. Consider an original edge $e=\{u,v\}\in E$. Let $s,t\in V(\hat{\mathcal{T}})$ be
the unique bag labels such that $u\in\hat X_s$ and $v\in\hat X_t$. If $s=t$,
then $e$ was already absorbed into the local tensor at $s$. If $s\neq t$, let
\begin{equation}
    x_1=s,x_2,\ldots,x_\ell=t
\end{equation}
be the unique path from $s$ to $t$ in $\hat{\mathcal T}$. If $\ell=2$, the
edge already lies on $\hat{\mathcal T}$ and no rerouting is required. If
$\ell\geq3$, we route the virtual index of $e$ along this path by repeated
applications of \Cref{thm:entanglement-rerouting}. First reroute the edge from
$x_1$ to $x_\ell$ through $x_2$, then reroute the remaining edge from $x_2$
to $x_\ell$ through $x_3$, and continue through the intermediate vertices
$x_2,\ldots,x_{\ell-1}$. Each step preserves the represented state.

After performing this procedure for every original inter-bag edge, all nontrivial virtual legs lie on edges of $\hat{\mathcal{T}}$. The formal tree edges are retained even when their final virtual dimension is one. Fix a tree edge $f\in E(\hat{\mathcal{T}})$, and
let $A_f$ be the union of the bags on one side of $\hat{\mathcal{T}}-f$. An original edge contributes its virtual dimension to $f$ exactly when its endpoints lie on opposite sides of this cut. The final weight of $f$ is
\begin{equation}
    w_{\hat{\mathcal{T}}}(f)
    =
    \prod_{e\in \cut_G(A_f)} w(e).
\end{equation}
By \Cref{lem:remove-empty-bags}, the cut $\cut_G(A_f)$ has size at most
$k$. Since every original edge weight is at most $\chi$, we obtain
\begin{equation}
    w_{\hat{\mathcal{T}}}(f)
    \leq
    \chi^{|\cut_G(A_f)|}
    \leq
    \chi^k.
\end{equation}
Thus the resulting TTN has bond dimension at most $\chi^k$.

Finally, since the bags form a partition of $[n]$ into nonempty sets and every bag has
size at most $k$, the number $m$ of bags satisfies
\begin{equation}
    \left\lceil \frac{n}{k}\right\rceil
    \leq
    m
    \leq
    n.
\end{equation}
This proves the required representation. The algorithmic statement follows by using the
standard approximation algorithm for tree-cut decompositions, see \Cref{subsec:graph-theory}.
\end{proof}

\begin{algorithm}
\caption{Reroute to a TTN}
\label{alg:reroute-to-ttn}
\textbf{Input:} A graph $G=([n],E)$, an edge-dimension function $w:E\to\mathbb N$, and optionally a tree-cut decomposition
$(\mathcal{T},\mathcal X)$ of $G$.

\textbf{Output:} A tree $\hat{\mathcal{T}}$, a partition
$\hat{\mathcal X}=\{\hat X_t:t\in V(\hat{\mathcal{T}})\}$ of $[n]$, edge weights
$w_{\hat{\mathcal{T}}}$ on $\hat{\mathcal{T}}$, and a sequence of rerouting steps realising the
corresponding TTN representation.
\begin{algorithmic}[1]
\If{$(\mathcal{T},\mathcal X)$ is not provided}
    \State Compute a tree-cut decomposition $(\mathcal{T},\mathcal X)$ of width
    $\leq 2\operatorname{Tree\text{-}Cut\text{-}Width}(G)$.
\EndIf
\While{there exists $t\in V(\mathcal{T})$ with $X_t=\emptyset$}
    \State Choose a neighbour $s$ of $t$ in $\mathcal{T}$.
    \State Contract the tree edge $\{s,t\}$ in $\mathcal{T}$.
    \State Assign the bag $X_s\cup X_t$ to the new contracted vertex.
\EndWhile
\State Let $(\hat{\mathcal{T}},\hat{\mathcal X})$ be the resulting tree-indexed partition.
\State Set $w_{\hat{\mathcal{T}}}(f)\leftarrow 1$ for every $f\in E(\hat{\mathcal{T}})$.
\ForAll{$t\in V(\hat{\mathcal{T}})$}
    \State Group all physical systems in $\hat X_t$ into one physical subsystem.
    \State Absorb all original edges with both endpoints in $\hat X_t$ into the
    local tensor at $t$.
\EndFor
\ForAll{$e=\{u,v\}\in E(G)$}
    \State Let $s,t\in V(\hat{\mathcal{T}})$ be the unique vertices such that
    $u\in \hat X_s$ and $v\in \hat X_t$.
    \If{$s\neq t$}
        \State Let $x_1=s,x_2,\ldots,x_k=t$ be the unique path from $s$ to $t$
in $\hat{\mathcal T}$.
\For{$j=1$ to $k-1$}
    \State
    $w_{\hat{\mathcal T}}(\{x_j,x_{j+1}\})
    \leftarrow
    w_{\hat{\mathcal T}}(\{x_j,x_{j+1}\})w(e)$.
\EndFor
\If{$k\geq3$}
    \State Record the rerouting sequence through
    $x_2,\ldots,x_{k-1}$ using
    \Cref{thm:entanglement-rerouting}.
\EndIf
    \EndIf
\EndFor
\State Retain every edge of $\hat{\mathcal{T}}$ formally, including edges of weight one.
\State \Return $(\hat{\mathcal{T}},\hat{\mathcal X},w_{\hat{\mathcal{T}}})$ and the recorded rerouting sequence.
\end{algorithmic}
\end{algorithm}

\Cref{alg:reroute-to-ttn} computes the final tree weights without updating the
tensors after each rerouting step. An original edge contributes exactly to the
tree edges on the path between the bags containing its endpoints. Parallel
virtual legs are combined as in \Cref{rem:multi-edges}.

\begin{remark}[Using the representation in tomography]
\label{rem:ttn-rerouting-for-tomography}
The resulting TTN has $m=|V(\hat{\mathcal T})|\leq n$ grouped physical
subsystems. After padding, each local physical space has dimension $d^k$, and
the bond dimension is at most $\chi^k$. When this representation is used for
tomography, the relevant tree degree is $\Delta(\hat{\mathcal T})$.
\end{remark}

\section{Tensor network state tomography}
\label{section:tomography}

We now turn from representation questions to tomography. Throughout this section,
learning a state means producing a classical description $\hat\rho$ of an unknown
state $\rho$ such that $\|\rho-\hat\rho\|_1$ is small. We call this quantity the
trace norm error. An $\epsilon$-accurate reconstruction satisfies
$\|\rho-\hat\rho\|_1\leq\epsilon$. The corresponding trace distance is
$D(\rho,\hat\rho):=\frac{1}{2}\|\rho-\hat\rho\|_1$.

The tensor network states considered in this paper are pure states. However, the
tomography procedures below repeatedly learn reduced states of subsystems, and these
reduced states are generally mixed. We now formulate the basic tomography task for arbitrary density operators.

\begin{definition}[State tomography]
\label{def:state-tomography}
Let $\mathcal S$ be a known set of $n$-qudit states. The state tomography problem for
$\mathcal S$ with accuracy parameter $\epsilon$ and confidence parameter $\delta$ is the
following task: given i.i.d. copies of an unknown state $\rho\in\mathcal S$, output a
classical description of a state $\hat\rho$ such that
\begin{equation}
    \|\rho-\hat\rho\|_1\leq \epsilon
\end{equation}
with probability at least $1-\delta$.
If we additionally require $\hat\rho\in\mathcal S$, then we call the task \emph{proper} state
tomography for $\mathcal S$.
\end{definition}

We aim to minimise the number of copies of $\rho$ used by a tomography
procedure. We also aim for computational efficiency, although in several places we
separate the statistical and computational aspects of the analysis.

When $\mathcal S$ is the set of all states of rank at most $r$ on a Hilbert space of
dimension $D$, we call the task rank-$r$ state tomography. If $D=d^m$ is
the dimension of an $m$-qudit subsystem, then sample-optimal rank-$r$ tomography in
trace norm has copy complexity
\begin{equation}
    O\left(
        \frac{r d^m+\log(1/\delta)}{\epsilon^2}
    \right),
\end{equation}
and this scaling is optimal up to constants \cite{odonnell2015efficientquantumtomography, Haah_2017, scharnhorst2025optimallowerboundsquantum, pelecanos2025debiasedkeylsalgorithmnew, pelecanos2025mixedstatetomographyreduces}.

\begin{remark}[Choice of tomography routine]
\label{rem:tomography-modularity}
Sample-optimal tomography generally requires collective measurements across many
copies~\cite{chen2023whendoes,chen2024optimal-tradeoff-state-tomography}. When such
measurements are unavailable, one may instead use a single-copy tomography
procedure, at the cost of a larger copy complexity. Every subsequent call to
subnormalised tomography may be replaced by another routine with the same
input-output guarantee. The resulting copy bounds are obtained by substituting
the copy complexity of the chosen routine into
\Cref{lem:subnormalised-tomography-known-mu}.
\end{remark}

The algorithms below use iterative disentangling. At each stage, they learn a
small reduced state, apply a unitary that approximately disentangles part of the
system, project that subsystem onto $|0\rangle$, and continue with the successful
branch. Each subsequent tomography call uses fresh copies after applying the unitaries
and projections learned in the preceding steps.

The next subsection develops tomography for the subnormalised states produced by
known postselection maps. The map changes from one step to the next but is known
once the preceding tomography outcomes have been fixed.

The rest of this section is organised as follows. In \Cref{subsec:subnormalised-tomography},
we show how ordinary tomography implies tomography for subnormalised postselected
states. In \Cref{subsec:mps-tomography}, we recall the iterated-disentangling learner for
MPS. In \Cref{sec:ttn-tomography}, we extend this strategy to TTN states.
In \Cref{subsec:general-TNS-tomography-black-box}, we combine these learners
with the rerouting results of \Cref{section:rerouting} to obtain black-box
tomography for general-graph TNSs.  In \Cref{subsec:direct-graph-tomography}, we give a direct learner whose cost depends on the cuts of the original graph.  In \Cref{subsec:agnostic-tns-tomography}, we extend the
direct learner to the agnostic setting.

\subsection{Auxiliary lemmas on subnormalised tomography}
\label{subsec:subnormalised-tomography}

A subnormalised state is a positive semidefinite operator with trace at most one. The
subnormalised states that arise in our algorithms are produced by postselection. We use
the following notation to describe them.

A known postselection procedure is a two-outcome physical operation whose
successful branch is represented by a known linear map
$K:\mathcal H\to\mathcal H$ satisfying
\begin{equation}
    K^\dagger K\leq I.
\end{equation}
On success, the input state is mapped to
\begin{equation}
    \rho\mapsto K\rho K^\dagger.
\end{equation}
In our applications, $K$ is a composition of known unitaries and
computational-basis projections.

Let $\rho$ be a state on $\mathcal H=(\mathbb C^d)^{\otimes n}$. For a known
postselection operation with measured Kraus operator $K$ and for a subsystem
$L\subseteq[n]$, we define the subnormalised postselected reduced state
\begin{equation}
    \sigma_{K,L}(\rho)
    :=
    \operatorname{tr}_{[n]\setminus L}\left[K\rho K^\dagger\right].
\end{equation}
The success probability of the postselection operation is
\begin{equation}
    \mu_K(\rho)
    :=
    \operatorname{tr}\left[K\rho K^\dagger\right].
\end{equation}
When $\mu_K(\rho)>0$, the corresponding normalised postselected reduced state is
\begin{equation}
    \rho_{K,L}
    :=
    \frac{\sigma_{K,L}(\rho)}{\mu_K(\rho)}.
\end{equation}
With this notation established, we can now introduced the task of subnormalised tomography.

\begin{definition}[Subnormalised tomography]
\label{def:subnormalised-tomography}
Let $\mathcal S$ be a known set of $n$-qudit states. Let $K$ describe the success branch of a known postselection procedure, and let $L\subseteq[n]$ be a subsystem. The
subnormalised tomography problem for $\mathcal S$, $K$, and $L$, with accuracy
parameter $\epsilon$ and confidence parameter $\delta$, is the following task: given
i.i.d. copies of an unknown state $\rho\in\mathcal S$, output a classical description of
a subnormalised state $\hat\sigma_L$ on $L$ such that
\begin{equation}
    \left\|
        \sigma_{K,L}(\rho)
        -
        \hat\sigma_L
    \right\|_1
    \leq
    \epsilon
\end{equation}
with probability at least $1-\delta$.
\end{definition}

The simplest instance of a known postselection procedure is a computational-basis
projection, for example
\begin{equation}
    K=
    |0^i\rangle\langle 0^i|\otimes I_d^{\otimes(n-i)}.
\end{equation}
This describes the postselection in the original MPS learner of
\cite{Cramer_2010} after each disentangling unitary has been absorbed into a
change of basis.  In our formulation the unitaries are kept explicit: after $i$
steps of \Cref{alg:learn-mps}, the relevant operator is
$K_i=P_iU_i\cdots P_1U_1$, a composition of computational-basis projections with
previously learned disentangling unitaries.  This is the general form allowed by
the definition above.

We first consider the case in which the success probability $\mu_K(\rho)$ is known.

\begin{lemma}[Subnormalised tomography with known success probability]
\label{lem:subnormalised-tomography-known-mu}
Let $\mathcal S$ be a known set of $n$-qudit states, let $K$ be the measured Kraus
operator of a known postselection operation, and let $L\subseteq[n]$. Suppose that, for
every $\rho\in\mathcal S$ with $\mu_K(\rho)>0$, the normalised postselected reduced
state $\rho_{K,L}$ has rank at most $r$. Let
\begin{equation}
    D_L=d^{|L|},
\end{equation}
and let $A$ be a tomography algorithm for rank-$r$ states on $L$ which, when run
with accuracy parameter $\alpha\in(0,1]$ and confidence parameter
$\beta\in(0,1)$, uses
\begin{equation}
    m_A(r,D_L,\alpha,\beta)
\end{equation}
copies.
Assume that
\begin{equation}
    \mu=\mu_K(\rho)
\end{equation}
is known. Then there is a subnormalised tomography algorithm for
$\sigma_{K,L}(\rho)$ with the following sample complexity: If $\mu\leq\epsilon$, the
algorithm uses no samples and outputs the zero operator. If $\mu>\epsilon$, it suffices
to use
\begin{equation}
    m_B
    =
    \left\lceil
        \frac{2m_A(r,D_L,\epsilon/\mu,\delta/2)}{\mu}
        +
        \frac{8}{\mu}\log\left(\frac{2}{\delta}\right)
    \right\rceil
\end{equation}
copies of $\rho$.

In particular, if $A$ is a sample-optimal rank-$r$ tomography procedure, then for
$\mu>\epsilon$ the sample complexity is
\begin{equation}
    O\left(
        \frac{\mu(rD_L+\log(1/\delta))}{\epsilon^2}
        +
        \frac{\log(1/\delta)}{\mu}
    \right).
\end{equation}
\end{lemma}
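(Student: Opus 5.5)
The plan is to reduce subnormalised tomography to ordinary rank-$r$ tomography of the normalised state $\rho_{K,L}$, using repeated attempts of the postselection procedure. The output will then be rescaled by the known $\mu$.

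\emph{Trivial case $\mu\leq\epsilon$.} We output $\hat\sigma_L=0$. Since $\sigma_{K,L}(\rho)$ is positive semidefinite, $\|\sigma_{K,L}(\rho)-0\|_1=\operatorname{tr}\sigma_{K,L}(\rho)=\mu\leq\epsilon$. No samples are needed.

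\emph{Case $\mu>\epsilon$.} Take $m_B$ copies of $\rho$. On each copy, apply the known two-outcome operation and record whether it succeeds. On success, the post-measurement state is $K\rho K^\dagger/\mu$, whose reduction to $L$ is exactly $\rho_{K,L}$. By hypothesis this state has rank at most $r$. Because the copies are i.i.d., the successful outputs are i.i.d. copies of $\rho_{K,L}$, conditionally on the number $N$ of successes.

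Set $m_A:=m_A(r,D_L,\epsilon/\mu,\delta/2)$. Note that $\epsilon/\mu\in(0,1)$ is a legitimate accuracy parameter. If $N\geq m_A$, run $A$ on $m_A$ of the successful copies, obtaining $\hat\rho$ with $\|\rho_{K,L}-\hat\rho\|_1\leq\epsilon/\mu$ except with probability $\delta/2$. Output $\hat\sigma_L:=\mu\hat\rho$. This is subnormalised because $\hat\rho$ may be taken to be a state. Moreover,
\[
\|\sigma_{K,L}(\rho)-\hat\sigma_L\|_1=\mu\|\rho_{K,L}-\hat\rho\|_1\leq\epsilon.
\]
If $N<m_A$, output anything; this counts as failure.

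\emph{Main step: bounding the failure probability.} We need $\Pr[N<m_A]\leq\delta/2$, where $N\sim\mathrm{Bin}(m_B,\mu)$ with mean $m_B\mu\geq 2m_A+8\log(2/\delta)$. A multiplicative Chernoff bound gives $\Pr[N<(1-\eta)m_B\mu]\leq\exp(-\eta^2m_B\mu/2)$. Take $\eta=1/2$. Then $(1-\eta)m_B\mu\geq m_A$, and the failure probability is at most
\[
\exp(-m_B\mu/8)\leq\exp(-\log(2/\delta))=\delta/2.
\]
A union bound with the failure of $A$ then gives total failure probability at most $\delta$.

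The only subtlety is justifying that $A$ applies when run on a random number of copies. We handle this by conditioning on $N$ and fixing in advance that $A$ uses exactly $m_A$ copies. Given $N$, the successful copies are still i.i.d. $\rho_{K,L}$, since success events are independent across copies.

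\emph{Sample-optimal case.} For the final claim, substitute $m_A=O\big((rD_L+\log(1/\delta))/(\epsilon/\mu)^2\big)$. Then
\[
\frac{2m_A}{\mu}=O\!\left(\frac{\mu\,(rD_L+\log(1/\delta))}{\epsilon^2}\right),
\]
and adding the term $8\log(2/\delta)/\mu$ gives the stated bound.
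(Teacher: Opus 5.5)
Your proposal is correct and follows the paper's proof essentially step for step. Both arguments output zero when $\mu\le\epsilon$. Otherwise both postselect $m_B$ copies and apply the multiplicative Chernoff bound with deviation $1/2$, which guarantees at least $m_A(r,D_L,\epsilon/\mu,\delta/2)$ successes except with probability $\delta/2$. Both then run $A$ at accuracy $\epsilon/\mu$, rescale the estimate by $\mu$, and finish with a union bound. Your additional remarks fill in points the paper leaves implicit: that $\epsilon/\mu\in(0,1)$ is an admissible accuracy parameter, that the successful copies are i.i.d.\ copies of $\rho_{K,L}$ conditional on the number of successes, and that $\mu\hat\rho$ is subnormalised.
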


\begin{proof}
If $\mu\leq\epsilon$, then
\begin{equation}
    \left\|\sigma_{K,L}(\rho)\right\|_1
    =
    \operatorname{tr}\left[\sigma_{K,L}(\rho)\right]
    =
    \mu
    \leq
    \epsilon,
\end{equation}
and we already have an $\epsilon$-accurate estimate.

It remains to consider $\mu>\epsilon$. Apply the postselection operation independently
to $m$ copies of $\rho$, and let $M$ be the number of successful outcomes. Then
\begin{equation}
    M\sim\operatorname{Binomial}(m,\mu).
\end{equation}
Let
\begin{equation}
    N
    =
    m_A(r,D_L,\epsilon/\mu,\delta/2).
\end{equation}
The choice of $m=m_B$ ensures that
\begin{equation}
    m\mu
    \geq
    2N+8\log\left(\frac{2}{\delta}\right).
\end{equation}
In particular, $N\leq m\mu/2$. By the multiplicative Chernoff bound,
\begin{equation}
    \Pr[M<N]
    \leq
    \Pr[M<m\mu/2]
    \leq
    \exp(-m\mu/8)
    \leq
    \delta/2.
\end{equation}
Thus, with probability at least $1-\delta/2$, we obtain at least $N$ successful
postselected copies.

Conditioned on success, the reduced state on $L$ is $\rho_{K,L}$. Therefore, on the
event $M\geq N$, we run $A$ on $N$ successful postselected copies and obtain an estimate
$\hat\rho_{K,L}$ satisfying
\begin{equation}
    \left\|\rho_{K,L}-\hat\rho_{K,L}\right\|_1
    \leq
    \epsilon/\mu
\end{equation}
with probability at least $1-\delta/2$. The algorithm outputs
\begin{equation}
    \hat\sigma_L
    =
    \mu\hat\rho_{K,L}.
\end{equation}
A union bound over the postselection and tomography events gives an overall success probability of
at least $1-\delta$, and on this event
\begin{align}
    \left\|\sigma_{K,L}(\rho)-\hat\sigma_L\right\|_1
    &=
    \left\|\mu\rho_{K,L}-\mu\hat\rho_{K,L}\right\|_1 \\
    &=
    \mu\left\|\rho_{K,L}-\hat\rho_{K,L}\right\|_1 \\
    &\leq
    \epsilon.
\end{align}
This proves the claim.
\end{proof}

In the algorithms below, the exact success probability is usually not known. Instead,
the analysis provides upper and lower bounds on it. We thus provide a suitable version of subnormalised tomography for when only bounds on the success probability of the postselection are known.

\begin{corollary}[Subnormalised tomography from bounds on success probability]
\label{cor:subnormalised-tomography-bounded-mu}
Let $\mathcal S$, $K$, $L$, $r$, $D_L$, and $A$ be as in
\Cref{lem:subnormalised-tomography-known-mu}. Suppose that the success probability
\begin{equation}
    \mu=\mu_K(\rho)
\end{equation}
is unknown, but that known bounds
\begin{equation}
    0<\mu_\ell\leq \mu\leq\mu_u\leq 1
\end{equation}
are available. If $\mu_u\leq\epsilon$, then the zero operator is an $\epsilon$-accurate
estimate. Otherwise, there is a subnormalised tomography algorithm with sample
complexity
\begin{equation}
    m_B
    =
    \left\lceil
        \frac{2m_A(r,D_L,\epsilon/(2\mu_u),\delta/3)}{\mu_\ell}
        +
        \frac{8}{\mu_\ell}\log\left(\frac{3}{\delta}\right)
        +
        \frac{2}{\epsilon^2}\log\left(\frac{6}{\delta}\right)
    \right\rceil.
\end{equation}
If $A$ is a sample-optimal rank-$r$ tomography procedure, this becomes
\begin{equation}
    O\left(
        \frac{\mu_u^2(rD_L+\log(1/\delta))}{\mu_\ell\epsilon^2}
        +
        \frac{\log(1/\delta)}{\mu_\ell}
        +
        \frac{\log(1/\delta)}{\epsilon^2}
    \right).
\end{equation}
\end{corollary}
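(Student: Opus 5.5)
The plan is to reduce to \Cref{lem:subnormalised-tomography-known-mu} by first estimating $\mu$ empirically, and then running the known-$\mu$ procedure with the estimate in place of $\mu$. The trivial case is immediate: if $\mu_u\leq\epsilon$, then $\|\sigma_{K,L}(\rho)\|_1=\mu\leq\mu_u\leq\epsilon$, so the zero operator is $\epsilon$-accurate. Otherwise the algorithm has three parts.

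\textbf{Step 1: estimate the success probability.} Apply the postselection to $m_1=\lceil 2\epsilon^{-2}\log(6/\delta)\rceil$ fresh copies and let $\hat\mu$ be the empirical success frequency, clipped to $[\mu_\ell,\mu_u]$. Hoeffding's inequality gives $|\hat\mu-\mu|\leq\epsilon/2$ with probability at least $1-\delta/3$. Clipping cannot increase this error, since $\mu$ itself lies in $[\mu_\ell,\mu_u]$. This step accounts for the third term $\frac{2}{\epsilon^2}\log(6/\delta)$.

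\textbf{Step 2: collect successful copies and run tomography.} Apply the postselection to $m_2$ further copies and set $N=m_A(r,D_L,\epsilon/(2\mu_u),\delta/3)$. Choose $m_2\geq 2N/\mu_\ell+8\mu_\ell^{-1}\log(3/\delta)$. Since $\mu\geq\mu_\ell$, we get $m_2\mu\geq 2N+8\log(3/\delta)$, so the multiplicative Chernoff argument from \Cref{lem:subnormalised-tomography-known-mu} yields at least $N$ successes with probability $1-\delta/3$. Conditioned on success, these copies are distributed as $\rho_{K,L}$, which has rank at most $r$. Running $A$ on them gives $\|\rho_{K,L}-\hat\rho\|_1\leq\epsilon/(2\mu_u)$ with probability $1-\delta/3$.

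\textbf{Step 3: combine the estimates.} Output $\hat\sigma_L=\hat\mu\hat\rho$. The triangle inequality gives
\begin{equation}
\|\mu\rho_{K,L}-\hat\mu\hat\rho\|_1\leq\mu\|\rho_{K,L}-\hat\rho\|_1+|\mu-\hat\mu|\,\|\hat\rho\|_1.
\end{equation}
The first term is at most $\mu_u\cdot\epsilon/(2\mu_u)=\epsilon/2$. For the second, we may take $\hat\rho$ to be a density operator, projecting onto states if necessary at a factor-two cost absorbed into constants; then the term is at most $\epsilon/2$. A union bound over the three events gives total success probability at least $1-\delta$, and the total copy count $m_1+m_2$ matches $m_B$.

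\textbf{Asymptotic form.} For a sample-optimal $A$, substitute $m_A=O\big((rD_L+\log(1/\delta))\,\mu_u^2/\epsilon^2\big)$, which follows from accuracy $\epsilon/(2\mu_u)$. Dividing by $\mu_\ell$ gives the stated $O$-bound.

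\textbf{Main obstacle.} The delicate step is the error bookkeeping in Step 3. The tomography accuracy must be scaled by $\mu_u$ rather than by the unknown $\mu$, and $\hat\rho$ must have unit trace so that the $|\mu-\hat\mu|$ term is controlled by trace one. This requires either assuming that $A$ outputs a valid state or normalising its output. All other steps are routine reuses of the earlier lemma.
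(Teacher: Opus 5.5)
Your proposal is correct and follows essentially the paper's proof: a Hoeffding estimate $\hat\mu$, a multiplicative Chernoff bound to secure $N=m_A(r,D_L,\epsilon/(2\mu_u),\delta/3)$ successes, the output $\hat\mu\hat\rho$, and the triangle-inequality split into $\mu_u\cdot\epsilon/(2\mu_u)+\epsilon/2$. The differences are cosmetic. The paper reuses the same $m_B$ copies for both the frequency estimate and the tomography, since the union bound needs no independence, and this avoids the possible extra copy from your two separate ceilings. Your normalisation hedge and clipping are also unnecessary: the definition of state tomography already requires $A$ to output a state, so $\|\hat\rho\|_1=1$ and the exact $m_B$ holds with no factor-two adjustment.
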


\begin{proof}
If $\mu_u\leq\epsilon$, then
\begin{equation}
    \left\|\sigma_{K,L}(\rho)\right\|_1
    =
    \mu
    \leq
    \mu_u
    \leq
    \epsilon,
\end{equation}
so the zero operator suffices as an $\epsilon$-accurate estimate.

Assume now that $\mu_u>\epsilon$. Apply the postselection operation independently to
$m$ copies of $\rho$, let $M$ be the number of successful outcomes, and set
\begin{equation}
    \hat\mu=\frac{M}{m}.
\end{equation}
Let
\begin{equation}
    N
    =
    m_A(r,D_L,\epsilon/(2\mu_u),\delta/3).
\end{equation}
The first two terms in the definition of $m_B$ imply
\begin{equation}
    m\mu
    \geq
    m\mu_\ell
    \geq
    2N+8\log\left(\frac{3}{\delta}\right).
\end{equation}
As in the proof of \Cref{lem:subnormalised-tomography-known-mu}, a Chernoff bound gives
\begin{equation}
    \Pr[M<N]\leq \delta/3.
\end{equation}
The final term in the definition of $m_B$ gives, by Hoeffding's inequality,
\begin{equation}
    \Pr\left[
        |\hat\mu-\mu|>\epsilon/2
    \right]
    \leq
    2\exp(-m\epsilon^2/2)
    \leq
    \delta/3.
\end{equation}
On the event $M\geq N$, we run $A$ on $N$ successful postselected copies. With
probability at least $1-\delta/3$, it outputs $\hat\rho_{K,L}$ satisfying
\begin{equation}
    \left\|\rho_{K,L}-\hat\rho_{K,L}\right\|_1
    \leq
    \frac{\epsilon}{2\mu_u}.
\end{equation}
The algorithm outputs
\begin{equation}
    \hat\sigma_L
    =
    \hat\mu\hat\rho_{K,L}.
\end{equation}
By a union bound, all three good events occur with probability at least $1-\delta$. On
this event,
\begin{align}
    \left\|\sigma_{K,L}(\rho)-\hat\sigma_L\right\|_1
    &=
    \left\|\mu\rho_{K,L}-\hat\mu\hat\rho_{K,L}\right\|_1 \\
    &\leq
    \mu\left\|\rho_{K,L}-\hat\rho_{K,L}\right\|_1
    +
    |\mu-\hat\mu|\left\|\hat\rho_{K,L}\right\|_1 \\
    &\leq
    \mu_u\frac{\epsilon}{2\mu_u}
    +
    \epsilon/2 \\
    &=
    \epsilon.
\end{align}
This proves the claim.
\end{proof}

In the remainder of the paper, we write
$\operatorname{Sub\text{-}Tomography}(K,L,r,\epsilon,\delta,[\mu_\ell,\mu_u])$
for the procedure of \Cref{cor:subnormalised-tomography-bounded-mu} run with
postselection operator $K$, subsystem $L$, rank parameter $r$, accuracy
$\epsilon$, confidence $\delta$, and success probability bounds
$\mu_\ell\leq\mu_K(\rho)\leq\mu_u$.  This is the calling convention used by all
algorithms below.  The singleton interval $[\mu,\mu]$ recovers the setting of
\Cref{lem:subnormalised-tomography-known-mu}, and the rank parameter may be set
to the full dimension $D_L$ when no rank promise is available, as in
\Cref{alg:agnostic-learn-tns-from-sequence}.

\paragraph{Renormalisation.}
The procedure returns a subnormalised estimate. At several points we require the
corresponding normalised state. The following estimate bounds the error introduced
by normalisation.

Let $\sigma$ be a subnormalised state with
\begin{equation}
    \operatorname{tr}(\sigma)=\mu\geq\mu_\ell>0,
\end{equation}
and suppose that $\hat\sigma$ is a positive semidefinite estimate satisfying
\begin{equation}
    \|\sigma-\hat\sigma\|_1\leq\eta<\mu_\ell.
\end{equation}
Let
\begin{equation}
    \hat\mu=\operatorname{tr}(\hat\sigma).
\end{equation}
Then $|\hat\mu-\mu|\leq\eta$, $\hat\mu>0$, and
\begin{equation}
    \left\|
        \frac{\sigma}{\mu}
        -
        \frac{\hat\sigma}{\hat\mu}
    \right\|_1
    \leq
    \frac{2\eta}{\mu_\ell}.
\end{equation}
Indeed,
\begin{align}
    \left\|
        \frac{\sigma}{\mu}
        -
        \frac{\hat\sigma}{\hat\mu}
    \right\|_1
    &\leq
    \frac{\|\sigma-\hat\sigma\|_1}{\mu}
    +
    \left|\frac{1}{\mu}-\frac{1}{\hat\mu}\right|
    \operatorname{tr}(\hat\sigma) \\
    &=
    \frac{\|\sigma-\hat\sigma\|_1}{\mu}
    +
    \frac{|\mu-\hat\mu|}{\mu} \\
    &\leq
    \frac{2\eta}{\mu_\ell}.
\end{align}
Normalising a subnormalised estimate introduces an error factor proportional to
the inverse success probability. The algorithms below work with subnormalised
branches and normalise only in the final step.

\subsection{MPS tomography via iterated disentangling}
\label{subsec:mps-tomography}

We now recall the iterated-disentangling approach to MPS tomography, originally due
to \cite{Cramer_2010}. We present the original, sequential version because it is the conceptual
starting point for the TTN and general TNS algorithms we develop later. The more recent
parallel learner of \cite{lin2025efficientclosestmatrixproduct} achieves logarithmic
circuit depth and improves the system-size dependence of the original analysis of
\cite{Cramer_2010}. We compare the two guarantees at the end of this subsection.

Let $S\subseteq[n]$, and let $A\subseteq S$. For a positive semidefinite operator $\rho$
on $n$ qudits, we say that a unitary $U$ supported on $S$ disentangles the subsystem
$A$ for $\rho$ if
\begin{equation}
    U\rho U^\dagger
    =
    |0^{|A|}\rangle\langle 0^{|A|}|_A\otimes \sigma_{[n]\setminus A}
\end{equation}
for some positive semidefinite operator $\sigma_{[n]\setminus A}$. We allow $\rho$ and
$\sigma_{[n]\setminus A}$ to be subnormalised. When $|A|=1$, we also say that $U$
disentangles the corresponding qudit.

For an MPS on the path $1-2-\cdots-n$, the basic observation is that, after the first
$i-1$ qudits have been disentangled and projected onto $|0\rangle$, the next active block
\begin{equation}
    S_i=\{i,i+1,\ldots,i+\kappa\},
    \qquad
    \kappa=\max\{1,\lceil\log_d \chi\rceil\},
\end{equation}
has reduced-state rank at most $\chi$. Since $d^\kappa\geq \chi$, a unitary supported
on $S_i$ can map the relevant $\chi$-dimensional support into the subspace in which
qudit $i$ is fixed to $|0\rangle$. Repeating this from left to right reduces the state to a
state on the final $\kappa$ qudits, which can then be learned directly.

The following algorithm is written making use of the subnormalised tomography primitive from
\Cref{subsec:subnormalised-tomography}. At each stage, the cumulative postselection
operation is described by a known linear map $K_i$. Operationally, a call to
$\operatorname{Sub\text{-}Tomography}$ with input $K_i$ means that we apply the
known postselection procedure described by $K_i$ to fresh copies of the original state
and keep the successful branch (as in \Cref{lem:subnormalised-tomography-known-mu,cor:subnormalised-tomography-bounded-mu}).
\begin{algorithm}
\caption{LearnMPS}
\label{alg:learn-mps}
\textbf{Input:} Path $P_n$, bond dimension $\chi$, physical dimension $d$, copies of
$|\psi\rangle\in \mathcal S_d(P_n,\chi)$, accuracy parameter $\epsilon$, confidence
parameter $\delta$.

\textbf{Output:} A classical description of a pure state $|\hat\psi\rangle$.
\begin{algorithmic}[1]
\State $\kappa\leftarrow \max\{1,\lceil\log_d\chi\rceil\}$.
\State $m\leftarrow \max\{n-\kappa,0\}$.
\State $\eta\leftarrow \epsilon^2/(128n)$.
\State $K_0\leftarrow I$ and $\widetilde\mu_0\leftarrow 1$.
\For{$i=1$ to $m$}
    \State $S_i\leftarrow \{i,i+1,\ldots,i+\kappa\}$. \Comment{Active block.}
    \State
    $\hat\sigma_i
    \leftarrow
    \operatorname{Sub\text{-}Tomography}
    \left(
        K_{i-1},
        S_i,
        \chi,
        \eta,
        \delta/(2n),
        [\widetilde\mu_{i-1},1]
    \right).$
    \State Let $W_i$ be the span of the eigenvectors associated with the $\chi$
largest eigenvalues of $\hat\sigma_i$.
    \State Compute a unitary $\widetilde U_i$ supported on $S_i$ such that
    \begin{equation}
        \widetilde U_i W_i
        \subseteq
        |0\rangle_i\otimes
        \left(\mathbb C^d\right)^{\otimes \kappa}.
    \end{equation}
    \State Let $U_i = \widetilde U_i\otimes I_{[n]\setminus S_i}$.
    \State Let
    \begin{equation}
        P_i
        =
        |0\rangle\langle 0|_i\otimes I_{[n]\setminus\{i\}}.
    \end{equation}
    \State $K_i\leftarrow P_iU_iK_{i-1}$.
    \State $\widetilde\mu_i\leftarrow \widetilde\mu_{i-1}-2\eta$.
\EndFor
\State $R\leftarrow \{m+1,m+2,\ldots,n\}$.
\State $\hat\tau
\leftarrow
\operatorname{Sub\text{-}Tomography}
\left(K_m,R,1,\eta,\delta/(2n),[\widetilde\mu_m,1] \right).$
\State Let $|\hat\varphi\rangle$ be a top eigenvector of $\hat\tau$.
\State \Return
\begin{equation}
    |\hat\psi\rangle
    =
    U_1^\dagger U_2^\dagger\cdots U_m^\dagger
    \left(
        |0^m\rangle\otimes |\hat\varphi\rangle
    \right).
\end{equation}
\end{algorithmic}
\end{algorithm}

The required unitary exists because
$\dim(W_i)\leq\chi\leq d^\kappa$, while the target subspace has dimension
$d^\kappa$. Choose an orthonormal basis of $W_i$, map it isometrically into the
target subspace, and extend the isometry to a unitary on $S_i$.

We next establish the results used to analyse \Cref{alg:learn-mps}. The first
gives an exact disentangler from a rank bound.

\begin{claim}[Exact disentangling from a rank bound]
\label{claim:exact-rank-disentangling}
Let $\rho=|\psi\rangle\langle\psi|$ be a pure $n$-qudit state, let $S\subseteq[n]$,
and suppose that
\begin{equation}
    \operatorname{rank}(\rho_S)=R.
\end{equation}
Then, for every integer
\begin{equation}
    0\leq a\leq |S|-\lceil\log_d R\rceil,
\end{equation}
there exists a unitary supported on $S$ that disentangles some chosen $a$ qudits in
$S$ for $\rho$. In particular, one can disentangle
\begin{equation}
    \max\{|S|-\lceil\log_d R\rceil,0\}
\end{equation}
qudits in $S$.
\end{claim}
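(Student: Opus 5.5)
The plan is to use the Schmidt decomposition across the bipartition $S\mid[n]\setminus S$ together with a dimension count. Write
\begin{equation}
    |\psi\rangle=\sum_{k=1}^{R}\sqrt{\lambda_k}\,|a_k\rangle_S\otimes|b_k\rangle_{[n]\setminus S},
\end{equation}
where $\{|a_k\rangle\}$ and $\{|b_k\rangle\}$ are orthonormal. The rank of $\rho_S$ equals the Schmidt rank, so exactly $R$ terms appear. The support $W=\operatorname{span}\{|a_k\rangle\}$ of $\rho_S$ is therefore an $R$-dimensional subspace of $(\mathbb C^d)^{\otimes|S|}$.

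Fix any $a$ with $0\le a\le |S|-\lceil\log_d R\rceil$, and choose $A\subseteq S$ with $|A|=a$. Consider the target subspace
\begin{equation}
    Z=|0^{a}\rangle_A\otimes(\mathbb C^d)^{\otimes(|S|-a)}.
\end{equation}
It has dimension $d^{|S|-a}\ge d^{\lceil\log_d R\rceil}\ge R=\dim W$. We can therefore pick an isometry $V$ mapping the orthonormal basis $\{|a_k\rangle\}$ into orthonormal vectors $|c_k\rangle\in Z$. We then extend $V$ to a unitary $\widetilde U$ on $S$. This is possible because any partial isometry between finite-dimensional spaces extends by completing the orthonormal bases of $W^\perp$ and $Z'^\perp$, where $Z'=VW$.

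Set $U=\widetilde U\otimes I_{[n]\setminus S}$. Then
\begin{equation}
    U|\psi\rangle=\sum_k\sqrt{\lambda_k}\,|c_k\rangle\otimes|b_k\rangle.
\end{equation}
Each $|c_k\rangle=|0^a\rangle_A\otimes|c'_k\rangle_{S\setminus A}$, so
\begin{equation}
    U|\psi\rangle=|0^a\rangle_A\otimes|\phi\rangle_{[n]\setminus A},
    \qquad
    |\phi\rangle=\sum_k\sqrt{\lambda_k}\,|c'_k\rangle|b_k\rangle.
\end{equation}
Hence $U\rho U^\dagger=|0^a\rangle\langle0^a|_A\otimes|\phi\rangle\langle\phi|$, which is the definition of disentangling $A$. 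The final sentence of the claim follows by taking $a=\max\{|S|-\lceil\log_d R\rceil,0\}$; the case $a=0$ is trivial with $U=I$.

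No step is a serious obstacle. The only care needed is the dimension inequality $d^{\lceil\log_d R\rceil}\ge R$ (valid since $R\ge1$ for a normalised pure state) and the fact that $Z$ splits as a tensor product with $|0^a\rangle$ on the chosen qudits. That splitting is what turns "support contained in $Z$" into an exact product form.
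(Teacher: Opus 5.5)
Your proof is correct and follows essentially the same route as the paper. Both use $d^{|S|-a}\ge R$ to map the support of $\rho_S$ (spanned by the Schmidt vectors on $S$) isometrically into $|0^a\rangle_A\otimes(\mathbb C^d)^{\otimes(|S|-a)}$, extend this isometry to a unitary on $S$, and read off the product form $|0^a\rangle_A\otimes|\phi\rangle$ from the Schmidt decomposition; your explicit justifications of the unitary extension and of $d^{\lceil\log_d R\rceil}\ge R$ fill in details the paper leaves implicit.
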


\begin{proof}
Let $a\leq |S|-\lceil\log_d R\rceil$. After relabelling the qudits in $S$, we may assume
that the $a$ qudits to be disentangled are the first $a$ qudits of $S$. Since
\begin{equation}
    d^{|S|-a}\geq R,
\end{equation}
there is an isometry from the support of $\rho_S$ into the subspace
\begin{equation}
    |0^a\rangle\otimes
    \left(\mathbb C^d\right)^{\otimes(|S|-a)}.
\end{equation}
Extend this isometry to a unitary $U$ on the Hilbert space of the qudits in $S$. Taking a Schmidt
decomposition of $|\psi\rangle$ across the bipartition $S\mid[n]\setminus S$, the
Schmidt vectors on $S$ span the support of $\rho_S$. Hence, applying $U$ maps
all Schmidt vectors on $S$ into the subspace in which the first $a$ qudits are fixed to
$|0^a\rangle$. Therefore,
\begin{equation}
    (U\otimes I_{[n]\setminus S})|\psi\rangle
    =
    |0^a\rangle\otimes|\nu\rangle
\end{equation}
for some state $|\nu\rangle$ on the remaining $n-a$ qudits. This proves the claim.
\end{proof}

Combining this claim with the rank bound from \Cref{claim:cut-rank-bound} gives the
following disentangling corollary for tensor network states.

\begin{corollary}[Disentangling from a TN cut]
\label{cor:tn-cut-disentangling}
Let $G=([n],E)$, let $w:E\to\mathbb N$, let $|\psi\rangle\in\mathcal S_d(G,w)$, and let $S\subseteq[n]$. Then one can construct a unitary supported on $S$ that disentangles at least
\begin{equation}
    \max
    \left\{
        |S|-
        \left\lceil
            \sum_{e\in\cut_G(S)}\log_d w(e)
        \right\rceil,
        0
    \right\}
\end{equation}
qudits in $S$ for $|\psi\rangle$. In particular, for every state in $\mathcal S_d(G,\chi)$, one can disentangle at least
\begin{equation}
    \max
    \left\{
        |S|-\left\lceil |\cut_G(S)|\log_d\chi\right\rceil,
        0
    \right\}
\end{equation}
qudits in $S$.
\end{corollary}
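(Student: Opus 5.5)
The plan is to combine \Cref{claim:cut-rank-bound} with \Cref{claim:exact-rank-disentangling} directly. Set $\rho=|\psi\rangle\langle\psi|$. By \Cref{claim:cut-rank-bound}, the reduced state $\rho_S=\operatorname{tr}_{[n]\setminus S}[\rho]$ has rank
\begin{equation}
    R\leq \prod_{e\in\cut_G(S)} w(e).
\end{equation}
Taking logarithms gives $\log_d R\leq \sum_{e\in\cut_G(S)}\log_d w(e)$. Since the ceiling is monotone, we obtain
\begin{equation}
    \lceil\log_d R\rceil\leq\left\lceil\sum_{e\in\cut_G(S)}\log_d w(e)\right\rceil.
\end{equation}
If $\cut_G(S)=\emptyset$, then $R=1$ and both sides vanish.

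Now let $a$ denote the displayed quantity $\max\{|S|-\lceil\sum_e\log_d w(e)\rceil,0\}$. We have $0\leq a\leq |S|-\lceil\log_d R\rceil$ whenever the first branch of the maximum is attained. In the other case $a=0$, and the identity on $S$ trivially disentangles zero qudits. Applying \Cref{claim:exact-rank-disentangling} with this $a$ therefore yields a unitary supported on $S$ that disentangles $a$ chosen qudits of $S$. The word ``construct'' is justified by the proof of that claim: given the tensors, compute the support of $\rho_S$ and map an orthonormal basis of it isometrically into $|0^a\rangle\otimes(\mathbb C^d)^{\otimes(|S|-a)}$, then extend to a unitary. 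This is possible because $d^{|S|-a}\geq R$.

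For the uniform statement, take $|\psi\rangle\in\mathcal S_d(G,\chi)$. Choose a witnessing $w:E\to[\chi]$. Then $\sum_{e\in\cut_G(S)}\log_d w(e)\leq |\cut_G(S)|\log_d\chi$, so the first bound dominates the second, and the first part applies.

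No step here is a real obstacle. The only care needed is in the edge cases: the clipping at $0$, the empty cut, and the fact that $R$ may be strictly smaller than the product bound. All of these are handled by the monotonicity of the ceiling together with the inequality direction, since disentangling ``at least'' $a$ qudits follows from disentangling exactly $a$.
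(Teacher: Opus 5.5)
Your proposal is correct and is essentially the paper's proof: bound $\operatorname{rank}(\rho_S)$ via \Cref{claim:cut-rank-bound}, use monotonicity of the ceiling, and invoke \Cref{claim:exact-rank-disentangling}. Your handling of the clipping at zero, the empty cut, and the uniform-$\chi$ case via a witnessing weight function $w:E\to[\chi]$ spells out details that the paper leaves implicit.
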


\begin{proof}
By \Cref{claim:cut-rank-bound},
\begin{equation}
    \operatorname{rank}(\rho_S)
    \leq
    \prod_{e\in\cut_G(S)}w(e).
\end{equation}
Therefore,
\begin{equation}
    \left\lceil\log_d\operatorname{rank}(\rho_S)\right\rceil
    \leq
    \left\lceil
        \sum_{e\in\operatorname{cut}(S)}\log_d w(e)
    \right\rceil.
\end{equation}
The result follows from \Cref{claim:exact-rank-disentangling}.
\end{proof}

The next lemma is the perturbation statement that we use for approximate
disentanglers. It is a subnormalised version of the top-eigenspace argument used in
\cite[Lemma B.1]{lin2025efficientclosestmatrixproduct}.

\begin{lemma}[Approximate top-eigenspace overlap]
\label{lem:top-eigenspace-stability}
Let $\tau$ and $\hat\tau$ be positive semidefinite operators on a finite-dimensional
Hilbert space. Suppose that
\begin{equation}
    \operatorname{rank}(\tau)\leq r
    \qquad\text{and}\qquad
    \|\tau-\hat\tau\|_1\leq \eta.
\end{equation}
Let $W$ be the span of the eigenvectors associated with the $r$ largest
eigenvalues of $\hat\tau$, and let
$\Pi_W$ be the orthogonal projector onto $W$. Then
\begin{equation}
    \operatorname{tr}\left[(I-\Pi_W)\tau\right]\leq 2\eta.
\end{equation}
\end{lemma}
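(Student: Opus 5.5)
The plan is to split the leaked weight into the part coming from $\hat\tau$ itself and a perturbation term that costs at most $\eta$. Write
\begin{equation}
    \operatorname{tr}\left[(I-\Pi_W)\tau\right]
    =
    \operatorname{tr}\left[(I-\Pi_W)\hat\tau\right]
    +
    \operatorname{tr}\left[(I-\Pi_W)(\tau-\hat\tau)\right].
\end{equation}
The second term is bounded by Hölder's inequality: since $\|I-\Pi_W\|_\infty\leq 1$, it is at most $\|\tau-\hat\tau\|_1\leq\eta$. It therefore remains to show that the first term is also at most $\eta$.

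For the first term, I use that $W$ is spanned by eigenvectors of $\hat\tau$. Then $\Pi_W$ commutes with $\hat\tau$, and
\begin{equation}
    \operatorname{tr}\left[(I-\Pi_W)\hat\tau\right]=\sum_{j>r}\lambda_j(\hat\tau),
\end{equation}
the sum of the eigenvalues of $\hat\tau$ beyond the $r$ largest. If $\hat\tau$ has rank below $r$, pad with zero eigenvectors; the sum is then zero. Ky Fan's maximum principle expresses this tail as
\begin{equation}
    \sum_{j>r}\lambda_j(\hat\tau)
    =
    \min_{P}\operatorname{tr}\left[(I-P)\hat\tau\right],
\end{equation}
where the minimum runs over rank-$r$ orthogonal projectors $P$. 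I take $P=\Pi_\tau$ to be a rank-$r$ projector containing the support of $\tau$, which exists because $\operatorname{rank}(\tau)\leq r$. This gives
\begin{equation}
    \sum_{j>r}\lambda_j(\hat\tau)
    \leq
    \operatorname{tr}\left[(I-\Pi_\tau)\hat\tau\right]
    =
    \operatorname{tr}\left[(I-\Pi_\tau)(\hat\tau-\tau)\right]
    \leq
    \eta,
\end{equation}
where the equality uses $(I-\Pi_\tau)\tau=0$ and the last step uses Hölder's inequality again. Adding the two bounds gives $2\eta$.

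The only mildly delicate step is the variational characterisation of the tail of the spectrum, in the form
\begin{equation}
    \operatorname{tr}[P\hat\tau]\leq\sum_{j\leq r}\lambda_j(\hat\tau)
\end{equation}
for every rank-$r$ projector $P$. This is standard Ky Fan, and I would cite it directly; it holds equally for subnormalised positive semidefinite operators, since the argument uses neither normalisation nor positivity of $\tau-\hat\tau$. Degenerate eigenvalues at the $r$-th position are also harmless, because any choice of $W$ attains the same tail sum.
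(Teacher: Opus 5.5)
Your proof is correct and uses the same decomposition as the paper:
\[
\operatorname{tr}[(I-\Pi_W)\tau]=\operatorname{tr}[(I-\Pi_W)(\tau-\hat\tau)]+\operatorname{tr}[(I-\Pi_W)\hat\tau],
\]
with H\"older's inequality for the perturbation term. The two proofs differ only in how the tail $\operatorname{tr}[(I-\Pi_W)\hat\tau]$ is bounded.

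The paper identifies this tail with $\|\hat\tau-\Pi_W\hat\tau\Pi_W\|_1$. It then invokes Eckart--Young--Mirsky in trace norm to compare the truncation $\Pi_W\hat\tau\Pi_W$ with the rank-$r$ operator $\tau$. You instead use Ky Fan's maximum principle: $\Pi_W$ minimises $\operatorname{tr}[(I-P)\hat\tau]$ over rank-$r$ projectors. You then plug in a projector $\Pi_\tau$ covering the support of $\tau$, so that $(I-\Pi_\tau)\tau=0$ and H\"older's inequality finishes the bound.

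Your route is more elementary and self-contained. Ky Fan for Hermitian operators is a simpler fact than Mirsky's extension of Eckart--Young to unitarily invariant norms, which is what the trace-norm version requires. You also only ever compare against projectors, not arbitrary rank-$r$ operators. The paper's framing buys a conceptual reading of the tail as a best rank-$r$ approximation error, but nothing more is needed for the lemma.

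One small omission: you do not treat the case where $r$ exceeds the Hilbert-space dimension, where no rank-$r$ projector exists. There $\Pi_W=I$ and the claim is trivial, which is how the paper opens its proof.
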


\begin{proof}
If $r$ is at least the Hilbert-space dimension, the claim is trivial. Otherwise, let
$\hat\tau_{(r)}$ be the truncation of $\hat\tau$ to the eigenspaces associated
with its $r$ largest eigenvalues, i.e.,
\begin{equation}
    \hat\tau_{(r)}=\Pi_W\hat\tau\Pi_W.
\end{equation}
Since $\Pi_W$ is a spectral projector of $\hat\tau$, we have
$\hat\tau-\hat\tau_{(r)}=(I-\Pi_W)\hat\tau(I-\Pi_W)\succeq0$ and hence
$\operatorname{tr}[(I-\Pi_W)\hat\tau]=\|\hat\tau-\hat\tau_{(r)}\|_1$.
By the Eckart--Young--Mirsky theorem, $\hat\tau_{(r)}$ is a best rank-$r$ approximation
to $\hat\tau$ in trace norm. Since $\tau$ has rank at most $r$,
\begin{equation}
    \|\hat\tau-\hat\tau_{(r)}\|_1
    \leq
    \|\hat\tau-\tau\|_1
    \leq
    \eta.
\end{equation}
Now,
\begin{align}
    \operatorname{tr}\left[(I-\Pi_W)\tau\right]
    &=
    \operatorname{tr}\left[(I-\Pi_W)(\tau-\hat\tau)\right]
    +
    \operatorname{tr}\left[(I-\Pi_W)\hat\tau\right] \\
    &\leq
    \|\tau-\hat\tau\|_1
    +
    \|\hat\tau-\hat\tau_{(r)}\|_1 \\
    &\leq
    2\eta.
\end{align}
This proves the lemma.
\end{proof}

The next corollary gives the form of
\Cref{lem:top-eigenspace-stability} used in the iterative algorithms. It shows
that a unitary which maps the top $r$-dimensional eigenspace of the estimate
into the subspace where the state on subsystem $A$ is fixed to $|0^{|A|}\rangle$ can be followed by
postselection with a loss of at most $2\eta$ in squared branch norm.

\begin{corollary}[Approximate disentangling of a subnormalised branch]
\label{cor:approximate-disentangling-subnormalised}
Let
$|\Psi\rangle\in\mathcal H_S\otimes\mathcal H_B$
be a subnormalised pure state, and write
\begin{equation}
    \mu=\langle\Psi|\Psi\rangle,
    \qquad
    \tau_S
    =
    \operatorname{tr}_B
    \left[
        |\Psi\rangle\langle\Psi|
    \right].
\end{equation}
Assume that
\begin{equation}
    \mu>0,
    \qquad
    \operatorname{rank}(\tau_S)\leq r,
\end{equation}
and let $\hat\tau_S\geq0$ satisfy
\begin{equation}
    \|\tau_S-\hat\tau_S\|_1\leq\eta.
\end{equation}
Let $W$ be the span of the eigenvectors associated with the $r$ largest
eigenvalues of $\hat\tau_S$. Let $A\subseteq S$ satisfy
\begin{equation}
    d^{|S|-|A|}\geq r,
\end{equation}
and let $U$ be a unitary on $\mathcal H_S$ such that
\begin{equation}
    UW
    \subseteq
    |0^{|A|}\rangle_A
    \otimes
    \left(\mathbb C^d\right)^{\otimes(|S|-|A|)}.
\end{equation}
Define
\begin{equation}
    P_A
    =
    |0^{|A|}\rangle\langle0^{|A|}|_A
    \otimes
    I_{S\setminus A}.
\end{equation}
Then
\begin{equation}
\label{eq:disentangling-postselection-norm-bound}
    \left\|
        (P_AU\otimes I_B)|\Psi\rangle
    \right\|^2
    \geq
    \mu-2\eta.
\end{equation}

For the normalised state
\begin{equation}
    |\psi\rangle
    =
    \frac{|\Psi\rangle}{\sqrt\mu},
\end{equation}
let
\begin{equation}
    p
    =
    \left\|
        (P_AU\otimes I_B)|\psi\rangle
    \right\|^2.
\end{equation}
Then
\begin{equation}
    p
    \geq
    1-\frac{2\eta}{\mu}.
\end{equation}
If $\eta<\mu/2$, the normalised postselected state
\begin{equation}
    |\psi'\rangle
    =
    \frac{
        (P_AU\otimes I_B)|\psi\rangle
    }{
        \sqrt p
    }
\end{equation}
satisfies
\begin{equation}
    \left\|
        (U\otimes I_B)
        |\psi\rangle\langle\psi|
        (U^\dagger\otimes I_B)
        -
        |\psi'\rangle\langle\psi'|
    \right\|_1
    \leq
    2\sqrt{\frac{2\eta}{\mu}}.
\end{equation}
\end{corollary}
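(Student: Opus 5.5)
The plan is to reduce the first claim to \Cref{lem:top-eigenspace-stability}, using the fact that $P_AU$ acts as an isometry on the subspace $W$. Since $UW\subseteq |0^{|A|}\rangle_A\otimes(\mathbb C^d)^{\otimes(|S|-|A|)}$, which is exactly the range of $P_A$, we have $P_AU\Pi_W=U\Pi_W$. Hence $\langle\phi|U^\dagger P_AU|\phi\rangle\geq\langle\phi|\Pi_W|\phi\rangle$ for every vector $\phi$ in $\mathcal H_S$. More precisely, I would prove the operator inequality
\begin{equation}
    U^\dagger P_A U\succeq \Pi_W .
\end{equation}
This holds because $U^\dagger P_AU$ is a projector whose range contains $W$. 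Then
\begin{equation}
    \left\|(P_AU\otimes I_B)|\Psi\rangle\right\|^2
    =\operatorname{tr}\left[U^\dagger P_AU\,\tau_S\right]
    \geq \operatorname{tr}[\Pi_W\tau_S]
    =\mu-\operatorname{tr}[(I-\Pi_W)\tau_S]\geq \mu-2\eta,
\end{equation}
where the last step is \Cref{lem:top-eigenspace-stability} applied with $\tau=\tau_S$ and $\hat\tau=\hat\tau_S$. The rank hypothesis $\operatorname{rank}(\tau_S)\leq r$ is exactly what that lemma needs. The condition $d^{|S|-|A|}\geq r$ only guarantees that such a $U$ exists, since $\dim W\leq r$.

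The normalised statement follows by dividing by $\mu$: $p=\|(P_AU\otimes I)|\Psi\rangle\|^2/\mu\geq 1-2\eta/\mu$. For the trace-norm bound, set $|\phi\rangle=(U\otimes I_B)|\psi\rangle$, a unit vector. Then $|\psi'\rangle=(P_A\otimes I)|\phi\rangle/\sqrt p$ is the normalised projection of $|\phi\rangle$, and $\eta<\mu/2$ ensures $p>0$. The overlap is $|\langle\phi|\psi'\rangle|^2=\langle\phi|P_A|\phi\rangle^2/p=p$. For pure states, $\||\phi\rangle\langle\phi|-|\psi'\rangle\langle\psi'|\|_1=2\sqrt{1-|\langle\phi|\psi'\rangle|^2}=2\sqrt{1-p}\leq 2\sqrt{2\eta/\mu}$, which is the claimed bound.

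The only step needing care is the operator inequality $U^\dagger P_AU\succeq\Pi_W$. I would verify it by noting that $U^\dagger P_AU$ is the orthogonal projector onto $U^\dagger(\operatorname{range}P_A)$, which contains $W$. A projector onto a larger subspace dominates the projector onto a smaller one. Everything else is a direct substitution, so I do not expect a real obstacle. I would just be careful that $\mu$ is the trace of $\tau_S$, that $\mu-2\eta$ may be negative (in which case the bound is vacuous but still true), and that the fidelity formula for pure states is applied to normalised vectors.
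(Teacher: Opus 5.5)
Your proposal is correct and follows the paper's proof essentially step for step. The paper also sets $\Pi=U^\dagger P_AU$ and notes that its range contains $W$, so $I-\Pi\le I-\Pi_W$. It then applies \Cref{lem:top-eigenspace-stability}, divides by $\mu$, and concludes with the overlap identity $|\langle\psi'|(U\otimes I_B)|\psi\rangle|^2=p$ and the pure-state trace-norm formula $2\sqrt{1-p}$.
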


\begin{proof}
Let $\Pi_W$ be the orthogonal projector onto $W$, and define
\begin{equation}
    \Pi
    =
    U^\dagger P_AU.
\end{equation}
Since $UW$ lies in the range of $P_A$, the range of $\Pi$ contains $W$.
It follows that
\begin{equation}
    I-\Pi
    \leq
    I-\Pi_W.
\end{equation}
Using \Cref{lem:top-eigenspace-stability}, we obtain
\begin{align}
    \left\|
        (P_AU\otimes I_B)|\Psi\rangle
    \right\|^2
    &=
    \operatorname{tr}\left[\Pi\tau_S\right] \\
    &=
    \mu-
    \operatorname{tr}\left[(I-\Pi)\tau_S\right] \\
    &\geq
    \mu-
    \operatorname{tr}\left[(I-\Pi_W)\tau_S\right] \\
    &\geq
    \mu-2\eta.
\end{align}
Dividing by $\mu$ gives the lower bound on $p$.

If $\eta<\mu/2$, then $p>0$. Since $P_A$ is an orthogonal projector,
\begin{equation}
    \left|
        \langle\psi'|
        (U\otimes I_B)
        |\psi\rangle
    \right|^2
    =
    p.
\end{equation}
The trace norm identity for pure states now gives
\begin{align}
    \left\|
        (U\otimes I_B)
        |\psi\rangle\langle\psi|
        (U^\dagger\otimes I_B)
        -
        |\psi'\rangle\langle\psi'|
    \right\|_1
    &=
    2\sqrt{1-p} \\
    &\leq
    2\sqrt{\frac{2\eta}{\mu}}.
\end{align}
\end{proof}

Equation
\eqref{eq:disentangling-postselection-norm-bound}
is the estimate used in the iterative learners. Each disentangling step removes
at most $2\eta$ from the squared norm of the current branch.

We also need the following rank bound for the active blocks in the MPS learner.

\begin{lemma}[Rank bound for active MPS blocks]
\label{lem:mps-active-block-rank}
Let $|\psi\rangle\in\mathcal S_d(P_n,\chi)$, and let
$\kappa=\max\{1,\lceil\log_d\chi\rceil\}$. Consider the first $i-1$ iterations of
\Cref{alg:learn-mps}, and let
\begin{equation}
    K_{i-1}=P_{i-1}U_{i-1}\cdots P_1U_1
\end{equation}
be the resulting cumulative postselection operation. Let
\begin{equation}
    S_i=\{i,i+1,\ldots,i+\kappa\}.
\end{equation}
Then the subnormalised reduced state
\begin{equation}
    \sigma_i
    =
    \operatorname{tr}_{[n]\setminus S_i}
    \left[
        K_{i-1}|\psi\rangle\langle\psi|K_{i-1}^\dagger
    \right]
\end{equation}
has rank at most $\chi$.
\end{lemma}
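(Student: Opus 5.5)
The plan is to bound the rank by a Schmidt-rank argument across the bipartition $[i,n]$ versus everything else, using that $K_{i-1}$ acts only on qudits $1,\ldots,i-1+\kappa$. The first step is to locate the support of $K_{i-1}$. Each $U_j$ for $j<i$ is supported on $S_j=\{j,\ldots,j+\kappa\}$, and each $P_j$ on qudit $j$. Hence $K_{i-1}$ acts on $\{1,\ldots,i-1+\kappa\}$. This set overlaps $S_i$ on $\{i,\ldots,i+\kappa-1\}$, but it never touches qudits beyond $i+\kappa-1$. This is the subtle point: $K_{i-1}$ does \emph{not} factor as something on $[1,i-1]$ tensored with something on $S_i$, so I cannot apply \Cref{claim:cut-rank-bound} to the cut at $S_i$ directly.

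Instead, write $A=\{1,\ldots,i-1\}$, $C=\{i,\ldots,n\}$, and take the bipartition $A\mid C$. By \Cref{claim:cut-rank-bound}, with a single path edge $\{i-1,i\}$ crossing it, we have
\[
    |\psi\rangle=\sum_{\gamma=1}^{\chi}|a_\gamma\rangle_A\otimes|c_\gamma\rangle_C .
\]
(For $i=1$ the statement is just that the reduced state on $S_1$ of an MPS has rank at most $\chi$, by the cut $\{1,\ldots,1+\kappa\}$ having one edge; I treat it separately.) Since $K_{i-1}$ mixes $A$ with part of $C$, I will decompose $C$ further.

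The key structural observation comes from the postselection. After applying $K_{i-1}$, the qudits $1,\ldots,i-1$ are projected onto $|0\rangle$, since each $P_j$ is the last operator acting on qudit $j$: later unitaries $U_{j'}$ with $j'>j$ do not touch $j$. So $K_{i-1}=|0^{i-1}\rangle\langle 0^{i-1}|_A\otimes I\cdot K_{i-1}$, and
\[
    K_{i-1}|\psi\rangle=|0^{i-1}\rangle_A\otimes|\Phi\rangle_{C}.
\]
The state on $C$ is then pure, and $\sigma_i$ is the reduced state of $|\Phi\rangle$ on $S_i\subseteq C$. Its rank equals the Schmidt rank of $|\Phi\rangle$ across $S_i\mid\{i+\kappa+1,\ldots,n\}$.

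To bound that Schmidt rank, I use the MPS structure at the cut between $i+\kappa$ and $i+\kappa+1$. Write
\[
    |\psi\rangle=\sum_{\beta=1}^{\chi}|L_\beta\rangle_{[1,i+\kappa]}\otimes|R_\beta\rangle_{[i+\kappa+1,n]}.
\]
Now $K_{i-1}$ acts only on $[1,i+\kappa-1]\subseteq[1,i+\kappa]$, so
\[
    K_{i-1}|\psi\rangle=\sum_\beta \bigl(K_{i-1}|L_\beta\rangle\bigr)\otimes|R_\beta\rangle .
\]
This has Schmidt rank at most $\chi$ across $[1,i+\kappa]\mid[i+\kappa+1,n]$. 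Combining with the product form above, the left factors are all of the form $|0^{i-1}\rangle\otimes(\cdot)_{S_i}$, so the Schmidt rank of $|\Phi\rangle$ across $S_i\mid$ rest is also at most $\chi$, and $\operatorname{rank}\sigma_i\le\chi$. When $i+\kappa\ge n$ the complement is empty, and the bound $\chi$ instead comes from the cut $\{i-1,i\}$ combined with the product form; this case needs a brief separate check.

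The main obstacle is cleanly justifying that $K_{i-1}$ leaves qudits $1,\ldots,i-1$ in $|0\rangle$ despite the interleaved unitaries. I would prove this by induction on $j$, using that $U_{j'}$ for $j'>j$ is supported on $S_{j'}\not\ni j$.
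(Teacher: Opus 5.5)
Your proposal is correct and follows the paper's own proof. Like the paper, you use the Schmidt decomposition across the path edge between $i+\kappa$ and $i+\kappa+1$, note that $K_{i-1}$ is supported on $\{1,\ldots,i+\kappa-1\}$, use that qudits $1,\ldots,i-1$ are fixed to $|0\rangle$, and treat $i+\kappa=n$ separately.

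One small correction concerns that boundary case. There the bound does not come from the cut $\{i-1,i\}$, which $K_{i-1}$ does not respect, as you observed yourself. It comes directly from the product form $|0^{i-1}\rangle\otimes|\Phi\rangle_{S_i}$, which makes $\sigma_i$ pure, so its rank is at most $1\leq\chi$.
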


\begin{proof}
All unitaries and projections appearing in $K_{i-1}$ are supported on qudits contained
in the prefix
\begin{equation}
    \{1,2,\ldots,i+\kappa-1\}.
\end{equation}
If $i+\kappa<n$, then $K_{i-1}$ acts only on the left side of the MPS cut between qudits $i+\kappa$ and $i+\kappa+1$. The Schmidt rank across this cut is at most $\chi$ for the original MPS, and local operations on one side of a bipartition cannot increase Schmidt rank. After the first $i-1$ successful projections, the qudits $1,\ldots,i-1$ are fixed to $|0^{i-1}\rangle$. Projecting onto a fixed product state and then discarding those projected qudits cannot increase the rank of the reduced state on the remaining active block. Hence the reduced state on $S_i$ has rank at most $\chi$.

If $i+\kappa=n$, then $S_i$ contains every qudit that has not already been projected. Conditioned on the previous successful projections, the global postselected state is pure and factors as $|0^{i-1}\rangle\otimes|\varphi\rangle_{S_i}$. Thus the reduced state on $S_i$ has rank $1\leq\chi$.
\end{proof}

We first prove correctness conditioned on every call to subnormalised tomography
succeeding.

\begin{proposition}[Correctness of the sequential MPS learner]
\label{prop:learn-mps-correctness}
Assume that every call to $\operatorname{Sub\text{-}Tomography}$ in
\Cref{alg:learn-mps} succeeds with trace-norm error at most $\eta$. If
\begin{equation}
    \eta\leq \frac{1}{8n},
\end{equation}
then the state $|\hat\psi\rangle$ output by \Cref{alg:learn-mps} satisfies
\begin{equation}
    \left\|
        |\psi\rangle\langle\psi|
        -
        |\hat\psi\rangle\langle\hat\psi|
    \right\|_1
    \leq
    2\sqrt{2n\eta}+4\sqrt{\eta}.
\end{equation}
In particular, the choice
\begin{equation}
    \eta=\frac{\epsilon^2}{128n}
\end{equation}
is sufficient to make the final trace norm error at most $\epsilon$ for every
$\epsilon\in(0,1]$.
\end{proposition}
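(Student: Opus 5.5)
The plan is to track the unnormalised branch $|\Psi_i\rangle := K_i|\psi\rangle$ through the algorithm. I bound its squared norm from below, then compare the final output with the rotated true state $U_m\cdots U_1|\psi\rangle$.

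\textbf{Step 1 (branch norms).} I will show by induction that $\mu_i:=\|K_i|\psi\rangle\|^2\geq 1-2i\eta$. This also justifies the interval $[\widetilde\mu_{i-1},1]$ passed to $\operatorname{Sub\text{-}Tomography}$, since $\widetilde\mu_{i}=1-2i\eta$.

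The inductive step works as follows. By \Cref{lem:mps-active-block-rank}, the reduced state $\sigma_i$ of $|\Psi_{i-1}\rangle$ on $S_i$ has rank at most $\chi$. The tomography call gives $\|\sigma_i-\hat\sigma_i\|_1\leq\eta$, and $U_i$ maps the top-$\chi$ eigenspace $W_i$ into the range of $P_i$, with $d^{\kappa}\geq\chi$. We can therefore apply \eqref{eq:disentangling-postselection-norm-bound} of \Cref{cor:approximate-disentangling-subnormalised} with $A=\{i\}$, $S=S_i$ and $r=\chi$. This gives $\mu_i\geq\mu_{i-1}-2\eta$.

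Since $\eta\leq 1/(8n)$, every $\mu_i\geq 3/4$. Every division below is therefore harmless, and the hypothesis $\eta<\mu/2$ always holds.

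\textbf{Step 2 (global error via the projector).} Let $V=U_m\cdots U_1$ and $Q=|0^m\rangle\langle0^m|_{[m]}\otimes I$. I will use the identity $K_m=QV$. This holds because each $P_j$ acts on qudit $j$ only, and later unitaries $U_{j'}$ with $j'>j$ are supported on $S_{j'}\not\ni j$, so all the projections commute past the later unitaries to the left.

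It follows that $\|QV|\psi\rangle\|^2=\mu_m\geq 1-2m\eta$. Set $|\Phi\rangle:=K_m|\psi\rangle/\sqrt{\mu_m}=|0^m\rangle\otimes|\varphi\rangle$, where $|\varphi\rangle$ is a pure state on $R$. The fidelity argument at the end of the proof of \Cref{cor:approximate-disentangling-subnormalised} then gives
\begin{equation}
\bigl\|V|\psi\rangle\langle\psi|V^\dagger-|\Phi\rangle\langle\Phi|\bigr\|_1=2\sqrt{1-\mu_m}\leq 2\sqrt{2n\eta}.
\end{equation}
This global comparison replaces summing $m$ per-step errors of order $\sqrt{\eta}$, which would be too weak. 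Getting this reduction right is the main point of the proof.

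\textbf{Step 3 (final block).} The normalised reduced state of the final branch on $R$ is $|\varphi\rangle\langle\varphi|$, and the subnormalised state is $\mu_m|\varphi\rangle\langle\varphi|$. The last tomography call gives $\|\mu_m|\varphi\rangle\langle\varphi|-\hat\tau\|_1\leq\eta$.

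I will bound the distance between $|\varphi\rangle$ and a top eigenvector $|\hat\varphi\rangle$ of $\hat\tau$. \Cref{lem:top-eigenspace-stability} with $r=1$ gives $\mu_m(1-|\langle\hat\varphi|\varphi\rangle|^2)\leq 2\eta$. Hence
\begin{equation}
\||\varphi\rangle\langle\varphi|-|\hat\varphi\rangle\langle\hat\varphi|\|_1\leq 2\sqrt{2\eta/\mu_m}\leq 4\sqrt{\eta},
\end{equation}
using $\mu_m\geq 1/2$. This is the place where the constants must be checked carefully.

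\textbf{Step 4 (combine).} The output is $|\hat\psi\rangle=V^\dagger(|0^m\rangle\otimes|\hat\varphi\rangle)$. Unitary invariance of the trace norm and the triangle inequality applied to Steps 2 and 3 give the bound $2\sqrt{2n\eta}+4\sqrt\eta$.

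For the final statement, substitute $\eta=\epsilon^2/(128n)$. This gives $2\sqrt{2n\eta}=\epsilon/4$ and $4\sqrt\eta=\epsilon/(2\sqrt{2n})\leq\epsilon/2$, so the total error is at most $\epsilon$. The hypothesis $\eta\leq 1/(8n)$ also holds, since $\epsilon\leq1$.
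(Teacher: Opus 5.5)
Your proposal is correct and follows the paper's proof essentially step by step: the same induction $\mu_i\geq 1-2i\eta$ via the active-block rank bound and \eqref{eq:disentangling-postselection-norm-bound}, the same factorisation $K_m=P_{\leq m}U_{\leq m}$ giving the global $2\sqrt{1-\mu_m}$ comparison (you work in the rotated frame, which is equivalent by unitary invariance), and the same $4\sqrt{\eta}$ bound for the residual block. The only cosmetic difference is that you obtain $|\langle\hat\varphi|\varphi\rangle|^2\geq 1-2\eta/\mu_m$ from \Cref{lem:top-eigenspace-stability} with $r=1$, whereas the paper derives it by a direct variational chain.
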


\begin{proof}
Let
\begin{equation}
    \mu_i   =   \left\|  K_i|\psi\rangle   \right\|^2
\end{equation}
be the true cumulative success probability after the first $i$ disentangling steps.
We prove by induction that
\begin{equation}
    \mu_i\geq 1-2i\eta.
\end{equation}

The case $i=0$ is immediate because the initial state is normalised. Assume the
claim holds for $i-1$. By \Cref{lem:mps-active-block-rank}, the active reduced
state at step $i$ has rank at most $\chi$. Since the tomography call has trace
norm error at most $\eta$,
\Cref{eq:disentangling-postselection-norm-bound} shows that the next projection
decreases the squared branch norm by at most $2\eta$. Hence
\begin{equation}
    \mu_i
    \geq
    \mu_{i-1}-2\eta
    \geq
    1-2i\eta.
\end{equation}
This completes the induction.

Thus, for $m\leq n$,
\begin{equation}
    \mu_m\geq 1-2m\eta\geq \frac{1}{2},
\end{equation}
where we used $\eta\leq 1/8n$.

Let
\begin{equation}
    U_{\leq m}=U_mU_{m-1}\cdots U_1
    \qquad\text{and}\qquad
    P_{\leq m}=P_mP_{m-1}\cdots P_1.
\end{equation}
Since $U_j$ is supported on $\{j,j+1,\ldots,j+\kappa\}$, it does not act on
qudit $i$ when $j>i$. Hence $P_i$ commutes with every subsequent unitary $U_j$, and
\begin{equation}
    K_m=P_{\leq m}U_{\leq m}.
\end{equation}
Define the exact postselected reconstruction state as
\begin{equation}
    |\psi^\star\rangle
    =
    U_{\leq m}^\dagger
    \frac{P_{\leq m}U_{\leq m}|\psi\rangle}{\sqrt{\mu_m}}.
\end{equation}
Then
\begin{equation}
    |\langle\psi|\psi^\star\rangle|^2=\mu_m,
\end{equation}
and hence
\begin{equation}
    \left\|
        |\psi\rangle\langle\psi|
        -
        |\psi^\star\rangle\langle\psi^\star|
    \right\|_1
    =
    2\sqrt{1-\mu_m}
    \leq
    2\sqrt{2m\eta}
    \leq
    2\sqrt{2n\eta}.
\end{equation}

It remains to account for the final tomography step on the residual system. Since the
branch after $m$ projections lies in the subspace where the first $m$ qudits are equal
to $|0^m\rangle$, there exists a pure residual state $|\varphi\rangle$ on $R$ such that
\begin{equation}
    \frac{P_{\leq m}U_{\leq m}|\psi\rangle}{\sqrt{\mu_m}}
    =
    |0^m\rangle\otimes|\varphi\rangle.
\end{equation}
The ideal subnormalised residual state is
\begin{equation}
    \tau=\mu_m|\varphi\rangle\langle\varphi|.
\end{equation}
The final tomography call outputs $\hat\tau$ with
\begin{equation}
    \|\tau-\hat\tau\|_1\leq\eta .
\end{equation}
Let $|\hat\varphi\rangle$ be a top eigenvector of $\hat\tau$. By the variational
characterisation of the largest eigenvalue,
\begin{align}
    \langle\hat\varphi|\tau|\hat\varphi\rangle
    &\geq
    \langle\hat\varphi|\hat\tau|\hat\varphi\rangle-\eta \\
    &\geq
    \langle\varphi|\hat\tau|\varphi\rangle-\eta \\
    &\geq
    \langle\varphi|\tau|\varphi\rangle-2\eta \\
    &=
    \mu_m-2\eta.
\end{align}
Since $\tau=\mu_m|\varphi\rangle\langle\varphi|$, this gives
\begin{equation}
    |\langle\hat\varphi|\varphi\rangle|^2
    \geq
    1-\frac{2\eta}{\mu_m}
    \geq
    1-4\eta.
\end{equation}
Thus
\begin{equation}
    \left\|
        |\varphi\rangle\langle\varphi|
        -
        |\hat\varphi\rangle\langle\hat\varphi|
    \right\|_1
    \leq
    4\sqrt{\eta}.
\end{equation}
Tensoring with a fixed state and applying the unitary $U_{\leq m}^\dagger$ both preserve trace norm, so the same bound holds between
$|\psi^\star\rangle = U_{\leq m}^\dagger(\ket{0^m}\otimes\ket{\varphi})$ and the actual output $|\hat\psi\rangle = U_{\leq m}^\dagger(\ket{0^m}\otimes\ket{\hat\varphi})$. The triangle inequality gives
\begin{equation}
    \left\|
        |\psi\rangle\langle\psi|
        -
        |\hat\psi\rangle\langle\hat\psi|
    \right\|_1
    \leq
    2\sqrt{2n\eta}+4\sqrt{\eta}.
\end{equation}
The choice $\eta=\epsilon^2/(128n)$ makes the right-hand side at most
$\epsilon$.
\end{proof}

The sample complexity follows from this correctness guarantee.

\begin{theorem}[Sample complexity of the sequential MPS learner]
\label{thm:learn-mps-sample-complexity}
Suppose that \Cref{alg:learn-mps} uses the sample-optimal rank-constrained
subnormalised tomography primitive from \Cref{subsec:subnormalised-tomography}.
Then an unknown state $|\psi\rangle\in\mathcal S_d(P_n,\chi)$ can be learned to trace norm error at most $\epsilon$ with success probability at least $1-\delta$ using
\begin{equation}
    O\left(
        \frac{n^3}{\epsilon^4}
        \left(
            \chi d^{\kappa+1}
            +
            \log(n/\delta)
        \right)
    \right)
\end{equation}
copies, where
\begin{equation}
    \kappa=\max\{1,\lceil\log_d\chi\rceil\}.
\end{equation}
In particular, since $d^\kappa\leq d\chi$, this is at most
\begin{equation}
    O\left(
        \frac{n^3}{\epsilon^4}
        \left(
            d^2\chi^2+\log(n/\delta)
        \right)
    \right).
\end{equation}
The runtime is polynomial in $n$, $d^{\kappa+1}$, $\chi$, $1/\epsilon$, and
$\log(1/\delta)$, assuming the tomography primitive and the eigendecompositions on
the active blocks are implemented in polynomial time in the respective dimension.
\end{theorem}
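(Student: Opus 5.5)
The argument combines \Cref{prop:learn-mps-correctness} with the sample bound of \Cref{cor:subnormalised-tomography-bounded-mu}, applied once per call to $\operatorname{Sub\text{-}Tomography}$ in \Cref{alg:learn-mps}. We set $\eta=\epsilon^2/(128n)$. For $\epsilon\le 1$ this gives $\eta\le 1/(8n)$, so the proposition applies. Conditioned on all calls succeeding, the output then has trace-norm error at most $\epsilon$. There are $m+1\le n+1$ calls, each run with confidence $\delta/(2n)$. A union bound therefore gives overall failure probability at most $\delta$, adjusting the constant slightly, for example by using $\delta/(2(n+1))$.

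Next we check the hypotheses of the corollary at each call.

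\emph{Rank.} At step $i$ the normalised postselected reduced state on $S_i$ has rank at most $\chi$ by \Cref{lem:mps-active-block-rank}. Normalisation does not change rank. The final call uses rank $1$, since the residual branch is pure.

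\emph{Success-probability bounds.} The induction in the proof of \Cref{prop:learn-mps-correctness} gives $\mu_{i-1}\ge 1-2(i-1)\eta=\widetilde\mu_{i-1}\ge 1/2$. This holds on the event that all earlier calls succeeded, so the interval $[\widetilde\mu_{i-1},1]$ passed to the routine is valid, with $\mu_\ell\ge1/2$ and $\mu_u=1$.

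\emph{Dimension.} The subsystem has $|S_i|=\kappa+1$ qudits, so $D_L=d^{\kappa+1}$. For the final call, $|R|\le\kappa$.

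Substituting these values into the sample-optimal form of the corollary, each call uses
\[
O\!\left(\frac{\chi d^{\kappa+1}+\log(n/\delta)}{\eta^2}\right)
=O\!\left(\frac{n^2}{\epsilon^4}\bigl(\chi d^{\kappa+1}+\log(n/\delta)\bigr)\right)
\]
copies. The terms $\log(1/\delta)/\mu_\ell$ and $\log(1/\delta)/\eta^2$ are absorbed into this bound. Summing over at most $n+1$ calls, each using fresh copies, gives the stated $n^3/\epsilon^4$ bound. The simplification uses $d^{\kappa}\le d\chi$, which holds because $\kappa\le \lceil\log_d\chi\rceil\le \log_d\chi+1$, or $\kappa=1$ when $\chi\le d$. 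Hence $\chi d^{\kappa+1}\le d^2\chi^2$.

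For the runtime, each step consists of one tomography call on a space of dimension $d^{\kappa+1}$, one eigendecomposition, and the construction of an isometry extended to a unitary, for example via Gram--Schmidt. Each of these costs $\operatorname{poly}(d^{\kappa+1})$. Implementing the postselection $K_{i-1}$ on a copy means applying $O(n)$ known local unitaries and projections. The output description is a list of $m$ local unitaries together with $|\hat\varphi\rangle$. The total runtime is therefore polynomial in $n$, $d^{\kappa+1}$, $\chi$, $1/\epsilon$ and $\log(1/\delta)$.

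The main obstacle is the conditioning. The validity of the lower bound $\widetilde\mu_{i-1}$, and even the definition of $K_{i-1}$, depends on the random outcomes of earlier calls. I will handle this by defining the good event sequentially: given that calls $1,\dots,i-1$ succeeded, $K_{i-1}$ is a fixed known operator and the bounds hold, so call $i$ fails with probability at most $\delta/(2n)$ over its fresh samples. Chaining these conditional failure probabilities gives the union bound.
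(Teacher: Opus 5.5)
Your proposal is correct and follows the paper's proof essentially step for step. You set $\eta=\epsilon^2/(128n)$ via \Cref{prop:learn-mps-correctness}, use $\widetilde\mu_i\geq 1/2$ so that \Cref{cor:subnormalised-tomography-bounded-mu} costs $O\bigl((\chi d^{\kappa+1}+\log(n/\delta))/\eta^2\bigr)$ per call, sum over the calls, and treat the final rank-one call as lower order. Your sequential conditioning argument for the union bound adds care the paper leaves implicit, and no adjustment of $\delta$ is needed: since $\kappa\geq1$, we have $m\leq n-1$, so there are at most $n$ calls.
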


\begin{proof}
There are at most $n$ calls to $\operatorname{Sub\text{-}Tomography}$: at most $n-1$ local calls and one final residual call. Allocate failure probability $\delta/(2n)$ to each call. By \Cref{prop:learn-mps-correctness}, it suffices to take
\begin{equation}
    \eta=\frac{\epsilon^2}{128n}.
\end{equation}
The lower bounds $\widetilde\mu_i$ used in the algorithm satisfy
\begin{equation}
    \widetilde\mu_i\geq \frac{1}{2}
\end{equation}
for this choice of $\eta$. Hence the dependence on success probability in
\Cref{cor:subnormalised-tomography-bounded-mu} contributes only a constant factor.

For each local disentangling step, the active subsystem has dimension $d^{\kappa+1}$
and rank at most $\chi$. One local call uses
\begin{equation}
O\left(
    \frac{
        \chi d^{\kappa+1}
        +
        \log(n/\delta)
    }{\eta^2}
\right)
\end{equation}
copies. Multiplying by at most $n$ such calls and substituting the value of $\eta$
gives
\begin{equation}
    O\left(
        \frac{n^3}{\epsilon^4}
        \left(
            \chi d^{\kappa+1}
            +
            \log(n/\delta)
        \right)
    \right).
\end{equation}
The final residual tomography call is lower order, since the residual system has dimension
$d^\kappa\leq d^{\kappa+1}$. The runtime statement follows from the fact that all
classical linear-algebra operations are performed on matrices of dimension at most
$d^{\kappa+1}$.
\end{proof}

\Cref{sec:ttn-tomography,subsec:direct-graph-tomography} generalise this
procedure to trees and to arbitrary graphs, and the path case is recovered in
\Cref{rem:relation-to-mps-learner,rem:relation-to-ttn-learner}.

\begin{remark}[Comparison with the logarithmic-depth learner of \cite{lin2025efficientclosestmatrixproduct}]
\label{rem:lch-mps-black-box}
The guarantee of \cite{lin2025efficientclosestmatrixproduct} is stated in
infidelity.  Writing $\alpha$ for their target infidelity, their algorithm
outputs $|\hat\psi\rangle$ with $|\langle\psi|\hat\psi\rangle|^2\geq1-\alpha$
using
\begin{equation}
    N_{\mathrm{LCH}}(\alpha,\delta)
    =
    O\left(
        \frac{\chi^6d^4n^3\log(n/\delta)}{\max\{1,\log_d\chi\}^3\alpha^4}
    \right)
\end{equation}
copies and a reconstruction circuit of depth $O(\log n)$. For pure states,
\begin{equation}
    \left\|
        |\psi\rangle\langle\psi|
        -
        |\hat\psi\rangle\langle\hat\psi|
    \right\|_1
    =
    2\sqrt{1-|\langle\psi|\hat\psi\rangle|^2}.
\end{equation}
Thus trace norm error $\epsilon$ corresponds to $\alpha=\epsilon^2/4$, under which their bound reads
$O(\chi^6d^4n^3\log(n/\delta)/(\max\{1,\log_d\chi\}^3\epsilon^8))$.
Conversely, \Cref{thm:learn-mps-sample-complexity} in the infidelity convention
reads $O\big(\tfrac{n^3}{\alpha^2}(\chi d^{\kappa+1}+\log(n/\delta))\big)$.

The two bounds give different tradeoffs, and neither uniformly dominates the
other. The sequential bound has better dependence on $\chi$ and on the target
accuracy for two reasons. First, every block learned by the sequential procedure
is separated from the unprocessed suffix by one virtual edge, so its rank is at
most $\chi$ and one call costs at most
$\chi d^{\kappa+1}\leq d^2\chi^2$. A middle block in the parallel schedule can
be separated by two virtual edges and can have rank as large as $\chi^2$.
Second, the sequential analysis tracks the subnormalised branch directly. Each
step removes at most $O(\eta)$ probability mass, the losses add, and the square
root relating this loss to trace norm error is taken only at the end.

The sequential reconstruction has depth linear in $n$, while the learner of
\cite{lin2025efficientclosestmatrixproduct} has depth $O(\log n)$. We use the
latter in \Cref{thm:blackbox-mps-learning} when logarithmic depth is required.
\end{remark}

\subsection{TTN state tomography via iterated disentangling}\label{sec:ttn-tomography}

We extend the sequential disentangling method from MPSs to TTNs of known
topology. \Cref{alg:LearnTTN} processes the rooted tree from leaves to root.
After processing a subtree, it retains only a small residual subsystem. The
parent step acts on the parent vertex and the residual subsystems of its
children.

For a rooted tree $T$, let $\operatorname{ch}(u)$ denote the set of children of $u$, and
let $T_u$ be the subtree rooted at $u$, with vertex set $V_u$.  For
$S\subseteq[n]$, write
\begin{equation}
    \mathcal H_S=(\mathbb C^d)^{\otimes |S|},
    \qquad
    |0\rangle_S=|0\rangle^{\otimes |S|}.
\end{equation}
Throughout this subsection we set
\begin{equation}
    \kappa=\max\{1,\lceil\log_d\chi\rceil\}.
\end{equation}
Thus $d^\kappa\geq \chi$.  The residual subsystem left after processing $T_u$ will be
denoted by $R_u\subseteq V_u$, and it will always satisfy $|R_u|\leq \kappa$.

\begin{algorithm}
\caption{LearnTTN}\label{alg:LearnTTN}
\textbf{Input:} Tree $T$ on vertex set $[n]$, bond dimension $\chi$, physical dimension
$d\geq2$, copies of $|\psi\rangle\in\mathcal S_d(T,\chi)$, accuracy parameter
$\epsilon\in(0,1]$, confidence parameter $\delta\in(0,1)$.

\textbf{Output:} A classical description of a pure state $|\hat\psi\rangle$.
\begin{algorithmic}[1]
\State If $n=1$, root $T$ at its only vertex.  If $n\geq2$, choose a vertex $r$ with
$\deg_T(r)=1$ and root $T$ at $r$.
\State Let $h$ be the height of the rooted tree.
\State $\kappa\leftarrow \max\{1,\lceil\log_d\chi\rceil\}$ and
$\eta\leftarrow \epsilon^2/(128n)$.
\State $K_0\leftarrow I$, $\widetilde\mu_0\leftarrow1$, $Q_0\leftarrow\emptyset$, and
$c\leftarrow0$.
\For{$i=h-1$ down to $0$}
    \ForAll{$u\in V(T)$ with $\operatorname{depth}(u)=i$}
        \State $S_u\leftarrow \{u\}\cup\bigcup_{v\in\operatorname{ch}(u)}R_v$.
        \If{$|S_u|\leq\kappa$}
            \State $R_u\leftarrow S_u$.
        \Else
            \State Choose $R_u\subseteq S_u$ such that $u\in R_u$ and $|R_u|=\kappa$.
            \State $Q_u\leftarrow S_u\setminus R_u$.
            \State $\hat\sigma_u
            \leftarrow
            \operatorname{Sub\text{-}Tomography}
            \left(
                K_c,
                S_u,
                \chi,
                \eta,
                \delta/(2n),
                [\widetilde\mu_c,1]
            \right).$
            \State Let $W_u$ be the span of the eigenvectors associated with the $\chi$
largest eigenvalues of $\hat\sigma_u$.
            \State Compute a unitary $\widetilde U_u$ supported on $S_u$ such that
            \begin{equation}
                \widetilde U_u W_u
                \subseteq
                |0^{|Q_u|}\rangle_{Q_u}\otimes\mathcal H_{R_u}.
            \end{equation}
            \State Let $U_{c+1} = \widetilde U_u\otimes I_{[n]\setminus S_u}$.
            \State Let
            \begin{equation}
                P_{c+1}
                =
                |0^{|Q_u|}\rangle\langle0^{|Q_u|}|_{Q_u}\otimes I_{[n]\setminus Q_u}.
            \end{equation}
            \State $K_{c+1}\leftarrow P_{c+1}U_{c+1}K_c$.
            \State $\widetilde\mu_{c+1}\leftarrow\widetilde\mu_c-2\eta$.
            \State $Q_{c+1}\leftarrow Q_c\cup Q_u$ and $c\leftarrow c+1$.
        \EndIf
    \EndFor
\EndFor
\State
$
\hat\tau
\leftarrow
\operatorname{Sub\text{-}Tomography}
\left(
    K_c,
    R_r,
    1,
    \eta,
    \delta/(2n),
    [\widetilde\mu_c,1]
\right).
$
\State Let $|\hat\varphi\rangle$ be a top eigenvector of $\hat\tau$.
\State \Return
\begin{equation}
    |\hat\psi\rangle
    =
    U_1^\dagger U_2^\dagger\cdots U_c^\dagger
    \left(
        |0^{|Q_c|}\rangle_{Q_c}\otimes |\hat\varphi\rangle_{R_r}
    \right),
\end{equation}
where the tensor product is interpreted according to the canonical ordering of the
qudits.
\end{algorithmic}
\end{algorithm}

The required unitary exists because
$\dim(W_u)\leq\chi\leq d^\kappa=\dim(\mathcal H_{R_u})$. Choose an
orthonormal basis of $W_u$, map it isometrically into
$|0^{|Q_u|}\rangle_{Q_u}\otimes\mathcal H_{R_u}$, and extend the isometry to a
unitary on $\mathcal H_{S_u}$.

We next establish the invariants used to analyse \Cref{alg:LearnTTN}.

\begin{lemma}[Residual-subsystem invariant]\label{lem:ttn-residual-subsystem-invariant}
During \Cref{alg:LearnTTN}, after a vertex $u$ has been processed, the following hold:
\begin{enumerate}
    \item $R_u\subseteq V_u$ and $|R_u|\leq\kappa$.
    \item Every qudit in $V_u\setminus R_u$ has been projected onto $|0\rangle$ in the
    current postselected branch.
    \item The sets $Q_u$ introduced at different nontrivial disentangling steps are
    pairwise disjoint.
\end{enumerate}
Moreover, there are at most $n-1$ nontrivial disentangling steps.  Consequently, if
\begin{equation}
    b=\max_{u\in V(T)}|\operatorname{ch}(u)|
\end{equation}
denotes the maximum number of children in the chosen rooting, then every subsystem
$S_u$ on which \Cref{alg:LearnTTN} performs tomography satisfies
\begin{equation}
    |S_u|\leq 1+b\kappa.
\end{equation}
Finally, with the rooting used in the algorithm, $b\leq1$ for $n\leq2$, and
$b\leq\Delta(T)-1$ for $n\geq3$.
\end{lemma}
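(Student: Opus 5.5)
The plan is to argue by induction over the order in which \Cref{alg:LearnTTN} processes vertices. Vertices are handled in order of decreasing depth, so every child of $u$ is processed before $u$. We therefore induct on the height of the subtree $T_u$, proving items 1 and 2 together.

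For items 1 and 2, take the base case of a leaf $u$. Then $S_u=\{u\}$ and $|S_u|=1\leq\kappa$, so $R_u=\{u\}\subseteq V_u$, and $V_u\setminus R_u=\emptyset$.

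For the inductive step, assume items 1 and 2 hold for every child $v\in\operatorname{ch}(u)$. Since $R_v\subseteq V_v\subseteq V_u$, we get $S_u\subseteq V_u$. If $|S_u|\leq\kappa$, then $R_u=S_u$ and $|R_u|\leq\kappa$. Otherwise the algorithm chooses $R_u\subseteq S_u$ with $|R_u|=\kappa$. In either case item 1 holds.

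For item 2, split $V_u\setminus R_u$ into two parts. The first part is $\bigcup_{v}(V_v\setminus R_v)$; these qudits were projected when the children were processed, by the inductive hypothesis. The second part is $Q_u=S_u\setminus R_u$ (empty in the trivial case); these qudits are projected by $P_{c+1}$ at step $u$.

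One check is needed here: the unitary $\widetilde U_u$ acts only on $S_u$, and $S_u$ is disjoint from every previously projected qudit. This holds because previously projected qudits lie in $V_v\setminus R_v$ for processed vertices $v$. By disjointness of the subtrees $V_v$ for distinct children (and for vertices in other branches), such qudits never re-enter a later active set, since later active sets are built only from residual registers and fresh vertices. So earlier projections remain in force in the current branch.

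For item 3, observe that $Q_u\subseteq S_u\subseteq V_u$ and $Q_u\cap R_u=\emptyset$. Moreover, any later set $S_{u'}$ meets $V_u$ only inside $R_u$: by the disjointness argument above, $S_{u'}\cap V_u\subseteq R_u$ when $u'$ is an ancestor of $u$, and $S_{u'}\cap V_u=\emptyset$ otherwise. Hence $Q_{u'}\cap Q_u=\emptyset$.

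To count the nontrivial steps, note that each nontrivial step has $|Q_u|=|S_u|-\kappa\geq1$. The sets $Q_u$ are pairwise disjoint subsets of $[n]$. None of them meets $R_r$, which is nonempty because it contains $r$. Therefore the number of nontrivial steps is at most $\sum_u|Q_u|\leq n-|R_r|\leq n-1$.

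The size bound follows from item 1: $|S_u|\leq1+\sum_{v\in\operatorname{ch}(u)}|R_v|\leq 1+|\operatorname{ch}(u)|\kappa\leq1+b\kappa$.

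For the bound on $b$, treat $n\leq2$ first. If $n=1$ there are no children. If $n=2$ the leaf root has exactly one child, and that child is a leaf, so $b\leq1$.

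For $n\geq3$, the root $r$ has $\deg_T(r)=1$, so it has one child. Every non-root vertex $u$ has $|\operatorname{ch}(u)|=\deg_T(u)-1\leq\Delta(T)-1$. Also $\Delta(T)\geq2$, because a tree on at least three vertices is not a matching. This gives $b\leq\max\{1,\Delta(T)-1\}=\Delta(T)-1$.

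The main obstacle is the disjointness bookkeeping: showing that active sets never revisit projected qudits, which is used in both items 2 and 3. I would establish it first as a standalone auxiliary claim, stated as: for every processed $u$ and every later-processed $u'$, $S_{u'}\cap V_u\subseteq R_u$. This claim follows from the tree structure: the subtrees $V_v$ of siblings are disjoint, and a later $S_{u'}$ draws from $V_u$ only through the residual registers passed up from $u$ toward $u'$.
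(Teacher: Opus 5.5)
Your proposal is correct and follows the paper's proof: the same leaves-to-root induction for items 1 and 2, disjointness of the $Q_u$ because each is drawn from not-yet-projected qudits, at most $n-1$ nontrivial steps because the disjoint nonempty $Q_u$ avoid the nonempty $R_r$, and the same child-count argument for $b$. The only difference is that you state explicitly, as an auxiliary claim, the bookkeeping fact that $S_{u'}\cap V_u\subseteq R_u$ for every later-processed $u'$. The paper uses this fact only implicitly, when it asserts that just before processing $u$ the unprojected qudits of $V_u$ are exactly $S_u$.
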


\begin{proof}
We prove the first two claims by induction over the traversal from leaves to root.  If
$u$ has no children, then $S_u=\{u\}$.  Since $\kappa\geq1$, the algorithm sets
$R_u=\{u\}$ and no projection is applied, so the claims are immediate.

Now suppose that all children of $u$ have already been processed. By the induction
hypothesis, for each child $v$, every qudit in $V_v\setminus R_v$ has already
been projected onto $|0\rangle$. Hence, immediately before processing $u$, the
qudits in $V_u$ that have not yet been projected onto $|0\rangle$ are 
\begin{equation}\label{eq:ttn-unprojected-set}
    S_u=\{u\}\cup\bigcup_{v\in\operatorname{ch}(u)}R_v.
\end{equation}
If $|S_u|\leq\kappa$, the algorithm sets $R_u=S_u$, and the invariant follows.  If
$|S_u|>\kappa$, the algorithm chooses $R_u\subseteq S_u$ with $|R_u|=\kappa$ and
projects $Q_u=S_u\setminus R_u$ onto $|0\rangle$.  After this step, the only qudits that have not already been projected in $V_u$ are those in $R_u$.

At each nontrivial step, $Q_u$ is chosen from qudits that have not yet been
projected onto $|0\rangle$. Hence $Q_u$ is disjoint from all sets projected at
earlier steps, which proves pairwise disjointness.  Each nontrivial step has
$Q_u\neq\emptyset$, and after the root has been processed the residual set $R_r$ is
nonempty.  Since the projected sets are pairwise disjoint, at most $n-1$ nontrivial
disentangling steps can occur.

The size bound follows from $|R_v|\leq\kappa$ for every child $v$:
\begin{equation}
    |S_u|
    \leq
    1+|\operatorname{ch}(u)|\kappa
    \leq
    1+b\kappa
\end{equation}
by \eqref{eq:ttn-unprojected-set}.
If $n=1$, then $b=0$.  If $n=2$, the chosen root has one child and the other vertex has
none, so $b=1$.  If $n\geq3$, the root has one child because it was chosen to have
degree one, and every non-root vertex has at most $\Delta(T)-1$ children.  Since
$\Delta(T)\geq2$ in this case, $b\leq\Delta(T)-1$.
\end{proof}

\begin{lemma}[Rank bound for TTN residual subsystems]\label{lem:ttn-residual-rank-bound}
Let $|\psi\rangle\in\mathcal S_d(T,\chi)$, and consider \Cref{alg:LearnTTN}.  Immediately
before a vertex $u$ is processed, let $K_{<u}$ be the cumulative postselection map
constructed so far, and define
\begin{equation}
    \sigma_u
    =
    \operatorname{tr}_{[n]\setminus S_u}
    \left[
        K_{<u}|\psi\rangle\langle\psi|K_{<u}^\dagger
    \right].
\end{equation}
Then
\begin{equation}
    \operatorname{rank}(\sigma_u)\leq\chi.
\end{equation}
\end{lemma}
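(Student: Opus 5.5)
The plan mirrors the proof of \Cref{lem:mps-active-block-rank}: find a single tree edge that separates the support of $K_{<u}$, together with $S_u$, from the rest of the state, and then apply \Cref{claim:cut-rank-bound}.

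First I identify where $K_{<u}$ acts. The map $K_{<u}$ is a product of unitaries $U_j$ and projections $P_j$. Each unitary is supported on some $S_{u'}$ for a previously processed vertex $u'$, and each projection on some $Q_{u'}\subseteq S_{u'}$. By \Cref{lem:ttn-residual-subsystem-invariant}, $S_{u'}\subseteq V_{u'}$.

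The processed vertices fall into two groups.
\begin{itemize}
\item Descendants of $u$: for these, $V_{u'}\subseteq V_u$.
\item Vertices of the same depth as $u$ processed before it, or vertices in other branches: for these, $V_{u'}$ is disjoint from $V_u$.
\end{itemize}
So $K_{<u}=K_{\mathrm{in}}K_{\mathrm{out}}$, where $K_{\mathrm{in}}$ is supported on $V_u$ and $K_{\mathrm{out}}$ on $[n]\setminus V_u$. These factors act on disjoint qudits, so they commute, and I may reorder them.

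Next, the case $u\neq r$. The edge from $u$ to its parent is the only edge of $T$ crossing $\cut_T(V_u)$. By \Cref{claim:cut-rank-bound}, $|\psi\rangle$ has Schmidt rank at most $\chi$ across $V_u\mid[n]\setminus V_u$. The map $K_{<u}$ is a tensor product of operators on the two sides, so it cannot increase Schmidt rank. Hence $K_{<u}|\psi\rangle$ has Schmidt rank at most $\chi$ across this cut, and so does its reduced state on $V_u$.

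Now I pass from $V_u$ to $S_u$. By \Cref{lem:ttn-residual-subsystem-invariant}, every qudit in $V_u\setminus S_u$ already lies in the range of a projection onto $|0\rangle$ inside $K_{\mathrm{in}}$. The branch therefore factorises as $|0\rangle_{V_u\setminus S_u}\otimes|\Phi\rangle$. Tracing out a product factor leaves the Schmidt rank of $S_u$ against $[n]\setminus V_u$ equal to that of $V_u$ against $[n]\setminus V_u$. Since $\sigma_u$ is the reduced state of this subnormalised pure vector, $\operatorname{rank}(\sigma_u)\leq\chi$.

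The main obstacle is this factorisation step. The projection $P_j$ onto a set $Q_{u'}$ is applied before later unitaries that might act on those qudits. I must check that no later unitary does so: every later active set $S_{u''}$ is drawn from unprojected qudits, so it is disjoint from $Q_{u'}$. This gives $P_j$ commuting with all subsequent $U_k$, as in the proof of \Cref{prop:learn-mps-correctness}, and the branch is exactly $|0\rangle$ on the projected qudits.

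Finally, the case $u=r$. All other vertices are descendants of $r$, so every unprojected qudit lies in $S_r$. The postselected branch is then $|0\rangle_{[n]\setminus S_r}\otimes|\Phi\rangle_{S_r}$, a product state, so $\sigma_r$ has rank at most $1\leq\chi$.
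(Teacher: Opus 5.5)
Your proposal is correct and follows the same route as the paper. The paper also bounds the Schmidt rank by $\chi$ across the single-edge cut $V_u\mid[n]\setminus V_u$ via \Cref{claim:cut-rank-bound}, uses that every operation in $K_{<u}$ is local with respect to that bipartition, and then removes the projected product factor on $V_u\setminus S_u$ via \Cref{lem:ttn-residual-subsystem-invariant}. Your separate treatment of $u=r$ is harmless but unnecessary: the paper handles the root uniformly, since $\cut_T(V_r)=\emptyset$ already gives Schmidt rank $1\leq\chi$.
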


\begin{proof}
The cut between $V_u$ and $[n] \setminus V_u$ in the tree $T$ contains at most
one edge: it contains the edge from $u$ to its parent if $u$ is not the root,
and it is empty if $u$ is the root. By
\Cref{claim:cut-rank-bound}, the Schmidt rank of $\ket{\psi}$ across the
bipartition $V_u \mid [n] \setminus V_u$ is at most $\chi$.

All operations appearing in $K_{<u}$ are supported either entirely inside $V_u$
or entirely outside $V_u$. Indeed, operations associated with descendants of $u$
are supported in $V_u$, while any previously processed vertex that is not a
descendant of $u$ lies in a rooted subtree disjoint from $V_u$. No ancestor of
$u$ has been processed yet. Thus $K_{<u}$ is local with respect to the
bipartition $V_u \mid [n] \setminus V_u$. Local linear maps cannot increase
Schmidt rank, so the subnormalised reduced state on $V_u$ after applying
$K_{<u}$ has rank at most $\chi$.

By the second item of \Cref{lem:ttn-residual-subsystem-invariant}, all qudits in
$V_u \setminus S_u$ have already been projected onto the product state
$\ket{0}_{V_u \setminus S_u}$. Removing a fixed product factor cannot increase
rank. Hence the reduced state on $S_u$ has rank at most $\chi$.
\end{proof}

\begin{proposition}[Correctness of TTN learner]\label{prop:learn-ttn-correctness}
Assume that every $\operatorname{Sub\text{-}Tomography}$ call in
\Cref{alg:LearnTTN} succeeds with trace-norm error at most $\eta$.  If
\begin{equation}
    \eta\leq\frac{1}{8n},
\end{equation}
then the state $|\hat\psi\rangle$ output by \Cref{alg:LearnTTN} satisfies
\begin{equation}
    \left\|
        |\psi\rangle\langle\psi|
        -
        |\hat\psi\rangle\langle\hat\psi|
    \right\|_1
    \leq
    2\sqrt{2n\eta}+4\sqrt\eta.
\end{equation}
In particular, the choice $\eta=\epsilon^2/(128n)$ is sufficient to make the final
trace norm error at most $\epsilon$ for every $\epsilon\in(0,1]$.
\end{proposition}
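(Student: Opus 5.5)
The argument follows the proof of \Cref{prop:learn-mps-correctness}, with the TTN invariants of \Cref{lem:ttn-residual-subsystem-invariant,lem:ttn-residual-rank-bound} replacing \Cref{lem:mps-active-block-rank}. Let $c_{\mathrm{fin}}\leq n-1$ be the number of nontrivial disentangling steps, and write $\mu_j=\|K_j|\psi\rangle\|^2$ for the true cumulative success probability after $j$ such steps.

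\textbf{Step 1: success probability.} I will show by induction on $j$ that $\mu_j\geq 1-2j\eta$. At step $j$ the current branch is $K_{j-1}|\psi\rangle$, with squared norm $\mu_{j-1}$. By \Cref{lem:ttn-residual-rank-bound}, its reduced state on $S_u$ has rank at most $\chi$. The unitary $\widetilde U_u$ maps the top-$\chi$ eigenspace $W_u$ of the estimate into $|0\rangle_{Q_u}\otimes\mathcal H_{R_u}$, and $d^{|R_u|}=d^\kappa\geq\chi$. So \eqref{eq:disentangling-postselection-norm-bound} applies with $A=Q_u$ and gives $\mu_j\geq\mu_{j-1}-2\eta$. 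Since $c_{\mathrm{fin}}\leq n-1$ and $\eta\leq 1/(8n)$, the final success probability satisfies $\mu_{c_{\mathrm{fin}}}\geq 1/2$. The same bound justifies the lower bounds $\widetilde\mu_c$ passed to the tomography calls.

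\textbf{Step 2: commuting the projections past later unitaries.} I need $K_{c_{\mathrm{fin}}}=P_{\leq}U_{\leq}$, where $P_{\leq}$ is the product of all projections and $U_{\leq}$ is the product of all unitaries. This holds because each $P_j$ acts on $Q_u$, and by items 2 and 3 of \Cref{lem:ttn-residual-subsystem-invariant} these qudits are never touched again. Every later unitary acts on some $S_{u'}$ consisting only of unprojected qudits, so it is disjoint from $Q_u$ and commutes with $P_j$.

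This is the step I expect to need the most care. Unlike the MPS case, the steps are ordered by depth and not along a line. The point to check is that disjointness of each later $S_{u'}$ from all earlier $Q_u$ follows from the invariant that only residual qudits are ever reused.

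\textbf{Step 3: comparing with an ideal state.} Define
\[
|\psi^\star\rangle=U_{\leq}^\dagger P_{\leq}U_{\leq}|\psi\rangle/\sqrt{\mu_{c_{\mathrm{fin}}}}.
\]
Then $|\langle\psi|\psi^\star\rangle|^2=\mu_{c_{\mathrm{fin}}}$, so the trace distance from $|\psi\rangle$ is $2\sqrt{1-\mu_{c_{\mathrm{fin}}}}\leq 2\sqrt{2n\eta}$. The postselected branch equals $|0\rangle_{Q_c}\otimes|\varphi\rangle_{R_r}$ times $\sqrt{\mu_{c_{\mathrm{fin}}}}$, because every qudit outside $R_r$ lies in some $Q_u$ (item 2 applied at the root). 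The final call learns $\tau=\mu_{c_{\mathrm{fin}}}|\varphi\rangle\langle\varphi|$ to error $\eta$. The variational top-eigenvector argument from the MPS proof then gives $|\langle\hat\varphi|\varphi\rangle|^2\geq 1-2\eta/\mu_{c_{\mathrm{fin}}}\geq 1-4\eta$, hence a trace distance of at most $4\sqrt\eta$ between the two residual states. This bound is preserved by tensoring with $|0\rangle_{Q_c}$ and applying $U_{\leq}^\dagger$.

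\textbf{Conclusion.} The triangle inequality gives a total error of at most $2\sqrt{2n\eta}+4\sqrt\eta$. Substituting $\eta=\epsilon^2/(128n)$ gives $\epsilon/4+\epsilon/(2\sqrt{2n})\leq\epsilon$, and this choice also satisfies $\eta\leq 1/(8n)$. The edge case with no nontrivial steps reduces to Step 3 alone.
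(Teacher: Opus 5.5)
Your proposal is correct and follows the paper's proof essentially step for step. It uses the inductive bound $\mu_t\geq 1-2t\eta$ from the TTN rank bound and \eqref{eq:disentangling-postselection-norm-bound}, the identity $K_c=P_{Q_c}U_{\leq c}$ (because projected qudits never re-enter a later $S_{u'}$), and the comparison with the ideal postselected state followed by the variational top-eigenvector bound on $R_r$. The only point to state explicitly in Step 3 is $Q_c\cap R_r=\emptyset$, in addition to $[n]\setminus R_r\subseteq Q_c$; this follows from the same invariant you use in Step 2.
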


\begin{proof}
Index the nontrivial disentangling steps by $t=1,\ldots,c$ in the order in which they are
performed, and let
\begin{equation}
    \mu_t=\|K_t|\psi\rangle\|^2
\end{equation}
be the true cumulative success probability after the first $t$ such steps.  We first show
that
\begin{equation}
    \mu_t\geq 1-2t\eta
\end{equation}
for all $t$.  The claim is trivial for $t=0$.  Suppose it holds before the next disentangling
step, which processes some vertex $u$.  By \Cref{lem:ttn-residual-rank-bound}, the
ideal subnormalised reduced state on $S_u$ has rank at most $\chi$.  Since the
corresponding tomography call succeeds to trace-norm error at most $\eta$,
\Cref{cor:approximate-disentangling-subnormalised} implies that the projection chosen
by the algorithm decreases the branch weight by at most $2\eta$.  Hence
\begin{equation}
    \mu_{t+1}\geq \mu_t-2\eta\geq 1-2(t+1)\eta.
\end{equation}
There are at most $n-1$ nontrivial disentangling steps, so
\begin{equation}
    \mu_c\geq1-2n\eta\geq\frac12.
\end{equation}

Let
\begin{equation}
    U_{\leq c}=U_cU_{c-1}\cdots U_1.
\end{equation}
Once a qudit has been projected, it is never included in any subsequent residual
subsystem. Hence a unitary applied at a subsequent step acts trivially on every
previously projected qudit, so each projection commutes with all unitaries that
follow it. Thus
\begin{equation}
    K_c=P_{Q_c}U_{\leq c},
    \qquad
    P_{Q_c}=|0^{|Q_c|}\rangle\langle0^{|Q_c|}|_{Q_c}\otimes I_{R_r},
\end{equation}
where $Q_c$ is the final value of the projected set maintained by \Cref{alg:LearnTTN}.
Define the exact postselected reconstruction
\begin{equation}
    |\psi^\star\rangle    =    U_{\leq c}^\dagger\frac{P_{Q_c}U_{\leq c}|\psi\rangle}{\sqrt{\mu_c}}.
\end{equation}
Then
\begin{equation}
    |\langle\psi|\psi^\star\rangle|^2=\mu_c,
\end{equation}
and
\begin{equation}
    \left\|
        |\psi\rangle\langle\psi|
        -
        |\psi^\star\rangle\langle\psi^\star|
    \right\|_1
    =
    2\sqrt{1-\mu_c}
    \leq
    2\sqrt{2n\eta}.
\end{equation}

By the second item of \Cref{lem:ttn-residual-subsystem-invariant}, applied at
the root, every
qudit in $V_r\setminus R_r=[n]\setminus R_r$ has been projected onto $|0\rangle$ and
hence lies in $Q_c$, so that $[n]\setminus R_r\subseteq Q_c$.  Conversely, once a qudit has been projected, it cannot belong to $R_r$. Hence $Q_c\cap R_r=\emptyset, Q_c\cup R_r=[n]$. Hence, the normalised postselected state has the form
\begin{equation}
    \frac{P_{Q_c}U_{\leq c}|\psi\rangle}{\sqrt{\mu_c}}
    =
    |0^{|Q_c|}\rangle_{Q_c}\otimes|\varphi\rangle_{R_r}
\end{equation}
for some pure state $|\varphi\rangle$ on $R_r$.  The ideal final subnormalised residual
state is
\begin{equation}
    \tau
    =
    \operatorname{tr}_{[n]\setminus R_r}
    \left[
        K_c|\psi\rangle\langle\psi|K_c^\dagger
    \right]
    =
    \mu_c|\varphi\rangle\langle\varphi|.
\end{equation}
The final tomography call outputs $\hat\tau$ with
\begin{equation}
    \|\tau-\hat\tau\|_1\leq\eta.
\end{equation}
Let $|\hat\varphi\rangle$ be a top eigenvector of $\hat\tau$.  By the variational
characterisation of the largest eigenvalue,
\begin{align}
    \langle\hat\varphi|\tau|\hat\varphi\rangle
    &\geq
    \langle\hat\varphi|\hat\tau|\hat\varphi\rangle-\eta \\
    &\geq
    \langle\varphi|\hat\tau|\varphi\rangle-\eta \\
    &\geq
    \langle\varphi|\tau|\varphi\rangle-2\eta \\
    &=
    \mu_c-2\eta.
\end{align}
Since $\tau=\mu_c|\varphi\rangle\langle\varphi|$, this implies
\begin{equation}
    |\langle\hat\varphi|\varphi\rangle|^2
    \geq
    1-\frac{2\eta}{\mu_c}
    \geq
    1-4\eta.
\end{equation}
Therefore,
\begin{equation}
    \left\|
        |\varphi\rangle\langle\varphi|
        -
        |\hat\varphi\rangle\langle\hat\varphi|
    \right\|_1
    \leq
    4\sqrt\eta.
\end{equation}
Tensoring with a fixed state and applying the unitary $U_{\leq c}^\dagger$ preserves trace norm, so the same bound holds between
$|\psi^\star\rangle$ and the output $|\hat\psi\rangle$.  The triangle inequality gives
\begin{equation}
    \left\|
        |\psi\rangle\langle\psi|
        -
        |\hat\psi\rangle\langle\hat\psi|
    \right\|_1
    \leq
    2\sqrt{2n\eta}+4\sqrt\eta.
\end{equation}
Finally, substituting $\eta=\epsilon^2/(128n)$ gives a bound at most $\epsilon$ for
$\epsilon\in(0,1]$.
\end{proof}

The same argument gives the sample complexity of TTN state tomography.

\begin{theorem}[TTN state tomography]\label{thm:ttn-tomography}
Let $T$ be a tree on $n\geq1$ vertices with maximum degree $\Delta=\Delta(T)$, and let
$|\psi\rangle\in\mathcal S_d(T,\chi)$ be unknown, where $d\geq2$.  Given $T$, $d$, and
$\chi$, and implementing $\operatorname{Sub\text{-}Tomography}$ with the sample-optimal
rank-constrained primitive from \Cref{subsec:subnormalised-tomography},
\Cref{alg:LearnTTN} learns $|\psi\rangle$ to trace norm error at most $\epsilon$ with
success probability at least $1-\delta$ using
\begin{equation}
    O\left(
        \frac{n^3}{\epsilon^4}
        \left(
            \chi d^{1+b\kappa}
            +
            \log(n/\delta)
        \right)
    \right)
\end{equation}
copies, where $\kappa=\max\{1,\lceil\log_d\chi\rceil\}$ and
$b=\max_{u\in V(T)}|\operatorname{ch}(u)|$ is computed after the rooting chosen in the
algorithm.  In particular,
\begin{equation}
    \chi d^{1+b\kappa}\leq (d\chi)^{\max\{2,\Delta\}},
\end{equation}
and, for $n\geq3$,
\begin{equation}
    \chi d^{1+b\kappa}\leq (d\chi)^\Delta.
\end{equation}
Thus, for $n\geq3$, the copy complexity is
\begin{equation}
    O\left(
        \frac{n^3}{\epsilon^4}
        \left(
            (d\chi)^\Delta
            +
            \log(n/\delta)
        \right)
    \right).
\end{equation}
The classical runtime is polynomial in $n$, $d^{1+b\kappa}$, $\chi$, $1/\epsilon$, and
$\log(1/\delta)$, assuming the rank constrained tomography procedure and the relevant
eigendecompositions are implemented in time polynomial in their input dimension.
\end{theorem}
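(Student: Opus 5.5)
The proof will follow the template of \Cref{thm:learn-mps-sample-complexity}, with \Cref{prop:learn-ttn-correctness} playing the role of \Cref{prop:learn-mps-correctness}. We fix $\eta=\epsilon^2/(128n)$ and allocate failure probability $\delta/(2n)$ to each call of $\operatorname{Sub\text{-}Tomography}$. By \Cref{lem:ttn-residual-subsystem-invariant} there are at most $n-1$ nontrivial disentangling steps and one final residual call, so there are at most $n$ calls in total. A union bound then shows that all calls succeed with probability at least $1-\delta$. On this event, \Cref{prop:learn-ttn-correctness} gives trace-norm error at most $\epsilon$, since $\eta\le 1/(8n)$ holds for $\epsilon\le1$.

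For the copy count, we apply \Cref{cor:subnormalised-tomography-bounded-mu} with $\mu_u=1$ and $\mu_\ell=\widetilde\mu_c\ge 1-2n\eta\ge1/2$. The true success probabilities dominate these lower bounds by the induction in \Cref{prop:learn-ttn-correctness}. The dependence on $\mu$ therefore costs only constants. Each local call acts on $S_u$, which has dimension $d^{|S_u|}\le d^{1+b\kappa}$ by \Cref{lem:ttn-residual-subsystem-invariant} and rank at most $\chi$ by \Cref{lem:ttn-residual-rank-bound}. One call thus uses $O((\chi d^{1+b\kappa}+\log(n/\delta))/\eta^2)$ copies. The final call has rank one and dimension at most $d^\kappa$, so it is of lower order. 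Multiplying by $n$ calls and substituting $\eta$ gives the claimed $O\big(\tfrac{n^3}{\epsilon^4}(\chi d^{1+b\kappa}+\log(n/\delta))\big)$.

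The remaining work is the parameter inequality, which is the only slightly fiddly step. We use $d^\kappa\le d\chi$, which holds because $\kappa=1$ gives $d\le d\chi$, and $\kappa=\lceil\log_d\chi\rceil$ gives $d^\kappa<d\chi$. Hence $\chi d^{1+b\kappa}\le \chi d(d\chi)^b\le (d\chi)^{b+1}$. For $n\ge3$, \Cref{lem:ttn-residual-subsystem-invariant} gives $b\le\Delta-1$, so the bound is $(d\chi)^\Delta$. For $n\le2$ we have $b\le1$, so the bound is $(d\chi)^2$. In all cases this is at most $(d\chi)^{\max\{2,\Delta\}}$, using $d\chi\ge 2$.

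The runtime claim follows because every classical operation is performed on matrices of dimension at most $d^{1+b\kappa}$: the tomography post-processing, the top-$\chi$ eigendecomposition, and completing the isometry to a unitary. Storing and composing the $U_t$ as local gates is polynomial in $n$ and in this dimension. We expect no real obstacle, since the structural work is already contained in the lemmas. The main care point is to check that the success-probability lower bounds $\widetilde\mu_c$ passed to the algorithm stay at least $1/2$ throughout, which follows from $c\le n-1$.
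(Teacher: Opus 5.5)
Your proposal is correct and follows the paper's proof essentially step by step. Like the paper, you fix $\eta=\epsilon^2/(128n)$ via \Cref{prop:learn-ttn-correctness}, union-bound over at most $n$ calls, and use $\widetilde\mu_t\geq 1/2$ so that the success-probability dependence costs only a constant. You then bound each local call by $|S_u|\leq 1+b\kappa$ and rank $\chi$, and simplify using $d^\kappa\leq d\chi$ and $b+1\leq\max\{2,\Delta\}$. The one detail the paper spells out and you leave implicit is that the final rank-one call is dominated because $d^{|R_r|}\leq d^\kappa\leq d\chi\leq \chi d^{1+b\kappa}$. This also covers the $n=1$ case, where there are no local calls.
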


\begin{proof}
By \Cref{prop:learn-ttn-correctness}, it suffices to take
\begin{equation}
    \eta=\frac{\epsilon^2}{128n}.
\end{equation}
There are at most $n-1$ nontrivial disentangling calls and one final residual tomography
call, hence at most $n$ calls to $\operatorname{Sub\text{-}Tomography}$ in total.  Since
each call is assigned failure probability $\delta/(2n)$, a union bound gives total
failure probability at most $\delta/2\leq\delta$.

For this choice of $\eta$, all lower estimates on success probability used by the algorithm satisfy
\begin{equation}
    \widetilde\mu_t=1-2t\eta\geq\frac12
\end{equation}
for every $t\leq n$. The dependence on the lower bound on success probability in \Cref{cor:subnormalised-tomography-bounded-mu} thus contributes only a constant factor.
Each tomography call used to construct a disentangling unitary acts on a subsystem
$S_u$ satisfying
\begin{equation}
    |S_u|\leq 1+b\kappa
\end{equation}
by \Cref{lem:ttn-residual-subsystem-invariant}. The corresponding reduced state has
rank at most $\chi$ by \Cref{lem:ttn-residual-rank-bound}. Hence one such call uses
\begin{equation}
    O\left(
        \frac{\chi d^{1+b\kappa}+\log(n/\delta)}{\eta^2}
    \right)
\end{equation}
copies.  Multiplying by at most $n$ calls and substituting the value of $\eta$ yields
\begin{equation}
    O\left(
        \frac{n^3}{\epsilon^4}
        \left(
            \chi d^{1+b\kappa}
            +
            \log(n/\delta)
        \right)
    \right).
\end{equation}
The final residual tomography call has rank one and acts on $|R_r|\leq\kappa$ qudits.
If $n=1$, then $R_r=\{r\}$, so this call has dimension $d$ and is bounded by the above expression because $\chi\geq1$.  If $n\geq2$, then $b\geq1$ and
$d^\kappa\leq\chi d^{1+b\kappa}$, so the final call is again no larger than the bound.

It remains to simplify the degree dependence.  Since $d^\kappa\leq d\chi$, we have
\begin{equation}
    \chi d^{1+b\kappa}
    \leq
    \chi d(d\chi)^b
    =
    (d\chi)^{b+1}.
\end{equation}
By \Cref{lem:ttn-residual-subsystem-invariant}, $b+1\leq\max\{2,\Delta\}$ for all
$n\geq1$, and $b+1\leq\Delta$ for $n\geq3$.  This proves the simplified bounds.  The
runtime statement follows because all linear-algebra operations are performed
on matrices of dimension at most $d^{1+b\kappa}$, up to polynomial overhead in the number of vertices and the accuracy parameters.
\end{proof}

\begin{remark}[Relation to the MPS learner]
\label{rem:relation-to-mps-learner}
If $T$ is a path and $n\geq3$, the rooting chosen by \Cref{alg:LearnTTN} is an endpoint
rooting, so $b=1$.  Every tomography call is on a subsystem of size at most
$\kappa+1$, and the copy bound becomes
\begin{equation}
    O\left(
        \frac{n^3}{\epsilon^4}
        \left(
            \chi d^{\kappa+1}
            +
            \log(n/\delta)
        \right)
    \right),
\end{equation}
matching the sequential MPS bound in \Cref{thm:learn-mps-sample-complexity}.  Thus
\Cref{alg:LearnTTN} is the tree analogue of \Cref{alg:learn-mps}, with the path order
replaced by the traversal from leaves to root.
\end{remark}

\begin{remark}
    Vertices at the same depth have disjoint rooted subtrees, and thus the
    operations on the corresponding disjoint sets of qudits commute. This means that parts of \Cref{alg:LearnTTN} can be parallellised.    \Cref{alg:LearnTTN} is nevertheless stated sequentially,
    because this makes the cumulative postselection map $K_c$ and the estimates of
    the success probability identical to the MPS analysis in
    \Cref{subsec:mps-tomography}.
\end{remark}

\subsection{TNS tomography of general graphs via entanglement rerouting}\label{subsec:general-TNS-tomography-black-box}

We now combine the representation results of \Cref{section:rerouting} with the MPS
and TTN learners. Rerouting shows that a state in $\mathcal S_d(G,\chi)$ also
belongs to a simpler TNS class, with larger bond or local physical dimension. We then apply the learner for that class. The input consists only of $G$,
$\chi$, and the promise that such a representation exists. An explicit TN
description of the unknown state is not required.

\paragraph{Tomography via reduction to an MPS.}

\begin{theorem}[Black-box tomography via an MPS representation]
\label{thm:blackbox-mps-learning}
Let $G=([n],E)$ be a graph, and let $\pi:[n]\to[n]$ be a linear ordering of the vertices.  Set
\begin{equation}
    c=\cw(\pi),
    \qquad
    \kappa_c=
    \max\left\{
        1,
        \left\lceil\log_d(\chi^c)\right\rceil
    \right\}.
\end{equation}
Then an unknown state $|\psi\rangle\in\mathcal S_d(G,\chi)$ can be learned to trace norm error at most $\epsilon$ with success probability at least $1-\delta$ using
\begin{equation}
    O\left(
        \frac{n^3}{\epsilon^4}
        \left(
            \chi^c d^{\kappa_c+1}
            +
            \log(n/\delta)
        \right)
    \right)
\end{equation}
copies.
In particular, if $\pi$ is chosen to have optimal cutwidth, then
\begin{equation}
    O\left(
        \frac{n^3}{\epsilon^4}
        \left(
            d^2\chi^{2\cw(G)}
            +
            \log(n/\delta)
        \right)
    \right)
\label{eq:blackbox-mps-sequential-simplified}
\end{equation}
copies suffice.

If the ordering $\pi$ is not supplied, an optimal cutwidth ordering can first be
computed in time $2^{O(\cw(G)^2)}n$.  Apart from this graph preprocessing step, the
classical postprocessing time of the sequential learner is polynomial in
$n$, $d^{\kappa_c+1}$, $\chi^c$, $1/\epsilon$, and $\log(1/\delta)$, assuming the local rank-constrained tomography primitive and eigendecompositions are implemented in time polynomial in their input dimension.
\end{theorem}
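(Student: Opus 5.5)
The plan is to reduce to the sequential MPS learner through the rerouting inclusion of \Cref{thm:reroute-to-mps}. Let $|\psi\rangle\in\mathcal S_d(G,\chi)$ be the unknown state and fix the ordering $\pi$. By \Cref{thm:reroute-to-mps}, $|\psi\rangle\in\mathcal S_d(P_n^\pi,\chi^{c})$ with $c=\cw(\pi)$. The physical subsystems are the original qudits, relabelled along the path $v_1-\cdots-v_n$. We only use the existence of this representation, so no tensor description is needed. The learner relabels the qudits according to $\pi$ and runs \Cref{alg:learn-mps} with bond-dimension parameter $\chi_c:=\chi^c$. The block length in that algorithm is then $\max\{1,\lceil\log_d\chi_c\rceil\}=\kappa_c$, by definition.

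Next I will apply \Cref{thm:learn-mps-sample-complexity} with $\chi$ replaced by $\chi^c$. Its hypotheses are the rank bound of \Cref{lem:mps-active-block-rank} and the correctness statement \Cref{prop:learn-mps-correctness}, and both use only membership in $\mathcal S_d(P_n,\chi_c)$. The theorem then gives trace-norm error $\epsilon$ with probability $1-\delta$ using
\[
O\left(\frac{n^3}{\epsilon^4}\left(\chi^c d^{\kappa_c+1}+\log(n/\delta)\right)\right)
\]
copies, which is the first claim. For the simplified bound, I take $\pi$ optimal, so that $c=\cw(G)$, and use $d^{\kappa_c}\le d\chi^c$. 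This gives $\chi^c d^{\kappa_c+1}\le d^2\chi^{2c}$. The inequality $d^{\kappa_c}\le d\chi^c$ holds in both cases: if $\kappa_c=1$ it reads $d\le d\chi^c$, and otherwise $\lceil\log_d\chi^c\rceil\le\log_d\chi^c+1$.

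The runtime claims follow in two parts. The preprocessing step is the cutwidth algorithm cited in \Cref{subsec:graph-theory}, which runs in time $2^{O(\cw(G)^2)}n$. The postprocessing runtime comes from the runtime clause of \Cref{thm:learn-mps-sample-complexity}, with $\chi$ replaced by $\chi^c$. Relabelling the qudits costs $O(n)$.

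The only delicate point is the edge case $c=0$, i.e.\ when $G$ has no edges or the ordering has empty prefix cuts. In that case the path bond dimension is $1$, and the MPS learner with $\chi_c=1$ and $\kappa_c=1$ still applies, so the bound holds unchanged. Apart from this, the argument is a direct composition of two earlier theorems, and I expect no real obstacle.
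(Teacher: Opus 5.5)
Your proposal is correct and follows the same route as the paper. The paper likewise uses \Cref{thm:reroute-to-mps} to place $|\psi\rangle$ in $\mathcal S_d(P_n^\pi,\chi^c)$, applies \Cref{thm:learn-mps-sample-complexity} with bond dimension $\chi^c$ after relabelling (and relabels the output back), and simplifies the bound via $d^{\kappa_c}\le d\chi^c$. Your separate treatment of $c=0$ is a harmless addition; since every edge crosses some prefix cut, this case occurs exactly when $G$ is edgeless.
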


\begin{proof}
By \Cref{thm:reroute-to-mps}, the ordering $\pi$ gives a path $P_n^\pi$ such that
\begin{equation}
    |\psi\rangle\in\mathcal S_d(G,\chi)
    \quad\Longrightarrow\quad
    |\psi\rangle\in\mathcal S_d(P_n^\pi,\chi^c).
\end{equation}
After relabelling the qudits according to $\pi$, the conditions of the MPS learner
are satisfied with path $P_n^\pi$ and bond dimension $\chi^c$.  Applying
\Cref{thm:learn-mps-sample-complexity} with bond dimension $\chi^c$ gives a sufficient number of
\begin{equation}
    O\left(
        \frac{n^3}{\epsilon^4}
        \left(
            \chi^c d^{\kappa_c+1}
            +
            \log(n/\delta)
        \right)
    \right)
\end{equation}
copies.  Since $d^{\kappa_c}\leq d\chi^c$, we have
\begin{equation}
    \chi^c d^{\kappa_c+1}
    \leq
    d^2\chi^{2c}.
\end{equation}
Choosing $c=\cw(G)$ proves \eqref{eq:blackbox-mps-sequential-simplified}.
The runtime statement follows from the runtime
of the cutwidth ordering algorithm in \Cref{thm:reroute-to-mps} and the runtime
statement in \Cref{thm:learn-mps-sample-complexity}.  The output is finally relabelled
back to the original ordering of the qudits.
\end{proof}

\begin{remark}[Logarithmic-depth variant]
The reduction in \Cref{thm:blackbox-mps-learning} is modular in the choice of MPS
learner.  If circuit depth is a concern, one may instead invoke the
logarithmic-depth MPS learner of \cite{lin2025efficientclosestmatrixproduct}
(see \Cref{rem:lch-mps-black-box}) with bond dimension $\chi^c$.  This yields a
reconstruction circuit of depth $O(\log n)$, at the cost of a worse copy
complexity of
\begin{equation}
    O\left(
        \frac{
            d^4\chi^{6c}n^3\log(n/\delta)
        }{
            \max\{1,c\log_d\chi\}^3\epsilon^8
        }
    \right).
\end{equation}
\end{remark}

\paragraph{Tomography via reduction to a TTN.}

\begin{theorem}[Black-box tomography via a tree-cut decomposition]
\label{thm:blackbox-tcw-learning}
Let $G=([n],E)$ be a graph, and let $(\mathcal{T},\mathcal X)$ be a tree-cut decomposition of $G$ of width $k\geq1$.  Let $(\hat{\mathcal{T}},\hat{\mathcal X})$ be the tree-indexed
partition obtained from $(\mathcal{T},\mathcal X)$ after removing empty bags as in
\Cref{lem:remove-empty-bags}, and write
\begin{equation}
    m=|V(\hat{\mathcal{T}})|.
\end{equation}
Root $\hat{\mathcal{T}}$ as in \Cref{alg:LearnTTN}, and let
\begin{equation}
    b_{\hat{\mathcal{T}}}=\max_{u\in V(\hat{\mathcal{T}})} |\operatorname{ch}(u)|
\end{equation}
be the maximum number of children in this rooting.  Set
\begin{equation}
    \kappa=\max\{1,\lceil\log_d\chi\rceil\}.
\end{equation}
Then an unknown state $|\psi\rangle\in\mathcal S_d(G,\chi)$ can be learned to trace norm error at most $\epsilon$ with success probability at least $1-\delta$ using
\begin{equation}
    O\left(
        \frac{m^3}{\epsilon^4}
        \left(
            \chi^k d^{k(1+b_{\hat{\mathcal{T}}}\kappa)}
            +
            \log(m/\delta)
        \right)
    \right)
\end{equation}
copies.  Consequently,
\begin{equation}
    O\left(
        \frac{n^3}{\epsilon^4}
        \left(
            (d\chi)^{k\max\{2,\Delta(\hat{\mathcal{T}})\}}
            +
            \log(n/\delta)
        \right)
    \right)
\label{eq:blackbox-tcw-simplified}
\end{equation}
copies suffice.  If $m\geq3$, this simplifies further to
\begin{equation}
    O\left(
        \frac{n^3}{\epsilon^4}
        \left(
            (d\chi)^{k\Delta(\hat{\mathcal{T}})}
            +
            \log(n/\delta)
        \right)
    \right).
\label{eq:blackbox-tcw-simplified-mgeq-three}
\end{equation}

If no tree-cut decomposition is supplied, one may first compute a decomposition of width
$k\leq 2\tcw(G)$ in time $2^{O(\tcw(G)^2\log\tcw(G))}n^2$.  For the tree $\hat{\mathcal{T}}$
obtained from this computed decomposition, \eqref{eq:blackbox-tcw-simplified} becomes
\begin{equation}
    O\left(
        \frac{n^3}{\epsilon^4}
        \left(
            (d\chi)^{2\tcw(G)\max\{2,\Delta(\hat{\mathcal{T}})\}}
            +
            \log(n/\delta)
        \right)
    \right).
\label{eq:blackbox-tcw-approx}
\end{equation}
The classical postprocessing time, after the tree-cut decomposition has been chosen, is
polynomial in $m$, $d^{k(1+b_{\hat{\mathcal{T}}}\kappa)}$, $\chi^k$, $1/\epsilon$, and
$\log(1/\delta)$, assuming the local tomography primitive and eigendecompositions are
implemented in time polynomial in their input dimension.
\end{theorem}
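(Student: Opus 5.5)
The plan is to reduce to \Cref{thm:ttn-tomography} via \Cref{thm:reroute-to-ttn}, in the same way that \Cref{thm:blackbox-mps-learning} reduces to the MPS learner.

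\textbf{Step 1: the grouped TTN class.} By \Cref{thm:reroute-to-ttn}, after grouping the qudits of each bag $\hat X_t$ and padding each grouped space to dimension $d^k$, every $|\psi\rangle\in\mathcal S_d(G,\chi)$ lies in $\mathcal S_{d^k}(\hat{\mathcal T},\chi^k)$. This is a TTN on $m$ sites with physical dimension $D:=d^k$ and bond dimension $X:=\chi^k$.

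\textbf{Step 2: implementing the padding.} The padding is an isometric embedding, so we apply \Cref{alg:LearnTTN} directly to the grouped systems. I would argue that this needs no physical ancillas. All tomography calls are on unions of grouped sites, and their supports lie inside the true subspace $(\mathbb C^d)^{\otimes|\hat X_t|}$ of each padded space. The rank bound of \Cref{lem:ttn-residual-rank-bound} only uses the cut structure of $\hat{\mathcal T}$.

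The main obstacle is making the learner respect the grouped registers. The disentangling unitaries must act on the real qudits, and the residual registers $R_u$ must consist of whole grouped sites.

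\begin{itemize}
\item First attempt: run \Cref{alg:LearnTTN} at the level of grouped sites. Then $\kappa_D=\max\{1,\lceil\log_{D}X\rceil\}=1$, because $D^1=d^k\ge\chi^k$ whenever $d\geq\chi$, but in general $\lceil\log_{d^k}\chi^k\rceil=\lceil\log_d\chi\rceil=\kappa$.
\item If residuals are measured in grouped sites, each active subsystem has dimension at most $D^{1+b_{\hat{\mathcal T}}\kappa}=d^{k(1+b_{\hat{\mathcal T}}\kappa)}$ and rank at most $X=\chi^k$.
\item This is exactly the stated quantity $\chi^k d^{k(1+b_{\hat{\mathcal T}}\kappa)}$. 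Projecting a grouped site onto $|0\rangle$ means projecting its padded register onto a fixed vector, which can be taken to be $|0^{|\hat X_t|}\rangle$.
\end{itemize}

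Since the number of sites is $m$, \Cref{thm:ttn-tomography} then gives
\[
O\left(\frac{m^3}{\epsilon^4}\left(\chi^kd^{k(1+b_{\hat{\mathcal T}}\kappa)}+\log(m/\delta)\right)\right)
\]
copies. Its correctness and runtime statements transfer verbatim.

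\textbf{Step 3: simplifications.}

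\begin{itemize}
\item Use $m\le n$ from \Cref{thm:reroute-to-ttn}.
\item Use $d^\kappa\le d\chi$. This gives $\chi^k d^{k(1+b\kappa)}\le \chi^kd^k(d\chi)^{kb}=(d\chi)^{k(b+1)}$.
\item By \Cref{lem:ttn-residual-subsystem-invariant} applied to $\hat{\mathcal T}$, $b+1\le\max\{2,\Delta(\hat{\mathcal T})\}$, and $b+1\le\Delta(\hat{\mathcal T})$ when $m\ge3$.
\item Substituting $k\le 2\tcw(G)$ for the decomposition computed by \cite{Kim2018}, together with its runtime, gives \eqref{eq:blackbox-tcw-approx}.
\item The output is finally mapped back through the canonical identification of grouped and ungrouped Hilbert spaces. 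That map is unitary on the true subspace, so trace-norm error is preserved.
\end{itemize}
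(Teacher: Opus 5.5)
Your Steps 1 and 3 are the paper's argument. \Cref{thm:reroute-to-ttn} places the grouped, padded state in $\mathcal S_{d^k}(\hat{\mathcal T},\chi^k)$. The residual parameter is $\max\{1,\lceil\log_{d^k}\chi^k\rceil\}=\kappa$, \Cref{thm:ttn-tomography} gives the first bound, and $d^\kappa\leq d\chi$ together with $b+1\leq\max\{2,\Delta(\hat{\mathcal T})\}$ gives the rest.

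What fails is your claim in Step~2 that the padding needs no physical ancillas. At a nontrivial step, $\widetilde U_u$ must map the $\chi^k$-dimensional space $W_u$ into $|0\rangle_{Q_u}\otimes\mathcal H_{R_u}$. The capacity $d^{k\kappa}\geq\chi^k$ of $\mathcal H_{R_u}$ belongs to the \emph{padded} registers. The physical qudits in $R_u$ span only dimension $d^{\sum_{t\in R_u}|\hat X_t|}$, which can be as small as $d^{\kappa}<\chi^k$.

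Here is a concrete instance. Let $G$ be the $5$-cycle $1-2-3-4-5-1$ with $d=\chi=2$. Take the decomposition on the path $t_1-t_2-t_3$ with bags $\{1,5\}$, $\{2\}$, $\{3,4\}$. Both adhesions are $2$ and the torso sizes are $2,1,2$, so $k=2$ and $\kappa=1$. Root at the leaf $t_1$. The step at $t_2$ must keep $R_{t_2}=\{t_2\}$, i.e.\ the single physical qubit $2$. Now take the state with Bell pairs on $\{1,2\}$ and $\{4,5\}$ and qubit $3$ in $|0\rangle$; it lies in $\mathcal S_2(G,2)$. Its reduced state on qubits $\{2,3,4\}$ has rank $4$. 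No unitary on these three qubits maps a $4$-dimensional support into $\mathbb C^2_{2}\otimes|00\rangle_{34}$.

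So the padded directions must exist physically. Once a padded unitary has acted, the branch also leaves your ``true subspace''. That undermines two further claims in your proposal: that later tomography supports stay inside it, and that the output can be mapped back unitarily.

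The repair is what the paper does: tensor each bag with ancilla qudits in $|0\rangle$ so that it has exactly $k$ qudits. The state $|\psi\rangle\otimes|0\rangle_{\mathrm{anc}}$ still lies in $\mathcal S_{d^k}(\hat{\mathcal T},\chi^k)$, so \Cref{thm:ttn-tomography} applies with the same copy count. The learner's output need not lie in the ancilla-$|0\rangle$ subspace, so return its partial trace over the ancillas. Partial trace is contractive in trace norm, so the error stays at most $\epsilon$, and a mixed output is permitted by \Cref{def:state-tomography}. Alternatively, count residual registers in physical qudits, as the direct learner of \Cref{subsec:direct-graph-tomography} does.
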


\begin{proof}
By \Cref{thm:reroute-to-ttn}, grouping the qudits according to the nonempty bags of
$\hat{\mathcal X}$ gives a TTN representation on the tree $\hat{\mathcal{T}}$ with local
physical dimension at most
\begin{equation}
    d'=d^k
\end{equation}
and bond dimension at most
\begin{equation}
    \chi'=\chi^k.
\end{equation}
The number of grouped physical subsystems is $m\leq n$.  Two points ensure that
the TTN learner below applies to the grouped state.  First, grouping is merely a
reinterpretation of the tensor factors under the canonical identification
$\bigotimes_{v\in[n]}\mathbb C^d\cong\bigotimes_{t\in V(\hat{\mathcal{T}})}
(\mathbb C^d)^{\otimes|\hat X_t|}$, so the represented state is unchanged.
Second, bags of size smaller than $k$ give local dimensions smaller than $d^k$.
Padding each local Hilbert space to dimension exactly $d^k$ by tensoring with
unused ancilla directions does not affect the state and lets us apply
\Cref{thm:ttn-tomography} with a uniform local dimension $d'=d^k$.

Apply \Cref{thm:ttn-tomography} to this grouped TTN.  The residual-size parameter for
local dimension $d'$ and bond dimension $\chi'$ is
\begin{equation}
    \max\{1,\lceil\log_{d'}\chi'\rceil\}
    =
    \max\{1,\lceil\log_{d^k}\chi^k\rceil\}
    =
    \max\{1,\lceil\log_d\chi\rceil\}
    =
    \kappa.
\end{equation}
The TTN tomography bound from \Cref{thm:ttn-tomography} gives
\begin{equation}
    O\left(
        \frac{m^3}{\epsilon^4}
        \left(
            \chi'(d')^{1+b_{\hat{\mathcal{T}}}\kappa}
            +
            \log(m/\delta)
        \right)
    \right)
    =
    O\left(
        \frac{m^3}{\epsilon^4}
        \left(
            \chi^k d^{k(1+b_{\hat{\mathcal{T}}}\kappa)}
            +
            \log(m/\delta)
        \right)
    \right).
\end{equation}
This proves the first displayed bound.

For the simplified bound, use $d^\kappa\leq d\chi$ to obtain
\begin{equation}
    \chi^k d^{k(1+b_{\hat{\mathcal{T}}}\kappa)}
    \leq
    \chi^k d^k(d\chi)^{kb_{\hat{\mathcal{T}}}}
    =
    (d\chi)^{k(b_{\hat{\mathcal{T}}}+1)}.
\end{equation}
By the rooting convention in \Cref{alg:LearnTTN},
\begin{equation}
    b_{\hat{\mathcal{T}}}+1\leq \max\{2,\Delta(\hat{\mathcal{T}})\},
\label{eq:children-degree-simplification}
\end{equation}
and, if $m\geq3$, then $b_{\hat{\mathcal{T}}}+1\leq\Delta(\hat{\mathcal{T}})$.  Since $m\leq n$,
this proves \eqref{eq:blackbox-tcw-simplified} and
\eqref{eq:blackbox-tcw-simplified-mgeq-three}.  The statement for a computed
approximate tree-cut decomposition follows from the approximation guarantee quoted in
\Cref{thm:reroute-to-ttn}.  Finally, the output of the TTN learner on the grouped
systems is interpreted as an $n$-qudit state by undoing the canonical grouping of the
physical indices.
\end{proof}

\begin{remark}[Choosing the better black-box reduction]
If both an ordering $\pi$ and a tree-cut decomposition are available, one can simply run
the cheaper of the two black-box reductions.  Using the simplified sequential bounds,
this gives the combined guarantee
\begin{equation}
    O\left(
        \frac{n^3}{\epsilon^4}
        \left(
            \min\left\{
                d^2\chi^{2\cw(\pi)},
                (d\chi)^{k\max\{2,\Delta(\hat{\mathcal{T}})\}}
            \right\}
            +
            \log(n/\delta)
        \right)
    \right),
\end{equation}
with $\cw(\pi)$ replaced by $\cw(G)$ if an optimal ordering is used.  If the
tree $\hat{\mathcal{T}}$ obtained after removing empty bags has at least three vertices,
i.e., $m\geq3$, the second term in the minimum can be replaced by
$(d\chi)^{k\Delta(\hat{\mathcal{T}})}$.
\end{remark}

\begin{remark}[Degree of the tree-cut decomposition]
The dependence on $\Delta(\hat{\mathcal{T}})$ is intentionally left explicit.  Tree-cutwidth
controls the size of the bags and the adhesions of the decomposition, but it does
not by itself bound the maximum degree of the decomposition tree, so
$\Delta(\hat{\mathcal{T}})$ has to be treated as a property of the chosen decomposition
rather than of $G$.  Whenever a decomposition of width $k$ with
$\Delta(\hat{\mathcal{T}})\leq f(G)$ can be constructed for some graph parameter $f$,
\eqref{eq:blackbox-tcw-simplified} immediately improves by replacing
$\Delta(\hat{\mathcal{T}})$ with $f(G)$.
\end{remark}

We note that both black-box reductions above rely only on the realisable learners
from \Cref{subsec:mps-tomography,sec:ttn-tomography}. No agnostic learning machinery is
needed under the promise $|\psi\rangle\in\mathcal S_d(G,\chi)$.

\subsection{Direct TNS tomography for graphs}\label{subsec:direct-graph-tomography}

The black-box procedures first replace the tensor-network graph by a path or
tree and then apply the corresponding MPS or TTN learner. This can lose useful information because the learning cost is estimated using the chosen target representation. We instead choose the active subsystems directly from the original graph while retaining the same iterative-disentangling mechanism.

The learner uses a sequence of subsets describing how the full vertex set is
assembled. After processing a subset, it retains only a small residual subsystem. A larger subset is processed using its fresh vertices and the
residual subsystems retained by its children. The rank of the resulting active state is bounded by the cut of the larger subset in the original graph.

The input to the learner is a sequence of subsets that describes how the vertex set is
assembled.  After a subset has been processed, all but a small residual subsystem of that
subset have been disentangled and projected onto $|0\rangle$.  When a larger subset is
processed, the learner only sees the fresh vertices introduced at that step together with
the residual subsystems left by its children.  The rank of the corresponding reduced
state is determined by the cut of the larger subset in the original graph.

\subsubsection{Learning sequences}

We begin by introducing learning sequences, the combinatorial objects used by our direct learner. A learning sequence records an order in which subsets of the vertex set are assembled into the full vertex set. Beyond the subsets $S_i$ themselves, a learning sequence specifies which previously assembled subsets are combined at each step, through the child sets $I_i$, and which vertices are introduced for the first time, through the fresh sets $F_i$. This determines which residual subsystems are combined at each step of the tomography algorithm in \Cref{alg:learn-tns-from-sequence}.

\begin{definition}[Learning sequence]\label{def:learning-sequence}
Let $G=(V,E)$ be a graph.  A learning sequence for $G$ is a
finite family
\begin{equation}
    \mathcal L=(S_i,I_i,F_i)_{i=1}^L
\end{equation}
with the following properties:
\begin{enumerate}
    \item $S_i\subseteq V$ is nonempty, $I_i\subseteq [i-1]$, and $F_i\subseteq V$ for
    every $i\in[L]$.
    \item For every $i\in[L]$,
    \begin{equation}
        S_i
        =
        F_i\,\sqcup\,\bigsqcup_{j\in I_i} S_j,
        \label{eq:learning-sequence-assembly}
    \end{equation}
    where $\sqcup$ indicates that the union is disjoint.
    \item The sets $F_1,\ldots,F_L$ are pairwise disjoint.
    \item The directed graph on vertex set $[L]$ with an edge $i\to j$ whenever
    $j\in I_i$ is a rooted tree with root $L$.
    \item $S_L=V$.
\end{enumerate}
The elements of $I_i$ are indices of earlier steps, not vertices of $G$. They
are the children of $i$ in the dependency tree. The set $F_i$ contains the
vertices first introduced at step $i$. We call $L$ the length of the learning
sequence.
\end{definition}

\begin{remark}[Analogy with contraction sequences]
\label{rem:learning-vs-contraction-sequences}
Learning sequences are the tomography analogue of the contraction sequences of
Markov and Shi~\cite{Markov_2008}. A contraction sequence records how
tensor-network components are merged during simulation. A learning sequence
records how subsystems are assembled and disentangled during tomography.
\Cref{lem:contraction-to-learning-sequence} constructs a learning sequence from
any contraction sequence.
\end{remark}

The condition that the dependency graph in \Cref{def:learning-sequence} is a rooted tree implies that the sets
$S_1,\ldots,S_L$ form a \emph{laminar family}, a standard notion from
combinatorial optimisation \cite{schrijver2003combinatorial}: any two sets in
the family are either disjoint or nested.

\begin{lemma}[Laminarity of learning sequences]
\label{lem:learning-sequence-laminarity}
Let $\mathcal L=(S_i,I_i,F_i)_{i=1}^L$ be a learning sequence.  For any two indices
$i,j\in[L]$, the sets $S_i$ and $S_j$ are either disjoint, or one contains the other.
In particular, if $j<i$, then either $S_j\subseteq S_i$ or
$S_j\cap S_i=\emptyset$.
\end{lemma}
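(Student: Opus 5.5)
The plan is to first establish a descendant characterisation: for every $i\in[L]$,
\begin{equation}
    S_i=\bigsqcup_{j\in D(i)}F_j,
\end{equation}
where $D(i)\subseteq[i]$ is the set of descendants of $i$ in the dependency tree, including $i$ itself. This follows by strong induction on $i$. For the base case, if $I_i=\emptyset$ then $S_i=F_i$. For the inductive step, unfold \eqref{eq:learning-sequence-assembly} and apply the hypothesis to each $j\in I_i$, using $j<i$. Disjointness of the union is inherited from the disjointness of the $F_j$ (property 3). Since every $S_i$ is nonempty, this identity also shows how membership of a vertex in $S_i$ is decided.

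Next, I use the tree structure. Every vertex $v\in V=S_L$ lies in exactly one fresh set $F_{j(v)}$, by property 3 together with the characterisation at $i=L$. By the characterisation, $v\in S_i$ if and only if $j(v)\in D(i)$, that is, if and only if $i$ is an ancestor of $j(v)$ (or equal to it) in the rooted tree of property 4.

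Now take two indices $i,j$ and suppose $S_i\cap S_j\neq\emptyset$. Pick $v$ in the intersection. Then both $i$ and $j$ lie on the unique path from $j(v)$ to the root $L$. In a rooted tree, two vertices on a common root path are comparable: one is an ancestor of the other. Say $j\in D(i)$. Then $D(j)\subseteq D(i)$, and the characterisation gives $S_j\subseteq S_i$.

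For the final clause, let $j<i$. A parent index always exceeds its children's indices because $I_i\subseteq[i-1]$, so every ancestor of $j$ has index at least $j$. Hence $i\in D(j)$ would force $i\le j$, which is impossible. The only comparable case left is therefore $j\in D(i)$, which yields $S_j\subseteq S_i$. Otherwise the sets are disjoint.

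The main obstacle is making sure the descendant characterisation is rigorous. In particular, each index must be a child of at most one parent, so that ancestry is a genuine tree relation. This is guaranteed by property 4, since in-degree at most one holds in a rooted tree, and it is also needed for the disjoint unions in \eqref{eq:learning-sequence-assembly} to be consistent. The remaining steps are routine.
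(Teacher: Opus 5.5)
Your proposal is correct and follows essentially the same route as the paper: both establish $S_i=\bigsqcup_{\ell\in D(i)}F_\ell$ by induction and then use that descendant sets in the rooted dependency tree are nested or disjoint. Your step of picking a vertex $v$ in the intersection and observing that $i$ and $j$ are both ancestors of the unique index $j(v)$ with $v\in F_{j(v)}$ is just an explicit justification of that nested-or-disjoint fact. The final clause (parents have larger indices, so $j<i$ cannot be an ancestor of $i$) is handled exactly as in the paper.
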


\begin{proof}
Let $D(i)$ be the set containing $i$ and all descendants of $i$ in the rooted tree on
$[L]$ from \Cref{def:learning-sequence}.  We claim that
\begin{equation}
    S_i=\bigsqcup_{\ell\in D(i)}F_\ell .
\end{equation}
This follows by induction from the defining identity
\(S_i=F_i\,\sqcup\,\bigsqcup_{j\in I_i}S_j\).  Since the dependency graph is a
rooted tree, for any two indices $i$ and $j$, the descendant sets $D(i)$ and $D(j)$ are
either disjoint, or one is contained in the other.  In the nested case, say
$D(j)\subseteq D(i)$, the union representation directly gives
\(S_j=\bigsqcup_{\ell\in D(j)}F_\ell\subseteq\bigsqcup_{\ell\in D(i)}F_\ell=S_i\).
In the disjoint case, $D(i)\cap D(j)=\emptyset$, the two unions run over disjoint
index sets, and since the sets $F_1,\ldots,F_L$ are pairwise disjoint, no vertex
can appear in both unions, so $S_i\cap S_j=\emptyset$.  If $j<i$, then
$j$ cannot be an ancestor of $i$, because edges always point from a larger index to a
smaller index.  Hence either $j$ is a descendant of $i$, giving $S_j\subseteq S_i$, or
the two sets are disjoint.
\end{proof}

Laminarity ensures that every earlier postselection map is local with respect
to the cut used at the current step.

For a learning sequence $\mathcal L$, define
\begin{equation}
    r_i
    :=
    \prod_{e\in \cut_G(S_i)} w(e),
    \qquad
    q_i
    :=
    \lceil \log_d r_i\rceil.
    \label{eq:ri-qi-learning-sequence}
\end{equation}
Here the empty product is equal to $1$. The quantity $r_i$ is the rank bound
from \Cref{claim:cut-rank-bound}. The quantity $q_i$ is the number of qudits
needed for a residual subsystem that can support a state of rank at most $r_i$,
as follows from \Cref{cor:tn-cut-disentangling}. In particular, $q_i=0$ when
$r_i=1$, in which case the residual register may be empty. We also define
\begin{equation}
    a_i
    :=
    |F_i|+
    \sum_{j\in I_i} q_j,
    \label{eq:ai-learning-sequence}
\end{equation}
and the weighted learning complexity of $\mathcal L$ by
\begin{equation}
    \lc_{d,w}(\mathcal L)
    :=
    \max_{i\in[L]} (a_i+q_i).
    \label{eq:weighted-learning-complexity}
\end{equation}
Thus $a_i$ bounds the size of the active subsystem at step $i$, while $q_i$
is the size of the residual subsystem retained after that step.

If one only wants to use the uniform bond dimension bound $w(e)\leq\chi$, then one may
replace $r_i$ and $q_i$ by
\begin{equation}
    \bar r_i=\chi^{|\cut_G(S_i)|},
    \qquad
    \bar q_i=
    \lceil |\cut_G(S_i)|\log_d\chi\rceil,
    \qquad
    \bar a_i
    :=
    |F_i|+
    \sum_{j\in I_i}\bar q_j.
    \label{eq:uniform-ri-qi-ai-learning-sequence}
\end{equation}
and define
\begin{equation}
    \lc_{d,\chi}(\mathcal L)
    :=
    \max_{i\in[L]}
    \left(
        \bar a_i+
        \bar q_i
    \right).
    \label{eq:uniform-learning-complexity}
\end{equation}
Then $\lc_{d,w}(\mathcal L)\leq \lc_{d,\chi}(\mathcal L)$.  Finally, define
\begin{equation}
    \lc_{d,\chi}(G)
    :=
    \min_{\mathcal L}\lc_{d,\chi}(\mathcal L),
\end{equation}
where the minimum is over all learning sequences for $G$.  Such a sequence always
exists: for instance, take $L=1$, $S_1=F_1=V$, and $I_1=\emptyset$.

\begin{remark}[Normalising learning sequences]
\label{rem:learning-sequence-normalisation}
The proof of \Cref{lem:learning-sequence-laminarity} gives
\begin{equation}
    S_i=\bigsqcup_{\ell\in D(i)}F_\ell.
\end{equation}
Applying this identity at $i=L$ shows that the fresh sets
partition the vertex set, so that $\sum_{i=1}^{L}|F_i|=n$.  Moreover, we may
assume without loss of generality that every step satisfies $F_i\neq\emptyset$
or $|I_i|\geq2$.  Indeed, a step with $F_i=\emptyset$ and $I_i=\{j\}$ satisfies
$S_i=S_j$.  Deleting this step, while replacing $i$ by $j$ in the child set of
the parent of $i$ (or making $j$ the root if $i=L$), yields a learning sequence
for $G$ with the same sets.  Since $\cut_G(S_i)=\cut_G(S_j)$, this deletion
leaves every remaining quantity $r_k$, $q_k$, and $a_k$ unchanged and removes
the term $a_i+q_i$ from the maximum in \eqref{eq:weighted-learning-complexity},
so it does not increase the learning complexity.  Under this normalisation,
every step either introduces a fresh vertex or has at least two children.
There are at most $n$ steps of the first kind, because the fresh sets are
nonempty and pairwise disjoint.  There are at most $n-1$ steps of the second
kind, because every leaf of the dependency tree has $I_i=\emptyset$ and hence
$F_i=S_i\neq\emptyset$, so the tree has at most $n$ leaves, and a rooted tree
with at most $n$ leaves has at most $n-1$ vertices with two or more children.
Hence $L\leq2n-1$.
\end{remark}

\subsubsection{Tomography using a learning sequence}
\Cref{alg:learn-tns-from-sequence} applies the iterative disentangling procedure
according to a supplied learning sequence. For each processed
set $S_i$, it stores a residual subsystem $R_i\subseteq S_i$ with $|R_i|\leq q_i$.  When
processing $S_i$, the subsystem to be learned is
\begin{equation}
    M_i
    :=
    F_i\cup\bigcup_{j\in I_i}R_j.
    \label{eq:Mi-learning-sequence}
\end{equation}
If $M_i$ already has size at most $q_i$, no tomography is needed at that step and we set
$R_i=M_i$.  Otherwise we learn the reduced state on $M_i$, map the learned
rank-$r_i$ support into a subspace supported on $q_i$ qudits, and project the remaining
qudits onto $|0\rangle$.

\begin{algorithm}
\caption{LearnTNSFromSequence}\label{alg:learn-tns-from-sequence}
\begin{algorithmic}[1]
\Require graph $G=([n],E)$, edge-dimension function $w:E\to\mathbb N$, physical dimension $d\geq2$, learning sequence $\mathcal L=(S_i,I_i,F_i)_{i=1}^L$, copies of $|\psi\rangle\in\mathcal S_d(G,w)$, accuracy parameter $\epsilon\in(0,1]$, confidence parameter $\delta\in(0,1)$
\Ensure Classical description of a pure state $|\hat\psi\rangle$.
\State For every $i\in[L]$, compute $r_i$ and $q_i$ as in
\Cref{eq:ri-qi-learning-sequence}.
\State $\eta\gets \epsilon^2/(128L)$.
\State $K_0\gets I$, $\widetilde\mu_0\gets1$, $Q\gets\emptyset$, and $c\gets0$.
\For{$i=1$ to $L$}
    \State $M_i\gets F_i\cup\bigcup_{j\in I_i}R_j$.
    \If{$|M_i|\leq q_i$}
        \State $R_i\gets M_i$.
    \Else
        \State Choose $R_i\subseteq M_i$ with $|R_i|=q_i$.
        \State $Q_i\gets M_i\setminus R_i$.
        \State
        $
        \hat\sigma_i
        \gets
        \operatorname{Sub\text{-}Tomography}
        \left(
            K_c,
            M_i,
            r_i,
            \eta,
            \delta/(2L),
            [\widetilde\mu_c,1]
        \right).
        $
        \State Let $W_i$ be the span of the eigenvectors associated with the $r_i$
largest eigenvalues of $\hat\sigma_i$.
        \State Compute a unitary $\widetilde U_i$ supported on $M_i$ such that
        \begin{equation*}
            \widetilde U_i W_i
            \subseteq
            |0^{|Q_i|}\rangle_{Q_i}\otimes\mathcal H_{R_i}.
        \end{equation*}
        \State Let $U_{c+1}$ be $\widetilde U_i$ tensored with the identity outside
        $M_i$, $U_{c+1}= \widetilde{U}_i\otimes I_{M_i^c}$.
        \State Let
        \begin{equation*}
            P_{c+1}
            =
            |0^{|Q_i|}\rangle\langle0^{|Q_i|}|_{Q_i}\otimes I_{[n]\setminus Q_i}.
        \end{equation*}
        \State $K_{c+1}\gets P_{c+1}U_{c+1}K_c$.
        \State $\widetilde\mu_{c+1}\gets \widetilde\mu_c-2\eta$.
        \State $Q\gets Q\cup Q_i$ and $c\gets c+1$.
    \EndIf
\EndFor
\State
$
\hat\tau
\gets
\operatorname{Sub\text{-}Tomography}
\left(
    K_c,
    R_L,
    1,
    \eta,
    \delta/(2L),
    [\widetilde\mu_c,1]
\right).
$
\State Let $|\hat\varphi\rangle$ be a top eigenvector of $\hat\tau$.
\State \Return
\begin{equation*}
    |\hat\psi\rangle
    =
    U_1^\dagger U_2^\dagger\cdots U_c^\dagger
    \left(
        |0^{|Q|}\rangle_Q\otimes |\hat\varphi\rangle_{R_L}
    \right),
\end{equation*}
where the tensor product is interpreted according to the canonical ordering of the
qudits.
\end{algorithmic}
\end{algorithm}

When only $G$ and an upper bound $\chi$ on the bond dimension are known, \Cref{alg:learn-tns-from-sequence} uses $\bar r_i$ and $\bar q_i$ in place of $r_i$ and $q_i$. The individual edge dimensions $w(e)$ need not be known.

The unitary in \Cref{alg:learn-tns-from-sequence} exists because $\dim(W_i)\leq r_i\leq d^{q_i}$ and
$\dim(\mathcal H_{R_i})=d^{q_i}$ whenever the nontrivial branch $|M_i|>q_i$ is entered.
As before, one constructs it by choosing an orthonormal basis of $W_i$, mapping this
basis isometrically into $|0^{|Q_i|}\rangle_{Q_i}\otimes\mathcal H_{R_i}$, and extending the
isometry to a unitary on $\mathcal H_{M_i}$. When only the bound $\chi$ is known, the same argument uses $\dim(W_i)\leq\bar r_i\leq d^{\bar q_i}$.

We next establish correctness and sample complexity. 
\begin{lemma}[Residual invariant]
\label{lem:learning-sequence-residual-invariant}
During \Cref{alg:learn-tns-from-sequence}, after step $i$ has been processed, the
following hold:
\begin{enumerate}
    \item $R_i\subseteq S_i$ and $|R_i|\leq q_i$.
    \item Every qudit in $S_i\setminus R_i$ has been projected onto $|0\rangle$ in the
    current postselected branch.
    \item The sets projected at different nontrivial steps are pairwise disjoint.
\end{enumerate}
Moreover, for every $i\in[L]$,
\begin{equation}
    |M_i|
    \leq
    a_i
    =
    |F_i|+
    \sum_{j\in I_i}q_j.
\end{equation}
\end{lemma}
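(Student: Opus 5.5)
We argue by strong induction on $i$, following the processing order $1,\ldots,L$ of \Cref{alg:learn-tns-from-sequence}. The argument mirrors the proof of \Cref{lem:ttn-residual-subsystem-invariant}, with subtrees replaced by the laminar sets $S_i$ of \Cref{lem:learning-sequence-laminarity}. The inductive hypothesis is that items 1 and 2 hold for every $j<i$, and that item 3 holds for all nontrivial steps performed before step $i$.

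\emph{Identifying the unprojected part of $S_i$.} Fix $i$ and consider the state immediately before step $i$. The key claim is that the qudits of $S_i$ not yet projected onto $|0\rangle$ are exactly those in $M_i=F_i\cup\bigcup_{j\in I_i}R_j$. We prove this in two parts.

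By \eqref{eq:learning-sequence-assembly}, $S_i=F_i\sqcup\bigsqcup_{j\in I_i}S_j$. For each child $j\in I_i$, the inductive hypothesis says that every qudit of $S_j\setminus R_j$ has already been projected. Hence every unprojected qudit of $S_i$ lies in $M_i$.

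For the converse, we must show that no qudit of $M_i$ has been projected yet. Only steps $k<i$ have been executed, and each such step projects only qudits of $M_k\subseteq S_k$. By laminarity, either $S_k\cap S_i=\emptyset$ or $S_k\subseteq S_i$, and in the latter case $k$ is a descendant of $i$. We treat the two kinds of qudits in $M_i$ separately.

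\begin{itemize}
\item A fresh vertex $v\in F_i$ does not lie in $F_\ell$ for any $\ell\neq i$, since the fresh sets are pairwise disjoint. By the identity $S_k=\bigsqcup_{\ell\in D(k)}F_\ell$ from the proof of \Cref{lem:learning-sequence-laminarity}, $v$ therefore lies in no $S_k$ with $k$ a descendant of $i$. So no earlier step touches $v$.
\item A qudit of $R_j$ with $j\in I_i$ is unprojected when step $j$ ends. Any later step $k$ with $j<k<i$ that acts inside $S_i$ would be a descendant of $i$ whose set meets $S_j$. Laminarity would then force $k$ to be an ancestor of $j$ below $i$. This is impossible, because $j$ is a child of $i$.
\end{itemize}

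\emph{Items 1--3 at step $i$.} If $|M_i|\leq q_i$, the algorithm sets $R_i=M_i\subseteq S_i$ and item 1 holds. Otherwise $R_i\subseteq M_i$ with $|R_i|=q_i$, and $Q_i=M_i\setminus R_i$ is projected, which leaves exactly $R_i$ unprojected inside $S_i$. This gives items 1 and 2. Item 3 follows because $Q_i\subseteq M_i$ consists of qudits that were unprojected by the claim above, so $Q_i$ is disjoint from every previously projected set.

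\emph{The size bound.} Since $|R_j|\leq q_j$ by item 1, we get $|M_i|\leq|F_i|+\sum_{j\in I_i}|R_j|\leq a_i$. Replacing $q_j$ by $\bar q_j$ gives the bound $|M_i|\leq\bar a_i$ in the variant where only $\chi$ is known.

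\emph{Main obstacle.} The delicate step is the converse direction of the claim: ruling out that some step $k<i$ outside the subtree of $i$, or some non-child descendant, has already projected a qudit of $M_i$. This is where laminarity and the descendant decomposition of $S_k$ must be used carefully, together with the fact that edges of the dependency tree point from larger to smaller indices.
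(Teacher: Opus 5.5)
Your proof is correct and takes the same route as the paper: induction over $i$ using the assembly identity $S_i=F_i\sqcup\bigsqcup_{j\in I_i}S_j$, with the size bound obtained from $|R_j|\le q_j$. The paper simply asserts that the unprojected qudits of $S_i$ are \emph{precisely} $M_i$, and your laminarity and descendant-decomposition argument supplies the reverse inclusion that this assertion needs; note, though, that your bullet on $R_j$ uses that exactness claim at step $j$, so you should include it explicitly in your inductive hypothesis.
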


\begin{proof}
We prove the first two claims by induction over $i$.  If $I_i=\emptyset$, then
$M_i=F_i=S_i$.  If $|M_i|\leq q_i$, the algorithm sets $R_i=M_i=S_i$, and there is
nothing to project.  If $|M_i|>q_i$, it chooses $R_i\subseteq M_i$ and projects
$Q_i=M_i\setminus R_i$ onto $|0\rangle$, so the invariant holds.

Now suppose $I_i\neq\emptyset$ and that the invariant has been proved for all children
$j\in I_i$.  By \Cref{eq:learning-sequence-assembly},
\begin{equation}
    S_i
    =
    F_i\,\sqcup\,\bigsqcup_{j\in I_i}S_j.
\end{equation}
For each child $j$, the only qudits in $S_j$ not already projected are those in $R_j$.
Hence, immediately before processing $i$, the qudits in $S_i$ that have not yet
been projected onto $|0\rangle$ are precisely
\begin{equation}
    M_i=F_i\cup\bigcup_{j\in I_i}R_j.
\end{equation}
If $|M_i|\leq q_i$, the algorithm sets $R_i=M_i$.  If $|M_i|>q_i$, it projects
$M_i\setminus R_i$ and leaves only $R_i$.  This proves the first two claims for $i$.

At a nontrivial step, the projected set is chosen from qudits that have not yet
been projected onto $|0\rangle$.  Hence it is disjoint from every set projected earlier.
This proves pairwise disjointness.  Finally, the size bound for $M_i$ follows from
$|R_j|\leq q_j$ for all $j\in I_i$.
\end{proof}

\begin{lemma}[Rank bound for subsystems]
\label{lem:learning-sequence-rank-bound}
Let $|\psi\rangle\in\mathcal S_d(G,w)$, and consider
\Cref{alg:learn-tns-from-sequence}.  Immediately before step $i$ is processed,
let $K_{<i}$ be the cumulative postselection map constructed so far, i.e., the
map $K_c$ for the value of the counter $c$ at that moment.  (Note that
$K_{<i}\neq K_{i-1}$ in general, since trivial steps with $|M_j|\leq q_j$ do not
increment $c$.)  Define
\begin{equation}
    \sigma_i
    =
    \operatorname{tr}_{[n]\setminus M_i}
    \left[
        K_{<i}|\psi\rangle\langle\psi|K_{<i}^\dagger
    \right].
\end{equation}
Then
\begin{equation}
    \operatorname{rank}(\sigma_i)\leq r_i.
\end{equation}
\end{lemma}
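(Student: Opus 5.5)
The plan mirrors the proof of \Cref{lem:ttn-residual-rank-bound}, with the subtree $V_u$ replaced by the assembled set $S_i$. First, apply \Cref{claim:cut-rank-bound} to the bipartition $S_i\mid[n]\setminus S_i$. This bounds the Schmidt rank of $|\psi\rangle$ across that cut by
\begin{equation*}
    \prod_{e\in\cut_G(S_i)}w(e)=r_i .
\end{equation*}
The rest of the argument shows that neither the postselection map $K_{<i}$ nor the restriction from $S_i$ to $M_i$ can increase this rank.

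The key structural step is to show that $K_{<i}$ is a product of operators, each supported entirely inside $S_i$ or entirely inside $[n]\setminus S_i$. Every factor $U_{c'}$ or $P_{c'}$ of $K_{<i}$ comes from an earlier nontrivial step $j<i$ and is supported on $M_j\subseteq S_j$.
\begin{itemize}
    \item By \Cref{lem:learning-sequence-laminarity}, for $j<i$ either $S_j\subseteq S_i$ or $S_j\cap S_i=\emptyset$.
    \item In the first case the factor is supported in $S_i$; in the second it is supported in the complement.
\end{itemize}
Hence $K_{<i}=A\otimes B$, with $A$ acting on $\mathcal H_{S_i}$ and $B$ on $\mathcal H_{[n]\setminus S_i}$. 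To see this, group the factors in order; since factors on disjoint supports commute, each group can be collected on its own side. Local linear maps on each side cannot increase Schmidt rank, because $(A\otimes B)\sum_k|a_k\rangle|b_k\rangle$ still has at most $r_i$ terms. So the subnormalised reduced state of $K_{<i}|\psi\rangle$ on $S_i$ has rank at most $r_i$. The degenerate case where this vector is zero is trivial.

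Next, pass from $S_i$ to $M_i$. By the argument in the proof of \Cref{lem:learning-sequence-residual-invariant}, every qudit in $S_i\setminus M_i$ lies in some child's projected set, i.e.\ in $S_j\setminus R_j$ for $j\in I_i$. This part needs slightly more care than the statement of that lemma: one has to check that a qudit projected at step $j$ is untouched by later operators in $K_{<i}$. This holds because those operators act only on sets of not-yet-projected qudits $M_{j'}$, and projected qudits never re-enter any $M_{j'}$. Therefore $K_{<i}|\psi\rangle=|0\rangle_{S_i\setminus M_i}\otimes|\Phi\rangle$, where $|\Phi\rangle$ lives on the remaining qudits. The reduced state on $S_i$ is then $|0\rangle\langle0|\otimes\sigma_i$, so $\operatorname{rank}(\sigma_i)$ equals the rank of the reduced state on $S_i$, which is at most $r_i$.

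The main obstacle is the locality claim for $K_{<i}$. Laminarity is stated for the sets $S_j$, while the operators are supported on $M_j$, so the containment $M_j\subseteq S_j$ must be invoked explicitly; it follows from $R_{j'}\subseteq S_{j'}$ and the assembly identity. One must also handle the root case $i=L$, where $\cut_G(S_L)=\emptyset$, $r_L=1$, and the state is a product across the empty cut. The uniform-$\chi$ version follows identically with $\bar r_i\ge r_i$.
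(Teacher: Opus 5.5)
Your proof is correct and follows essentially the same route as the paper: you bound the Schmidt rank across $S_i\mid[n]\setminus S_i$ by $r_i$ using \Cref{claim:cut-rank-bound}, you show $K_{<i}$ is local with respect to that cut using laminarity together with $M_j\subseteq S_j$, and you then strip off the projected product factor on $S_i\setminus M_i$ using the residual invariant. Your extra details (regrouping the commuting factors into $A\otimes B$, and checking that projected qudits are never acted on again) make explicit points that the paper's proof uses implicitly.
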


\begin{proof}
By \Cref{claim:cut-rank-bound}, the Schmidt rank of $|\psi\rangle$ across the
bipartition
\begin{equation}
    S_i\mid [n]\setminus S_i
\end{equation}
is at most
\begin{equation}
    r_i=\prod_{e\in\cut_G(S_i)}w(e).
\end{equation}
We claim that all operations appearing in $K_{<i}$ are supported either entirely inside
$S_i$ or entirely outside $S_i$.  Indeed, by \Cref{lem:learning-sequence-laminarity}, for every earlier index $j<i$ we have either
$S_j\subseteq S_i$ or $S_j\cap S_i=\emptyset$.  The operation associated with step $j$,
if nontrivial, is supported on $M_j\subseteq S_j$.  Hence it is local with respect to the
bipartition $S_i\mid[n]\setminus S_i$.  Local linear maps cannot increase Schmidt rank
across this bipartition.  The subnormalised reduced state on $S_i$ after
applying $K_{<i}$ has rank at most $r_i$.

By \Cref{lem:learning-sequence-residual-invariant}, all qudits in $S_i\setminus M_i$
have already been projected onto the product state $|0\rangle_{S_i\setminus M_i}$.
Removing a fixed product factor cannot increase rank.  Thus the reduced state on
$M_i$ has rank at most $r_i$.
\end{proof}

\begin{proposition}[Correctness]
\label{prop:learning-sequence-correctness}
Assume that every call to $\operatorname{Sub\text{-}Tomography}$ in
\Cref{alg:learn-tns-from-sequence} succeeds with trace-norm error at most $\eta$.  If
\begin{equation}
    \eta\leq \frac{1}{8L},
\end{equation}
then the output state $|\hat\psi\rangle$ satisfies
\begin{equation}
    \left\|
        |\psi\rangle\langle\psi|
        -
        |\hat\psi\rangle\langle\hat\psi|
    \right\|_1
    \leq
    2\sqrt{2L\eta}+4\sqrt\eta.
\end{equation}
In particular, the choice $\eta=\epsilon^2/(128L)$ is sufficient to make the final trace norm error at most $\epsilon$ for every $\epsilon\in(0,1]$.
\end{proposition}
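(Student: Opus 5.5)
The argument follows the proofs of \Cref{prop:learn-mps-correctness,prop:learn-ttn-correctness}, with the tree traversal replaced by the dependency tree of the learning sequence. Index the nontrivial disentangling steps by $t=1,\ldots,c$ in the order performed, and set $\mu_t=\|K_t|\psi\rangle\|^2$. I first show by induction that $\mu_t\geq 1-2t\eta$.

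For the inductive step at a nontrivial step $i$, \Cref{lem:learning-sequence-rank-bound} gives that the ideal subnormalised reduced state $\sigma_i$ on $M_i$ has rank at most $r_i$. The tomography estimate $\hat\sigma_i$ has trace-norm error at most $\eta$. The unitary maps the top-$r_i$ eigenspace into $|0^{|Q_i|}\rangle_{Q_i}\otimes\mathcal H_{R_i}$, and $d^{|M_i|-|Q_i|}=d^{q_i}\geq r_i$. Hence \Cref{cor:approximate-disentangling-subnormalised}, applied with $S=M_i$, $A=Q_i$ and $|\Psi\rangle=K_{t-1}|\psi\rangle$, yields $\mu_t\geq\mu_{t-1}-2\eta$. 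By \Cref{lem:learning-sequence-residual-invariant}, the projected sets are disjoint and nonempty. Since there is at most one nontrivial step per index, $c\leq L$. Therefore $\mu_c\geq 1-2L\eta\geq 1/2$ under $\eta\leq 1/(8L)$. This also justifies the lower bounds $\widetilde\mu_t$ passed to the tomography calls.

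Next I commute projections past later unitaries. A qudit projected at step $i$ lies in $Q_i\subseteq M_i$. By item 2 of \Cref{lem:learning-sequence-residual-invariant} and the definition of $M_j$ for later $j$, it never appears in any later $M_j$. Every later step $j$ has $M_j$ consisting only of unprojected qudits: this holds for $j$ outside the subtree by laminarity (\Cref{lem:learning-sequence-laminarity}), and for ancestors because only residuals are passed up. Hence $K_c=P_QU_{\leq c}$ with $P_Q$ the projector onto $|0\rangle_Q$.

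Applying the invariant at the root $L$ with $S_L=[n]$ gives $Q\sqcup R_L=[n]$. The normalised branch is therefore $|0\rangle_Q\otimes|\varphi\rangle_{R_L}$, and the ideal residual state is $\tau=\mu_c|\varphi\rangle\langle\varphi|$. Define $|\psi^\star\rangle=U_{\leq c}^\dagger P_QU_{\leq c}|\psi\rangle/\sqrt{\mu_c}$. Then $|\langle\psi|\psi^\star\rangle|^2=\mu_c$, so
\[
\bigl\||\psi\rangle\langle\psi|-|\psi^\star\rangle\langle\psi^\star|\bigr\|_1=2\sqrt{1-\mu_c}\leq 2\sqrt{2L\eta}.
\]

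For the final call, the top-eigenvector variational chain gives $|\langle\hat\varphi|\varphi\rangle|^2\geq 1-2\eta/\mu_c\geq 1-4\eta$, hence a trace-norm error of at most $4\sqrt\eta$ between $|\varphi\rangle$ and $|\hat\varphi\rangle$. This bound is preserved under tensoring with $|0\rangle_Q$ and applying $U_{\leq c}^\dagger$. The triangle inequality then gives the total bound $2\sqrt{2L\eta}+4\sqrt\eta$. Substituting $\eta=\epsilon^2/(128L)$ gives $\epsilon/4+\epsilon/(2\sqrt{2L})\leq\epsilon$, and this choice also satisfies $\eta\leq 1/(8L)$ for $\epsilon\leq 1$.

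The only delicate point is the commutation step: it needs an argument that later active sets never contain previously projected qudits. Laminarity together with the residual invariant gives this.
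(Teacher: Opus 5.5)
Your proposal is correct and follows the paper's proof essentially step for step. It uses the same induction $\mu_t\geq 1-2t\eta$ via the rank bound and the approximate-disentangling corollary, the same commutation $K_c=P_QU_{\leq c}$, the same comparison with $|\psi^\star\rangle$, and the same top-eigenvector argument for the final residual call, combined by the triangle inequality; your laminarity-based justification of the commutation step is simply a more explicit version of the paper's one-line remark.
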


\begin{proof}
Index the nontrivial disentangling steps by $t=1,\ldots,c$ in the order in which they are
performed, and let
\begin{equation}
    \mu_t=\|K_t|\psi\rangle\|^2
\end{equation}
be the true cumulative success probability after the first $t$ such steps.  We prove by
induction that
\begin{equation}
    \mu_t\geq 1-2t\eta.
\end{equation}
The claim is immediate for $t=0$.  Suppose it holds before the next nontrivial step,
which processes some index $i$.  By \Cref{lem:learning-sequence-rank-bound}, the ideal
subnormalised reduced state on $M_i$ has rank at most $r_i$.  Since the tomography call
succeeds to trace-norm error at most $\eta$,
\Cref{cor:approximate-disentangling-subnormalised} implies that the projection chosen
by the algorithm decreases the branch weight by at most $2\eta$.  Hence
\begin{equation}
    \mu_{t+1}\geq\mu_t-2\eta\geq1-2(t+1)\eta.
\end{equation}
Since there is at most one nontrivial disentangling step per index, $c\leq L$.  Therefore
\begin{equation}
    \mu_c\geq1-2L\eta\geq\frac12.
\end{equation}

Let
\begin{equation}
    U_{\leq c}=U_cU_{c-1}\cdots U_1.
\end{equation}
Once a qudit has been projected, it is excluded from every residual subsystem. Hence no subsequent unitary acts on that qudit, so each
projection commutes with all unitaries applied after it, and
\begin{equation}
    K_c=P_QU_{\leq c},
P_Q=|0^{|Q|}\rangle\langle0^{|Q|}|_Q\otimes I_{R_L},
\end{equation}
where $Q=\bigsqcup_{i:\,|M_i|>q_i}Q_i$ is the final value of the projected set
maintained by \Cref{alg:learn-tns-from-sequence}.
Define the exact postselected reconstruction
\begin{equation}
    |\psi^\star\rangle
    =
    U_{\leq c}^\dagger\frac{P_QU_{\leq c}|\psi\rangle}{\sqrt{\mu_c}}.
\end{equation}
Then $|\langle\psi|\psi^\star\rangle|^2=\mu_c$, and hence
\begin{equation}
    \left\|
        |\psi\rangle\langle\psi|
        -|\psi^\star\rangle\langle\psi^\star|
    \right\|_1
    =
    2\sqrt{1-\mu_c}
    \leq
    2\sqrt{2L\eta}.
\end{equation}

By \Cref{lem:learning-sequence-residual-invariant} applied to the final set $S_L=V$,
we have $Q\cup R_L=[n]$ and $Q\cap R_L=\emptyset$.  Hence the normalised postselected
state has the form
\begin{equation}
    \frac{P_QU_{\leq c}|\psi\rangle}{\sqrt{\mu_c}}
    =
    |0^{|Q|}\rangle_Q\otimes |\varphi\rangle_{R_L}
\end{equation}
for some pure state $|\varphi\rangle$ on $R_L$.  The ideal final subnormalised residual
state is
\begin{equation}
    \tau
    =
    \operatorname{tr}_{[n]\setminus R_L}
    \left[
        K_c|\psi\rangle\langle\psi|K_c^\dagger
    \right]
    =
    \mu_c|\varphi\rangle\langle\varphi|.
\end{equation}
The final tomography call outputs $\hat\tau$ with
$\|\tau-\hat\tau\|_1\leq\eta$.  If $|\hat\varphi\rangle$ is a top eigenvector of
$\hat\tau$, the same variational argument as in
\Cref{prop:learn-mps-correctness,prop:learn-ttn-correctness} gives
\begin{equation}
    |\langle\hat\varphi|\varphi\rangle|^2
    \geq
    1-\frac{2\eta}{\mu_c}
    \geq
    1-4\eta.
\end{equation}
Thus
\begin{equation}
    \left\|
        |\varphi\rangle\langle\varphi|
        -
        |\hat\varphi\rangle\langle\hat\varphi|
    \right\|_1
    \leq
    4\sqrt\eta.
\end{equation}
Applying $U_{\leq c}^\dagger$ preserves trace norm, and the triangle inequality proves the bound.  Substituting $\eta=\epsilon^2/(128L)$ gives a bound at most $\epsilon$
for $\epsilon\in(0,1]$.
\end{proof}

\begin{theorem}[Tomography using a learning sequence]
\label{thm:learning-sequence-tomography}
Let $G=([n],E)$ be a graph, let $w:E\to\mathbb N$, and let
$\mathcal L=(S_i,I_i,F_i)_{i=1}^L$ be a learning sequence for $G$. Given $G$, $w$,
$d$, and $\mathcal L$, \Cref{alg:learn-tns-from-sequence} learns any unknown
$|\psi\rangle\in\mathcal S_d(G,w)$ to trace norm error at most $\epsilon$ with
success probability at least $1-\delta$ using
\begin{equation}
    O\left(
        \frac{L^3}{\epsilon^4}
        \left(
            \max_{i\in[L]} r_i d^{a_i}
            +
            \log(L/\delta)
        \right)
    \right)
    \label{eq:learning-sequence-precise-bound}
\end{equation}
copies, where $r_i$ and $a_i$ are defined in
\Cref{eq:ri-qi-learning-sequence,eq:ai-learning-sequence}. In particular,
\begin{equation}
    O\left(
        \frac{L^3}{\epsilon^4}
        \left(
            d^{\lc_{d,w}(\mathcal L)}
            +
            \log(L/\delta)
        \right)
    \right)
    \label{eq:learning-sequence-weighted-bound}
\end{equation}
copies suffice.

If the input is only promised to lie in $\mathcal S_d(G,\chi)$, the algorithm instead uses $\bar r_i$, $\bar q_i$, and $\bar a_i$ from \Cref{eq:uniform-ri-qi-ai-learning-sequence}. It uses
\begin{equation}
    O\left(
        \frac{L^3}{\epsilon^4}
        \left(
            \max_{i\in[L]} \bar r_i d^{\bar a_i}
            +
            \log(L/\delta)
        \right)
    \right)
    \label{eq:learning-sequence-uniform-precise-bound}
\end{equation}
copies. In particular,
\begin{equation}
    O\left(
        \frac{L^3}{\epsilon^4}
        \left(
            d^{\lc_{d,\chi}(\mathcal L)}
            +
            \log(L/\delta)
        \right)
    \right)
    \label{eq:learning-sequence-uniform-bound}
\end{equation}
copies suffice.

The classical postprocessing time is polynomial in $L$, the largest local Hilbert-space
dimension, the largest rank parameter, $1/\epsilon$, and $\log(1/\delta)$, assuming
that the rank-constrained tomography primitive and the relevant eigendecompositions
are implemented in time polynomial in their input dimension.
\end{theorem}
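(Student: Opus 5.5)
The plan mirrors the proofs of \Cref{thm:learn-mps-sample-complexity,thm:ttn-tomography}. By \Cref{prop:learning-sequence-correctness}, it suffices to run \Cref{alg:learn-tns-from-sequence} with $\eta=\epsilon^2/(128L)$ and to ensure that every call to $\operatorname{Sub\text{-}Tomography}$ succeeds with trace-norm error at most $\eta$.

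First, I count the calls. There are at most $L$ nontrivial steps, since there is at most one per index, and one final residual call. I assign failure probability $\delta/(2L)$ to each call, so a union bound gives total failure probability at most $\delta$. The lower bounds satisfy $\widetilde\mu_t=1-2t\eta\geq 1/2$ because $c\leq L$. Hence \Cref{cor:subnormalised-tomography-bounded-mu}, with $\mu_\ell\geq1/2$ and $\mu_u\leq1$, costs only constant factors relative to rank-$r$ tomography at accuracy $\Theta(\eta)$.

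Second, I bound the cost of each nontrivial call at step $i$. By \Cref{lem:learning-sequence-rank-bound}, the ideal subnormalised reduced state on $M_i$ has rank at most $r_i$. This is exactly the rank promise that the normalised postselected state $\rho_{K,M_i}$ inherits, since normalisation does not change rank. By \Cref{lem:learning-sequence-residual-invariant}, $|M_i|\leq a_i$, so the local dimension is at most $d^{a_i}$. The sample-optimal primitive therefore uses
\[
O\left(\frac{r_i d^{a_i}+\log(L/\delta)}{\eta^2}\right)
\]
copies. Summing over at most $L$ calls and substituting $\eta^{-2}=O(L^2/\epsilon^4)$ gives \eqref{eq:learning-sequence-precise-bound}.

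The final residual call has rank one and acts on $|R_L|\leq q_L$ qudits. I need to check that $d^{q_L}$ is dominated by $\max_i r_i d^{a_i}$, and this is the only mildly delicate point. Since $S_L=V$, the cut is empty, so $r_L=1$ and $q_L=0$. Thus $R_L$ is empty (or of size at most $q_L$ in general), and the final call costs $O(\log(L/\delta)/\eta^2)$, which is absorbed. I would note this explicitly rather than rely on a degree argument as in the TTN case.

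For the simplified forms, I use $r_i\leq d^{q_i}$, which holds because $q_i=\lceil\log_d r_i\rceil$. This gives $r_i d^{a_i}\leq d^{a_i+q_i}\leq d^{\lc_{d,w}(\mathcal L)}$, which yields \eqref{eq:learning-sequence-weighted-bound}.

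The uniform-$\chi$ case is identical. The algorithm uses $\bar r_i,\bar q_i,\bar a_i$ in place of $r_i,q_i,a_i$, and I first check that the two lemmas still hold with barred quantities. The rank bound $\chi^{|\cut_G(S_i)|}$ follows from \Cref{claim:cut-rank-bound} for whichever $w\leq\chi$ represents $|\psi\rangle$. The residual invariant only uses $|R_j|\leq\bar q_j$. The unitary exists because $\bar r_i\leq d^{\bar q_i}$. With these in place, the same computation gives \eqref{eq:learning-sequence-uniform-precise-bound} and \eqref{eq:learning-sequence-uniform-bound}.

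The runtime claim follows because every classical operation is tomography postprocessing, an eigendecomposition, or a unitary completion on a space of dimension at most $d^{\max_i a_i}$, with rank parameter at most $\max_i r_i$. There are $O(L)$ such operations.
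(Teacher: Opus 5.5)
Your proposal is correct and follows the paper's proof essentially step for step: you take $\eta=\epsilon^2/(128L)$ from \Cref{prop:learning-sequence-correctness}, use a union bound over at most $L+1$ calls with $\widetilde\mu_t\geq 1/2$, bound each call by $O((r_i d^{a_i}+\log(L/\delta))/\eta^2)$ via the rank and residual lemmas, use $r_i\leq d^{q_i}$, and reduce the $\chi$-promise case to some $w_\psi\leq\chi$. The only difference is the final residual call: the paper argues via $R_L\subseteq M_L$ or $|R_L|=q_L\leq a_L$, whereas you observe that $\cut_G(V)=\emptyset$ forces $q_L=\bar q_L=0$ and hence $R_L=\emptyset$; both arguments are valid, and yours is slightly sharper.
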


\begin{proof}
By \Cref{prop:learning-sequence-correctness}, it suffices to set
\begin{equation}
    \eta=\frac{\epsilon^2}{128L}.
\end{equation}
There are at most $L$ nontrivial disentangling calls and one final residual tomography
call.  Since each call is assigned failure probability $\delta/(2L)$, a union bound gives
total failure probability at most $\delta$.

For this choice of $\eta$, all lower estimates for success probability used by the algorithm
satisfy
\begin{equation}
    \widetilde\mu_t=1-2t\eta\geq \frac12
\end{equation}
for all $t\leq L$.  Hence the dependence on the success probability in
\Cref{cor:subnormalised-tomography-bounded-mu} contributes only a constant factor.

For a nontrivial disentangling call at step $i$, \Cref{lem:learning-sequence-rank-bound}
gives rank at most $r_i$, and \Cref{lem:learning-sequence-residual-invariant} gives
$|M_i|\leq a_i$.  Thus this call uses
\begin{equation}
    O\left(
        \frac{r_i d^{a_i}+\log(L/\delta)}{\eta^2}
    \right)
\end{equation}
copies.  The final residual tomography call has rank one and acts on $|R_L|\leq q_L$
qudits. It is bounded by the same maximum because $R_L\subseteq M_L$ if no final
disentangling occurs, while $|R_L|=q_L\leq a_L$ whenever a final disentangling occurs.
Multiplying by at most $L+1$ calls and substituting the value of $\eta$ gives
\eqref{eq:learning-sequence-precise-bound}.

Since $r_i\leq d^{q_i}$, we have
\begin{equation}
    r_i d^{a_i}
    \leq
    d^{a_i+q_i}
    \leq
    d^{\lc_{d,w}(\mathcal L)}.
\end{equation}
This proves \eqref{eq:learning-sequence-weighted-bound}.

Now suppose that $|\psi\rangle\in\mathcal S_d(G,\chi)$. By definition, there exists
a weight function $w_\psi:E\to[\chi]$ such that
$|\psi\rangle\in\mathcal S_d(G,w_\psi)$. For every $i\in[L]$,
\begin{equation}
    \prod_{e\in\cut_G(S_i)} w_\psi(e)
    \leq
    \chi^{|\cut_G(S_i)|}
    =
    \bar r_i.
\end{equation}
Hence the rank bound used at step $i$ is at most $\bar r_i$, a residual register
of size $\bar q_i$ suffices, and the active subsystem has size at most $\bar a_i$.
Repeating the preceding argument with these parameters proves
\eqref{eq:learning-sequence-uniform-precise-bound}. Since
$\bar r_i\leq d^{\bar q_i}$, we also have

\begin{equation}
    \bar r_i d^{\bar a_i}
    \leq
    d^{\bar a_i+\bar q_i}
    \leq
    d^{\lc_{d,\chi}(\mathcal L)},
\end{equation}
which proves \eqref{eq:learning-sequence-uniform-bound}. The runtime statement follows
because all linear-algebra operations are performed on matrices whose dimension and
rank are bounded by the corresponding parameters defined from the individual edge dimensions or from the common bound $\chi$, up to
polynomial overhead in $L$ and the accuracy parameters.
\end{proof}

\subsubsection{Constructing learning sequences from contraction sequences}
\label{subsec:contraction-to-learning}

Learning sequences can be supplied directly, but they can also be obtained from graph
contraction sequences.  We use the following standard interpretation of a contraction
sequence: during the contraction process, every current vertex represents a subset of the
original vertex set, and the degree of a current vertex counts incident edges with
multiplicity.  With this convention, the degree of the current vertex representing a set
$S\subseteq V$ is $|\cut_G(S)|$.

\begin{lemma}[From contraction sequences to learning sequences]
\label{lem:contraction-to-learning-sequence}
Let $G=(V,E)$ be a connected graph with $n\geq2$, and let $\pi$ be a contraction sequence of
contraction complexity $C$.  Then $\pi$ induces a learning sequence
$\mathcal L_\pi=(S_i,I_i,F_i)_{i=1}^L$ for $G$ with
\begin{equation}
    L\leq n-1
\end{equation}
and
\begin{equation}
    |\cut_G(S_i)|\leq C
    \qquad
    \text{for every }i\in[L].
\end{equation}
Moreover, the sequence can be constructed from $\pi$ in polynomial time.
\end{lemma}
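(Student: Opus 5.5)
The plan is to read the learning sequence directly off the merge history of $\pi$. A connected graph on $n$ vertices requires exactly $n-1$ identifications to reach a single vertex, since each identification of a non-loop edge reduces the number of current vertices by one, and the process stops once one vertex remains. Throughout, every current vertex $x$ represents a subset $\Sigma(x)\subseteq V$. Initially $\Sigma(v)=\{v\}$, and an identification of $x,y$ creates a vertex representing $\Sigma(x)\sqcup\Sigma(y)$.

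\textbf{Construction.} Index the contractions of $\pi$ by $i=1,\ldots,n-1$ in order. Suppose contraction $i$ merges current vertices $x,y$ into a new vertex $z_i$; set $S_i=\Sigma(z_i)$. Each of $x,y$ is either an original vertex or was created by an earlier contraction $j<i$. Put into $I_i$ the indices $j$ of those among $x,y$ that were created by a contraction. Put into $F_i$ those among $x,y$ that are original vertices, viewed as elements of $V$.

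\textbf{Verification of the axioms.}
\begin{enumerate}
\item The identity $S_i=F_i\sqcup\bigsqcup_{j\in I_i}S_j$ holds by construction. Disjointness holds because current vertices always represent disjoint subsets.
\item The sets $F_i$ are pairwise disjoint because an original vertex ceases to be current once it is merged, so it is introduced exactly once.
\item For the dependency graph, each created vertex $z_j$ with $j<n-1$ is later merged exactly once, at some step $i>j$. Hence every index except $n-1$ has exactly one parent, with a larger index. Since the final vertex $z_{n-1}$ represents $V$, all indices lead to $n-1$. This yields a rooted tree with root $L=n-1$ and $S_L=V$, and it gives $L\le n-1$ (in fact $L=n-1$).
\item Each $S_i$ is nonempty. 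The $F_i$ may be empty, which the definition allows.
\end{enumerate}

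\textbf{Cut bound.} I will use the convention stated just before the lemma: the degree of the current vertex representing $S$, counted with multiplicity after deleting loops, equals $|\cut_G(S)|$. To justify it, I will argue by induction that the multiedges incident to the current vertex for $S$ correspond bijectively to original edges with exactly one endpoint in $S$. Edges internal to $S$ became loops at some merge and were deleted. Since $z_i$ is a vertex created by a contraction, its degree is at most $C$ by definition of the complexity of $\pi$, so $|\cut_G(S_i)|\le C$.

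Polynomial time is immediate: I simulate $\pi$ while maintaining the sets $\Sigma$ and the map from current vertices to creating indices, which takes $O(n)$ steps, each costing $O(n)$.

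The main obstacle is this degree-equals-cut bookkeeping. One must check that parallel edges are retained and loops deleted, exactly as in \Cref{def:contraction-complexity}, and that the maximum there is taken only over created vertices, which is precisely the set of $z_i$.
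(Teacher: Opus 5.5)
Your proposal is correct and follows essentially the same route as the paper's proof. You index the contractions, put indices of previously created parts into $I_i$ and singleton original vertices into $F_i$, and bound $|\cut_G(S_i)|$ by the degree of the newly created vertex; that degree equals the cut size because internal edges become deleted loops while parallel edges are retained. Your checks of the dependency-tree axiom and of the degree-equals-cut step are somewhat more explicit than the paper's.
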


\begin{proof}
Run the contraction process described by $\pi$.  At every moment, each current vertex
corresponds to the subset of original vertices that have been merged into it.  We build
the learning sequence alongside this process.  When the next contraction merges two
distinct current vertices corresponding to disjoint sets $A$ and $B$, create a new index
$i$.  If $A$ was created by an earlier contraction, include its corresponding index in
$I_i$. Otherwise, $A$ is a singleton original vertex and we include that vertex in $F_i$.
Do the same for $B$.  Finally set
\begin{equation}
    S_i=A\cup B.
\end{equation}
Because each nontrivial contraction reduces the number of current vertices by one, a
connected $n$-vertex graph has at most $n-1$ such contractions, and the final set is
$V$.  Each earlier constructed set is used once, namely when its current vertex is
merged into a larger current vertex.  The fresh vertices are introduced when
their singleton current vertices are first merged.  Hence the data
$(S_i,I_i,F_i)_{i=1}^L$ satisfy the conditions of \Cref{def:learning-sequence}.

Immediately after the contraction creating $S_i$, the incident edges of the new current
vertex are precisely the original edges crossing the cut $\cut_G(S_i)$, counted with
multiplicity.  Since $\pi$ has contraction complexity $C$, this degree is at most $C$.
The construction only requires maintaining the current partition of $V$ and the index
associated with each current part, so it is polynomial-time in the length of $\pi$ and the
size of $G$.
\end{proof}

\begin{remark}[Disconnected graphs]
The tomography theorem for a fixed learning sequence (\Cref{thm:learning-sequence-tomography}) applies to arbitrary graphs. For a disconnected graph, one may construct a learning sequence for each connected component, introduce isolated vertices as fresh systems, and combine the component sequences in a final step. We state the contraction-complexity corollaries below for connected graphs to keep the bounds and notation simple.
\end{remark}

The conversion of \Cref{lem:contraction-to-learning-sequence} also yields an upper
bound on the learning complexity in terms of contraction complexity, and hence, via
\Cref{thm:cc-vs-tw-delta}, in terms of degree and treewidth.

\begin{corollary}[Learning complexity from contraction complexity]
\label{cor:lc-upper-bound}
Let $G$ be a connected graph with $n\geq2$ vertices, and let $C=\CC(G)$ be its
contraction complexity.  Then
\begin{equation}
    \lc_{d,\chi}(G)
    \leq
    \min\left\{
        n,
        3\max\{1,\lceil C\log_d\chi\rceil\}
    \right\}
    \leq
    3\Delta(G)\left(\tw(G)+1\right)\max\{1,\lceil\log_d\chi\rceil\}.
\end{equation}
\end{corollary}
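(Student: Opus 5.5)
The plan is to prove the three inequalities in turn: the bound by $n$, the bound by contraction complexity, and the final bound in terms of degree and treewidth.

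\textbf{The bound $\lc_{d,\chi}(G)\leq n$.} I would use the one-step learning sequence $S_1=F_1=V$, $I_1=\emptyset$. Since $\cut_G(V)=\emptyset$, we have $\bar q_1=0$ and $\bar a_1=n$. Its learning complexity is therefore $n$.

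\textbf{The contraction-complexity bound.} Fix an optimal contraction sequence $\pi$ with complexity $C=\CC(G)$. \Cref{lem:contraction-to-learning-sequence} turns it into a learning sequence $\mathcal L_\pi$ with $|\cut_G(S_i)|\leq C$ for every $i$. Set $k:=\max\{1,\lceil C\log_d\chi\rceil\}$. Monotonicity of the ceiling then gives $\bar q_i=\lceil|\cut_G(S_i)|\log_d\chi\rceil\leq k$.

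Each step of $\mathcal L_\pi$ merges exactly two current parts. Each part contributes either one fresh vertex to $F_i$ or one child index to $I_i$. Hence $\bar a_i$ is a sum of two terms, each either $1$ or some $\bar q_j\leq k$. Since $k\geq1$, this gives $\bar a_i\leq 2k$, and so $\bar a_i+\bar q_i\leq 3k$. Taking the maximum over $i$ gives $\lc_{d,\chi}(\mathcal L_\pi)\leq 3k$. Minimising over learning sequences and combining with the bound by $n$ yields the first inequality.

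The one point needing care is the assembly structure of $\mathcal L_\pi$. The construction in \Cref{lem:contraction-to-learning-sequence} must give $|F_i|+|I_i|=2$ at every step, so that no step gathers more than two pieces. This is read off directly from the construction, since every contraction identifies exactly two current vertices.

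\textbf{The treewidth bound.} The minimum is at most its second term $3k$. By \Cref{thm:cc-vs-tw-delta}, $C\leq \Delta(\tw+1)-1\leq \Delta(\tw+1)$. Write $\kappa=\max\{1,\lceil\log_d\chi\rceil\}$, so that $\log_d\chi\leq\kappa$. Then $C\log_d\chi\leq \Delta(\tw+1)\kappa$. The right-hand side is an integer, and it is at least $1$ because $G$ is connected with $n\geq2$, so $\Delta\geq1$. Hence $\lceil C\log_d\chi\rceil\leq\Delta(\tw+1)\kappa$. Because this quantity is at least $1$, it also bounds $k$, and multiplying by $3$ gives the claim.

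I expect no real obstacle. The most delicate step is confirming the two-piece structure of $\mathcal L_\pi$ and the bound $\bar a_i\leq 2k$ when fresh singletons appear, which the use of $\max\{1,\cdot\}$ in $k$ handles.
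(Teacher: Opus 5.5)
Your proposal is correct and follows the paper's proof essentially step for step. It uses the one-step sequence for the bound $n$, the learning sequence induced by an optimal contraction sequence, where each step merges exactly two pieces so that $\bar a_i\le 2Q$ and $\bar q_i\le Q$, and \Cref{thm:cc-vs-tw-delta} for the final bound. The only cosmetic difference is in the last step: you use integrality of $\Delta(\tw(G)+1)\kappa$, whereas the paper uses $\lceil ab\rceil\le a\lceil b\rceil$ together with $\max\{1,C\}\le\Delta(\tw(G)+1)-1$, and both arguments work.
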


\begin{proof}
Let $\pi$ be an optimal contraction sequence, and let $\mathcal L_\pi$ be the
learning sequence induced by \Cref{lem:contraction-to-learning-sequence}.  Every set
of $\mathcal L_\pi$ satisfies $|\cut_G(S_i)|\leq C$, so
\begin{equation}
    \bar q_i
    \leq
    Q
    :=
    \max\{1,\lceil C\log_d\chi\rceil\}
\end{equation}
for every $i$.  Each step of $\mathcal L_\pi$ combines the two endpoints of a
contraction, and each endpoint contributes either one fresh vertex or the residual
register of an earlier set, so
\begin{equation}
    |F_i|+\sum_{j\in I_i}\bar q_j
    \leq
    2Q,
\end{equation}
using $Q\geq1$.  Together with $\bar q_i\leq Q$, this gives
$\lc_{d,\chi}(\mathcal L_\pi)\leq3Q$.  The one-step learning sequence with
$S_1=F_1=V$ and $I_1=\emptyset$ has $\bar q_1=0$ and $\bar a_1=n$, and hence
$\lc_{d,\chi}(G)\leq n$. Combining these two bounds proves the first inequality.  For the second inequality, $\lceil ab\rceil\leq a\lceil b\rceil$
for $a\in\mathbb N$ gives
$Q\leq\max\{1,C\lceil\log_d\chi\rceil\}\leq\max\{1,C\}\max\{1,\lceil\log_d\chi\rceil\}$. Since $G$ is connected and has at least two vertices, $\Delta(G)\geq1$ and $\tw(G)\geq1$. Hence
\begin{equation}
    \max\{1,C\}
    \leq
    \Delta(G)(\tw(G)+1)-1
\end{equation}
by \Cref{thm:cc-vs-tw-delta}.
\end{proof}

Combining this construction with the sequence learner gives sample complexity bounds for TNS tomography in terms of contraction complexity.

\begin{corollary}[Direct tomography from contraction complexity]
\label{cor:lseq-cc-tomography}
Let $G=([n],E)$ be a connected graph with bond dimension bound $\chi$, and let
$C=\CC(G)$ be its contraction complexity. Then an unknown state
$|\psi\rangle\in\mathcal S_d(G,\chi)$ can be learned to trace norm error at most
$\epsilon$ with success probability at least $1-\delta$ using
\begin{equation}
    O\left(
        \frac{n^3}{\epsilon^4}
        \left(
            d^2\chi^{3C}
            +
            \log(n/\delta)
        \right)
    \right)
    \label{eq:lseq-cc-tomography-bound}
\end{equation}
copies, given a contraction sequence of complexity $C$.  If no such sequence is supplied, one may first compute an optimal contraction
sequence using the algorithm described after
\Cref{thm:cc-vs-tw-delta}.
\end{corollary}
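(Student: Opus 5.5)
The plan is to combine \Cref{lem:contraction-to-learning-sequence} with \Cref{thm:learning-sequence-tomography}, applied in its uniform-bond-dimension form \eqref{eq:learning-sequence-uniform-precise-bound}. I will keep the precise bound rather than pass through $\lc_{d,\chi}$: the cruder estimate $d^{3Q}$ would only give $d^3\chi^{3C}$, whereas the stated target is $d^2\chi^{3C}$.

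First, given a contraction sequence $\pi$ of complexity $C$, \Cref{lem:contraction-to-learning-sequence} produces in polynomial time a learning sequence $\mathcal L_\pi$. It has length $L\leq n-1$ and satisfies $|\cut_G(S_i)|\leq C$ for every $i$. Hence $\bar r_i\leq\chi^{C}$ and $\bar q_i\leq\lceil C\log_d\chi\rceil$. Substituting $L\leq n$ turns the prefactor $L^3/\epsilon^4$ and the term $\log(L/\delta)$ into $n^3/\epsilon^4$ and $\log(n/\delta)$.

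The key step is to bound $\bar r_i d^{\bar a_i}$ by $O(d^2\chi^{3C})$. Each step of $\mathcal L_\pi$ merges exactly two current vertices. Each of them is either a fresh singleton, which contributes $1$ to $\bar a_i$, or an earlier set $S_j$, which contributes $\bar q_j$. So $\bar a_i$ is a sum of two terms, each at most $\max\{1,\bar q_j\}$.

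For a child term I will use $d^{\bar q_j}=d^{\lceil|\cut(S_j)|\log_d\chi\rceil}\leq d\,\chi^{|\cut(S_j)|}\leq d\chi^C$. This wastes a factor $d$ per child, so I need to be careful about which factors of $d$ actually occur. The cases split as follows.
\begin{itemize}
\item Two fresh vertices: $d^{\bar a_i}=d^2$.
\item One fresh vertex and one child: $d^{\bar a_i}\leq d\cdot d\chi^C$.
\item Two children: $d^{\bar a_i}\leq d^2\chi^{2C}$.
\end{itemize}
Multiplying by $\bar r_i\leq\chi^C$ gives at most $d^2\chi^{3C}$ in every case. The final residual call is dominated by the same bound, as noted in the proof of \Cref{thm:learning-sequence-tomography}. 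This yields \eqref{eq:lseq-cc-tomography-bound}.

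The obstacle I anticipate is the sharper bookkeeping for children with $\bar q_j=0$, i.e.\ $|\cut(S_j)|=0$. For connected $G$ this happens only at the root, which is never a child, so the case should not arise. I still need to check that no step is counted in a way that overshoots $d^2$; in particular, $d^{\lceil x\rceil}\leq d\cdot d^{x}$ must be applied at most twice per step.

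Finally, if no sequence is supplied, the remark after \Cref{thm:cc-vs-tw-delta} gives an optimal contraction sequence in time $2^{O(\CC(G)^2)}|E|^{O(1)}$, after which the argument above applies unchanged.
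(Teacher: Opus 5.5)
Your proposal is correct and follows essentially the same route as the paper's proof. You convert the contraction sequence via \Cref{lem:contraction-to-learning-sequence}, bound each of the two merged endpoints' contribution to $d^{\bar a_i}$ by $d\chi^C$ (the paper does this uniformly per endpoint rather than with your three cases), and multiply by $\bar r_i\le\chi^C$ in \eqref{eq:learning-sequence-uniform-precise-bound}.

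Two small points remain. First, the $\bar q_j=0$ ``obstacle'' is a non-issue, because $d^{\lceil x\rceil}\le d\cdot d^{x}$ holds for all $x\ge0$. Your claim that this case occurs only at the root also fails when $\chi=1$, though harmlessly. Second, you should dispose of $n=1$ separately by direct one-qudit tomography, since the lemma assumes $n\ge2$.
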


\begin{proof}
If $n=1$, the claim is immediate from ordinary tomography on one qudit.  Assume
$n\geq2$.  Apply \Cref{lem:contraction-to-learning-sequence} to a contraction sequence
of complexity $C$, obtaining a learning sequence $\mathcal L$ of length $L=n-1$ with
$|\cut_G(S_i)|\leq C$ for every $i$.  Hence
\begin{equation}
    \bar r_i=\chi^{|\cut_G(S_i)|}\leq\chi^C.
\end{equation}
At a contraction step, the subsystem $M_i$ is built from the two current contraction
endpoints. Each endpoint is either a fresh original vertex, contributing dimension $d$,
or the residual subsystem of a previously created set $S_j$, contributing dimension at
most
\begin{equation}
    d^{\bar q_j}\leq d\bar r_j\leq d\chi^C.
\end{equation}
Since $\chi^C\geq1$, each endpoint contributes dimension at most $d\chi^C$.  Therefore
\begin{equation}
    d^{\bar a_i}\leq d^2\chi^{2C}
\end{equation}
for every $i$, and so
\begin{equation}
    \max_{i\in[L]} \bar r_i d^{\bar a_i}
    \leq
    d^2\chi^{3C}.
\end{equation}
The stated sample complexity follows from
\eqref{eq:learning-sequence-uniform-precise-bound}, using $L\leq n$.
\end{proof}

Since contraction complexity can be bounded in terms of treewidth and degree, these parameters also give sample complexity bounds for TNS tomography.

\begin{corollary}[A treewidth and degree bound]
\label{cor:lseq-tw-delta-tomography}
Let $G=([n],E)$ be a connected graph with bond dimension bound $\chi$, and let
$\Delta=\Delta(G)$ be its maximum degree. Then an unknown state
$|\psi\rangle\in\mathcal S_d(G,\chi)$ can be learned to trace norm error at most $\epsilon$ with success probability at least $1-\delta$ using
\begin{equation}
    O\left(
        \frac{n^3}{\epsilon^4}
        \left(
            d^2\chi^{3(\Delta(\tw(G)+1)-1)}
            +
            \log(n/\delta)
        \right)
    \right)
\end{equation}
copies. A contraction sequence with the stated width bound is computed first. Equivalently,
the exponential dependence is
\begin{equation}
    d^2\chi^{O(\Delta\tw(G))}.
\end{equation}
\end{corollary}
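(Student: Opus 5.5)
The plan is to obtain this statement as an immediate specialisation of \Cref{cor:lseq-cc-tomography}, using the upper bound on contraction complexity from \Cref{thm:cc-vs-tw-delta}. Since $G$ is connected, we first dispose of the trivial case $n=1$ by ordinary single-qudit tomography, so we may assume $n\geq2$. Then $\Delta\geq1$ and $\tw(G)\geq1$. We set $C^\ast:=\Delta(\tw(G)+1)-1$. By \Cref{thm:cc-vs-tw-delta}, we have $\CC(G)\leq C^\ast$.

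The key step is to produce a contraction sequence of complexity at most $C^\ast$, and to do so algorithmically. The paper asserts that such a sequence is computed first. I would use the remark after \Cref{thm:cc-vs-tw-delta}: combining $\CC(G)=\tw(L(G))$ with the exact treewidth algorithm gives an optimal contraction sequence in time $2^{O(\CC(G)^2)}|E|^{O(1)}$. Its complexity is $\CC(G)\leq C^\ast$. Alternatively, one could follow Markov--Shi's constructive proof of the upper bound from an optimal tree decomposition of $G$. Either way, we obtain a sequence $\pi$ of complexity $C\leq C^\ast$.

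We then feed $\pi$ into \Cref{cor:lseq-cc-tomography}, or more precisely into its proof. \Cref{lem:contraction-to-learning-sequence} gives a learning sequence of length $L\leq n-1$ with $|\cut_G(S_i)|\leq C$. The dimension counting there yields $\max_i \bar r_i d^{\bar a_i}\leq d^2\chi^{3C}$. Since $\chi\geq1$, monotonicity in the exponent gives $d^2\chi^{3C}\leq d^2\chi^{3C^\ast}$. Substituting into \eqref{eq:learning-sequence-uniform-precise-bound}, with $L\leq n$, gives the claimed copy bound. The final ``equivalently'' statement then follows from $3(\Delta(\tw(G)+1)-1)=O(\Delta\tw(G))$, using $\tw(G)\geq1$.

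The only mildly delicate point is that \Cref{cor:lseq-cc-tomography} is stated with $C=\CC(G)$, whereas we want to apply it with an arbitrary sequence of complexity $C\leq C^\ast$. I would handle this by noting that its proof uses only the complexity of the supplied sequence, not optimality, so it goes through verbatim with $C^\ast$ as an upper bound. Otherwise, everything is a direct chaining of earlier results.
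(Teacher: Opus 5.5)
Your proposal is correct and is exactly the paper's argument, which simply combines \Cref{cor:lseq-cc-tomography} with the bound $\CC(G)\leq\Delta(G)(\tw(G)+1)-1$ from \Cref{thm:cc-vs-tw-delta}. Your observation that the proof of the former uses only the complexity of the supplied sequence, so that monotonicity of $\chi^{3C}$ in $C$ suffices, is a fair tightening of that one-line proof.

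One small caveat, shared with the paper, concerns the degenerate case $n=1$. There the exponent $3(\Delta(\tw(G)+1)-1)$ equals $-3$, and the Markov--Shi bound itself fails because $\CC(G)=0>-1$. The displayed bound then cannot literally be met by single-qudit tomography once $\chi\geq d$ and $d$ is large, since pure-state tomography on $\mathbb C^d$ needs $\Omega(d)$ copies. So that case should be excluded, or the exponent read as $\max\{0,\cdot\}$, rather than ``disposed of''.
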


\begin{proof}
This follows from \Cref{cor:lseq-cc-tomography} and the bound
\begin{equation}
    \CC(G)\leq \Delta(G)(\tw(G)+1)-1
\end{equation}
from \Cref{thm:cc-vs-tw-delta}.
\end{proof}

\begin{remark}[Relation to the black-box reductions]
\label{rem:relation-to-black-box}
The learning sequence approach includes the schedules used by the two black-box reductions.
For a linear ordering $\pi$ with $c=\cw(\pi)$, take
    $S_i=\{v_1,\ldots,v_i\},
    F_i=\{v_i\}$,
with $I_1=\emptyset$ and $I_i=\{i-1\}$ for $i\geq2$. Then
\begin{equation}
    \bar r_i\leq\chi^c,
    \qquad
    \bar q_i\leq
    \kappa_c
    :=
    \max\left\{
        1,
        \left\lceil c\log_d\chi\right\rceil
    \right\},
\end{equation}
and hence
\begin{equation}
    \bar a_i\leq1+\kappa_c.
\end{equation}
The precise bound in \Cref{thm:learning-sequence-tomography} then gives the same
dependence
\begin{equation}
    \chi^c d^{\kappa_c+1}
\end{equation}
as \Cref{thm:blackbox-mps-learning}. Likewise, root the tree obtained from a tree-cut decomposition of width $k$, and
let $S_i$ be the union of the bags in a rooted subtree, with $F_i$ equal to the
bag at its root. If $b$ is the maximum number of children, then
\begin{equation}
    \bar r_i\leq\chi^k,
    \qquad
    \bar q_i\leq
    k\max\left\{
        1,
        \left\lceil\log_d\chi\right\rceil
    \right\},
\end{equation}
and
\begin{equation}
    \bar a_i
    \leq
    k\left(
        1+
        b\max\left\{
            1,
            \left\lceil\log_d\chi\right\rceil
        \right\}
    \right).
\end{equation}
This recovers the dependence in \Cref{thm:blackbox-tcw-learning}.

Thus the precise learning sequence bound can reproduce the two black-box estimates
and may improve them for other choices of learning sequence. The bounds obtained
from contraction complexity in
\Cref{cor:lseq-cc-tomography,cor:lseq-tw-delta-tomography}
use a particular learning sequence and replace its individual cuts by a common upper
bound. These estimates can be less sharp than the specialised path or tree bounds.
\end{remark}

\begin{remark}[Relation to the TTN learner]
\label{rem:relation-to-ttn-learner}
When $G$ is itself a tree $T$ and the learning sequence is chosen from a rooted
traversal of subtrees, the sets $S_i$ are the rooted subtrees of $T$.  For the step
associated with a vertex $u$, the fresh set $F_i$ consists of $u$ itself, and the
child set $I_i$ collects the indices of the steps associated with the children of
$u$ in $T$, so that children in the dependency tree correspond exactly to children
in $T$.  Then $M_i=F_i\cup\bigcup_{j\in I_i}R_j$ is the vertex $u$ together with
the residual subsystems left by the subtrees below it.  Thus the learner above
yields the residual-subsystem formulation of \Cref{alg:LearnTTN} as a special
case.  The separate TTN theorem in \Cref{sec:ttn-tomography} keeps the tree degree
explicit and gives the sharper bound for that special case.
\end{remark}

\subsection{Agnostic TNS tomography}\label{subsec:agnostic-tns-tomography}

The previous subsections considered the realisable setting: the input state was promised
to be a pure tensor network state in the relevant class.  We now consider the agnostic
setting. Here, the input is an arbitrary density operator $\rho$ on $(\mathbb C^d)^{\otimes n}$,
and the goal is to output a pure state whose overlap with $\rho$ is nearly as large as
the best overlap achieved by a tensor network state in the target class.  This is the
natural TN analogue of closest-product-state and closest-MPS learning
\cite{bakshi2025learning,lin2025efficientclosestmatrixproduct}.

For a graph $G=([n],E)$ and a bond dimension bound $\chi$, define
\begin{equation}
    \operatorname{OPT}_{G,\chi}(\rho)
    :=
    \sup_{|\phi\rangle\in\mathcal S_d(G,\chi)}
    \langle\phi|\rho|\phi\rangle .
    \label{eq:agnostic-optimum}
\end{equation}
An agnostic learner with accuracy $\epsilon$ must, with high success probability, output a pure state $|\hat\psi\rangle$
such that
\begin{equation}
    \langle\hat\psi|\rho|\hat\psi\rangle
    \geq
    \operatorname{OPT}_{G,\chi}(\rho)-\epsilon .
    \label{eq:agnostic-goal}
\end{equation}
The output need not belong to $\mathcal S_d(G,\chi)$, so we allow the learner to be improper: it may return a state from a larger efficiently described family.
For pure states, trace norm and overlap are related by
\begin{equation}
    \left\|
        |\psi\rangle\langle\psi|
        -
        |\phi\rangle\langle\phi|
    \right\|_1
    =
    2\sqrt{1-|\langle\psi|\phi\rangle|^2}.
    \label{eq:pure-trace-overlap}
\end{equation}
In the agnostic setting, error is measured by the additive loss in overlap in
\eqref{eq:agnostic-goal}, following
\cite{bakshi2025learning,lin2025efficientclosestmatrixproduct}. When the input
is pure, \eqref{eq:pure-trace-overlap} relates this overlap to trace norm error.

The main difference from the realisable setting is that the reduced states of $\rho$ may have full rank. We retain a larger residual subsystem and show that each learned projection approximately preserves overlap with every reference TNS satisfying the relevant Schmidt-rank bound. The rank restriction is imposed
on the reference state (and not on the input).

\subsubsection{Projection estimates for agnostic learning}

We first prove the projection estimate used throughout the agnostic analysis.
Compared with \cite[Lemma B.12]{bakshi2025learning}, it improves the rank
dependence from $r$ to $\sqrt r$ by applying Cauchy--Schwarz to the positive
semidefinite form induced by $\rho$.

\begin{lemma}[Agnostic projection bound]\label{lem:agnostic-projection-bound}
Let $\mathcal H_A$ and $\mathcal H_B$ be finite-dimensional Hilbert spaces, and let
$\rho\geq0$ be a subnormalised state on $\mathcal H_A\otimes\mathcal H_B$, with
$\tr(\rho)\leq1$.  Let $|\phi\rangle\in\mathcal H_A\otimes\mathcal H_B$ be a vector
with $\|\phi\|\leq1$ and Schmidt rank at most $r$ across the bipartition
$A\mid B$.  Let $W\subseteq\mathcal H_A$, let $\Pi_W$ be the orthogonal projector
onto $W$, and set $Q=I_A-\Pi_W$.  If
\begin{equation}
    \left\|
        Q\,\tr_B(\rho)\,Q
    \right\|_\infty
    \leq
    \eta,
    \label{eq:agnostic-projection-discarded-opnorm}
\end{equation}
then
\begin{equation}
    \left|
        \langle\phi|(\Pi_W\otimes I_B)\rho(\Pi_W\otimes I_B)|\phi\rangle
        -
        \langle\phi|\rho|\phi\rangle
    \right|
    \leq
    2\sqrt{r\eta}.
    \label{eq:agnostic-projection-bound}
\end{equation}
More generally, the same conclusion holds with $\Pi_W$ replaced by any
orthogonal projector $\Pi$ on $\mathcal H_A$ whose range contains $W$, i.e., with
$\Pi\otimes I_B$ in place of $\Pi_W\otimes I_B$.
\end{lemma}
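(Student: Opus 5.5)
Write $\Pi$ for any orthogonal projector on $\mathcal H_A$ whose range contains $W$, and set $Q'=I_A-\Pi\leq Q$. The plan is to expand the difference algebraically and then bound each cross term by Cauchy--Schwarz in the positive semidefinite sesquilinear form $\langle x,y\rangle_\rho:=\langle x|\rho|y\rangle$.

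Set $P=\Pi\otimes I_B$ and $\bar P=Q'\otimes I_B$, so that $P+\bar P=I$. Then
\begin{equation*}
    \langle\phi|\rho|\phi\rangle-\langle\phi|P\rho P|\phi\rangle
    =
    \langle\phi|\bar P\rho P|\phi\rangle+\langle\phi|P\rho\bar P|\phi\rangle+\langle\phi|\bar P\rho\bar P|\phi\rangle
    =
    \langle\phi|\bar P\rho|\phi\rangle+\langle\phi|P\rho\bar P|\phi\rangle .
\end{equation*}
Cauchy--Schwarz for $\langle\cdot,\cdot\rangle_\rho$ bounds each term by $\sqrt{\langle\phi|\bar P\rho\bar P|\phi\rangle}$ times either $\sqrt{\langle\phi|\rho|\phi\rangle}$ or $\sqrt{\langle\phi|P\rho P|\phi\rangle}$. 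Both of the latter are at most $1$, because $\|\rho\|_\infty\leq\operatorname{tr}\rho\leq1$, $\|P\phi\|\leq\|\phi\|\leq1$, and $\|\phi\|\leq1$. So it suffices to show
\begin{equation*}
    \langle\phi|\bar P\rho\bar P|\phi\rangle\leq r\eta .
\end{equation*}

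For this key estimate, write the Schmidt decomposition $|\phi\rangle=\sum_{k=1}^{r}s_k|a_k\rangle|b_k\rangle$, with $\sum_k s_k^2\leq1$. Set $|x_k\rangle=(Q'|a_k\rangle)\otimes|b_k\rangle$, so that $\bar P|\phi\rangle=\sum_k s_k|x_k\rangle$.

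First apply Cauchy--Schwarz in the $\rho$-form to the sum (equivalently, use convexity):
\begin{equation*}
    \Bigl\langle\textstyle\sum_k s_kx_k,\rho\sum_k s_kx_k\Bigr\rangle
    \leq
    \Bigl(\textstyle\sum_k s_k\sqrt{\langle x_k|\rho|x_k\rangle}\Bigr)^2
    \leq
    \Bigl(\textstyle\sum_k s_k^2\Bigr)\Bigl(\textstyle\sum_k\langle x_k|\rho|x_k\rangle\Bigr)
    \leq
    \textstyle\sum_k\langle x_k|\rho|x_k\rangle .
\end{equation*}
Next, bound each term. For a unit vector $|b\rangle\in\mathcal H_B$ we have $I_A\otimes|b\rangle\langle b|\leq I$, so
\begin{equation*}
    \langle x_k|\rho|x_k\rangle
    \leq
    \operatorname{tr}\bigl[(Q'|a_k\rangle\langle a_k|Q'\otimes I_B)\rho\bigr]
    =
    \langle a_k|Q'\operatorname{tr}_B(\rho)Q'|a_k\rangle .
\end{equation*}
It remains to bound this by $\eta$. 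Since $Q'\leq Q$ and the range of $Q'$ lies inside the range of $Q$, we have $Q'=QQ'$. Hence
\begin{equation*}
    \langle a_k|Q'\operatorname{tr}_B(\rho)Q'|a_k\rangle
    \leq
    \|Q\operatorname{tr}_B(\rho)Q\|_\infty\|Q'a_k\|^2
    \leq
    \eta .
\end{equation*}
Summing over the $r$ Schmidt terms gives $r\eta$. Combining this with the first step yields the bound $2\sqrt{r\eta}$.

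The delicate point is the middle estimate, which is where the $\sqrt r$ improvement comes from. The Schmidt decomposition has to be paired with the $\rho$-Cauchy--Schwarz step, rather than with a triangle inequality on vectors; the latter would lose a further factor.
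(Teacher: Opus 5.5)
Your proof is correct and takes essentially the same route as the paper. You split $I=P+\bar P$ and bound both cross terms by Cauchy--Schwarz in the $\rho$-form. You then obtain $\langle\phi|\bar P\rho\bar P|\phi\rangle\leq r\eta$ by combining a Schmidt decomposition, the weighted step $(\sum_k s_k c_k)^2\leq(\sum_k s_k^2)(\sum_k c_k^2)$, and the bound $\langle x_k|\rho|x_k\rangle\leq\langle a_k|Q'\operatorname{tr}_B(\rho)Q'|a_k\rangle$.

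The differences are cosmetic. The paper Schmidt-decomposes the normalised projected vector $(Q\otimes I_B)|\phi\rangle$, so its local vectors already lie in the range of $Q$, and it handles a general $\Pi$ afterwards by transferring the operator-norm hypothesis to $I_A-\Pi$. You instead decompose $|\phi\rangle$ itself and treat a general $\Pi$ from the start via $Q'=QQ'$.
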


\begin{proof}
We first prove the claim for $P=\Pi_W\otimes I_B$.  Put
\begin{equation}
    R=Q\otimes I_B,
    \qquad
    |\zeta\rangle=R|\phi\rangle,
    \qquad
    a=\langle\zeta|\rho|\zeta\rangle .
\end{equation}
We claim that $a\leq r\eta$.  If $|\zeta\rangle=0$, this is immediate.  Otherwise,
let $|\widetilde\zeta\rangle=|\zeta\rangle/\|\zeta\|$, and write a Schmidt decomposition
\begin{equation}
    |\widetilde\zeta\rangle
    =
    \sum_{j=1}^s \sqrt{\lambda_j}\,|a_j\rangle|b_j\rangle,
    \qquad
    s\leq r,
\end{equation}
where the vectors $|a_j\rangle$ lie in the range of $Q$.  Define
\begin{equation}
    M_{ij}=\langle a_i,b_i|\rho|a_j,b_j\rangle .
\end{equation}
The matrix $M$ is positive semidefinite, being the Gram matrix of the vectors
$|a_j,b_j\rangle$ with respect to the positive semidefinite sesquilinear form
$(x,y)\mapsto\langle x|\rho|y\rangle$: for every $v\in\mathbb C^s$, we have
$v^\dagger Mv=\langle w|\rho|w\rangle\geq0$ with
$|w\rangle=\sum_{j=1}^s v_j|a_j,b_j\rangle$.  Hence
$|M_{ij}|\leq\sqrt{M_{ii}M_{jj}}$.  Therefore
\begin{align}
    \langle\widetilde\zeta|\rho|\widetilde\zeta\rangle\leq \left(\sum_{j=1}^s \sqrt{\lambda_j M_{jj}}\right)^2 \leq \sum_{j=1}^s M_{jj} \leq
    \sum_{j=1}^s
    \langle a_j|Q\,\tr_B(\rho)\,Q|a_j\rangle\leq
    s\eta \leq r\eta .
\end{align}
Since $\|\zeta\|\leq\|\phi\|\leq1$, this gives $a\leq r\eta$.

Now use Cauchy--Schwarz for the positive semidefinite sesquilinear form
$(x,y)\mapsto\langle x|\rho|y\rangle$.  Since $P+R=I$,
\begin{align}
    \langle\phi|\rho|\phi\rangle
    -
    \langle\phi|P\rho P|\phi\rangle
    &=
    \langle\phi|\rho R|\phi\rangle
    +
    \langle R\phi|\rho P|\phi\rangle .
\end{align}
Thus
\begin{align}
    \left|
        \langle\phi|\rho|\phi\rangle
        -
        \langle\phi|P\rho P|\phi\rangle
    \right|
    &\leq
    \sqrt{\langle\phi|\rho|\phi\rangle\,a}
    +
    \sqrt{a\,\langle P\phi|\rho|P\phi\rangle}  \\
    &\leq
    2\sqrt a
    \leq
    2\sqrt{r\eta},
\end{align}
because $\tr(\rho)\leq1$ and $\|\phi\|\leq1$.

If $\Pi$ is any projector on $\mathcal H_A$ whose range contains $W$, then
$I_A-\Pi\leq Q$ in PSD order.  Hence
\begin{equation}
    \left\|(I_A-\Pi)\,\tr_B(\rho)\,(I_A-\Pi)\right\|_\infty
    \leq \eta .
\end{equation}
Repeating the same argument as above with $I_A-\Pi$ in place of $Q$ gives the
claimed generalisation.
\end{proof}

The next lemma derives the operator-norm condition in
\Cref{lem:agnostic-projection-bound} from a tomography estimate. It is the
agnostic counterpart of the top-eigenspace result used in the realisable
analysis.

\begin{lemma}[Top subspace from an approximate state]
\label{lem:top-subspace-opnorm}
Let $\sigma\geq0$ be a subnormalised state on a $D$-dimensional Hilbert space, and
let $\hat\sigma\geq0$ satisfy
\begin{equation}
    \|\sigma-\hat\sigma\|_\infty\leq\gamma .
\end{equation}
Let $1\leq \Lambda < D$, let $W$ be the span of the eigenvectors associated with the $\Lambda$ largest eigenvalues of $\hat\sigma$, and let $Q=I-\Pi_W$.  Then
\begin{equation}
    \|Q\sigma Q\|_\infty
    \leq
    \frac{\tr(\sigma)}{\Lambda+1}+2\gamma
    \leq
    \frac{1}{\Lambda+1}+2\gamma .
    \label{eq:top-subspace-opnorm-bound}
\end{equation}
If $\Lambda\geq D$, then $Q=0$ and the left-hand side is zero.
\end{lemma}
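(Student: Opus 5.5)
The proof combines Weyl's inequality with the elementary fact that a trace-at-most-one positive semidefinite operator has small eigenvalues beyond its top $\Lambda$. Write $\lambda_1\geq\cdots\geq\lambda_D\geq 0$ for the eigenvalues of $\sigma$ and $\hat\lambda_1\geq\cdots\geq\hat\lambda_D$ for those of $\hat\sigma$. The first step is to bound $\hat\lambda_{\Lambda+1}$. By Weyl's inequality, $|\hat\lambda_k-\lambda_k|\leq\|\sigma-\hat\sigma\|_\infty\leq\gamma$ for every $k$. Since $(\Lambda+1)\lambda_{\Lambda+1}\leq\sum_{k\leq\Lambda+1}\lambda_k\leq\tr(\sigma)$, this gives
\begin{equation*}
    \hat\lambda_{\Lambda+1}\leq \frac{\tr(\sigma)}{\Lambda+1}+\gamma .
\end{equation*}

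Next, because $W$ is spanned by top eigenvectors of $\hat\sigma$, the projector $Q$ is a spectral projector of $\hat\sigma$. Hence $Q\hat\sigma Q$ is exactly the restriction of $\hat\sigma$ to its remaining eigenvectors, and
\begin{equation*}
    \|Q\hat\sigma Q\|_\infty=\hat\lambda_{\Lambda+1}.
\end{equation*}
We then write $Q\sigma Q=Q\hat\sigma Q+Q(\sigma-\hat\sigma)Q$. The triangle inequality together with $\|Q\|_\infty\leq1$ gives
\begin{equation*}
    \|Q\sigma Q\|_\infty\leq\hat\lambda_{\Lambda+1}+\gamma\leq\frac{\tr(\sigma)}{\Lambda+1}+2\gamma ,
\end{equation*}
and the second inequality in \eqref{eq:top-subspace-opnorm-bound} follows from $\tr(\sigma)\leq1$.

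One subtlety is the degenerate case: if $\hat\lambda_\Lambda=\hat\lambda_{\Lambda+1}$, the choice of $W$ is not unique. Any admissible choice is still spanned by $\Lambda$ orthonormal eigenvectors of $\hat\sigma$ carrying the top $\Lambda$ eigenvalues, so the argument is unaffected. When $\Lambda\geq D$, the subspace $W$ is the whole space, so $Q=0$ and the left-hand side vanishes trivially.

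The only step that needs care is the Weyl perturbation step, which is standard for Hermitian matrices. The rest is bookkeeping, so I expect no real obstacle.
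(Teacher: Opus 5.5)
Your proof is correct and takes essentially the same route as the paper. Both use Weyl's inequality together with $\lambda_{\Lambda+1}(\sigma)\le\tr(\sigma)/(\Lambda+1)$ to bound $\hat\lambda_{\Lambda+1}$, then pay one more $\gamma$ when passing from $\hat\sigma$ to $\sigma$ on $W^\perp$. The paper phrases that last step through the quadratic forms $\langle x|\sigma|x\rangle$ for unit $x\in W^\perp$, whereas you use the triangle inequality on $Q\hat\sigma Q+Q(\sigma-\hat\sigma)Q$; the two formulations are equivalent.
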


\begin{proof}
Assume $\Lambda<D$.  By Weyl's inequality,
\begin{equation}
    \lambda_{\Lambda+1}(\hat\sigma)
    \leq
    \lambda_{\Lambda+1}(\sigma)+\gamma .
\end{equation}
Since $\sigma\geq0$,
\begin{equation}
    \lambda_{\Lambda+1}(\sigma)
    \leq
    \frac{\tr(\sigma)}{\Lambda+1}.
\end{equation}
For every unit vector $|x\rangle\in W^\perp$, the variational characterisation of
$\lambda_{\Lambda+1}(\hat\sigma)$ gives
\begin{equation}
    \langle x|\hat\sigma|x\rangle
    \leq
    \lambda_{\Lambda+1}(\hat\sigma).
\end{equation}
Therefore
\begin{equation}
    \langle x|\sigma|x\rangle
    \leq
    \langle x|\hat\sigma|x\rangle+\|\sigma-\hat\sigma\|_\infty
    \leq
    \frac{\tr(\sigma)}{\Lambda+1}+2\gamma .
\end{equation}
Taking the supremum over unit vectors in $W^\perp$ proves the claim.
\end{proof}

Combining the two lemmas gives the one-step agnostic disentangling estimate used by
the algorithm.

\begin{corollary}[One-step agnostic disentangling]
\label{cor:one-step-agnostic-disentangling}
Let $\rho\geq0$ be a subnormalised state on
$\mathcal H_A\otimes\mathcal H_B$, with $\tr(\rho)\leq1$, and let
$|\phi\rangle$ be a vector of norm at most one and Schmidt rank at most $r$ across
$A\mid B$.  Let
\begin{equation}
    \sigma_A=\tr_B(\rho),
\end{equation}
and suppose $\hat\sigma_A\geq0$ satisfies
\begin{equation}
    \|\sigma_A-\hat\sigma_A\|_1\leq\gamma .
\end{equation}
Let $W$ be the span of the eigenvectors associated with the $\Lambda$ largest eigenvalues of $\hat\sigma_A$.  Let $\Pi$ be any orthogonal projector on $\mathcal H_A$
whose range contains $W$.  Then
\begin{equation}
    \left|
        \langle\phi|(\Pi\otimes I_B)\rho(\Pi\otimes I_B)|\phi\rangle
        -
        \langle\phi|\rho|\phi\rangle
    \right|
    \leq
    2\sqrt{r\left(\frac{1}{\Lambda+1}+2\gamma\right)} .
    \label{eq:one-step-agnostic-loss}
\end{equation}
\end{corollary}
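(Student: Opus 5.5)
The plan is to chain \Cref{lem:top-subspace-opnorm} into \Cref{lem:agnostic-projection-bound}, with $A$ as the first factor and $B$ as the second. Set $Q_\Pi=I_A-\Pi$ and $Q_W=I_A-\Pi_W$. Since the range of $\Pi$ contains $W$, we have $Q_\Pi\le Q_W$ in PSD order.

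The generalised form of \Cref{lem:agnostic-projection-bound} already allows any projector whose range contains $W$. So it suffices to verify its hypothesis
\[
\|Q_W\sigma_A Q_W\|_\infty\le \eta:=\frac{1}{\Lambda+1}+2\gamma .
\]
Plugging this $\eta$ into the bound $2\sqrt{r\eta}$ gives exactly \eqref{eq:one-step-agnostic-loss}.

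To get that operator-norm bound, first convert the trace-norm hypothesis into an operator-norm one. The inequality $\|X\|_\infty\le\|X\|_1$ gives
\[
\|\sigma_A-\hat\sigma_A\|_\infty\le\gamma .
\]
Next, $\sigma_A=\tr_B(\rho)$ is PSD with $\tr(\sigma_A)=\tr(\rho)\le1$, so it is subnormalised, and $\hat\sigma_A\ge0$ by assumption.

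Now split on $\Lambda$ versus $D=\dim\mathcal H_A$:
\begin{itemize}
\item If $\Lambda<D$ (and $\Lambda\ge1$), \Cref{lem:top-subspace-opnorm} applies directly and gives $\|Q_W\sigma_AQ_W\|_\infty\le \tr(\sigma_A)/(\Lambda+1)+2\gamma\le\eta$.
\item If $\Lambda\ge D$, then $W=\mathcal H_A$, so $Q_W=0$ and the hypothesis holds trivially.
\end{itemize}

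There is no real obstacle. The only points to watch are the trace-norm-to-operator-norm step and the degenerate case $\Lambda\ge D$. One should also note that the Schmidt-rank bound on $|\phi\rangle$ is across $A\mid B$, which is exactly what \Cref{lem:agnostic-projection-bound} requires.
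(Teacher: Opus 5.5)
Your proposal is correct and follows the paper's own proof. Both convert the trace-norm hypothesis to an operator-norm one, use \Cref{lem:top-subspace-opnorm} to bound $\|Q_W\sigma_AQ_W\|_\infty$ by $\frac{1}{\Lambda+1}+2\gamma$, and then apply the generalised form of \Cref{lem:agnostic-projection-bound}. Your explicit treatment of $\Lambda\ge\dim\mathcal H_A$ only spells out what the paper leaves to that lemma's last sentence. The case $\Lambda=0$, which you skip, is equally trivial, since $\|\sigma_A\|_\infty\le\tr(\sigma_A)\le1$.
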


\begin{proof}
The trace-norm estimate implies $\|\sigma_A-\hat\sigma_A\|_\infty\leq\gamma$.  Apply
\Cref{lem:top-subspace-opnorm} to bound the discarded operator norm, and then apply
\Cref{lem:agnostic-projection-bound}.
\end{proof}

\subsubsection{The agnostic sequence learner}

We now adapt the sequence algorithm from \Cref{subsec:direct-graph-tomography}
to the agnostic setting. Let $\mathcal L=(S_i,I_i,F_i)_{i=1}^L$ be a learning sequence
for $G$. Set
\begin{equation}
    r_i=
    \chi^{|\cut_G(S_i)|}.
    \label{eq:agnostic-ri}
\end{equation}
By \Cref{claim:cut-rank-bound}, every state in
$\mathcal S_d(G,\chi)$ has Schmidt rank at most $r_i$ across
$S_i\mid[n]\setminus S_i$. When a weight function $w$ is known and the
reference class is restricted to $\mathcal S_d(G,w)$, one may instead use
\begin{equation}
    r_i=
    \prod_{e\in\cut_G(S_i)}w(e).
\end{equation}
This may give a smaller value of $r_i$.

The agnostic learner retains a larger residual subsystem than the realisable
learner. Fix an overlap lower bound $\vartheta\in(0,1]$, and assume that
\begin{equation}
    \operatorname{OPT}_{G,\chi}(\rho)
    \geq
    \vartheta .
    \label{eq:agnostic-overlap-promise}
\end{equation}
For $\epsilon\in(0,1]$, define
\begin{equation}
    \alpha
    :=
    \frac{\min\{\epsilon,\vartheta\}}{4L},
    \qquad
    \Lambda_i
    :=
    \left\lceil \frac{64 r_i}{\alpha^2}\right\rceil,
    \qquad
    p_i
    :=
    \max\{1,\lceil\log_d\Lambda_i\rceil\},
    \qquad
    \gamma_i
    :=
    \frac{\alpha^2}{128r_i}.
    \label{eq:agnostic-parameters}
\end{equation}
The number $\Lambda_i$ is the dimension of the learned top subspace, $p_i$ is the
number of qudits retained after processing $S_i$, and $\gamma_i$ is the trace-norm
accuracy of the corresponding subnormalised tomography call.  Define also
\begin{equation}
    a_i^{\mathrm{agn}}
    :=
    |F_i|+
    \sum_{j\in I_i}p_j,
    \qquad
    a_{\max}^{\mathrm{agn}}
    :=
    \max_{i\in[L]}a_i^{\mathrm{agn}},
    \qquad
    r_{\max}:=
    \max_{i\in[L]}r_i .
    \label{eq:agnostic-ai}
\end{equation}

\begin{algorithm}
\caption{AgnosticLearnTNSFromSequence}\label{alg:agnostic-learn-tns-from-sequence}
\begin{algorithmic}[1]
\Require graph $G=([n],E)$, bond dimension bound $\chi$, physical dimension
 $d\geq2$, learning sequence $\mathcal L=(S_i,I_i,F_i)_{i=1}^L$, copies of
 an arbitrary state $\rho$, accuracy parameter $\epsilon\in(0,1]$, confidence
 parameter $\delta\in(0,1)$, and lower-overlap parameter $\vartheta\in(0,1]$
\Ensure Classical description of a pure state $|\hat\psi\rangle$.
\State Compute $r_i,\Lambda_i,p_i,\gamma_i$ as in
\Cref{eq:agnostic-ri,eq:agnostic-parameters}.
\State $K_0\gets I$, $Q\gets\emptyset$, and $c\gets0$.
\For{$i=1$ to $L$}
    \State $M_i\gets F_i\cup\bigcup_{j\in I_i}R_j$.
    \If{$|M_i|\leq p_i$}
        \State $R_i\gets M_i$.
    \Else
        \State Choose $R_i\subseteq M_i$ with $|R_i|=p_i$.
        \State $Q_i\gets M_i\setminus R_i$.
        \State
        $
        \hat\sigma_i
        \gets
        \operatorname{Sub\text{-}Tomography}
        \left(
            K_c,
            M_i,
            d^{|M_i|},
            \gamma_i,
            \delta/(2L+2),
            [\vartheta/2,1]
        \right).
        $
        \State Let $W_i$ be the span of the eigenvectors associated with the $\Lambda_i$ largest eigenvalues of $\hat\sigma_i$.
        \State Compute a unitary $\widetilde U_i$ supported on $M_i$ such that
        \begin{equation*}
            \widetilde U_i W_i
            \subseteq
            |0^{|Q_i|}\rangle_{Q_i}\otimes\mathcal H_{R_i}.
        \end{equation*}
        \State Let $U_{c+1}$ be $\widetilde U_i$ tensored with the identity outside
        $M_i$, $U_{c+1}=\widetilde U_i\otimes I_{M_i^c}$.
        \State Let
        \begin{equation*}
            P_{c+1}=|0^{|Q_i|}\rangle\langle0^{|Q_i|}|_{Q_i}\otimes I_{[n]\setminus Q_i}.
        \end{equation*}
        \State $K_{c+1}\gets P_{c+1}U_{c+1}K_c$.
        \State $Q\gets Q\cup Q_i$ and $c\gets c+1$.
    \EndIf
\EndFor
\State $\gamma_{\mathrm{fin}}\gets\epsilon/8$.
\State
$
\hat\tau
\gets
\operatorname{Sub\text{-}Tomography}
\left(
    K_c,
    R_L,
    d^{|R_L|},
    \gamma_{\mathrm{fin}},
    \delta/(2L+2),
    [\vartheta/2,1]
\right).
$
\State Let $|\hat\varphi\rangle$ be a top eigenvector of $\hat\tau$.
\State \Return
\begin{equation*}
    |\hat\psi\rangle
    =
    U_1^\dagger U_2^\dagger\cdots U_c^\dagger
    \left(
        |0^{|Q|}\rangle_Q\otimes |\hat\varphi\rangle_{R_L}
    \right),
\end{equation*}
with the tensor product interpreted according to the canonical ordering of the qudits.
\end{algorithmic}
\end{algorithm}

The unitary in \Cref{alg:agnostic-learn-tns-from-sequence} exists because
$\Lambda_i\leq d^{p_i}=\dim(\mathcal H_{R_i})$.  It is constructed by mapping an
orthonormal basis of $W_i$ isometrically into
$|0^{|Q_i|}\rangle_{Q_i}\otimes\mathcal H_{R_i}$ and extending the isometry to a unitary on
$\mathcal H_{M_i}$.

The residual-subsystem invariant from \Cref{lem:learning-sequence-residual-invariant}
continues to hold with $q_i$ replaced by $p_i$: after step $i$, the only qudits in
$S_i$ that have not been projected onto $|0\rangle$ are those in $R_i$, and
$|R_i|\leq p_i$.  Consequently
\begin{equation}
    |M_i|
    \leq
    a_i^{\mathrm{agn}}
    =
    |F_i|+
    \sum_{j\in I_i}p_j .
    \label{eq:agnostic-Mi-size}
\end{equation}
The proof is identical to the proof of \Cref{lem:learning-sequence-residual-invariant}.

We also require a rank invariant for each reference TNS.

\begin{lemma}[Rank invariant for reference TNSs]
\label{lem:agnostic-comparator-rank}
Let $|\phi\rangle\in\mathcal S_d(G,\chi)$.  Consider
\Cref{alg:agnostic-learn-tns-from-sequence}, and suppose step $i$ is about to be
processed.  Let $K_{<i}$ be the cumulative postselection map constructed so far,
i.e., the map $K_c$ for the value of the counter $c$ at that moment.  (As in
\Cref{lem:learning-sequence-rank-bound}, $K_{<i}\neq K_{i-1}$ in general, since
trivial steps with $|M_j|\leq p_j$ do not increment $c$.)
Then the vector $K_{<i}|\phi\rangle$, viewed across the bipartition
\begin{equation}
    M_i\mid [n]\setminus M_i,
\end{equation}
has Schmidt rank at most $r_i$.
\end{lemma}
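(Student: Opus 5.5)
The plan mirrors the proof of \Cref{lem:learning-sequence-rank-bound}, with the input state replaced by the reference vector $|\phi\rangle$. There is one extra step: in the agnostic setting we track a vector rather than a reduced state, so we must pass from the cut $S_i\mid[n]\setminus S_i$ to the cut $M_i\mid[n]\setminus M_i$.

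First, by \Cref{claim:cut-rank-bound}, $|\phi\rangle\in\mathcal S_d(G,\chi)$ has Schmidt rank at most $\chi^{|\cut_G(S_i)|}=r_i$ across $S_i\mid[n]\setminus S_i$. If only $\mathcal S_d(G,w)$ is used, the product of weights serves instead.

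Second, we show that $K_{<i}$ is a product of operators, each supported entirely inside $S_i$ or entirely inside $[n]\setminus S_i$. Every factor $P_{t}U_{t}$ of $K_{<i}$ comes from a nontrivial earlier step $j<i$ and is supported on $M_j\subseteq S_j$. By \Cref{lem:learning-sequence-laminarity}, either $S_j\subseteq S_i$ or $S_j\cap S_i=\emptyset$. Hence $K_{<i}=A\otimes B$ with $A$ acting on $\mathcal H_{S_i}$ and $B$ on $\mathcal H_{[n]\setminus S_i}$. Local linear maps cannot increase Schmidt rank, because they map the at most $r_i$ Schmidt terms to at most $r_i$ product terms. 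Therefore $K_{<i}|\phi\rangle$ has Schmidt rank at most $r_i$ across $S_i\mid[n]\setminus S_i$. The vector may be zero, in which case the claim is trivial.

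Third, we shrink from $S_i$ to $M_i$. By the residual invariant, which holds with $p_i$ in place of $q_i$ as noted before the lemma, every qudit of $S_i\setminus M_i$ has already been projected. More precisely, for each child $j\in I_i$, the set $S_j\setminus R_j$ lies in the projected set $Q$, and $K_{<i}$ ends, for those qudits, with the projector $|0\rangle\langle0|$. This needs care. As in \Cref{prop:learning-sequence-correctness}, once a qudit is projected, no later unitary acts on it. So the range of $K_{<i}$ lies in $|0\rangle_{S_i\setminus M_i}\otimes\mathcal H_{\text{rest}}$, and we may write $K_{<i}|\phi\rangle=|0\rangle_{S_i\setminus M_i}\otimes|\xi\rangle$ with $|\xi\rangle$ on $M_i\cup([n]\setminus S_i)$.

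The Schmidt decomposition across $S_i\mid[n]\setminus S_i$ then has $S_i$-side vectors of the form $|0\rangle_{S_i\setminus M_i}\otimes|x_k\rangle_{M_i}$, and stripping the fixed product factor gives rank at most $r_i$ for $|\xi\rangle$ across $M_i\mid[n]\setminus S_i$. Finally, Schmidt rank across $M_i\mid[n]\setminus M_i$ is unaffected by tensoring the fixed factor $|0\rangle_{S_i\setminus M_i}$ onto the complementary side, since $[n]\setminus M_i=(S_i\setminus M_i)\sqcup([n]\setminus S_i)$. Hence the rank bound $r_i$ holds across $M_i\mid[n]\setminus M_i$.

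The main obstacle is the third step. We must justify that the $S_i\setminus M_i$ qudits are exactly in $|0\rangle$ in the image of $K_{<i}$, which uses the commutation of projections past later unitaries. We must also check that relocating a product factor from one side of the cut to the other preserves the Schmidt rank.
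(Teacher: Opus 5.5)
Your proposal is correct and follows the paper's proof. Both arguments bound the rank across $S_i\mid[n]\setminus S_i$ using the cut rank bound, derive locality of $K_{<i}$ from laminarity, and then strip the projected factor $|0\rangle_{S_i\setminus M_i}$ using the residual invariant. Your third step spells out what the paper states in one sentence: the range of $K_{<i}$ lies in $|0\rangle_{S_i\setminus M_i}\otimes\mathcal H_{\mathrm{rest}}$ because projections commute past later unitaries, and moving this fixed factor to the other side of the cut $M_i\mid[n]\setminus M_i$ leaves the Schmidt rank unchanged.
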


\begin{proof}
Since $|\phi\rangle\in\mathcal S_d(G,\chi)$, there exists a weight function
$w_\phi:E\to[\chi]$ such that
$|\phi\rangle\in\mathcal S_d(G,w_\phi)$. By
\Cref{claim:cut-rank-bound}, the Schmidt rank of $|\phi\rangle$ across
$S_i\mid[n]\setminus S_i$ is at most
\begin{equation}
    \prod_{e\in\cut_G(S_i)} w_\phi(e)
    \leq
    \chi^{|\cut_G(S_i)|}   =   r_i.
\end{equation}
By \Cref{lem:learning-sequence-laminarity}, every earlier set $S_j$, with
$j<i$, is either contained in $S_i$ or disjoint from $S_i$. The operation
associated with step $j$
is supported on a subset of $S_j$.  Hence every operation in $K_{<i}$ is local with
respect to the bipartition $S_i\mid[n]\setminus S_i$, and local linear maps cannot
increase Schmidt rank across this bipartition.

Finally, by the residual-subsystem invariant, the qudits in $S_i\setminus M_i$ have
already been projected onto the product state $|0\rangle_{S_i\setminus M_i}$.  Removing
this fixed product factor cannot increase the Schmidt rank.  Therefore
$K_{<i}|\phi\rangle$ has Schmidt rank at most $r_i$ across
$M_i\mid[n]\setminus M_i$.
\end{proof}

\subsubsection{Correctness}

We first prove a deterministic statement conditioned on all tomography calls succeeding.

\begin{proposition}[Correctness of the agnostic sequence learner]
\label{prop:agnostic-learning-sequence-correctness}
Let $\rho$ be an arbitrary $n$-qudit state, and suppose that
\begin{equation}
    \operatorname{OPT}_{G,\chi}(\rho)\geq\vartheta .
\end{equation}
Assume that every call to $\operatorname{Sub\text{-}Tomography}$ in
\Cref{alg:agnostic-learn-tns-from-sequence} succeeds with its stated trace-norm
accuracy.  Then the output state $|\hat\psi\rangle$ satisfies
\begin{equation}
    \langle\hat\psi|\rho|\hat\psi\rangle
    \geq
    \operatorname{OPT}_{G,\chi}(\rho)-\epsilon .
    \label{eq:agnostic-correctness}
\end{equation}
Moreover, each postselection map used by the algorithm has success probability at
least $\vartheta/2$.
\end{proposition}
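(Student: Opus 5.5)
Fix a reference state $|\phi\rangle\in\mathcal S_d(G,\chi)$ with $\langle\phi|\rho|\phi\rangle\geq\operatorname{OPT}_{G,\chi}(\rho)-\alpha$. Since the supremum need not be attained, we choose $\phi$ with small slack. We track the quantity
\begin{equation*}
    f_t:=\langle\phi|K_t^\dagger K_t\rho K_t^\dagger K_t|\phi\rangle ,
\end{equation*}
which is the overlap of $\phi$ with the postselected (subnormalised) branch $K_t\rho K_t^\dagger$, viewed in the rotated frame. Equivalently, we track the overlap of $U_{\leq t}\rho U_{\leq t}^\dagger$ projected by the $P$'s against the correspondingly transformed $\phi$.

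The inductive step is the following. At a nontrivial step processing index $i$, set $\rho_t=K_t\rho K_t^\dagger$, which is subnormalised, and $|\phi_t\rangle=K_t|\phi\rangle$, which has norm at most one. By \Cref{lem:agnostic-comparator-rank}, $|\phi_t\rangle$ has Schmidt rank at most $r_i$ across $M_i\mid[n]\setminus M_i$. The tomography call returns $\hat\sigma_i$ with $\|\operatorname{tr}_{M_i^c}\rho_t-\hat\sigma_i\|_1\leq\gamma_i$. Apply \Cref{cor:one-step-agnostic-disentangling} with $\Pi=\widetilde U_i^\dagger(|0\rangle\langle0|_{Q_i}\otimes I_{R_i})\widetilde U_i$, whose range contains $W_i$. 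This yields a loss of at most
\begin{equation*}
    2\sqrt{r_i\bigl(1/(\Lambda_i+1)+2\gamma_i\bigr)}\leq 2\sqrt{\alpha^2/64+\alpha^2/64}<\alpha
\end{equation*}
by the choice of $\Lambda_i$ and $\gamma_i$. Since $U$ is unitary, conjugating both $\rho_t$ and $\phi_t$ by $U_{c+1}$ and then applying $P_{c+1}$ is exactly $\Pi$ in the rotated frame. Hence $f_{t+1}\geq f_t-\alpha$. After at most $L$ steps we get
\begin{equation*}
    f_c\geq \operatorname{OPT}-\alpha-L\alpha\geq\operatorname{OPT}-\tfrac{\epsilon}{2}.
\end{equation*}

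For the success probability, we use Cauchy--Schwarz for the form induced by $\rho_c$ together with $\|K_c\phi\|\leq1$ to obtain $f_c\leq\operatorname{tr}(\rho_c)$. Thus
\begin{equation*}
    \mu_c\geq\vartheta-(L+1)\vartheta/(4L)\geq\vartheta/2,
\end{equation*}
and the same bound holds for every intermediate $\mu_t$ because $\mu_t$ is nonincreasing in $t$ while $f_t\leq\mu_t$ and $f_t\geq\vartheta-(t+1)\alpha$. This justifies the interval $[\vartheta/2,1]$ passed to every Sub-Tomography call, so each call has a valid success-probability lower bound and actually succeeds as assumed.

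For the final step, write $K_c=P_QU_{\leq c}$ as in \Cref{prop:learning-sequence-correctness}. Then $\rho_c=|0\rangle\langle0|_Q\otimes\tau$ in the rotated frame, with $\tau$ on $R_L$, and
\begin{equation*}
    \langle\hat\psi|\rho|\hat\psi\rangle\geq\langle\hat\varphi|\tau|\hat\varphi\rangle .
\end{equation*}
This inequality holds because $P_Q$ projects onto a subspace containing the output: the output $|\hat\psi\rangle=U_{\leq c}^\dagger(|0\rangle_Q\otimes|\hat\varphi\rangle)$ gives $\langle\hat\psi|\rho|\hat\psi\rangle=\langle 0,\hat\varphi|U\rho U^\dagger|0,\hat\varphi\rangle=\langle\hat\varphi|\tau|\hat\varphi\rangle$ exactly. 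In the same frame, $U_{\leq c}K_c|\phi\rangle$ lies in $|0\rangle_Q\otimes\mathcal H_{R_L}$, so $f_c=\langle v|\tau|v\rangle$ for some vector $v$ with $\|v\|\leq1$. Hence $f_c\leq\lambda_{\max}(\tau)$. The variational argument with $\|\tau-\hat\tau\|_\infty\leq\gamma_{\mathrm{fin}}=\epsilon/8$ gives
\begin{equation*}
    \langle\hat\varphi|\tau|\hat\varphi\rangle\geq\lambda_{\max}(\tau)-2\gamma_{\mathrm{fin}}\geq f_c-\epsilon/4\geq\operatorname{OPT}-\tfrac{3\epsilon}{4}.
\end{equation*}

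The main obstacle is the bookkeeping of the rotated frame in the inductive step. We must check that applying $P_{c+1}U_{c+1}$ to both $\rho_t$ and $\phi_t$ corresponds exactly to the projector $\Pi\otimes I$ in \Cref{lem:agnostic-projection-bound}, with unitary invariance absorbing $U_{c+1}$. We must also check that the Schmidt-rank bound is applied to $K_t\phi$, which may be subnormalised, rather than to $\phi$; this is covered by the norm-at-most-one hypothesis. The other delicate point is tuning the constants so that the losses, summed over $L$ steps plus the slack in choosing $\phi$, stay below both $\epsilon$ and $\vartheta/2$; this is why $\alpha$ is defined with $\min\{\epsilon,\vartheta\}$.
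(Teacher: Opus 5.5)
Your proposal is correct and follows the paper's proof essentially step for step. You use the same tracked quantity $\langle\phi|K_t^\dagger K_t\rho K_t^\dagger K_t|\phi\rangle$ and the same per-step use of \Cref{lem:agnostic-comparator-rank} with \Cref{cor:one-step-agnostic-disentangling}. You bound the success probability by $\operatorname{tr}(K_t\rho K_t^\dagger)\geq f_t$, and finish with $\lambda_{\max}(\tau)\geq f_c$ plus the variational loss $2\gamma_{\mathrm{fin}}$. The paper takes slack $\beta=L\alpha$ rather than $\alpha$ when choosing $\phi$, which only changes constants.

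One small slip: the vector lying in $|0\rangle_Q\otimes\mathcal H_{R_L}$ is $K_c|\phi\rangle=P_QU_{\leq c}|\phi\rangle$, equivalently $U_{\leq c}K_c^\dagger K_c|\phi\rangle$, not $U_{\leq c}K_c|\phi\rangle$.
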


\begin{proof}
Set
\begin{equation}
    \beta:=L\alpha=\frac{\min\{\epsilon,\vartheta\}}{4}.
\end{equation}
Fix a reference state $|\phi\rangle\in\mathcal S_d(G,\chi)$ satisfying
\begin{equation}
    \langle\phi|\rho|\phi\rangle
    \geq
    \operatorname{OPT}_{G,\chi}(\rho)-\beta .
    \label{eq:near-optimal-comparator}
\end{equation}
Such a reference state exists by the definition of the supremum in
\eqref{eq:agnostic-optimum}.

Index the nontrivial disentangling steps by $t=1,\ldots,c$.  Let $K_t$ be the
cumulative postselection map after the first $t$ such steps, and define
\begin{equation}
    B_t
    :=
    \langle\phi|K_t^\dagger K_t\,\rho\,K_t^\dagger K_t|\phi\rangle,
    \qquad
    B_0=\langle\phi|\rho|\phi\rangle .
    \label{eq:agnostic-Bt}
\end{equation}
We now prove that
\begin{equation}
    B_t\geq B_0-t\alpha
    \label{eq:agnostic-Bt-induction}
\end{equation}
for all $t$.

Suppose the next nontrivial disentangling step processes the index $i$.  Put
\begin{equation}
    \rho_t=K_t\rho K_t^\dagger,
    \qquad
    |\phi_t\rangle=K_t|\phi\rangle .
\end{equation}
The reduced state learned by the algorithm is
\begin{equation}
    \sigma_i=\tr_{[n]\setminus M_i}(\rho_t).
\end{equation}
By assumption, the tomography estimate satisfies
\begin{equation}
    \|\sigma_i-\hat\sigma_i\|_1\leq\gamma_i .
\end{equation}
By \Cref{lem:agnostic-comparator-rank}, the vector $|\phi_t\rangle$ has Schmidt rank
at most $r_i$ across the bipartition $M_i\mid[n]\setminus M_i$.  In the coordinates before applying $\widetilde U_i$, the projection used at the current step is
\begin{equation}
    \Pi_i
    :=
    \widetilde U_i^\dagger
    \left(|0^{|Q_i|}\rangle\langle0^{|Q_i|}|_{Q_i}\otimes I_{R_i}\right)
    \widetilde U_i
\end{equation}
on $M_i$, tensored with the identity outside $M_i$.  Its range on $M_i$
contains $W_i$ by definition of $\widetilde{U}_i$.  \Cref{cor:one-step-agnostic-disentangling} now gives
\begin{equation}
    |B_{t+1}-B_t|
    \leq
    2\sqrt{r_i\left(\frac{1}{\Lambda_i+1}+2\gamma_i\right)} .
\end{equation}
Using the parameter choices in \Cref{eq:agnostic-parameters},
\begin{equation}
    \frac{1}{\Lambda_i+1}\leq \frac{\alpha^2}{64r_i},
    \qquad
    2\gamma_i=\frac{\alpha^2}{64r_i},
\end{equation}
and hence
\begin{equation}
    |B_{t+1}-B_t|
    \leq
    2\sqrt{\frac{\alpha^2}{32}}
    <
    \alpha .
\end{equation}
This proves \eqref{eq:agnostic-Bt-induction} by induction.  Since $c\leq L$,
\begin{equation}
    B_c
    \geq
    B_0-L\alpha
    =
    B_0-\beta .
    \label{eq:Bc-lower-bound}
\end{equation}
The same bound also implies the claimed lower bound on success probability.  Indeed,
\begin{equation}
    \tr(K_t\rho K_t^\dagger)
    \geq
    B_t
    \geq
    B_0-L\alpha
    \geq
    \operatorname{OPT}_{G,\chi}(\rho)-2\beta
    \geq
    \vartheta/2,
\end{equation}
for every postselection map encountered by the algorithm.  This justifies the success probability interval $[\vartheta/2,1]$ used in the tomography calls.  We note that this is the only place in the correctness argument where $\vartheta$ enters: the last inequality requires $2\beta\leq\vartheta/2$, and this is precisely why the parameter $\alpha=\min\{\epsilon,\vartheta\}/(4L)$ carries the minimum with $\vartheta$, whereas the accuracy guarantee below uses only $\beta\leq\epsilon/4$.

It remains to analyse the final residual step.  As in the realisable learners, the
residual-subsystem invariant implies that
\begin{equation}
    K_c=P_QU_{\leq c},
    \qquad
    U_{\leq c}=U_cU_{c-1}\cdots U_1,
    \qquad
    P_Q=|0^{|Q|}\rangle\langle0^{|Q|}|_Q\otimes I_{R_L},
\end{equation}
where $Q=\bigsqcup_{i:\,|M_i|>p_i}Q_i$ is the final value of the projected set
maintained by \Cref{alg:agnostic-learn-tns-from-sequence}.
Let
\begin{equation}
    \tau
    =
    \tr_{[n]\setminus R_L}
    \left[
        K_c\rho K_c^\dagger
    \right]
\end{equation}
be the ideal final residual state. Since $P_Q$ is an orthogonal projector and $U_{\leq c}$ is unitary,
\begin{equation}
    K_c^\dagger K_c
    =
    U_{\leq c}^\dagger P_QU_{\leq c}
\end{equation}
is an orthogonal projector. As $P_Q$ projects the qubits in $[n]\setminus R_L$ into the all-zero state, $K_c\rho K_c^\dagger=\ket{0^{|Q|}}\bra{0^{|Q|}}_{Q}\otimes \tau_{R_L}$, so the largest eigenvalue of $\tau$ equals the largest eigenvalue of $K_c\rho K_c^\dagger$, which is the maximum of
$\langle\xi|\rho|\xi\rangle$ over normalised states $|\xi\rangle$ in the range of the projection
$K_c^\dagger K_c$. Moreover,
\begin{equation}
    B_c
    =
    \langle\phi|K_c^\dagger K_c\rho K_c^\dagger K_c|\phi\rangle.
\end{equation}
Normalising the subnormalised vector $K_c^\dagger K_c|\phi\rangle$, if it is nonzero, shows that
\begin{equation}
    \lambda_{\max}(\tau)
    \geq B_c / \|K_c^\dagger K_c|\phi\rangle\|^2
    \geq B_c .
    \label{eq:lambda-max-final-lower-bound}
\end{equation}
The final tomography call gives $\|\tau-\hat\tau\|_1\leq\gamma_{\mathrm{fin}}$, where
$\gamma_{\mathrm{fin}}=\epsilon/8$.  If $|\hat\varphi\rangle$ is a top eigenvector of
$\hat\tau$, then the variational principle gives
\begin{equation}
    \langle\hat\varphi|\tau|\hat\varphi\rangle
    \geq
    \lambda_{\max}(\tau)-2\gamma_{\mathrm{fin}} .
    \label{eq:final-top-eigenvector-loss}
\end{equation}
The state returned by the algorithm is
\begin{equation}
    |\hat\psi\rangle
    =
    U_{\leq c}^\dagger\left(|0^{|Q|}\rangle_Q\otimes|\hat\varphi\rangle_{R_L}\right),
\end{equation}
and 
\begin{equation}\label{eq:final-top-eigenvector-loss-rewritten}
    \langle\hat\psi|\rho|\hat\psi\rangle=\langle\hat\varphi|\tau|\hat\varphi\rangle .
\end{equation}
Combining Equations \eqref{eq:near-optimal-comparator}, \eqref{eq:Bc-lower-bound},
\eqref{eq:lambda-max-final-lower-bound}, \eqref{eq:final-top-eigenvector-loss}, and \eqref{eq:final-top-eigenvector-loss-rewritten}, we
obtain
\begin{equation}
    \langle\hat\psi|\rho|\hat\psi\rangle
    \geq
    \operatorname{OPT}_{G,\chi}(\rho)
    -
    2\beta
    -
    2\gamma_{\mathrm{fin}}
    \geq
    \operatorname{OPT}_{G,\chi}(\rho)-\epsilon,
\end{equation}
because $2\beta\leq\epsilon/2$ and $2\gamma_{\mathrm{fin}}=\epsilon/4$.
This proves the proposition.
\end{proof}

\begin{theorem}[Agnostic tomography from a learning sequence]
\label{thm:agnostic-learning-sequence-tomography}
Let $G=([n],E)$ be a graph with bond dimension bound $\chi$, and let
$\mathcal L=(S_i,I_i,F_i)_{i=1}^L$ be a learning sequence for $G$.  Let $\rho$ be an arbitrary $n$-qudit state satisfying
\begin{equation}
    \operatorname{OPT}_{G,\chi}(\rho)\geq\vartheta
\end{equation}
for a known $\vartheta\in(0,1]$.  Then
\Cref{alg:agnostic-learn-tns-from-sequence} outputs a pure state $|\hat\psi\rangle$
satisfying
\begin{equation}
    \langle\hat\psi|\rho|\hat\psi\rangle
    \geq
    \operatorname{OPT}_{G,\chi}(\rho)-\epsilon
\end{equation}
with probability at least $1-\delta$, using
\begin{equation}
    O\left(
        \frac{L^5 r_{\max}^2}{\vartheta\,\min\{\epsilon,\vartheta\}^4}
        \left(
            d^{2a_{\max}^{\mathrm{agn}}}
            +
            \log(L/\delta)
        \right)
    \right)
    \label{eq:agnostic-learning-sequence-sample-bound}
\end{equation}
copies, where $r_{\max}$ and $a_{\max}^{\mathrm{agn}}$ are defined in
\Cref{eq:agnostic-ai}.  The classical postprocessing time is polynomial in
$L$, $d^{a_{\max}^{\mathrm{agn}}}$, $\max_i\Lambda_i$, $1/\epsilon$,
$1/\vartheta$, and $\log(1/\delta)$, assuming the full-rank tomography primitive and
the relevant eigendecompositions are implemented in time polynomial in their input
dimension.
\end{theorem}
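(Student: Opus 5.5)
The proof combines the deterministic guarantee of \Cref{prop:agnostic-learning-sequence-correctness} with a union bound over the tomography calls. It then sums the copy counts supplied by \Cref{cor:subnormalised-tomography-bounded-mu}.

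First, there are at most $L$ nontrivial disentangling calls and one final residual call, each run with confidence parameter $\delta/(2L+2)$. A union bound therefore shows that, with probability at least $1-\delta$, every call meets its stated trace-norm accuracy. On this event, \Cref{prop:agnostic-learning-sequence-correctness} yields the overlap guarantee. The same proposition shows that every cumulative postselection map has success probability in $[\vartheta/2,1]$. Hence the bounds $\mu_\ell=\vartheta/2$ and $\mu_u=1$ passed to each Sub-Tomography call are valid, which is exactly the hypothesis \Cref{cor:subnormalised-tomography-bounded-mu} needs. One subtlety to address is that correctness of later calls is conditioned on earlier ones succeeding, since the maps $K_c$ depend on earlier outputs. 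I would handle this in the standard way: each call uses fresh copies and is correct with probability $1-\delta/(2L+2)$ conditional on any fixed history satisfying the interval promise. The union bound then goes through sequentially.

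Second, I would bound the copies per call. No rank promise is available, so the rank parameter is the full dimension $D=d^{|M_i|}\le d^{a_{\max}^{\mathrm{agn}}}$, using \eqref{eq:agnostic-Mi-size}. Sample-optimal full-rank tomography then costs $O((D^2+\log(1/\delta'))/\gamma^2)$ copies. By \Cref{cor:subnormalised-tomography-bounded-mu} with $\mu_u=1$ and $\mu_\ell=\vartheta/2$, one nontrivial call costs
\[
O\!\left(\frac{d^{2a_{\max}^{\mathrm{agn}}}+\log(L/\delta)}{\vartheta\,\gamma_i^2}+\frac{\log(L/\delta)}{\vartheta}+\frac{\log(L/\delta)}{\gamma_i^2}\right).
\]
Substituting $\gamma_i=\alpha^2/(128r_i)$ with $\alpha=\min\{\epsilon,\vartheta\}/(4L)$ gives
\[
\gamma_i^{-2}=O\!\left(r_{\max}^2L^4/\min\{\epsilon,\vartheta\}^4\right).
\]
So each call is bounded by $O\big(L^4r_{\max}^2\vartheta^{-1}\min\{\epsilon,\vartheta\}^{-4}(d^{2a_{\max}^{\mathrm{agn}}}+\log(L/\delta))\big)$; the last two terms are dominated because $\vartheta\le1$.

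Third, the final residual call has accuracy $\epsilon/8$, which is much larger than $\gamma_i$. It acts on $|R_L|\le\max\{|M_L|,p_L\}$ qudits. Its dimension must be compared with $d^{a_{\max}^{\mathrm{agn}}}$. If no final projection occurs, then $R_L=M_L$ and the comparison is immediate. Otherwise $|R_L|=p_L<|M_L|\le a_L^{\mathrm{agn}}$. Either way, this call is dominated by the per-step bound. Multiplying by $L+1$ calls gives the claimed $L^5$ factor.

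The runtime follows because all classical linear algebra is performed on matrices of dimension at most $d^{a_{\max}^{\mathrm{agn}}}$ and top-$\Lambda_i$ eigenspaces. I expect the main obstacle to be the conditioning and union-bound bookkeeping in the first step, namely justifying that the promised interval $[\vartheta/2,1]$ holds at every call despite the adaptive dependence on earlier estimates. The remaining steps are parameter substitution.
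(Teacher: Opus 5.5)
Your proposal is correct and follows essentially the same route as the paper: a union bound over the at most $L+1$ calls at confidence $\delta/(2L+2)$, the inductive validity of the interval $[\vartheta/2,1]$ taken from the correctness proposition, and the per-call cost $O\bigl((d^{2|M_i|}+\log(L/\delta))/(\vartheta\gamma_i^2)\bigr)$ from \Cref{cor:subnormalised-tomography-bounded-mu} with $\gamma_i^{-2}=O(r_{\max}^2L^4/\min\{\epsilon,\vartheta\}^4)$. Your handling of the adaptive conditioning and your case analysis for $|R_L|$ spell out in more detail what the paper states briefly.
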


\begin{proof}
The lower bound on success probability used in each call is valid inductively: conditioned on all previous tomography calls succeeding, the proof of
\Cref{prop:agnostic-learning-sequence-correctness} up to the current step gives
$\tr(K_c\rho K_c^\dagger)\geq\vartheta/2$.  Thus each subnormalised tomography
call has the advertised guarantee.  There are at most $L$ nontrivial disentangling
calls and one final residual call.  Each call is assigned failure probability
$\delta/(2L+2)$, so a union bound gives total failure probability at most $\delta$.
Conditioned on this event, \Cref{prop:agnostic-learning-sequence-correctness} gives the
claimed overlap guarantee.

It remains to count copies.  At a disentangling step $i$, the state being learned is an
arbitrary subnormalised state on $|M_i|$ qudits.  This allows us to invoke the full-rank case of the subnormalised tomography primitive from
\Cref{subsec:subnormalised-tomography}, with rank parameter $d^{|M_i|}$ and Hilbert
space dimension $d^{|M_i|}$.  Since the success probability is lower bounded by
$\vartheta/2$, \Cref{cor:subnormalised-tomography-bounded-mu} gives copy complexity
\begin{equation}
    O\left(
        \frac{d^{2|M_i|}+\log(L/\delta)}{\vartheta\gamma_i^2}
    \right)
\end{equation}
for this call.  By \eqref{eq:agnostic-Mi-size}, $|M_i|\leq a_{\max}^{\mathrm{agn}}$, and by
\eqref{eq:agnostic-parameters},
\begin{equation}
    \frac{1}{\gamma_i^2}
    =
    \frac{(128r_i)^2}{\alpha^4}
    \leq
    O\left(
        \frac{r_{\max}^2L^4}{\min\{\epsilon,\vartheta\}^4}
    \right).
\end{equation}
Multiplying by at most $L$ disentangling calls gives the stated bound.  The final
residual tomography call acts on at most $a_{\max}^{\mathrm{agn}}$ qudits and has accuracy
$\epsilon/8$.  Its cost is absorbed into \eqref{eq:agnostic-learning-sequence-sample-bound}
for $L\geq1$, $r_{\max}\geq1$, and $\epsilon,\vartheta\leq1$.  The runtime statement
follows from the dimensions of the matrices on which tomography and eigendecomposition
are performed.
\end{proof}

\begin{remark}
\label{rem:unknown-overlap-parameter}
The theorem assumes a known lower bound $\vartheta\leq\operatorname{OPT}_{G,\chi}(\rho)$.  This
parameter is used only to lower-bound the success probabilities of the postselected
branches.  If no such bound is known, one can run the algorithm over a geometric grid
of candidate values for $\vartheta$ and validate the final candidates by estimating
$\langle\hat\psi|\rho|\hat\psi\rangle$.  This adds only logarithmic overhead in the grid
size, but we keep $\vartheta$ explicit in the theorem statements.
\end{remark}

\subsubsection{Applications}

We first specialise the agnostic theorem to learning sequences obtained from contraction
sequences.

\begin{corollary}[Agnostic tomography from contraction complexity]
\label{cor:agnostic-contraction-complexity}
Let $G=([n],E)$ be a connected graph with bond dimension bound $\chi$, and let
$C=\CC(G)$ be its contraction complexity.  Suppose
$\rho$ is an arbitrary $n$-qudit state satisfying
$\operatorname{OPT}_{G,\chi}(\rho)\geq\vartheta$.  Given a contraction sequence of complexity $C$,
one can output $|\hat\psi\rangle$ such that
\begin{equation}
    \langle\hat\psi|\rho|\hat\psi\rangle
    \geq
    \operatorname{OPT}_{G,\chi}(\rho)-\epsilon
\end{equation}
with probability at least $1-\delta$, using
\begin{equation}
    O\left(
        \frac{n^5\chi^{2C}}{\vartheta\,\min\{\epsilon,\vartheta\}^4}
        \left(
            d^{4p_C}
            +
            \log(n/\delta)
        \right)
    \right)
    \label{eq:agnostic-cc-bound}
\end{equation}
copies, where
\begin{equation}
    p_C
    :=
    \max\left\{
        1,
        \left\lceil
            \log_d\left(
                \frac{2048 n^2\chi^C}{\min\{\epsilon,\vartheta\}^2}
            \right)
        \right\rceil
    \right\}.
    \label{eq:agnostic-pC}
\end{equation}
Using \Cref{thm:cc-vs-tw-delta}, one may take
\begin{equation}
    C\leq \Delta(G)(\tw(G)+1)-1,
\end{equation}
after computing a contraction sequence with this width bound.
\end{corollary}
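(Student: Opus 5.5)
The plan is to instantiate \Cref{thm:agnostic-learning-sequence-tomography} with the learning sequence $\mathcal L_\pi$ induced by a contraction sequence $\pi$ of complexity $C$ via \Cref{lem:contraction-to-learning-sequence}, and then bound the parameters $L$, $r_{\max}$ and $a_{\max}^{\mathrm{agn}}$. The trivial case $n=1$ reduces to full tomography of a single qudit. So assume $n\geq2$.

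\textbf{Step 1: length and rank parameters.} By \Cref{lem:contraction-to-learning-sequence}, $L\leq n-1\leq n$ and $|\cut_G(S_i)|\leq C$ for every $i$. Hence $r_i=\chi^{|\cut_G(S_i)|}\leq\chi^C$, and so $r_{\max}^2\leq\chi^{2C}$. This produces the factor $n^5\chi^{2C}$ and the $\log(n/\delta)$ term in \eqref{eq:agnostic-cc-bound}.

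\textbf{Step 2: residual register sizes.} From \eqref{eq:agnostic-parameters}, $\alpha=\min\{\epsilon,\vartheta\}/(4L)\geq\min\{\epsilon,\vartheta\}/(4n)$. Therefore
\[
\Lambda_i=\left\lceil 64r_i/\alpha^2\right\rceil\leq\left\lceil 1024n^2\chi^C/\min\{\epsilon,\vartheta\}^2\right\rceil\leq 2048n^2\chi^C/\min\{\epsilon,\vartheta\}^2,
\]
where the last step uses $\lceil x\rceil\leq 2x$ for $x\geq1$. Monotonicity of $\lceil\log_d\cdot\rceil$ then gives $p_i\leq p_C$ for all $i$. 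This bookkeeping of constants is where I expect the only care to be needed. The factor $2048$ is chosen exactly to absorb the ceiling, so I will check that it is consistent.

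\textbf{Step 3: active subsystem size.} As in the proof of \Cref{cor:lc-upper-bound}, each step of $\mathcal L_\pi$ merges two endpoints of a contraction. Each endpoint contributes either one fresh vertex or one residual register $R_j$ with $|R_j|\leq p_j\leq p_C$. Since $p_C\geq1$, this gives $a_i^{\mathrm{agn}}\leq 2p_C$, so $d^{2a_{\max}^{\mathrm{agn}}}\leq d^{4p_C}$. Substituting Steps 1–3 into \eqref{eq:agnostic-learning-sequence-sample-bound} yields \eqref{eq:agnostic-cc-bound}.

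\textbf{Step 4: final remark.} The statement about $C\leq\Delta(G)(\tw(G)+1)-1$ follows directly from \Cref{thm:cc-vs-tw-delta}, together with the construction of a contraction sequence from a tree decomposition of $L(G)$ described after that theorem.
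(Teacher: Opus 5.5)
Your proposal is correct and follows the paper's proof essentially step for step. Both instantiate \Cref{thm:agnostic-learning-sequence-tomography} with the learning sequence from \Cref{lem:contraction-to-learning-sequence}, and both use $L\leq n$, $r_{\max}\leq\chi^C$, $p_i\leq p_C$ (absorbing the ceiling into the factor $2048$) and $a_{\max}^{\mathrm{agn}}\leq 2p_C$. The only differences are cosmetic: you treat $n=1$ separately, which the paper leaves implicit, and you absorb the ceiling via $\lceil x\rceil\leq 2x$ rather than $x+1\leq 2x$.
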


\begin{proof}
By \Cref{lem:contraction-to-learning-sequence}, a contraction sequence of complexity
$C$ induces a learning sequence of length $L\leq n-1$ satisfying
$|\cut_G(S_i)|\leq C$ for every $i$.  Hence $r_i\leq\chi^C$ and
$r_{\max}\leq\chi^C$.  Since
\begin{equation}
    \alpha=\frac{\min\{\epsilon,\vartheta\}}{4L}
    \geq
    \frac{\min\{\epsilon,\vartheta\}}{4n},
\end{equation}
we have
\begin{equation}
    \Lambda_i
    =
    \left\lceil\frac{64r_i}{\alpha^2}\right\rceil
    \leq
    \frac{1024\,r_i n^2}{\min\{\epsilon,\vartheta\}^2}+1
    \leq
    \frac{2048\,r_i n^2}{\min\{\epsilon,\vartheta\}^2},
\end{equation}
and hence, using $r_i\leq\chi^C$, we get $p_i\leq p_C$ for all $i$.  Each contraction step combines two current
vertices.  Each current vertex contributes either one fresh physical qudit or one
residual subsystem of size at most $p_C$.  Since $p_C\geq1$, every step has
\begin{equation}
    a_i^{\mathrm{agn}}\leq 2p_C .
\end{equation}
Substituting $L\leq n$, $r_{\max}\leq\chi^C$, and
$a_{\max}^{\mathrm{agn}}\leq2p_C$ into
\Cref{thm:agnostic-learning-sequence-tomography} proves
\eqref{eq:agnostic-cc-bound}.  The statement in terms of treewidth and degree follows from
\Cref{thm:cc-vs-tw-delta}.
\end{proof}

Agnostic TTN state tomography is obtained as a special case by using the learning sequence built from rooted subtrees that is implicit in
\Cref{sec:ttn-tomography}.

\begin{corollary}[Agnostic TTN tomography]
\label{cor:agnostic-ttn-tomography}
Let $T$ be a tree on $n$ vertices with maximum degree $\Delta=\Delta(T)$, and let
$\rho$ be an arbitrary $n$-qudit state satisfying
\begin{equation}
    \operatorname{OPT}_{T,\chi}(\rho)
    :=
    \sup_{|\phi\rangle\in\mathcal S_d(T,\chi)}
    \langle\phi|\rho|\phi\rangle
    \geq
    \vartheta .
\end{equation}
Then one can output $|\hat\psi\rangle$ satisfying
\begin{equation}
    \langle\hat\psi|\rho|\hat\psi\rangle
    \geq
    \operatorname{OPT}_{T,\chi}(\rho)-\epsilon
\end{equation}
with probability at least $1-\delta$, using
\begin{equation}
    O\left(
        \frac{n^5\chi^2}{\vartheta\,\min\{\epsilon,\vartheta\}^4}
        \left(
            d^{2(1+b p_T)}
            +
            \log(n/\delta)
        \right)
    \right)
    \label{eq:agnostic-ttn-bound}
\end{equation}
copies, where $b$ is the maximum number of children after rooting $T$ at a leaf and
\begin{equation}
    p_T
    :=
    \max\left\{
        1,
        \left\lceil
            \log_d\left(
                \frac{2048 n^2\chi}{\min\{\epsilon,\vartheta\}^2}
            \right)
        \right\rceil
    \right\}.
\end{equation}
For $n\geq3$, one may take $b\leq\Delta-1$.
\end{corollary}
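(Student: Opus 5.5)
The plan is to apply \Cref{thm:agnostic-learning-sequence-tomography} to the learning sequence of rooted subtrees from \Cref{rem:relation-to-ttn-learner}. If $n\geq2$, root $T$ at a leaf $r$ as in \Cref{alg:LearnTTN}; if $n=1$, root it at its only vertex. Order the vertices of $T$ so that every vertex comes after all its descendants, for instance by decreasing depth, and index the steps $i=1,\ldots,n$ accordingly. For the step of vertex $u$ set
\begin{equation*}
    S_i=V_u,\qquad F_i=\{u\},\qquad I_i=\{\text{indices of the children of }u\}.
\end{equation*}
The root step has index $L=n$ and $S_L=[n]$. The conditions of \Cref{def:learning-sequence} then hold directly: the assembly identity is $V_u=\{u\}\sqcup\bigsqcup_{v\in\operatorname{ch}(u)}V_v$, the fresh sets are the singletons, and the dependency graph is a copy of the rooted tree $T$. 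Hence the sequence has length $L=n$.

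Next bound the parameters in \eqref{eq:agnostic-ri} and \eqref{eq:agnostic-parameters}. Each cut satisfies $|\cut_T(V_u)|\leq1$: it is a single edge for $u\neq r$ and empty at the root. So $r_i\leq\chi$ and $r_{\max}\leq\chi$. Since $\alpha=\min\{\epsilon,\vartheta\}/(4n)$, the ceiling estimate used in \Cref{cor:agnostic-contraction-complexity} gives
\begin{equation*}
    \Lambda_i\leq\left\lceil\frac{1024\,\chi n^2}{\min\{\epsilon,\vartheta\}^2}\right\rceil\leq\frac{2048\,\chi n^2}{\min\{\epsilon,\vartheta\}^2},
\end{equation*}
where the last step uses that the fraction is at least $1$. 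Therefore $p_i\leq p_T$ for every $i$. The active size is
\begin{equation*}
    a_i^{\mathrm{agn}}=1+\sum_{j\in I_i}p_j\leq 1+|\operatorname{ch}(u)|\,p_T\leq 1+b\,p_T.
\end{equation*}
By the rooting argument in \Cref{lem:ttn-residual-subsystem-invariant}, $b\leq\Delta-1$ when $n\geq3$.

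Finally substitute $L=n$, $r_{\max}\leq\chi$ and $a_{\max}^{\mathrm{agn}}\leq 1+bp_T$ into \eqref{eq:agnostic-learning-sequence-sample-bound}. This gives exactly \eqref{eq:agnostic-ttn-bound}, with $\log(L/\delta)=\log(n/\delta)$. The guarantee $\langle\hat\psi|\rho|\hat\psi\rangle\geq\operatorname{OPT}_{T,\chi}(\rho)-\epsilon$ is inherited verbatim, because $\operatorname{OPT}_{G,\chi}$ with $G=T$ is $\operatorname{OPT}_{T,\chi}$.

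The only points needing care are the following.
\begin{itemize}
    \item The ceiling in $\Lambda_i$ is absorbed into the constant $2048$, which relies on $\chi n^2/\min\{\epsilon,\vartheta\}^2\geq1$.
    \item The degenerate cases $n\leq2$ need $b\leq1$. There is no real obstacle: the whole proof is a specialisation of the general theorem.
\end{itemize}
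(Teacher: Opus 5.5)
Your proposal is correct and follows the paper's proof essentially step for step: the rooted-subtree learning sequence with singleton fresh sets, the single-edge cut giving $r_{\max}\leq\chi$, the ceiling estimate $\Lambda_i\leq 2048\chi n^2/\min\{\epsilon,\vartheta\}^2$ giving $p_i\leq p_T$, the active-size bound $a_i^{\mathrm{agn}}\leq 1+bp_T$, and substitution into \Cref{thm:agnostic-learning-sequence-tomography}. You check the learning-sequence axioms and the ceiling absorption explicitly, which the paper leaves implicit; your separate treatment of $n\leq 2$ is harmless but unnecessary, since the statement only claims $b\leq\Delta-1$ for $n\geq3$.
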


\begin{proof}
Use the learning sequence whose sets are the rooted subtrees of $T$, ordered from
leaves to root.  Every set in this sequence other than the final one is a rooted
subtree $T_u$ with $u\neq r$ and is separated from its complement by
the single edge joining $u$ to its parent, so $r_i\leq\chi$ for these sets.  The
final set is all of $V$, whose cut is empty, so $r_L=1$.  The length is at most $n$.
Since $\alpha\geq\min\{\epsilon,\vartheta\}/(4n)$ and $r_i\leq\chi$, the estimate
from the proof of \Cref{cor:agnostic-contraction-complexity} gives
$\Lambda_i\leq2048\,\chi n^2/\min\{\epsilon,\vartheta\}^2$ and hence
$p_i\leq p_T$ for all $i$.  At any
step, the learner acts on the parent vertex together with the residual subsystems of its
children, so
\begin{equation}
    a_i^{\mathrm{agn}}
    \leq
    1+b p_T .
\end{equation}
Substituting $L\leq n$, $r_{\max}\leq\chi$, and
$a_{\max}^{\mathrm{agn}}\leq1+bp_T$ into
\Cref{thm:agnostic-learning-sequence-tomography} proves the stated bound.  The degree
statement follows from the rooting convention used in \Cref{alg:LearnTTN}.
\end{proof}

\begin{remark}[Comparison with closest product state and closest MPS learning]
\label{rem:agnostic-comparison}
Bakshi et al.~\cite{bakshi2025learning} give a proper
closest-product-state learner and, in Appendix~B, an improper closest-MPS
learner. Their MPS guarantee uses MPSs of a given bond dimension as the
reference class, while the output may have larger bond dimension. Lin, Chia,
and Hung~\cite{lin2025efficientclosestmatrixproduct} also give an improper
closest-MPS learner, with better dependence on the system size and logarithmic
reconstruction depth.

\Cref{thm:agnostic-learning-sequence-tomography} allows an arbitrary known graph,
while \Cref{cor:agnostic-ttn-tomography} gives the tree case. Our result extends
the available graph families from edgeless graphs and paths to trees and arbitrary
known graphs.

The closest-product-state learner of \cite{bakshi2025learning} is proper. Their MPS
extension, the MPS learner of \cite{lin2025efficientclosestmatrixproduct}, and our
general-graph learner may return states outside the corresponding reference class
and are improper. All these results use an overlap guarantee of the form
\eqref{eq:agnostic-goal}.

The path case gives a different tradeoff from
\cite{lin2025efficientclosestmatrixproduct}. Specialising
\Cref{cor:agnostic-ttn-tomography} to a path gives at most
\begin{equation}
    \widetilde O\left(
        \frac{d^4\chi^4 n^9}{\vartheta\,\min\{\epsilon,\vartheta\}^8}
    \right)
\end{equation}
copies and a reconstruction circuit of depth linear in $n$. The closest-MPS learner of
\cite{lin2025efficientclosestmatrixproduct} uses
$\widetilde O(n^7D^{12}/\epsilon^{12})$ copies and has logarithmic depth, improving the dependence on $n$ at the cost of a worse dependence on the bond dimension and accuracy. Thus our path specialisation does not improve the system-size dependence or circuit depth, but neither bound uniformly dominates the other. The difference has three sources: the retained dimension
$\Lambda_i$ is of order $r_i/\alpha^2$ rather than $r_i$ because no rank is
assumed on the input, the sequential schedule forces
$\alpha=\min\{\epsilon,\vartheta\}/(4L)$ with $L\leq n$, and every tomography
call runs at the full-rank rate $d^{2|M_i|}$.

Compared with \cite{bakshi2025learning}, our result covers a broader family of
reference TNSs and uses a sharper projection estimate.
\Cref{lem:agnostic-projection-bound} improves
\cite[Lemma B.12]{bakshi2025learning} from $2r\sqrt\eta$ to $2\sqrt{r\eta}$. The two estimates agree for product states, where $r=1$. For general graphs, the sharper estimate saves a factor of $r_i$ in the retained dimension. The weaker estimate would require $\Lambda_i$ of order $r_i^2/\alpha^2$, doubling the exponent of $\chi$ arising from $|\cut_G(S_i)|$ in every residual register.
\end{remark}

\newpage
\section*{Acknowledgements}
We thank V{\'a}clav Bla{\v z}ej for collating information on graph parameters through the HOPS project~\cite{hops}, and V{\'a}clav Bla{\v z}ej and Ramanujan Sridharan for helpful discussions of parameterised complexity.
We also thank Nikhil Bansil, Neha Rino, Jedrzej Olkowski, Peter Strulo, Rafid Ameer Mahmud and Hirak Ghosh for helpful discussions.
NM acknowledges support from the EPSRC DTP 2224 University of Warwick (Grant number EP/W524645/1).
SS acknowledges support from the Wellcome Leap as part of the Q4Bio Program and the Royal Society University Research Fellowship. 
No AI tools were used in writing this paper. 

\newpage
\setcounter{secnumdepth}{0}
\defbibheading{head}{\section{References}}
\sloppy
\printbibliography[heading=head]

\end{document}